\newcolumntype{P}[1]{>{\centering\arraybackslash}p{#1}}
\DeclareRobustCommand*{\pmzerodot}{%
  \nfss@text{%
    \sbox0{$\vcenter{}$}
    \sbox2{0}%
    \sbox4{0\/}%
    \ooalign{%
      0\cr
      \hidewidth
      \kern\dimexpr\wd4-\wd2\relax 
      \raise\dimexpr(\ht2-\dp2)/2-\ht0\relax\hbox{%
        \if b\expandafter\@car\f@series\@nil\relax
          \mathversion{bold}%
        \fi
        $\cdot\m@th$%
      }%
      \hidewidth
      \cr
      \vphantom{0}
    }%
  }%
}
\newcommand{\A}{\ensuremath{\mathcal{A}}}
\newcommand{\B}{\ensuremath{\mathcal{B}}}
\newcommand{\C}{\ensuremath{\mathcal{C}}}
\newcommand{\D}{\ensuremath{\mathcal{D}}}
\newcommand{\F}{\ensuremath{\mathcal{F}}}
\newcommand{\T}{\ensuremath{\mathcal{T}}}
\newcommand{\s}{\ensuremath{\sigma}}
\renewcommand{\S}{\ensuremath{\mathcal{S}}}
\renewcommand{\P}{\ensuremath{\mathcal{P}}}
\newcommand{\vars}{\textit{vars}}
\newcommand{\ax}{\textit{Ax}}
\newcommand{\wit}{\textit{wit}}
\newcommand{\overarrow}{\overrightarrow}
\newcommand{\dash}{\vdash}
\newcommand{\tdash}{\vdash_{\T}}
\newcommand{\wrt}{\textit{w.r.t.} }
\newcommand{\smooth}{\textbf{SM}}
\newcommand{\stainf}{\textbf{SI}}
\newcommand{\convex}{\textbf{CV}}
\newcommand{\finwit}{\textbf{FW}}
\newcommand{\strfinwit}{\textbf{SW}}
\newcommand{\mc}{\textit{mincard}}
\newcommand{\TSM}{\T^{s}_{f}}
\newcommand{\Tgeqn}{\T_{\geq n}}
\newcommand{\TsM}{\T_{f}}
\newcommand{\Tinfty}{\T_{\infty}}
\newcommand{\Teven}{\T_{\mathit{even}}^{\infty}}
\newcommand{\Tninfty}{\T_{n, \infty}}
\newcommand{\Tone}{\T_{\leq 1}}
\newcommand{\Tleqn}{\T_{\leq n}}
\newcommand{\Tneqodd}{\T^{\neq}_{\mathit{odd}}}
\newcommand{\Tmn}{\T_{m,n}}
\newcommand{\Tneqoneinfty}{\T^{\neq}_{1,\infty}}
\newcommand{\Tneqtwoinfty}{\T^{\neq}_{2,\infty}}
\newcommand{\psiv}{\psi_{\vee}}
\newcommand{\addf}[1]{(#1)_{s}}
\newcommand{\adds}[1]{(#1)^{2}}
\newcommand{\addnc}[1]{(#1)_{\vee}}
\newcommand{\Taddf}{\addf{\T}}
\newcommand{\Tadds}{\adds{\T}}
\newcommand{\Taddnc}{\addnc{\T}}
\newcommand{\Ttwo}{\T_{2,3}}
\newcommand{\Tonetwo}{\T_{1}^{\infty}}
\newcommand{\Ttwotwo}{\T_{2}^{\infty}}
\newcommand{\Toddtwo}{\T_{1}^{odd}}
\newcommand{\Psiv}{\Psi_{\vee}}
\newcommand{\Fun}{\textit{Fun}}
\newcommand{\subs}[1]{#1_{1}}
\newcommand{\subsf}[2]{#1_{#2}}
\newcommand{\plusf}[1]{s(#1)}
\newcommand{\subf}[1]{\texttt{0}(#1)}
\newcommand{\subff}[1]{#1_{\texttt{0}}}
\newcommand{\subncf}[2]{#1^{\dag}_{#2}}
\newcommand{\dagg}[1]{#1^{\dag}}
\newcommand{\subnc}[1]{\pmzerodot(#1)}
\newcommand{\plusnc}[1]{\mathfrak{s}(#1)}
\newcommand\eq{\scaleobj{0.7}{=}}
\newcommand\diff{\scaleobj{0.7}{\neq}}
\newcommand{\fof}[1]{{{#1}{#1}}}
\newcommand{\floor}[1]{\lfloor #1 \rfloor}
\newcommand{\ceil}[1]{\lceil #1 \rceil}
\newcommand{\set}[1]{\left\{{#1}\right\}}
\newcommand{\kn}{\kappa}
\title{Combining Combination Properties: \\
An Analysis of Stable Infiniteness, Convexity, and Politeness}
\author{Guilherme V. Toledo$^{\text{1}}$
\and Yoni Zohar$^{\text{1}}$
\and Clark Barrett$^{\text{2}}$
}\institute{
$^{\text{1}}$Bar-Ilan University
\enskip
$^{\text{2}}$Stanford University
}
\begin{document}


\setcounter{page}{1}     

\maketitle

\begin{abstract}
We make two contributions to the study of theory combination in satisfiability modulo theories. The first is a table of examples for the combinations of the most common model-theoretic properties in theory combination, namely stable infiniteness, smoothness, convexity, finite witnessability, and strong finite witnessability (and therefore politeness and strong politeness as well). All of our examples are sharp, in the sense that we also offer proofs that no theories are available within simpler signatures. This table significantly progresses the current understanding of the various properties and their interactions.
The most remarkable example in this table is of a theory over a single sort that is polite but not strongly polite (the existence of such a theory was only known until now for two-sorted signatures).
The second contribution is a new combination theorem showing that in order to apply
polite theory combination, it is sufficient for one theory to be stably infinite and strongly finitely witnessable, thus showing that smoothness is not a critical property in this combination method. This result has the potential to greatly simplify the process of showing which theories can be used in polite combination, as showing stable infiniteness is considerably simpler than showing smoothness.%
\footnote{This work was funded in part by NSF-BSF grant numbers 2110397 (NSF) and 2020704 (BSF) and ISF grant number 619/21. 
}
\end{abstract}

\section{Introduction}

Theory combination focuses on the following problem: 
given procedures for determining the satisfiability of formulas over individual theories, can we find
a procedure for the combined theory? One of the foundational results in this field is in Nelson and Oppen's paper of 1979, \cite{10.1145/357073.357079}, where the authors show how to combine theories with disjoint signatures as long as they are both stably infinite, i.e., for every quantifier-free formula that is satisfied in the theory, there is an infinite interpretation of the theory that satisfies it.

With the introduction of stable infiniteness was born the notion of identifying model-theoretic properties that enable theory combination.
It soon became clear, however, that this first step was insufficient, since some important theories with real-world applications (like the theories of bit-vectors and finite datatypes) turned out not to be stably infinite.
Early attempts to find alternatives for stable infiniteness in theory combination included the introduction of gentle \cite{10.1007/978-3-642-04222-5_16}, shiny \cite{TinZar-RR-04}, and flexible \cite{KrsGGT-TACAS-07} theories. We focus here on  the notion of \emph{politeness}, which forms the basis for theory combination in the state-of-the-art SMT solver cvc5 \cite{DBLP:conf/tacas/BarbosaBBKLMMMN22}.

First considered 
in~\cite{ranise:inria-00000570}, polite theories were originally defined as those theories that are both smooth and finitely witnessable. Both notions are much harder to test for than stable infiniteness, but once a theory is known to be polite, it can be combined with any other theory, even non-stably-infinite ones.

A small problem in the proof of the main result of the paper was corrected in later work 
\cite{JB10-TR}.  Their paper introduces a slightly different, more strict, definition of politeness, together with a correct proof showing that polite theories can be combined with arbitrary theories. Following \cite{Casal2018}, we refer to theories satisfying the new definition as \emph{strongly} polite, which is defined as being both smooth and \emph{strongly finitely witnessable}; with that in mind, we call theories satisfying the earlier definition simply {\em polite}.

For some time, it was not known whether there exists a theory that is polite but not strongly polite.  Then, in 2021 Sheng et al. \cite{SZRRBT-21} provided an example. This suggests the need for a more thorough analysis of properties such as stable infiniteness, smoothness, finite witnessability, and strong finite witnessability, as they appear to interact with each other in sometimes surprising or unforeseeable ways.  We add to this list \emph{convexity}, which was shown to be closely related to stable infiniteness in~\cite{BDS02-FROCOS02}.

In this paper, we provide an exhaustive analysis, with examples whenever such examples are possible, of whether and how these properties can coexist. Some combinations are obviously not possible, such as a strongly finitely witnessable theory that is not finitely witnessable; the feasibility of other combinations is more elusive; for instance, it is initially unclear whether there could be a one-sorted, non-stably-infinite theory that is also not finitely witnessable (we show that this is also  impossible).  A main result is a comprehensive table describing what is known about all possible combinations of these properties.

During the course of our efforts to fill in the 
table, we were also able to improve polite
combination: by making the involved proof slightly more difficult, we can simplify the main polite theory combination result:
 we show that in order to combine theories,
it is enough for one theory to be stably infinite and strongly finitely witnessable;  there is no need for smoothness.
This result simplifies the process of qualifying a theory for polite combination, as showing stable infiniteness is considerably simpler than showing smoothness.

The paper is organized as follows. \Cref{Preliminary notions} defines the basic notions we will make use of throughout the paper. \Cref{Somerestrictions} proves several theorems showing the unfeasibility of certain combinations of properties. \Cref{Examples} describes the example theories that populate the feasible entries of the table.
\Cref{Politetheoresrerevisited} offers a new combination theorem.
And finally, \Cref{Conclusion} 
gives concluding remarks and directions for future work.\footnote{Due to space limitations, proofs are included in an appendix.}

\section{Preliminary Notions}\label{Preliminary notions}

\subsection{First-order Signatures and Structures}

A many-sorted signature $\Sigma$ is a triple formed by a countable set $\S_{\Sigma}$ of {\em sorts}, a countable set of function symbols $\F_{\Sigma}$, and a countable set of predicate symbols $\P_{\Sigma}$ which contains, for every sort $\sigma\in \S_{\Sigma}$, an equality symbol $=_{\sigma}$ 
(often denoted by $=$); 
each function symbol has an arity $\sigma_{1}\times\cdots\times\sigma_{n}\rightarrow \sigma$ and each predicate symbol an arity $\sigma_{1}\times\cdots\times \sigma_{n}$, where $\sigma_{1}, \ldots, \sigma_{n}, \sigma\in \S_{\Sigma}$ and $n\in\mathbb{N}$.  Each equality symbol $=_{\s}$ has arity $\sigma\times\sigma$.
A signature with no function or predicate symbols other than equalities is called {\em empty}.

A many-sorted signature $\Sigma$ is {\em one-sorted} if $\S_{\Sigma}$ has one element; 
we may refer to many-sorted signatures simply as signatures. Two signatures are said to be \emph{disjoint} if they share only sorts and equality symbols.

We assume for each sort in $\S_{\Sigma}$ a distinct countably infinite set of variables, and define terms, literals, and formulas (atomic or not) in the usual way. 
If $s$ is a function symbol of arity $\s\rightarrow\s$ and $x$ is a variable of sort $\s$, we define recursively the term $s^{k}(x)$, for $k\in\mathbb{N}$, as follows: $s^{0}(x)=x$, and $s^{k+1}(x)=s(s^{k}(x))$. We denote the set of free variables of sort $\sigma$ in a formula $\varphi$ by $\vars_{\sigma}(\varphi)$, and given $S\subseteq\S_{\Sigma}$, $\vars_{S}(\varphi)=\bigcup_{\sigma\in S}\vars_{\sigma}(\varphi)$ (we use $\vars(\phi)$ as shorthand for $\vars_{\S_{\Sigma}}$).

A $\Sigma$\emph{-structure} $\A$ is composed of sets $\sigma^{\A}$ for each sort $\sigma\in\S_{\Sigma}$, called the \emph{domain of $\sigma$}, equipped with interpretations $f^{\A}$ and $P^{\A}$ of the function and predicate symbols, in a way that respects their arities. 
Furthermore, $=_{\sigma}^{\A}$ must be the identity on $\sigma^{\A}$.

A $\Sigma$\emph{-interpretation} $\A$ is an extension of a $\Sigma$-structure that also interprets variables, with the value of a variable $x$ of sort $\sigma$ being an element $x^{\A}$ of $\sigma^{\A}$; we will sometimes say that an interpretation $\B$ is an interpretation on a structure $\A$ (over the same signature) to mean that $\B$ has $\A$ as its underlying structure. We write $\alpha^{\A}$ for the interpretation of the term $\alpha$ under $\A$; if $\Gamma$ is a set of terms, we define $\Gamma^{\A} = \{\alpha^{\A} : \alpha\in \Gamma\}$. We write $\A\vDash\varphi$ if $\A$ satisfies $\varphi$.
A formula $\varphi$ is called {\em satisfiable} if it is
satisfied by some interpretation $\A$. 

We shall make use of standard cardinality formulas,
given in \Cref{card-formulas}.
$\psi^{\sigma}_{\geq n}$ is only satisfied
by a structure $\A$ if $|\sigma^{\A}|$
is at least $n$,
$\psi^{\sigma}_{\leq n}$ is only satisfied
by $\A$ if $|\sigma^{\A}|$
is at most $n$,
and 
$\psi^{\sigma}_{= n}$ is only satisfied
by $\A$ if $|\sigma^{\A}|$
is exactly $n$.
In one-sorted signatures, we may drop $\s$ from the formulas, giving us $\psi_{\geq n}$, $\psi_{\leq n}$ and $\psi_{= n}$. 

\begin{figure}[t]
\begin{mdframed}
\[\psi^{\sigma}_{\geq n}=\exists\, x_{1}.\cdots \exists\, x_{m}.\: \bigwedge_{1\leq i<j\leq n}\neg(x_{i}=x_{j})
\quad\quad\psi^{\sigma}_{\leq n}=\exists\, x_{1}.\cdots \exists\, x_{n}.\:\forall\, y.\: \bigvee_{i=1}^{n}y=x_{i}
\quad\quad\psi^{\sigma}_{= n}=\psi^{\sigma}_{\geq n}\wedge \psi^{\sigma}_{\leq n}
\]
\end{mdframed}
\caption{Cardinality Formulas}
\label{card-formulas}
\end{figure}

The following lemmas are generalizations of the standard compactness and downward Skolem-L\"owenheim theorems of first-order logic to the many sorted case. They are proved in \cite{Monzano93}. 

\begin{lemma}[\cite{Monzano93}]\label{Compactness}
Let $\Sigma$ be a signature; a set of $\Sigma$-formulas $\Gamma$ is satisfiable if and only if each of its finite subsets $\Gamma_{0}$ is satisfiable.
\end{lemma}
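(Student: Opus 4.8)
The easy direction is immediate: if some interpretation $\A$ satisfies $\Gamma$, then that same $\A$ satisfies every subset of $\Gamma$, in particular every finite $\Gamma_{0}$. All the content is in the converse, and my plan is to reduce the many-sorted statement to the classical one-sorted compactness theorem, which I would take as known.

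First I would fix a fresh one-sorted signature $\Sigma^{\flat}$ with a single sort $\tau$. To it I add, for each sort $\sigma\in\S_{\Sigma}$, a unary predicate symbol $U_{\sigma}$ (intended to carve out the elements ``of sort $\sigma$''), and I retype every function and predicate symbol of $\Sigma$ so that it now acts over $\tau$ alone; since functions must be total in the one-sorted world whereas in $\Sigma$ they apply only to tuples of the correct sorts, I would fix a harmless default value to which misapplied functions are sent. I then define a translation $\varphi\mapsto\varphi^{\flat}$ that relativizes quantifiers, sending $\forall x{:}\sigma.\,\psi$ to $\forall x.\,(U_{\sigma}(x)\rightarrow\psi^{\flat})$ and $\exists x{:}\sigma.\,\psi$ to $\exists x.\,(U_{\sigma}(x)\wedge\psi^{\flat})$, and collapsing each sorted equality $=_{\sigma}$ to the single equality on $\tau$. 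Alongside the translation I collect a set $\Delta$ of structural axioms stating that each $U_{\sigma}$ is nonempty, that the $U_{\sigma}$ jointly exhaust the domain, and that every function symbol maps arguments satisfying the appropriate $U_{\sigma_{i}}$ to an output satisfying $U_{\sigma}$.

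The crux is a correspondence lemma: a set of $\Sigma$-formulas $\Lambda$ is many-sorted satisfiable if and only if $\Lambda^{\flat}\cup\Delta$ is one-sorted satisfiable. From a many-sorted interpretation one builds a one-sorted model by taking the (disjoint) union of the domains $\sigma^{\A}$ and interpreting $U_{\sigma}$ as the copy of $\sigma^{\A}$; conversely, from any one-sorted model of $\Delta$ one recovers a many-sorted interpretation by reading off $\sigma^{\A}:=\{a : U_{\sigma}(a)\}$, which is nonempty and closed under the functions precisely because the axioms of $\Delta$ hold. Each direction is verified by a routine induction on formula structure, the relativization clauses being exactly what makes the quantifier cases go through. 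With this in hand the argument closes quickly: assuming every finite subset of $\Gamma$ is satisfiable, every finite subset of $\Gamma^{\flat}\cup\Delta$ lies in some $\Gamma_{0}^{\flat}\cup\Delta_{0}$ with $\Gamma_{0}\subseteq\Gamma$ finite, and satisfiability of $\Gamma_{0}$ yields (via the correspondence) a one-sorted model of $\Gamma_{0}^{\flat}\cup\Delta\supseteq\Gamma_{0}^{\flat}\cup\Delta_{0}$; hence every finite subset of $\Gamma^{\flat}\cup\Delta$ is one-sorted satisfiable, classical compactness provides a model of the whole set, and the correspondence run in reverse produces a many-sorted interpretation of $\Gamma$.

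The main obstacle I expect is not conceptual but bookkeeping: pinning down the translation of equality and of typed/partial function application so that $\Delta$ forces a faithful back-and-forth, and carrying the correspondence induction cleanly through the relativized quantifiers. An alternative that bypasses the translation entirely is a direct ultraproduct construction — index by the finite subsets of $\Gamma$, choose a model of each, extend the finite-intersection filter generated by the sets $\{\Gamma_{0} : \varphi\in\Gamma_{0}\}$ to an ultrafilter, form the sort-wise ultraproduct, and invoke a many-sorted \L o\'s theorem — but this merely relocates the same bookkeeping into the proof of \L o\'s's theorem for many-sorted structures.
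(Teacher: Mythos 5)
The paper does not actually prove this lemma: it is stated with a citation and the text explicitly defers the proof to the reference, so there is no in-paper argument to compare against. Your route --- relativizing quantifiers with sort predicates $U_{\sigma}$, translating to a one-sorted signature, proving a satisfiability correspondence, and invoking classical compactness --- is the standard proof of many-sorted compactness and is essentially the strategy of the cited source, which develops many-sorted logic by reduction to the one-sorted case; so in substance you are reconstructing the omitted proof rather than diverging from it. Two points deserve care. First, your structural theory $\Delta$ should \emph{not} include the axiom that the $U_{\sigma}$ jointly exhaust the domain: the paper allows $\S_{\Sigma}$ to be countably infinite, in which case that exhaustiveness condition is an infinite disjunction and not first-order expressible; fortunately it is also unnecessary, since elements lying outside every $U_{\sigma}$ are simply discarded when you read the many-sorted structure back off a one-sorted model of $\Delta$, and the induction for the correspondence lemma only ever quantifies relative to some $U_{\sigma}$. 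Second, domains of distinct sorts in a $\Sigma$-structure need not be disjoint, so the ``disjoint union'' step should pass to tagged copies first; collapsing all the $=_{\sigma}$ to the single equality on $\tau$ is then harmless because well-sorted atomic formulas only ever equate terms of the same sort. With those adjustments the argument is correct, and the finite-subset bookkeeping at the end (covering each finite piece of $\Gamma^{\flat}\cup\Delta$ by some $\Gamma_{0}^{\flat}\cup\Delta$ with $\Gamma_{0}$ finite) is handled properly.
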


\begin{lemma}[\cite{Monzano93}]
\label{LowenheimSkolemDownwards}
Let $\Sigma$ be a signature and $\Gamma$ a set of $\Sigma$-formulas: if $\Gamma$ is satisfiable, there exists an interpretation $\A$ which satisfies $\Gamma$ and where $\s^{\A}$ is countable whenever it is infinite.
\end{lemma}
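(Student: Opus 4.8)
The plan is to prove this many-sorted version of the downward L\"owenheim--Skolem theorem by constructing an elementary substructure via Skolem functions, the crux being to carry out the cardinality bookkeeping sort by sort. First I would fix an interpretation $\B$ that satisfies $\Gamma$, which exists by hypothesis. Since $\S_{\Sigma}$, $\F_{\Sigma}$, and $\P_{\Sigma}$ are all countable, there are only countably many $\Sigma$-formulas, and in particular the set of free variables occurring across all of $\Gamma$ is countable.

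Next I would Skolemize the language. For each formula of the form $\exists\, x.\,\varphi$, where $x$ has sort $\s$ and the remaining free variables $y_{1},\dots,y_{k}$ have sorts $\tau_{1},\dots,\tau_{k}$, I introduce a fresh Skolem function symbol $f_{\varphi}$ of arity $\tau_{1}\times\cdots\times\tau_{k}\rightarrow\s$, and using the axiom of choice I interpret it in $\B$ so that $f_{\varphi}^{\B}(\bar{b})$ is a witness to $\exists\, x.\,\varphi$ whenever one exists (and is arbitrary otherwise). Only countably many such symbols are added, so the expanded structure still has a countable vocabulary.

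Then I would build the Skolem hull. For each sort $\s$ I pick a seed $A_{0}^{\s}\subseteq\s^{\B}$ that is countable, nonempty (so the resulting substructure has nonempty domains), and contains $x^{\B}$ for every free variable $x$ of sort $\s$ occurring in $\Gamma$; this is possible because there are only countably many such variables. I then close the family $(A_{0}^{\s})_{\s\in\S_{\Sigma}}$ under the interpretations of all function symbols, original and Skolem, obtaining domains $\s^{\A}=A^{\s}$. Since each function symbol has finite arity and there are only countably many of them, each $A^{\s}$ is countable. Let $\A$ be the resulting substructure of $\B$, with the variable assignment inherited from $\B$, which lands in $\A$ by the choice of seed.

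Finally I would verify that $\A$ is an elementary substructure of $\B$ using the Tarski--Vaught test: given any formula $\exists\, x.\,\varphi$ with parameters from $\A$ that is satisfied in $\B$, the value of $f_{\varphi}^{\A}$ on those parameters is a witness lying in $\A$, so the existential is satisfied in $\A$ as well; a routine induction on formula structure then shows $\A$ and $\B$ agree on every formula under the shared assignment. In particular $\A\vDash\Gamma$, and every infinite $\s^{\A}$ is countable, as required. I expect the main obstacle to be the correct many-sorted bookkeeping---ensuring the Skolem functions respect their sort arities and that the hull is simultaneously closed and countable in every sort---rather than any single deep step, since the argument otherwise parallels the classical one-sorted proof.
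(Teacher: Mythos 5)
Your proof is essentially correct, but note that the paper does not prove this lemma at all: it is quoted as a known result of many-sorted model theory and attributed to Manzano (the reference \cite{Monzano93}), so there is no in-paper argument to compare against. Your Skolem-hull construction is the standard route to the downward L\"owenheim--Skolem theorem and all the many-sorted bookkeeping you flag as the potential obstacle is handled correctly: the signature and the variable supply are countable by the paper's definitions, so only countably many Skolem functions are introduced, the sort-indexed seeds stay countable, and closure under countably many finitary operations keeps each $A^{\s}$ countable; the Tarski--Vaught step then goes through verbatim because each existential $\Sigma$-formula with parameters in the hull has its Skolem witness inside the hull. Two minor points worth tightening if you were to write this out in full. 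First, an interpretation assigns values to \emph{all} variables, not only those free in $\Gamma$; since there are countably many variables of each sort you can simply put $x^{\B}$ into the seed $A_{0}^{\s}$ for every variable $x$ of sort $\s$ (or reassign the irrelevant ones inside $A^{\s}$), so this is cosmetic. Second, it is worth saying explicitly that for a sort $\s$ with $\s^{\B}$ finite the hull $A^{\s}\subseteq\s^{\B}$ is automatically finite, so the conclusion ``countable whenever infinite'' holds for every sort simultaneously. Neither point is a gap in the argument.
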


A \emph{theory} $\T$ is a class of all $\Sigma$-structures that satisfy some set of closed formulas (formulas without free variables), called the \emph{axiomatization} of $\T$ which we denote as $\ax(\T)$; such structures will be called the \emph{models} of $\T$, a model being called {\em trivial} when $\s^{\A}$ is a singleton for some sort $\s$ in $\S_{\Sigma}$. A $\Sigma$-interpretation $\A$ whose underlying structure is in $\T$ is called a $\T$-interpretation. A formula is said to be $\T$\emph{-satisfiable} if there is a $\T$-interpretation that satisfies it; a set of formulas is $\T$-satisfiable if there is a $\T$-interpretation that satisfies each of its elements. Two formulas are $\T$\emph{-equivalent} when every $\T$-interpretation satisfies one if and only if it satisfies the other. We write $\tdash\varphi$ and say that $\varphi$ is $\T$-valid if $\A\vDash\varphi$ for every $\T$-interpretation $\A$.
Let $\Sigma_{1}$ and $\Sigma_{2}$ be disjoint signatures; by $\Sigma=\Sigma_{1}\cup\Sigma_{2}$, we mean the signature with the union of the sorts, function symbols, and predicate symbols of $\Sigma_{1}$ and $\Sigma_{2}$, all arities preserved. 
%
Given a $\Sigma_{1}$-theory $\T_{1}$ and a $\Sigma_{2}$-theory $\T_{2}$, the $\Sigma_{1}\cup\Sigma_{2}$-theory $\T=\T_{1}\oplus \T_{2}$ is the theory 
axiomatized by the union of the axiomatizations of $\T_{1}$ and $\T_{2}$.


\subsection{Model-theoretic Properties}
Let $\Sigma$ be a signature.
A $\Sigma$-theory $\T$ is said to be \emph{stably infinite} w.r.t. $S\subseteq \S_{\Sigma}$ if, for every $\T$-satisfiable quantifier-free formula $\phi$, there exists a $\T$-interpretation $\A$ satisfying $\phi$ such that, for each $\sigma\in S$, $\sigma^{\A}$ is infinite.
%
$\T$ is \emph{smooth} w.r.t. $S\subseteq \S_{\Sigma}$ when, for every quantifier-free formula $\phi$, $\T$-interpretation $\A$ satisfying $\phi$, and function $\kappa$ from $S$ to the class of cardinals such that 
$\kappa(\s)\geq|\s^{\A}|$ for every $\s\in S$, there exists a $\T$-interpretation $\B$ satisfying $\phi$ with $|\sigma^{\B}|=\kappa(\sigma)$, for every $\sigma\in S$.
%
Notice that, in light of \Cref{LowenheimSkolemDownwards}, smoothness implies stable infiniteness (taking $\kappa(\sigma)=\omega$ for every $\sigma\in S$). 
We state this in a theorem.

\begin{theorem}\label{SMimpliesSI}
Let $\Sigma$ be a signature, $S\subseteq\S_{\Sigma}$,
and $\T$ a $\Sigma$-theory. If
 $\T$ is smooth w.r.t. $S$,
 then it is also stably infinite w.r.t. $S$.
\end{theorem}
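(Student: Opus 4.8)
The plan is to follow the hint already stated in the excerpt: reduce an arbitrary satisfiable instance to a countable one, and then apply smoothness with the constant cardinal function equal to $\omega$. First I would fix a $\T$-satisfiable quantifier-free formula $\phi$; by definition there is a $\T$-interpretation satisfying it, which is the same as saying the set $\ax(\T)\cup\{\phi\}$ is satisfiable.

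The key step is to invoke \Cref{LowenheimSkolemDownwards} on this set to obtain a $\T$-interpretation $\A$ satisfying $\phi$ in which every sort that is infinite is in fact countable; in particular $|\s^{\A}|\leq\omega$ for every sort $\s$, and the underlying structure of $\A$ remains a model of $\T$ because $\A\vDash\ax(\T)$. With all sorts now of cardinality at most $\omega$, I can define $\kappa(\s)=\omega$ for every $\s\in S$ and be sure that the side condition $\kappa(\s)\geq|\s^{\A}|$ required by smoothness is satisfied.

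Then smoothness w.r.t. $S$, applied to $\phi$, $\A$, and this $\kappa$, yields a $\T$-interpretation $\B$ satisfying $\phi$ with $|\s^{\B}|=\omega$ for every $\s\in S$, so each such $\s^{\B}$ is infinite. Since $\phi$ was an arbitrary $\T$-satisfiable quantifier-free formula, this is exactly the definition of stable infiniteness w.r.t. $S$, and the proof concludes.

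The only real obstacle is the side condition on $\kappa$ in the definition of smoothness: one cannot simply take $\kappa$ constantly equal to $\omega$ against the original model, because some $\s^{\A}$ might be uncountable, in which case $\omega\geq|\s^{\A}|$ fails. This is precisely why the downward L\"owenheim--Skolem step must come first---it guarantees that all sorts are at most countable so that the constant function $\omega$ is a legal choice for $\kappa$. Everything else is routine bookkeeping against the two definitions.
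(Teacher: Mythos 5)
Your proposal is correct and matches the paper's own argument, which is exactly the remark preceding the theorem: apply \Cref{LowenheimSkolemDownwards} to obtain a model with all sorts of cardinality at most $\omega$, then invoke smoothness with $\kappa(\s)=\omega$ for every $\s\in S$. Your explicit observation that the L\"owenheim--Skolem step is needed to legitimize the side condition $\kappa(\s)\geq|\s^{\A}|$ is precisely the point the paper leaves implicit.
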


For a set of sorts $S$, finite sets of variables $V_{\sigma}$ of sort $\sigma$ for each $\s\in S$, and equivalence relations $E_{\sigma}$ on $V_{\sigma}$, the arrangement on $V=\bigcup_{\s\in S}V_{\sigma}$ induced by $E=\bigcup_{\s\in S}E_{\sigma}$, denoted by $\delta_{V}$ or $\delta_{V}^{E}$, is the formula
\[\delta_{V}=\bigwedge_{\sigma\in S}\big[\bigwedge_{xE_{\sigma}y}(x=y)\wedge\bigwedge_{x\overline{E_{\sigma}}y}\neg(x=y)\big],\]
where $\overline{E_{\sigma}}$ denotes the complement of the equivalence relation $E_{\sigma}$.

A theory $\T$ is said to be \emph{finitely witnessable} w.r.t. the set of sorts $S\subseteq \S_{\Sigma}$ when there exists a function $\wit$, called a \emph{witness}, from the quantifier-free formulas into themselves that is computable and satisfies 
for every quantifier-free formula $\phi$:
$(i)$~$\phi$ and $\exists\, \overarrow{w}.\:\wit(\phi)$ are $\T$-equivalent, where $\overarrow{w}=\vars(\wit(\phi))\setminus\vars(\phi)$; and
$(ii)$ if $\wit(\phi)$ is $\T$-satisfiable, then there exists a $\T$-interpretation $\A$ satisfying $\wit(\phi)$ such that
$\sigma^{\A}=\vars_{\sigma}(\wit(\phi))^{\A}$
for each $\sigma\in S$.
$\T$ is said to be \emph{strongly finitely witnessable} if it has a strong witness $\wit$, which has the properties of a witness with the exception of $(ii)$, satisfying instead:
$(ii')$~given a finite set of variables $V$ and an arrangement $\delta_{V}$ on $V$, if $\wit(\phi)\wedge\delta_{V}$ is $\T$-satisfiable, then there exists a $\T$-interpretation $\A$ satisfying $\wit(\phi)\wedge\delta_{V}$ such that
$\sigma^{\A}=\vars_{\sigma}(\wit(\phi)\wedge\delta_{V}\big)^{\A}$
for all $\sigma\in S$.

From the definitions, the following theorem directly follows:
\begin{theorem}\label{SFWimpliesFW}
Let $\Sigma$ be a signature, $S\subseteq\S_{\Sigma}$, and
$\T$ a $\Sigma$-theory.
If $\T$ is strongly finitely witnessable w.r.t. $S$ 
then it is also finitely witnessable w.r.t. $S$.
\end{theorem}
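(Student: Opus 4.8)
The plan is to show that any strong witness is automatically a witness, so that a single function certifies both properties. Let $\wit$ be a strong witness for $\T$ w.r.t.\ $S$; I would argue that this very function $\wit$ satisfies the requirements of an ordinary witness. Computability and condition $(i)$ are demanded verbatim by both definitions, so they transfer with nothing to prove. The only real content is to derive condition $(ii)$ from condition $(ii')$.

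For this, I would instantiate $(ii')$ at the empty set of variables, $V = \emptyset$. Taking each $V_{\sigma} = \emptyset$ forces the induced arrangement $\delta_{V}$ to be an empty conjunction, hence $\T$-valid; consequently $\wit(\phi) \wedge \delta_{V}$ is $\T$-equivalent to $\wit(\phi)$, and $\vars_{\sigma}(\wit(\phi) \wedge \delta_{V}) = \vars_{\sigma}(\wit(\phi))$ for every sort $\sigma \in S$, since $\delta_{V}$ contributes no variables. Now suppose $\wit(\phi)$ is $\T$-satisfiable. Then so is $\wit(\phi) \wedge \delta_{V}$, and $(ii')$ supplies a $\T$-interpretation $\A$ satisfying it with $\sigma^{\A} = \vars_{\sigma}(\wit(\phi) \wedge \delta_{V})^{\A} = \vars_{\sigma}(\wit(\phi))^{\A}$ for all $\sigma \in S$. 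As $\A$ also satisfies $\wit(\phi)$, this is precisely condition $(ii)$.

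There is no genuine obstacle here, which is why the statement follows directly from the definitions; the only point meriting a moment's care is confirming that the empty variable set is a legitimate instance of $(ii')$ and that its induced arrangement collapses to a $\T$-valid formula. Once that is observed, the strong witness $\wit$ serves unchanged as a witness, establishing that $\T$ is finitely witnessable w.r.t.\ $S$.
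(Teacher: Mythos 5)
Your argument is correct and is precisely the direct derivation the paper intends (the paper offers no written proof, stating only that the theorem "directly follows" from the definitions). Instantiating $(ii')$ at $V=\emptyset$ so that $\delta_{V}$ is the vacuously true empty arrangement recovers $(ii)$ verbatim, and the other witness conditions are shared by both definitions, so nothing further is needed.
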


A theory that is both smooth and finitely witnessable w.r.t. (a set of sorts) $\S$ is said to be \emph{polite} w.r.t. $\S$; a theory that is both smooth and strongly finitely witnessable w.r.t. $\S$ is called \emph{strongly polite} w.r.t. $\S$. 
For theories over one-sorted empty signatures, we have the following theorem from \cite{SZRRBT-21}:

\begin{theorem}[\cite{SZRRBT-21}]\label{OS+P=SP}
Every one-sorted theory over the empty signature that is polite w.r.t. its only sort is strongly polite w.r.t. that sort.
\end{theorem}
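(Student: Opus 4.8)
The plan is to use that a polite theory is by definition smooth, so it suffices to upgrade finite witnessability to \emph{strong} finite witnessability; combined with the smoothness we already have, this yields strong politeness. Write $\sigma$ for the unique sort. The key feature of the empty one-sorted signature is that the only terms are variables and the only atoms are equalities between variables, so every quantifier-free formula is a Boolean combination of such equalities. Two consequences drive the whole argument: first, whether a $\Sigma$-interpretation satisfies a quantifier-free formula $\rho$ depends only on the arrangement it induces on $\vars(\rho)$, i.e.\ on which of these variables receive equal values; second, since any two structures of equal cardinality are isomorphic and the class of models of $\T$ is closed under isomorphism, the models of $\T$ are \emph{exactly} the structures whose cardinality lies in the set $C=\{\,|\sigma^{\A}|:\A\text{ a model of }\T\,\}$.

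First I would pin down $C$. Smoothness applied to the formula $x=x$ shows $C$ is upward closed: if $c\in C$, then from any model of size $c$ and any cardinal $\kappa\ge c$ smoothness produces a model of size $\kappa$, so $\kappa\in C$. Assuming $\T$ is consistent (otherwise the statement is vacuous), let $n_{0}=\min C$. To see $n_{0}$ is finite, I would invoke finite witnessability with a witness $\wit$: the formula $x=x$ is $\T$-satisfiable, so by $(i)$ the formula $\exists\overarrow{w}.\wit(x=x)$ is $\T$-satisfiable, hence so is $\wit(x=x)$, and then $(ii)$ yields a model $\A$ with $\sigma^{\A}=\vars_{\sigma}(\wit(x=x))^{\A}$, a finite set. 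Thus $n_{0}\in\mathbb{N}$ and $C=\{\,c:c\ge n_{0}\,\}$ contains every cardinal from $n_{0}$ on.

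With $n_{0}$ in hand I would define a strong witness explicitly by padding: set $\wit^{s}(\phi)=\phi\wedge\bigwedge_{1\le i<j\le n_{0}}\neg(w_{i}=w_{j})$, where $w_{1},\dots,w_{n_{0}}$ are fresh variables. This is computable and quantifier-free. Property $(i)$ holds because every model of $\T$ has at least $n_{0}$ elements, so $\exists w_{1}\cdots w_{n_{0}}.\bigwedge_{i<j}\neg(w_{i}=w_{j})$ is $\T$-valid and, the $w_{i}$ being fresh, $\exists\overarrow{w}.\wit^{s}(\phi)$ is $\T$-equivalent to $\phi$. For $(ii')$, suppose $\wit^{s}(\phi)\wedge\delta_{V}$ is satisfied by some $\T$-interpretation $\B$. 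The padded disequalities force $w_{1},\dots,w_{n_{0}}$ into distinct classes of the arrangement that $\B$ induces on the finitely many variables of $\wit^{s}(\phi)\wedge\delta_{V}$, so that arrangement has some finite number $m\ge n_{0}$ of classes. Let $\A$ be the structure whose domain is this set of $m$ classes, interpreting each variable by its own class. Since $m\ge n_{0}$ we have $m\in C$, so $\A$ is a model of $\T$; since $\A$ induces the same arrangement as $\B$, it satisfies $\wit^{s}(\phi)\wedge\delta_{V}$; and by construction $\sigma^{\A}=\vars_{\sigma}(\wit^{s}(\phi)\wedge\delta_{V})^{\A}$. This establishes $(ii')$, so $\T$ is strongly finitely witnessable and hence strongly polite.

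I expect the only real obstacle to be the two structural facts about $C$: that smoothness forces it to be upward closed, and that finite witnessability forces its minimum to be finite. Once $n_{0}$ is a fixed natural number, both the padding witness and the verification of $(ii')$ are routine, since they rely only on the fact that over the empty signature the truth of a quantifier-free formula is completely determined by the equalities among its variables, so that realizing a satisfying arrangement on exactly the variable values gives a generated model of the required cardinality.
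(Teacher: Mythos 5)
Your proof is correct. Note that the paper itself does not prove this statement: \Cref{OS+P=SP} is imported from \cite{SZRRBT-21} and used as a black box, so there is no in-paper argument to compare against. Your argument is a valid self-contained proof, and it rests on exactly the right two observations: over the empty one-sorted signature a theory is determined by its spectrum $C$ of model cardinalities, and politeness pins $C$ down completely (smoothness makes $C$ upward closed via the formula $x=x$, finite witnessability forces a finite minimum $n_{0}$ via clause $(ii)$ applied to $\wit(x=x)$). Once $C=\{c : c\ge n_{0}\}$ with $n_{0}\in\mathbb{N}$, the padding witness $\wit^{s}(\phi)=\phi\wedge\bigwedge_{i<j}\neg(w_{i}=w_{j})$ works as you say: it is computable because $n_{0}$ is a fixed constant of the theory (no need to compute it from a presentation of $\T$), clause $(i)$ holds because the padded disequalities are $\T$-satisfiable in every model, and clause $(ii')$ follows by realizing the arrangement induced by any satisfying interpretation on the quotient of the variable set, which has $m\ge n_{0}$ classes and hence yields a genuine model, with satisfaction preserved because truth of quantifier-free formulas over the empty signature depends only on the induced arrangement. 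The only points worth being explicit about in a polished write-up are the degenerate cases you already flag in passing: the inconsistent theory (vacuous) and the fact that every $\Sigma_{1}$-formula contains at least one variable, so the quotient domain is nonempty. Your argument is also consonant with the paper's surrounding machinery (compare the compactness-based \Cref{SI=>Maximum model} and the quotient-style constructions in the proofs for $\Tleqn$ and $\Tmn$), but is more direct than anything the paper develops for this particular claim.
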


A one-sorted theory $\T$ is said to be \emph{convex} if, for any conjunction of literals $\phi$ and any finite set of variables $\{u_{1}, v_{1}, ... , u_{n}, v_{n}\}$, $\tdash\phi\rightarrow \bigvee_{i=1}^{n}u_{i}=v_{i}$ implies $\tdash\phi\rightarrow u_{i}=v_{i}$, for some $i\in [1,n]$.
%

Given a one-sorted theory $\T$, its $\mc$ function takes a quantifier-free formula $\phi$ and returns the countable cardinal
$\min\{|\s^{\A}| : \text{$\A$ is a $\T$-interpretation that satisfies $\phi$}\}$.\footnote{Note that this definition was generalized in two different ways to the many-sorted case in
\cite{Casal2018} and \cite{ranise:inria-00000570}.
However, for our investigation, the single-sorted case is enough.}

Throughout this paper, we will use
$\stainf$ for stably infinite,
$\smooth$ for smooth,
$\finwit$ for finitely witnessable,
$\strfinwit$ for strongly finitely witnessable, and
$\convex$ for convex.

\section{Negative Results}\label{Somerestrictions}
If it were possible, we would present examples of every combination of properties (stable infiniteness, smoothness, convexity, finite witnessability, and strong finite witnessability) using only the one-sorted empty signature, that is, the simplest signature imaginable.

Of course, this is not always possible: smooth theories are necessarily stably infinite, and strongly finitely witnessable theories are obligatorily finitely witnessable. But there are several other connections we now proceed to show, which further restrict the combinations of properties that are possible.

In \Cref{Restriction: convexity}, we show that, under reasonable conditions, a convex theory must be stably infinite, 
while the reciprocal is also true over the empty signature. 
In \Cref{Restrictions: FW}, we show that 
over the empty one-sorted signature,
theories that are 
not stably infinite are necessarily finitely witnessable (a somewhat counter-intuitive result, since we usually look for theories that are, simultaneously, smooth and strongly finitely witnessable) and, more importantly, that 
stably-infinite and strongly finitely witnessable one-sorted theories are also strongly polite.

\subsection{Stable-infiniteness and Convexity}\label{Restriction: convexity}

Convexity is typically defined over one-sorted signatures. 
Here we offer the following generalization to arbitrary signatures.

\begin{definition}
A theory $\T$ is said to be convex \wrt a set of sorts $S\subseteq \S_{\Sigma}$ if, 
for any conjunction of literals $\phi$ and any finite set of variables $\{u_{1}, v_{1}, ... , u_{n}, v_{n}\}$ with sorts in $S$,
if $\tdash\phi\rightarrow \bigvee_{i=1}^{n}u_{i}=v_{i}$ then 
$ \tdash\phi\rightarrow u_{i}=v_{i}$,
for some $i\in [1,n]$.
\end{definition}

 If we assume, as often it is natural to do, that our theories have no trivial models, then convexity implies stable infiniteness. This is true for the one-sorted case, as proved in \cite{BDS02-FROCOS02}, but also for the many-sorted case as we show here. 
 The proof is similar, though here we need to account for several sorts at once.
 In particular, we rely on \Cref{Compactness} for the proof.

 \begin{restatable}{theorem}{btonc}
\label{Barrett's theorem on convexity}
If a $\Sigma$-theory $\T$ is convex \wrt some set $S$ of sorts and, for each $\sigma\in\S$, $\tdash \psi^{\sigma}_{\geq 2}$, then $\T$ is stably infinite \wrt $S$.
\end{restatable}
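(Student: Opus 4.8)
The plan is to adapt the one-sorted argument of \cite{BDS02-FROCOS02} to several sorts at once, using \Cref{Compactness} to pass from arbitrarily large finite models to models with infinite sorts. First I would reduce to the case where the given $\T$-satisfiable quantifier-free formula is a conjunction of literals: writing it in disjunctive normal form, $\T$-satisfiability yields a disjunct $\phi$ that is itself a $\T$-satisfiable conjunction of literals, and any $\T$-model of $\phi$ in which the sorts of $S$ are infinite is automatically a model of the original formula. This reduction is essential, since convexity is stated only for conjunctions of literals.

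Next I would set up the compactness step. Since $\T$-satisfiability of a set of formulas is the same as satisfiability of that set together with $\ax(\T)$, to establish stable infiniteness w.r.t. $S$ it suffices to show that the set $\{\phi\} \cup \{\psi^{\sigma}_{\geq n} : \sigma \in S,\ n \in \mathbb{N}\}$ is $\T$-satisfiable, as any model of it satisfies $\phi$ and makes every sort of $S$ infinite. By \Cref{Compactness}, it is enough to $\T$-satisfy each finite subset; such a subset mentions only finitely many sorts $\sigma_{1}, \dots, \sigma_{k} \in S$ and is implied by $\phi \wedge \bigwedge_{l=1}^{k} \psi^{\sigma_{l}}_{\geq n}$ for a single large enough $n$. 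The crux thus reduces to showing, for every finite $\{\sigma_{1}, \dots, \sigma_{k}\} \subseteq S$ and every $n$, that $\phi \wedge \bigwedge_{l} \psi^{\sigma_{l}}_{\geq n}$ is $\T$-satisfiable.

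To prove this, I would introduce, for each $l$, fresh variables $w^{\sigma_{l}}_{1}, \dots, w^{\sigma_{l}}_{n}$ of sort $\sigma_{l}$ not occurring in $\phi$, and argue that $\phi \wedge \bigwedge_{l}\bigwedge_{i<j}\neg(w^{\sigma_{l}}_{i} = w^{\sigma_{l}}_{j})$ is $\T$-satisfiable; a model of this forces each $\sigma_{l}$ to contain at least $n$ elements, so $\psi^{\sigma_{l}}_{\geq n}$ holds. Equivalently, I must show that $\tdash \phi \rightarrow \bigvee_{l}\bigvee_{i<j}(w^{\sigma_{l}}_{i} = w^{\sigma_{l}}_{j})$ fails. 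This is exactly where convexity enters: the right-hand side is a disjunction of equalities between variables whose sorts all lie in $S$, so if the implication were $\T$-valid, convexity w.r.t. $S$ would force $\tdash \phi \rightarrow (w^{\sigma_{l}}_{i} = w^{\sigma_{l}}_{j})$ for some fixed $l$ and $i \neq j$.

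The main obstacle, and the heart of the proof, is to contradict this last validity, and here the hypothesis $\tdash \psi^{\sigma}_{\geq 2}$ is decisive: taking any $\T$-model $\A$ of $\phi$ (one exists, as $\phi$ is $\T$-satisfiable), I would pick two distinct elements of $\sigma_{l}^{\A}$ and reassign $w^{\sigma_{l}}_{i}$ and $w^{\sigma_{l}}_{j}$ to them; because these variables do not occur in $\phi$, the modified interpretation still satisfies $\phi$ but refutes $w^{\sigma_{l}}_{i} = w^{\sigma_{l}}_{j}$, contradicting the supposed $\T$-validity. Hence the disjunctive implication is not $\T$-valid, the conjunction of disequalities is $\T$-satisfiable, and the compactness step closes the argument. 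The subtleties to watch are that every equality appearing in the convexity step is well-typed within a single sort of $S$, and that the chosen variables are genuinely fresh, so that the model surgery leaves $\phi$ undisturbed.
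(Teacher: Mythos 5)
Your proof is correct and uses the same essential ingredients as the paper's: reduction to a $\T$-satisfiable cube via DNF, compactness (\Cref{Compactness}), convexity applied to a disjunction of equalities among fresh variables of sorts in $S$, and the hypothesis $\tdash\psi^{\sigma}_{\geq 2}$ to refute any single such equality by reassigning two fresh variables to distinct elements. The only difference is organizational: you argue directly (applying compactness on the satisfiability side to reduce to arbitrarily large finite models), whereas the paper argues by contraposition, extracting finite cardinality bounds $K_{\sigma}$ from an unsatisfiable set and then invoking the pigeonhole principle --- a step your direct arrangement avoids.
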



\noindent
Reciprocally, we may also obtain convexity from stable infiniteness, but only over empty signatures.

\begin{restatable}{theorem}{sietac}
\label{SI empty theories are convex}
Any theory over an empty signature that is stably infinite \wrt the set of all of its sorts is convex \wrt any set of sorts.
\end{restatable}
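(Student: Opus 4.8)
The plan is to prove the contrapositive. Fix an arbitrary set of sorts $S\subseteq\S_{\Sigma}$, a conjunction of literals $\phi$, and variables $u_{1},v_{1},\dots,u_{n},v_{n}$ of sorts in $S$, and assume that for every $i\in[1,n]$ we do \emph{not} have $\tdash\phi\rightarrow u_{i}=v_{i}$; equivalently, each $\phi\wedge\neg(u_{i}=v_{i})$ is $\T$-satisfiable. I must then produce a single $\T$-interpretation satisfying $\phi\wedge\bigwedge_{i=1}^{n}\neg(u_{i}=v_{i})$, which shows that $\phi\rightarrow\bigvee_{i=1}^{n}u_{i}=v_{i}$ is not $\T$-valid. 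The key structural observation is that, since $\Sigma$ is empty, the only terms are variables and every literal of $\phi$ has the form $x=y$ or $\neg(x=y)$ between variables of a common sort; thus $\phi$ imposes no constraint beyond a pattern of equalities and disequalities.

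First I would record the equality constraints as an equivalence relation. Let $V$ be the finite set of all variables occurring in $\phi$ or among the $u_{i},v_{i}$, and let $\sim$ be the equivalence relation on $V$ generated by the equalities $x=y$ appearing in $\phi$. Two consistency facts follow from $\T$-satisfiability. Since $\phi$ is $\T$-satisfiable (being implied by $\phi\wedge\neg(u_{1}=v_{1})$), no disequality $\neg(x=y)$ of $\phi$ can have $x\sim y$, for otherwise $\phi$ would entail both $x=y$ and $\neg(x=y)$. Likewise, for each $i$ we have $u_{i}\not\sim v_{i}$: were $u_{i}\sim v_{i}$, a chain of equalities of $\phi$ would force $\tdash\phi\rightarrow u_{i}=v_{i}$, contradicting the $\T$-satisfiability of $\phi\wedge\neg(u_{i}=v_{i})$. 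Hence $\sim$ is compatible with \emph{all} the disequalities that the target formula requires.

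Finally I would build the witness. Since $\phi$ is $\T$-satisfiable and $\T$ is stably infinite w.r.t. $\S_{\Sigma}$, there is a $\T$-interpretation $\A$ satisfying $\phi$ in which $\sigma^{\A}$ is infinite for every sort $\sigma$ (by \Cref{LowenheimSkolemDownwards} these may even be taken countable, though only infiniteness is needed). Keeping the underlying $\T$-structure of $\A$ fixed, I reinterpret the variables of $V$: because each sort's domain is infinite and $V$ is finite, I can choose an assignment that sends each $\sim$-class of a given sort to a distinct element of that sort, while assigning the same element to $\sim$-equivalent variables. Call the resulting interpretation $\B$. By construction $\B$ satisfies every equality of $\phi$ (equal classes), every disequality of $\phi$, and each $\neg(u_{i}=v_{i})$ (distinct classes, using the two facts above), so $\B\vDash\phi\wedge\bigwedge_{i=1}^{n}\neg(u_{i}=v_{i})$; and since $\B$ shares its structure with the $\T$-model $\A$, it is a $\T$-interpretation, completing the argument. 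The step requiring the most care is this reinterpretation: it is exactly here that emptiness of the signature (so that no function or predicate symbol constrains the new assignment) and infiniteness of the domains (so that enough distinct witnesses exist) are both used, and it is the reason stable infiniteness is the hypothesis that makes this converse direction go through.
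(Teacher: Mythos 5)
Your proof is correct, but it takes a genuinely different route from the paper's. The paper argues by contradiction and works semantically: it first discards pairs $(u_i,v_i)$ involving variables not occurring in $\phi$ (a separate preliminary reduction, itself using stable infiniteness), then takes for each remaining pair a $\T$-interpretation satisfying $\phi$ but falsifying that equality, uses stable infiniteness together with downward L\"owenheim--Skolem (\Cref{LowenheimSkolemDownwards}) to place all these witnesses on a single countable structure, and finally realizes the equivalence relation ``$x$ and $y$ agree in \emph{every} interpretation on that structure satisfying $\phi$'' by one fresh assignment. You instead work syntactically: your relation $\sim$ is the congruence closure of the equalities literally occurring in $\phi$, your two ``consistency facts'' show that all disequalities of $\phi$ and all target disequalities $\neg(u_i=v_i)$ cross $\sim$-classes, and then a single application of stable infiniteness yields one infinite model in which distinct classes can be sent to distinct elements. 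Your version is more elementary and shorter: it needs no L\"owenheim--Skolem, no merging of several witness interpretations onto a common domain, and no special treatment of pairs with variables outside $\phi$ (they simply form singleton classes). In this empty-signature setting the two relations coincide---the intersection over all models of the identification patterns consistent with $\phi$ is exactly the syntactic closure---so nothing is lost; the paper's semantic formulation is heavier machinery than needed here, though its intersection idea is the one that would survive in richer signatures, where a syntactic closure would have to become genuine congruence closure over function symbols. You also correctly isolate where both hypotheses enter: emptiness of the signature makes truth of $\phi$ depend only on the equality pattern, and infiniteness of the domains (i.e., stable infiniteness) is what makes any pattern consistent with the closure realizable.
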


As we shall see in \Cref{Overview}, this result is tight:
there are theories over non-empty signatures that are
stably infinite but not convex.


\subsection{More Connections}\label{Restrictions: FW}
We next present some more connections between the various properties.
%
%
First, over the one-sorted empty signature, a theory must be either stably infinite or finitely witnessable.

\begin{restatable}{theorem}{osessifw}
\label{OS+ES+-SI=>FW}
Every one-sorted, non-stably-infinite theory $\T$ with an empty signature is finitely witnessable w.r.t. its only sort.
\end{restatable}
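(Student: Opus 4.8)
The plan is to unpack the definition of non-stable-infiniteness and extract from it a uniform bound on model sizes, which will then make finite witnessability easy to establish.

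First I would analyze what it means for a one-sorted theory $\T$ over the empty signature to fail to be stably infinite. By definition, there is some $\T$-satisfiable quantifier-free formula $\phi_{0}$ such that every $\T$-interpretation satisfying $\phi_{0}$ has a finite domain. The key observation I expect to need is that this failure forces a \emph{global} finiteness bound on the theory: I claim there is a natural number $N$ such that every model of $\T$ has domain of size at most $N$. The reason is that over the empty signature a quantifier-free formula $\phi_{0}$ is, up to $\T$-equivalence, just a boolean combination of equalities among its finitely many variables, so its satisfiability is controlled entirely by cardinalities and arrangements of those variables; since $\phi_{0}$ can be made true (it is $\T$-satisfiable) but never in an infinite model, the interplay between $\phi_{0}$'s fixed variable set and the cardinality formulas $\psi_{\geq n}$ must cap the models of $\T$. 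Concretely, if models of $\T$ had unbounded finite cardinality, then $\{\ax(\T)\} \cup \{\phi_{0}\} \cup \{\psi_{\geq n} : n \in \mathbb{N}\}$ would be finitely satisfiable and hence, by \Cref{Compactness}, satisfiable, yielding an infinite $\T$-interpretation satisfying $\phi_{0}$ and contradicting the choice of $\phi_{0}$. So there is a least such bound $N$, and every $\T$-model has at most $N$ elements.

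Once this uniform bound $N$ is in hand, constructing a witness is straightforward. Given an arbitrary quantifier-free formula $\phi$, I would define $\wit(\phi)$ to introduce $N$ fresh variables $w_{1},\dots,w_{N}$ of the single sort and set
\[
\wit(\phi) = \phi \wedge \bigwedge_{i=1}^{N}\bigvee_{x\in\vars(\phi)\cup\{w_{1},\dots,w_{N}\}}(w_{i}=x),
\]
or more simply just add the $N$ fresh variables $w_{1}, \dots, w_{N}$ and leave $\phi$ otherwise untouched, noting that in any $\T$-model every domain element is already named by one of at most $N$ elements. I would verify property $(i)$: since the $w_{i}$ are fresh, $\exists\,\overarrow{w}.\:\wit(\phi)$ is trivially $\T$-equivalent to $\phi$. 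For property $(ii)$, given that $\wit(\phi)$ is $\T$-satisfiable, take any satisfying $\T$-interpretation $\A$; because $|\sigma^{\A}|\leq N$ and there are at least $N$ free variables available ($\vars_{\sigma}(\wit(\phi))$ contains $w_{1},\dots,w_{N}$ together with $\vars_\sigma(\phi)$), I can reassign the interpretations of these variables so that every element of $\sigma^{\A}$ equals the value of some variable, giving $\sigma^{\A}=\vars_{\sigma}(\wit(\phi))^{\A}$ while preserving satisfaction of $\phi$ (here the empty signature is crucial: there are no function or predicate symbols whose interpretation could be disturbed by changing which elements the variables point to).

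The main obstacle, and the step deserving the most care, is justifying the uniform bound $N$ and then the delicate reassignment of variables in property $(ii)$. The compactness argument establishes that no infinite model satisfies $\phi_{0}$, but I must rule out that models satisfying \emph{other} formulas could be arbitrarily large; I expect the correct statement is that \emph{every} $\T$-model is bounded by $N$, which again follows from compactness applied to $\{\ax(\T)\}\cup\{\psi_{\geq n}:n\in\mathbb{N}\}$ together with the fact that over the empty signature satisfiability of $\phi_0$ depends only on cardinality — so an infinite model of $\T$ would itself satisfy $\phi_{0}$. The second subtlety is ensuring that when I reassign variables to cover the domain, I do not accidentally violate the literals of $\phi$; since $\phi$'s truth depends only on the equality pattern among $\vars(\phi)$, I must assign the fresh witnesses $w_i$ to the ``extra'' domain elements not already named by $\vars_\sigma(\phi)$ without merging any previously-distinct variables of $\phi$, which is always possible precisely because the domain has at most $N$ elements and I have introduced $N$ fresh variables to spend.
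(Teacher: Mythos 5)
Your proposal is correct and follows essentially the same route as the paper: you first show that a non-stably-infinite theory over the empty one-sorted signature has a uniform finite bound $N$ on model size (the paper does this via a lemma that an infinite model would force stable infiniteness, plus compactness on $\ax(\T)\cup\{\psi_{\geq n}\}$), and then use the padding witness $\wit(\phi)=\phi\wedge\bigwedge_{i=1}^{N}(w_i=w_i)$ with the fresh variables reassigned surjectively onto the domain. The only cosmetic difference is that your first candidate witness formula is a tautological conjunct equivalent to the paper's, and your bounding argument carries $\phi_0$ through the compactness step rather than isolating the two lemmas, but the substance is identical.
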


The following theorem shows that for one-sorted theories, strong politeness is a corollary of strong finite witnessability and stable infiniteness (rather than smoothness).

\begin{restatable}{theorem}{sisfweqs}\label{SI+SFW=S}
Every one-sorted theory that is stably infinite and strongly finitely witnessable w.r.t. its only sort is smooth, and therefore strongly polite w.r.t. that sort.
\end{restatable}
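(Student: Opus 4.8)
The plan is to establish smoothness directly; strong politeness then follows immediately by definition, since being strongly polite w.r.t. the single sort is exactly being smooth together with being strongly finitely witnessable, and the latter is assumed. So I fix a quantifier-free formula $\phi$, a $\T$-interpretation $\A$ satisfying $\phi$, and a cardinal $\kappa \geq |\sigma^\A|$, where $\sigma$ is the unique sort; write $m = |\sigma^\A|$. The task is to produce a $\T$-interpretation $\B$ with $\B \vDash \phi$ and $|\sigma^\B| = \kappa$.

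First I would reduce to the witness. Let $\wit$ be a strong witness for $\T$. By property $(i)$, $\phi$ is $\T$-equivalent to $\exists\, \overarrow{w}.\, \wit(\phi)$; since $\A \vDash \phi$, I can extend $\A$ to a $\T$-interpretation $\A^{\ast}$ (with the same underlying structure, hence $|\sigma^{\A^{\ast}}| = m$) that satisfies $\wit(\phi)$. Let $W = \vars_{\sigma}(\wit(\phi))$, a finite set of variables, and let $\delta_{W}$ be the arrangement on $W$ induced by $\A^{\ast}$; then $\A^{\ast} \vDash \wit(\phi) \wedge \delta_{W}$, and the number of classes of $\delta_{W}$ equals $|W^{\A^{\ast}}| \leq \min(|W|, m)$. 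Applying stable infiniteness to the $\T$-satisfiable quantifier-free formula $\wit(\phi) \wedge \delta_{W}$, and then \Cref{LowenheimSkolemDownwards}, I obtain a countably infinite $\T$-model $\M \vDash \wit(\phi) \wedge \delta_{W}$; because $\M \vDash \delta_{W}$, the variables of $W$ still realize exactly $|W^{\A^{\ast}}|$ distinct values in $\M$.

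The case of $\kappa$ finite (so $m$, and hence $|W^{\A^{\ast}}| \leq m \leq \kappa$, are finite) is where strong finite witnessability does the real work. I choose fresh variables $y_{1}, \dots, y_{\kappa}$ and extend $\M$ by interpreting them as $\kappa$ pairwise distinct elements of $\sigma^{\M}$ chosen so that $W^{\M} \subseteq \{y_{1}^{\M}, \dots, y_{\kappa}^{\M}\}$, which is possible since $\sigma^{\M}$ is infinite and $|W^{\M}| \leq \kappa$. Let $V = W \cup \{y_{1}, \dots, y_{\kappa}\}$ and let $\delta_{V}$ be the arrangement induced by this extension of $\M$; by construction $\delta_{V}$ has exactly $\kappa$ classes and $\wit(\phi) \wedge \delta_{V}$ is $\T$-satisfiable. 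Property $(ii')$ then yields a $\T$-interpretation $\B$ satisfying $\wit(\phi) \wedge \delta_{V}$ with $\sigma^{\B} = \vars_{\sigma}(\wit(\phi) \wedge \delta_{V})^{\B} = V^{\B}$; since $\B \vDash \delta_{V}$, this forces $|\sigma^{\B}| = \kappa$, and $\B \vDash \wit(\phi)$ gives $\B \vDash \phi$ through $(i)$.

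The case of $\kappa$ infinite I would handle by an upward L\"owenheim--Skolem argument applied to $\M$ (or to $\A$ itself when $m$ is already infinite): expand the signature with $\kappa$ fresh constants constrained to be pairwise distinct; by \Cref{Compactness} the resulting theory together with $\phi$ is satisfiable in a $\T$-interpretation of cardinality at least $\kappa$, which is then trimmed to one of cardinality exactly $\kappa$ still satisfying $\phi$, and I take $\B$ to be this interpretation (with $\B = \A$ in the trivial subcase $\kappa = m$). I expect this last step to be the main obstacle: \Cref{LowenheimSkolemDownwards} as stated produces only countable models, so reaching an uncountable $\kappa$ exactly requires the more general downward L\"owenheim--Skolem in the language expanded by $\kappa$ constants, with some care to confirm that the trimmed structure remains a $\T$-model satisfying $\phi$. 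By contrast, the finite and countable cardinalities, which are the genuinely new content combining $\stainf$ and $\strfinwit$, fall out cleanly from the witness construction above.
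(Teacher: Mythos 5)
Your proposal is correct and follows essentially the same route as the paper's proof: reduce to the strong witness, use stable infiniteness to realize $\wit(\phi)$ together with the induced arrangement in an infinite model, pad the arrangement with fresh variables to force exactly $\kappa$ equivalence classes, and invoke property $(ii')$ to collapse the domain to those classes, with the infinite-$\kappa$ case dispatched by L\"owenheim--Skolem. The only differences are cosmetic bookkeeping (you add $\kappa$ fresh variables covering $W$, the paper adds $\kappa - |W^{\A'}|$ and treats the case where none are needed separately).
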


\noindent
Generalizing this theorem to the case of many-sorted signatures is left for future work.

Finally, by combining previous results, we can also get the following theorem, which relates stable infiniteness, strong finite witnessability, and convexity.

\begin{restatable}{theorem}{osessisfwcc}
\label{OS+ES+-SI+-SFW=>-C}
A one-sorted theory $\T$ with an empty signature that is neither strongly finitely witnessable nor stably infinite w.r.t. its only sort cannot be convex.
\end{restatable}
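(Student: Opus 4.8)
The plan is to argue by contradiction: assuming toward a contradiction that $\T$ is convex while being neither stably infinite nor strongly finitely witnessable w.r.t. its only sort $\s$, I would show that convexity together with non-stable-infiniteness forces \emph{every} model of $\T$ to be trivial, and that a theory all of whose models are trivial is strongly finitely witnessable --- contradicting the hypothesis. The argument is in the same spirit as \Cref{Barrett's theorem on convexity} and \Cref{SI empty theories are convex}, exploiting the fact that over an empty signature the truth of a quantifier-free formula under an assignment depends only on the induced equalities and disequalities among the variables, hence ultimately only on the cardinality of the domain.

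First I would record the key structural feature of empty one-sorted signatures: if $\T$ had an infinite model, then every $\T$-satisfiable quantifier-free formula would be satisfiable in that model, since a satisfying assignment realizes only finitely many distinct values and these can be reproduced among the infinitely many elements; thus $\T$ would be stably infinite. Contrapositively, as $\T$ is not stably infinite, $\T$ has no infinite model. Then, were $\T$ to have models of arbitrarily large finite cardinality, \Cref{Compactness} applied to $\ax(\T)\cup\{\psi_{\geq n}\}_{n\in\mathbb{N}}$ would produce a model in which $\s$ is infinite; hence there is a uniform bound $N$ with $\tdash\psi_{\leq N}$.

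Next I would convert this cardinality bound into triviality using convexity. Taking $N+1$ distinct variables $x_{1},\dots,x_{N+1}$ and the literal $\phi=(x_{1}=x_{1})$, the bound $\tdash\psi_{\leq N}$ gives, by the pigeonhole principle, $\tdash\phi\rightarrow\bigvee_{1\leq i<j\leq N+1}(x_{i}=x_{j})$. Convexity then yields $\tdash\phi\rightarrow(x_{i}=x_{j})$ for some $i\neq j$, that is $\tdash x_{i}=x_{j}$ for distinct variables; since variables range over arbitrary elements, this is possible only if every model has at most one element, so $\tdash\psi_{\leq 1}$ and all models of $\T$ are trivial.

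Finally I would verify that such a theory is strongly finitely witnessable, yielding the contradiction. The candidate strong witness is $\wit(\phi)=\phi\wedge(w=w)$ with $w$ fresh: condition $(i)$ holds because domains are nonempty, and for $(ii')$ any interpretation $\A$ satisfying $\wit(\phi)\wedge\delta_{V}$ is trivial, so $\s^{\A}$ is the singleton $\{w^{\A}\}=\vars_{\s}(\wit(\phi)\wedge\delta_{V})^{\A}$, as required. I expect the main obstacle to be the second block --- isolating precisely how non-stable-infiniteness over an empty signature collapses to a uniform finite cardinality bound via \Cref{Compactness} --- since the convexity step is then a clean pigeonhole application and the witnessability step is routine once triviality is established.
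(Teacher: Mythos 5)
Your proposal is correct and uses essentially the same ingredients as the paper's proof: the maximum-finite-cardinality fact obtained from compactness and the absence of infinite models, the pigeonhole-plus-convexity step, and the observation that a theory with only trivial models is strongly finitely witnessable. The only difference is the order in which the hypotheses are discharged --- the paper first uses the failure of strong finite witnessability to rule out the trivial-only case and then contradicts convexity, whereas you use convexity to collapse all models to singletons and then contradict the failure of strong finite witnessability --- an immaterial reorganization of the same argument.
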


\begin{figure}[t]
\centering
    \begin{tikzpicture}[scale=0.55]
\def\rectangle{(-4,-3) rectangle (4,5.5)}
\def\firstcircle{(-1,0) coordinate (a) circle (1.5cm)}
\def\secondcircle{(1,0) coordinate (b)  circle (1.5cm)}
\def\thirdcircle{(-1,0) coordinate (c) circle (2.5cm)}
\def\fourthcircle{(1,0) coordinate (d)  circle (2.5cm)}
\def\fifthcircle{(0, 2.7) coordinate (e) circle (2.5cm)}
    \begin{scope}
\fill[gray!50] \rectangle;
\fill[white] \fifthcircle;
\fill[white] \fourthcircle;
\fill[white] \thirdcircle;
\end{scope}
    \begin{scope}
\fill[gray!50] \fifthcircle;
\fill[white] \fourthcircle;
\fill[white] \thirdcircle;
\end{scope}
\begin{scope}
    \clip \fourthcircle;
    \fill[gray!50] \fifthcircle;
\end{scope}
\begin{scope}
    \fill[white] \secondcircle;
\end{scope}
    \begin{scope}
\fill[gray!50] \thirdcircle;
\end{scope}
    \begin{scope}
\clip \fifthcircle;
\fill[white] \thirdcircle;
\end{scope}
\begin{scope}
\clip \firstcircle;
\fill[gray!50] \fourthcircle;
\end{scope}
\begin{scope}
\clip \secondcircle;
\fill[gray!50] \thirdcircle;
\end{scope}
\begin{scope}
\clip \firstcircle;
\clip \fifthcircle;
\fill[white] \secondcircle;
\end{scope}
\draw \rectangle;
\draw \firstcircle;
\draw \secondcircle;
\draw \thirdcircle;
\draw \fourthcircle;
\draw\fifthcircle;
\node[label={\scriptsize $\stainf$}] (B) at (-2.95,-0.8) {};
\node[label={\scriptsize $\finwit$}] (B) at (3,-0.8) {};
\node[label={\scriptsize $\strfinwit$}] (B) at (2,-0.8) {};
\node[label={\scriptsize $\smooth$}] (B) at (-2,-0.8) {};
\node[label={\scriptsize $\convex$}] (B) at (0,2.7) {};
    \end{tikzpicture}
    \caption{A diagram of possible combinations over a one-sorted, empty signature: gray regions are empty.}\label{venn-mono-empty}
\end{figure}

To summarize, while \Cref{Barrett's theorem on convexity} is restricted to structures with no domains of cardinality $1$, the remaining theorems of this section are not restricted to such structures.
\Cref{SI empty theories are convex} applies to empty signatures,
\Cref{SI+SFW=S} applies to one-sorted signatures,
and \Cref{OS+ES+-SI=>FW,OS+ES+-SI+-SFW=>-C} apply to signatures that are both empty and one-sorted.
Put together, we see that many combinations of properties for theories over a one-sorted empty signature are actually impossible.
This is depicted in \Cref{venn-mono-empty},
in which all areas but the white ones are empty.
For example, \Cref{OS+ES+-SI=>FW} shows
that the area outside the SI and FW circles (representing
theories that are neither stably infinite nor finitely witnessable) is empty,
as every theory (over an empty one-sorted signature) must have one of these properties.
Similarly, \Cref{OS+ES+-SI+-SFW=>-C} further shows that within the CV (convex) circle, even more is empty, namely anything outside the SI and SW circles.

\section{Positive Results}\label{Examples}\label{Empty signature}\label{Overview}

We now proceed to systematically address all possible combinations of stable-infiniteness, smoothness, finite witnessability, strong finite witnessability, and convexity.

\begin{table}[t]
\renewcommand{\arraystretch}{1}
\centering
\begin{tabular}{|P{0.6cm}|P{0.6cm}|P{0.6cm}|P{0.6cm}|P{0.6cm}|P{2cm}P{2cm}P{2cm}P{2cm}|P{0.5cm}|}
\hline
\multicolumn{5}{|c|}{} & \multicolumn{2}{c|}{Empty} & \multicolumn{2}{c|}{Non-empty} & \\
\hline
$\stainf$ & $\smooth$ & $\finwit$ & $\strfinwit$ & $\convex$ & \multicolumn{1}{c|}{One-sorted} & \multicolumn{1}{c|}{Many-sorted} &\multicolumn{1}{c|}{One-sorted} & \multicolumn{1}{c|}{Many-sorted}& $N^{\underline{o}}$\\\hline
\multirow{16}{*}{$T$} & \multirow{8}{*}{$T$}&\multirow{4}{*}{$T$}&\multirow{2}{*}{$T$}&$T$&\multicolumn{1}{c|}{$\Tgeqn$}&\multicolumn{1}{c|}{$\adds{\Tgeqn}$}& \multicolumn{1}{c|}{$\addf{\Tgeqn}$}&\multicolumn{1}{c|}{$\addf{\adds{\Tgeqn}}$}& 1\\\hhline{~~~~------}

&&&&$F$&\multicolumn{2}{c|}{\Cref{SI empty theories are convex}\cellcolor{red!15}}& \multicolumn{1}{c|}{$\addnc{\Tgeqn}$} &\multicolumn{1}{c|}{$\addnc{\adds{\Tgeqn}}$}& 2\\\hhline{~~~-------}

&&&\multirow{2}{*}{$F$}&$T$&\multicolumn{1}{c|}{\cellcolor{red!15}}&\multicolumn{1}{c|}{$\Ttwo$}&\multicolumn{1}{c|}{$\TsM$}&\multicolumn{1}{c|}{$\addf{\TsM}$}& 3\\\hhline{~~~~-*{1}{>{\arrayrulecolor{red!15}}|-}*{1}{>{\arrayrulecolor{black}}|-}---}

&&&&$F$&\multicolumn{1}{c|}{\multirow{-2}{*}{\cellcolor{red!15}\Cref{OS+P=SP}}}&\multicolumn{1}{c|}{\Cref{SI empty theories are convex}\cellcolor{red!15}}& \multicolumn{1}{c|}{$\TSM$}&\multicolumn{1}{c|}{$\addnc{\Ttwo}$}& 4\\\hhline{~~--------}

&&\multirow{4}{*}{$F$}&\multirow{2}{*}{$T$}&$T$& \multicolumn{4}{c|}{\cellcolor{red!15}}&5\\\cline{5-5}\cline{10-10}

&&&&$F$& \multicolumn{4}{c|}{\multirow{-2}{*}{\cellcolor{red!15} \Cref{SFWimpliesFW}}}& 6\\\cline{4-10}

&&&\multirow{2}{*}{$F$}&$T$&\multicolumn{1}{c|}{$\Tinfty$}&\multicolumn{1}{c|}{$\adds{\Tinfty}$}&\multicolumn{1}{c|}{$\addf{\Tinfty}$}&\multicolumn{1}{c|}{$\addf{\adds{\Tinfty}}$}&7\\\hhline{~~~~------}

&&&&$F$&\multicolumn{2}{c|}{\cellcolor{red!15}\Cref{SI empty theories are convex}}&\multicolumn{1}{c|}{$\addnc{\Tinfty}$}&\multicolumn{1}{c|}{$\addnc{\adds{\Tinfty}}$}& 8\\\hhline{~---------}

&\multirow{8}{*}{$F$}&\multirow{4}{*}{$T$}&\multirow{2}{*}{$T$}&$T$&\multicolumn{1}{c|}{\cellcolor{red!15}}&\multicolumn{1}{c|}{\textcolor{red}{Unicorn}}&\multicolumn{1}{c|}{\cellcolor{red!15}}& & 9\\\hhline{~~~~-*{1}{>{\arrayrulecolor{red!15}}|-}*{1}{>{\arrayrulecolor{black}}|-}*{1}{>{\arrayrulecolor{red!15}}|-}~*{1}{>{\arrayrulecolor{black}}|-}}

&&&&$F$&\multicolumn{1}{c|}{\multirow{-2}{*}{\cellcolor{red!15}\Cref{SI+SFW=S}}} &\multicolumn{1}{c|}{\cellcolor{red!15}\Cref{SI empty theories are convex}}&\multicolumn{1}{c|}{\multirow{-2}{*}{\cellcolor{red!15}\Cref{SI+SFW=S}}}&\multicolumn{1}{c|}{\multirow{-2}{*}{\textcolor{red}{Unicorn}}}&10\\\cline{4-10}

&&&\multirow{2}{*}{$F$}&$T$&\multicolumn{1}{c|}{$\Teven$}&\multicolumn{1}{c|}{$\adds{\Teven}$}&\multicolumn{1}{c|}{$\addf{\Teven}$}&\multicolumn{1}{c|}{$\addf{\adds{\Teven}}$}& 11\\\cline{5-8}\hhline{~~~~------}

&&&&$F$&\multicolumn{2}{c|}{\cellcolor{red!15}\Cref{SI empty theories are convex}}&\multicolumn{1}{c|}{$\addnc{\Teven}$}&\multicolumn{1}{c|}{$\addnc{\adds{\Teven}}$}& 12\\\hhline{~~--------}

&&\multirow{4}{*}{$F$}&\multirow{2}{*}{$T$}&$T$&\multicolumn{4}{c|}{\cellcolor{red!15}}&13\\\cline{5-5}\cline{10-10}

&&&&$F$&\multicolumn{4}{c|}{\multirow{-2}{*}{\cellcolor{red!15} \Cref{SFWimpliesFW}}}& 14\\\cline{4-10}

&&&\multirow{2}{*}{$F$}&$T$&\multicolumn{1}{c|}{$\Tninfty$}&\multicolumn{1}{c|}{$\adds{\Tninfty}$}&\multicolumn{1}{c|}{$\addf{\Tninfty}$}&\multicolumn{1}{c|}{$\addf{\adds{\Tninfty}}$}&15\\\hhline{~~~~------}

&&&&$F$&\multicolumn{2}{c|}{\cellcolor{red!15}\Cref{SI empty theories are convex}}&\multicolumn{1}{c|}{$\addnc{\Tninfty}$}&\multicolumn{1}{c|}{$\addnc{\adds{\Tninfty}}$}& 16\\\hline

\multirow{16}{*}{$F$}&\multirow{8}{*}{$T$}&\multirow{4}{*}{$T$}&\multirow{2}{*}{$T$}&$T$&\multicolumn{4}{c|}{\cellcolor{red!15}}& 17\\\cline{5-5}\cline{10-10}

&&&&$F$&\multicolumn{4}{c|}{\cellcolor{red!15}}& 18\\\cline{4-5}\cline{10-10}

&&&\multirow{2}{*}{$F$}&$T$&\multicolumn{4}{c|}{\cellcolor{red!15}}& 19\\\cline{5-5}\cline{10-10}

&&&&$F$&\multicolumn{4}{c|}{\multirow{-4}{*}{\cellcolor{red!15} \Cref{SMimpliesSI}}}& 20\\\hhline{~~---*{4}{|>{\arrayrulecolor{black}}-}-}

&&\multirow{4}{*}{$F$}&\multirow{2}{*}{$T$}&$T$&\multicolumn{4}{c|}{\cellcolor{red!15}}& 21\\\cline{5-5}\cline{10-10}

&&&&$F$&\multicolumn{4}{c|}{\multirow{-2}{*}{\cellcolor{red!15} Theorems \ref{SMimpliesSI} and \ref{SFWimpliesFW}}}&22\\\hhline{~~~--*{4}{|>{\arrayrulecolor{black}}-}-}

&&&\multirow{2}{*}{$F$}&$T$&\multicolumn{4}{c|}{\cellcolor{red!15}}& 23\\\cline{5-5}\cline{10-10}

&&&&$F$&\multicolumn{4}{c|}{\multirow{-2}{*}{\cellcolor{red!15} \Cref{SMimpliesSI}}}& 24\\\cline{2-10}

&\multirow{8}{*}{$F$}&\multirow{4}{*}{$T$}&\multirow{2}{*}{$T$}&$T$&\multicolumn{1}{c|}{$\Tone$}&\multicolumn{1}{c|}{$\adds{\Tone}$}&\multicolumn{1}{c|}{$\addf{\Tone}$}&\multicolumn{1}{c|}{$\addf{\adds{\Tone}}$}& 25\\\cline{5-10}

&&&&$F$&\multicolumn{1}{c|}{$\Tleqn$}&\multicolumn{1}{c|}{$\adds{\Tleqn}$}&\multicolumn{1}{c|}{$\addf{\Tleqn}$}&\multicolumn{1}{c|}{$\addf{\adds{\Tleqn}}$}& 26\\\hhline{~~~-------}

&&&\multirow{2}{*}{$F$}&$T$&\multicolumn{1}{c|}{\cellcolor{red!15}\Cref{OS+ES+-SI+-SFW=>-C}}&\multicolumn{1}{c|}{$\Toddtwo$}&\multicolumn{1}{c|}{$\Tneqodd$}& \multicolumn{1}{c|}{$\addf{\Toddtwo}$}&27\\\hhline{~~~~------}

&&&&$F$& \multicolumn{1}{c|}{$\Tmn$}&\multicolumn{1}{c|}{$\adds{\Tmn}$}&\multicolumn{1}{c|}{$\addf{\Tmn}$}&\multicolumn{1}{c|}{$\addf{\adds{\Tmn}}$}& 28\\\hhline{~~--------}

&&\multirow{4}{*}{$F$}&\multirow{2}{*}{$T$}&$T$&\multicolumn{4}{c|}{\cellcolor{red!15}}& 29\\\cline{5-5}\cline{10-10}

&&&&$F$&\multicolumn{4}{c|}{\multirow{-2}{*}{\cellcolor{red!15} \Cref{SFWimpliesFW}}}& 30\\\hhline{~~~-------}

&&&\multirow{2}{*}{$F$}&$T$&\multicolumn{1}{c|}{\cellcolor{red!15}}&\multicolumn{1}{c|}{$\Tonetwo$}&\multicolumn{1}{c|}{$\Tneqoneinfty$}&\multicolumn{1}{c|}{$\addf{\Tonetwo}$}& 31\\\hhline{~~~~-*{1}{>{\arrayrulecolor{red!15}}|-}*{1}{>{\arrayrulecolor{black}}|-}---}

&&&&$F$&\multicolumn{1}{c|}{\multirow{-2}{*}{\cellcolor{red!15}\Cref{OS+ES+-SI=>FW}}}&\multicolumn{1}{c|}{$\Ttwotwo$}&\multicolumn{1}{c|}{$\Tneqtwoinfty$}&\multicolumn{1}{c|}{$\addf{\Ttwotwo}$}& 32\\\hline
\end{tabular}
\renewcommand{\arraystretch}{1}
\vspace{1em}
\caption[Caption for LOF]{Summary of all possible combinations of theory properties. 
Red cells represent impossible combinations. 
In line $26$: $n>1$; in line $28$: $m>1$, $n>1$ and $|m-n|>1$.}
\label{tab-summary}
\end{table}

The results are summarized in \Cref{tab-summary}.
Each row 
corresponds to a possible combination of properties, as determined by the truth values in the first five columns.
For example, in the first row, the entries in the first five columns are all true, indicating that in this row, all theory examples must be
stably-infinite, smooth, finitely witnessable, strongly finitely witnessable, and convex.
The rest of the columns correspond to different possibilities for the theory signatures: either empty or non-empty, and either one-sorted or many-sorted.  Again, looking at the first row, we see four different theories listed, one for each of the signature possibilities.


Some entries in the table list theorems instead of providing example theories.  The listed theorems tell us that there do not exist any example theories for these entries.
For example,
lines $3$ and $4$ cannot provide 
examples over a one-sorted empty signature because of \Cref{OS+P=SP}.

When an example is available, its name is given in corresponding cell of the table.
The theories themselves are defined in 
\Cref{sec:theories_one_sort_no_fun,sec:theories_two_sort_no_fun,sec:theories_one_sort_with_fun,sec:theories_two_sort_with_fun}.
%
The examples on lines $25$, $27$ and $31$ must have at least one structure with a trivial domain (i.e., a domain with exactly one element) because of \Cref{Barrett's theorem on convexity}.

Lines $9$, $10$, $13$, and $14$ cover theories that are stably infinite and strongly finitely witnessable but not smooth.  We call these \emph{unicorn theories} because we could not find any such theories, nor do we believe they exist, but (ignoring the obvious cases ruled out by \Cref{SFWimpliesFW,SI empty theories are convex,SI+SFW=S}) we have no proof that they do not exist.

\begin{definition}\label{unicorntheories}
A \emph{unicorn theory} is stably infinite and strongly finitely witnessable but not smooth.
\end{definition}

%
\Cref{SI+SFW=S} shows that there are no one-sorted unicorn theories.
We believe it may be possible to provide a generalization of the upwards 
L\"owenheim-Skolem theorem to many-sorted logic in such a way that it would prove the non-existence of unicorn theories, which leads to the following conjecture:

\begin{conjecture}
There are no unicorn theories.
\end{conjecture}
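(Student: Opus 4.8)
The plan is to prove the conjecture by establishing the many-sorted analogue of \Cref{SI+SFW=S}: every $\Sigma$-theory $\T$ that is stably infinite and strongly finitely witnessable \wrt a set of sorts $S$ is smooth \wrt $S$. Since, by \Cref{unicorntheories}, a unicorn is exactly a theory that is stably infinite and strongly finitely witnessable but \emph{not} smooth, this implication immediately rules them out. So fix a strong witness $\wit$ for $\T$, a quantifier-free $\phi$, a $\T$-interpretation $\A\models\phi$, and a cardinal function $\kappa$ on $S$ with $\kappa(\sigma)\geq|\sigma^{\A}|$ for each $\sigma\in S$; the goal is a $\T$-interpretation $\B\models\phi$ with $|\sigma^{\B}|=\kappa(\sigma)$ for every $\sigma\in S$. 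I would split $S$ into the sorts $S_{\mathrm{fin}}$ on which $\kappa$ is finite and the sorts $S_{\mathrm{inf}}$ on which it is infinite, since these require different machinery, and then merge the two regimes into a single model.

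The first step is to build a model with the correct \emph{finite} sizes and with the infinite sorts as large as we like (finitely) at this stage. Expand the signature by a family of fresh constants in each sort of $S_{\mathrm{inf}}$, axiomatized to be pairwise distinct, and consider together with $\phi$ an arrangement pinning every sort of $S_{\mathrm{fin}}$ to exactly $\kappa(\sigma)$ elements. For each finite subset of these sentences, property $(i)$ lets us pass from $\phi$ to $\wit(\phi)$, and by padding the arrangement with fresh pairwise-distinct variables of each sort I realize any prescribed finite separation; stable infiniteness is what guarantees that such a padded formula is $\T$-satisfiable (an all-infinite model of $\wit(\phi)$ realizes the separations), so property $(ii')$ then collapses the domains down to a model meeting the finite targets exactly and the infinite sorts at an arbitrary finite size. \Cref{Compactness} then yields a $\T$-model in which each sort of $S_{\mathrm{fin}}$ has size exactly $\kappa(\sigma)$ while each sort of $S_{\mathrm{inf}}$ has size at least $\kappa(\sigma)$. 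A decisive observation here, read contrapositively from $(ii')$, is a \emph{disentanglement} property: because $(ii')$ ranges over \emph{all} finite variable sets, no sort's cardinality can be rigidly forced from above by another's --- a bijection between two sorts, for instance, fails strong finite witnessability, since padding one sort past the fixed supply of witness variables of the other yields a satisfiable formula with no minimal model.

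The genuine difficulty, and the reason this remains a conjecture, is the second step: trimming each infinite sort down to \emph{exactly} $\kappa(\sigma)$, independently and simultaneously. For a single sort the classical L\"owenheim--Skolem theorem delivers every intermediate cardinality exactly, but \Cref{LowenheimSkolemDownwards} only forces countability and collapses \emph{all} infinite sorts together, whereas smoothness demands distinct sorts driven to distinct prescribed infinite cardinals at once. Note also that $(ii')$ cannot be used directly here, as its conclusion (every sort equal to the interpretation of finitely many witness variables) forces all of $S$ to be finite. The crux is therefore a many-sorted L\"owenheim--Skolem theorem with independent per-sort control: given the model of the previous step, realize each $\kappa(\sigma)$ for $\sigma\in S_{\mathrm{inf}}$ exactly while leaving $S_{\mathrm{fin}}$ fixed. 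I would attempt this by an elementary-substructure or constant-diagram construction carried out one sort at a time, invoking the disentanglement from strong finite witnessability to argue that adjusting one sort never perturbs another.

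In summary, the whole conjecture reduces to this independent-cardinality many-sorted L\"owenheim--Skolem theorem for strongly finitely witnessable theories; \Cref{SI+SFW=S} is precisely its $|S|=1$ instance. This is the step I expect to be the main obstacle: it is not a classical result, it genuinely fails for entangled theories, and the entire argument rests on the bet --- matching the authors' own intuition --- that strong finite witnessability supplies exactly the disentanglement needed to make the independent per-sort cardinality adjustment go through.
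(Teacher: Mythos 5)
This statement is a \emph{conjecture} in the paper: the authors offer no proof, and your proposal does not close it either --- by your own admission it reduces the conjecture to an unproven ``independent per-sort cardinality'' many-sorted L\"owenheim--Skolem theorem, so what you have is a reduction and a research program, not a proof. It is worth noting that the part of your argument that actually goes through is, in substance, already in the paper: your first step (achieve every finite target $\kappa(\sigma)$ exactly and push the remaining sorts up, via fresh pairwise-distinct variables appended to the arrangement, stable infiniteness to satisfy the padded formula, property $(ii')$ to collapse, and \Cref{Compactness} to assemble) is precisely the proof of \Cref{SI+SFW=CS} in the appendix, which establishes exact control of all cardinalities up to and including $\omega$. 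So up to $\omega$ your route coincides with the paper's, and indeed \Cref{SI+SFW=S} is the one-sorted case, where the classical (one-sorted) upward and downward L\"owenheim--Skolem theorems finish the job.

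The genuine gap is everything above $\omega$, and your ``disentanglement'' heuristic does not bridge it. Your observation that a forced bijection or injection between sorts destroys strong finite witnessability is correct, but it is an argument that lives entirely in the finite regime: $(ii')$ only ever produces models whose $S$-sorts are covered by the values of \emph{finitely many} variables, so it gives literally no leverage at uncountable cardinalities, and nothing in the proposal shows that the finite-level independence of sorts propagates upward. Your suggested implementation --- elementary substructures or constant diagrams ``one sort at a time'' --- is exactly where this bites: \Cref{LowenheimSkolemDownwards} and standard elementary-substructure constructions shrink all infinite sorts together, and a per-sort L\"owenheim--Skolem--Tarski theorem with some sorts held fixed at prescribed uncountable sizes is false for general theories and unproven for SFW ones; asserting that ``adjusting one sort never perturbs another'' is the conjecture itself in disguise. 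The paper's stated intuition is the same bet, phrased as an upward generalization rather than your compactness-overshoot-then-trim formulation --- these are equivalent framings of one missing theorem --- so your proposal faithfully reconstructs the authors' roadmap and correctly isolates the obstacle, but it should not be presented as a proof of the statement.
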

 
Before defining the theories that are used in \Cref{tab-summary}, we introduce the following signatures.
\begin{definition}
\label{def:sigsofex}
$\Sigma_{1}$ is the empty one-sorted signature with sort $\sigma$, 
$\Sigma_{2}$ is the empty two-sorted signature with sorts $\sigma$ and $\sigma_{2}$,
and $\Sigma_{s}$ is the one-sorted signature with a single unary function symbol $s$.
\end{definition}

\noindent
We now describe the theories:
\Cref{sec:theories_one_sort_no_fun} describes the theories that are
over the empty one-sorted signature; 
\Cref{sec:theories_two_sort_no_fun} then continues to the next column,
describing theories over many-sorted empty signatures.
Some build on the theories of the previous column, but some are also new.
\Cref{sec:theories_one_sort_with_fun} describes the next column, one-sorted theories
over a non-empty signature.
Here, we use two constructions to generate new theories from previously introduced ones. 
One construction
 adds a function symbol to an empty signature (in a way that preserves all properties), and
the second preserves all properties but convexity, making it possible to construct non-convex examples in a uniform way.
We also present new theories when the constructions are not sufficient.
Finally, \Cref{sec:theories_two_sort_with_fun} describes theories over non-empty many-sorted signatures.\footnote{Proofs that each theory has the claimed properties can be found in the appendix.}

\subsection{Theories over the One-sorted Empty Signature}
\label{sec:theories_one_sort_no_fun}

\begin{figure}[t]
\centering
\begin{minipage}[t]{.4\textwidth}
  \centering
\renewcommand{\arraystretch}{1.8}
\centering
\begin{tabular}{c|c}
Name & Axiomatization\\
\hline
$\Tgeqn$  & $\{\psi_{\geq n}\}$\\
$\Tinfty$  & $\{\psi_{\geq k} :  k\in\mathbb{N}\}$\\
$\Teven$  & $\{\neg\psi_{=2k+1} :  k\in\mathbb{N}\}$\\
$\Tninfty$ & $\{\psi_{=n}\vee\psi_{\geq k} : k\in\mathbb{N}\}$\\
$\Tleqn$  & $\{\psi_{\leq n}\}$\\
$\Tmn$ & $\{\psi_{=m}\vee\psi_{=n}\}$\\

\end{tabular}
\renewcommand{\arraystretch}{1}

  \captionof{figure}{$\Sigma_{1}$-theories.
}
  \label{tab-theories-sigma-0}
\end{minipage}%
\begin{minipage}[t]{.6\textwidth}
  \centering
  \renewcommand{\arraystretch}{1.8}
\centering
\begin{tabular}{c|c}
Name & Axiomatization\\
\hline
$\Ttwo$ & $\{(\psi^{\s}_{=2}\wedge\psi^{\s_{2}}_{\geq k})\vee(\psi^{\s}_{\geq 3}\wedge \psi^{\s_{2}}_{\geq 3}) : k\in\mathbb{N}\}$\\
$\Toddtwo$  & $\{\psi^{\s}_{=1}\}\cup\{\neg\psi^{\s_{2}}_{=2k} : k\in\mathbb{N}\}$\\
$\Tonetwo$  & $\{\psi^{\s}_{=1}\}\cup\{\psi^{\s_{2}}_{\geq k} : k\in\mathbb{N}\}$\\
$\Ttwotwo$  & $\{\psi^{\s}_{=2}\}\cup\{\psi^{\s_{2}}_{\geq k} : k\in\mathbb{N}\}$
\end{tabular}
\renewcommand{\arraystretch}{1}
  \captionof{figure}{$\Sigma_2$-theories.}
  \label{tab-theories-sigma-2}
\end{minipage}
\end{figure}

The axiomatizations for theories over the one-sorted empty signature $\Sigma_{1}$ are given in \Cref{tab-theories-sigma-0}.
We briefly describe them here.

For each $n>0$, $\Tgeqn$ includes all structures with domains of cardinality at least $n$;
$\Tinfty$ is the theory including all structures whose domains are infinite;
$\Teven$ has structures with either an even or an infinite number of elements in their domains and was defined in \cite{SZRRBT-21}, where it was proved to be finitely witnessable, but neither smooth nor strongly finitely witnessable. The proofs justifying~\Cref{tab-summary} show additionally that it is stably infinite and convex.
$\Tninfty$ contains those structures whose domains have either exactly $n$ or an infinite number of elements;
$\Tleqn$ includes all structures with at most $n$ elements in their domains;
and for positive integers $m$ and $n$, $\Tmn$ 
has structures whose domains have either precisely $m$ elements, or precisely $n$ elements. 
%
This completes the first column of theory examples.

\begin{example}
The theory $\Tgeqn$ admits all considered properties,
while $\Tmn$ admits only finite witnessability.
\end{example}

\subsection{Theories over the Two-sorted Empty Signature}
\label{sec:theories_two_sort_no_fun}
We next introduce the theories over empty two-sorted signatures.
For many cases, we can simply add a trivial sort to one of the theories defined in \Cref{sec:theories_one_sort_no_fun}.
When this is not possible, we introduce new theories.

\subsubsection{Adding a sort to a theory}
Any $\Sigma_{1}$-theory can be used to generate a $\Sigma_2$-theory simply by adding the sort $\sigma_2$ to the signature (without changing the axiomatization).
This is formalized as follows:

\begin{definition}
\label{def:addsort}
Let $\T$ be a $\Sigma_{1}$-theory.
$\Tadds$ is the $\Sigma_{2}$-theory axiomatized by $\ax(\T)$.
\end{definition}

\begin{restatable}{lemma}{lemaddsort}
\label{lem:addsort}
Let $\T$ be a $\Sigma_{1}$-theory. Then: $\T$ is stably infinite, smooth, finitely witnessable, strongly finitely witnessable, or convex w.r.t. $\{\s\}$ if and only if $\Tadds$ is, respectively, stably infinite, smooth, finitely witnessable, strongly finitely witnessable, or convex w.r.t. $\{\s, \s_{2}\}$.
\end{restatable}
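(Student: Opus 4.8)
The plan is to prove \Cref{lem:addsort} by establishing the two directions of the biconditional for each of the five properties, exploiting the fact that $\Tadds$ and $\T$ have \emph{identical} axiomatizations, differing only in that $\Tadds$ carries an extra sort $\s_2$ with no axioms constraining it. The central observation is that, since no function or predicate symbol involves $\s_2$ (the signature $\Sigma_2$ is empty, so the only symbol of sort $\s_2$ is $=_{\s_2}$), any quantifier-free $\Sigma_2$-formula $\phi$ splits cleanly into a $\s$-part and a $\s_2$-part; moreover $\s_2$ can be populated with any nonempty set whatsoever in a model of $\Tadds$, completely independently of how $\s$ is interpreted. I would state this decomposition as the first working lemma and use it throughout.

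For the forward direction of each property, the main idea is that a $\Tadds$-interpretation $\A$ restricts to a $\T$-interpretation (its $\s$-reduct), and conversely a $\T$-interpretation can be expanded to a $\Tadds$-interpretation by freely choosing $\s_2^{\A}$. First I would handle \textbf{stable infiniteness}: given $\T$ stably infinite w.r.t. $\{\s\}$, and a $\Tadds$-satisfiable quantifier-free $\phi$, I project to the $\s$-part, obtain an interpretation with infinite $\s$-domain, and then re-expand, choosing $\s_2$ infinite as well (which I am free to do). The converse projects back. \textbf{Smoothness} is analogous but requires care with the cardinality function $\kappa$ on $\{\s,\s_2\}$: from smoothness of $\T$ I control $|\s^{\B}|=\kappa(\s)$, and since $\s_2$ is unconstrained I independently set $|\s_2^{\B}|=\kappa(\s_2)$; the converse restricts $\kappa$ to $\{\s\}$. \textbf{Convexity} is the most purely syntactic: since the variables $u_i,v_i$ in the definition have sorts in the relevant set and the axiomatizations coincide, $\T$-validity and $\Tadds$-validity of the implications agree once one observes that adding an unconstrained sort cannot create or destroy entailments among equalities; here I must be slightly careful that literals of sort $\s_2$ may appear, but the free sort lets any arrangement on $\s_2$-variables be realized, so it imposes no new disjunctive consequences.

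The genuinely delicate cases are \textbf{finite witnessability} and \textbf{strong finite witnessability}, because the witness function must be \emph{computable} and must produce, for the combined signature, an interpretation whose domain for \emph{each} sort in $\{\s,\s_2\}$ is exactly the interpretations of the witness variables of that sort. The plan is to build the witness for $\Tadds$ from the witness $\wit$ for $\T$ as follows: given a $\Sigma_2$-formula $\phi$, decompose it, apply $\wit$ to the $\s$-part, and for the $\s_2$-part introduce (if necessary) a fresh $\s_2$-variable so that $\vars_{\s_2}$ of the output is nonempty and the condition $\s_2^{\A}=\vars_{\s_2}(\wit(\phi))^{\A}$ can be met by collapsing $\s_2$ onto those variables. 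I expect \textbf{this to be the main obstacle}: one must check that the resulting function still satisfies clause $(i)$ ($\T$-equivalence of $\phi$ with $\exists\overarrow{w}.\wit(\phi)$ over the combined theory) and clause $(ii)$ (or $(ii')$ with an arbitrary arrangement $\delta_V$ spanning both sorts), and in the strong case one must verify that an arrangement mixing $\s$- and $\s_2$-variables still yields a model whose every sort-domain equals the span of its witness variables. The subtlety is ensuring the $\s_2$ part never forces infinitely many elements and that the freshly added variable handling is uniform and computable; I would isolate the handling of the $\s_2$-component as a small self-contained sub-argument so that the reduction to the already-assumed witness for $\T$ is transparent. For the converse directions of both witnessability properties, I would restrict the given witness to $\s$-formulas and argue that the induced function is a (strong) witness for $\T$, which is the easier direction.
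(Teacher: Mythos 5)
Your treatment of stable infiniteness, smoothness, and (essentially) convexity follows the same route as the paper: project a $\Tadds$-interpretation to its $\s$-reduct, use the property of $\T$, and re-expand by populating $\s_2$ freely. One caveat on convexity: your claim that the free sort ``imposes no new disjunctive consequences'' is too strong --- the $\s_2$-literals of $\phi$ do entail equalities among $\s_2$-variables. The paper's argument first shows that an entailed disjunction must have either its $\s$-half or its $\s_2$-half entailed (via a product of two countermodels), and then applies convexity of $\T$ to the former and convexity of the unconstrained one-sorted equality theory to the latter; you would need to make that case split explicit.

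The genuine gap is in the direction ``$\T$ (strongly) finitely witnessable $\Rightarrow$ $\Tadds$ (strongly) finitely witnessable.'' Your plan is to ``decompose $\phi$, apply $\wit$ to the $\s$-part,'' but a general quantifier-free $\Sigma_2$-formula is an arbitrary Boolean combination of atoms of both sorts and has no well-defined $\s$-part as a $\Sigma_1$-formula; the clean split you invoke only exists for conjunctions of literals. The paper resolves this by first proving that over an empty signature any (strong) witness can be normalized to a \emph{variable-dependent} one, $\wit(\phi)=\phi\wedge\chi(\vars(\phi))$ with $\chi(V)=\bigvee_{E}\wit(\delta_{V}^{E})$ ranging over all arrangements of $V$. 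Because $\chi$ depends only on the set of variables, the witness for $\Tadds$ can be taken as $\phi\wedge\chi(\vars_{\s}(\phi))\wedge(u=u)$ without ever extracting an ``$\s$-part'' of $\phi$, and computability is immediate. Without this normalization (or an equivalent device, such as explicitly disjoining $\wit$ over all arrangements of the $\s$-variables), your construction is not well-defined, and the verification of clauses $(i)$ and $(ii)/(ii')$ cannot even begin. A smaller but related issue affects your converse direction: ``restrict the given witness to $\s$-formulas'' does not by itself yield a $\Sigma_1$-witness, since $\wit(\phi)$ may contain fresh $\s_2$-variables; the paper eliminates them by substituting all of them with a single variable already occurring in $\phi$, which turns every $\s_2$-atom into a tautology while preserving the witness conditions.
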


Using \Cref{def:addsort} and \Cref{lem:addsort},
we can populate many lines in the second column of examples by extending the corresponding theory from the previous column.

%
\begin{example}
$\adds{\Tgeqn}$ is a theory over two sorts, $\sigma$ and $\sigma_2$, whose structures
must have at least $n$ elements in the domain of $\sigma$ (but have no restrictions on the size of the domain of $\sigma_2$).
As seen in the first line of \Cref{tab-summary},
$\Tgeqn$ admits all the considered properties.
By \Cref{lem:addsort}, so does
$\adds{\Tgeqn}$.
\end{example}

\subsubsection{Additional theories over $\Sigma_{2}$}

On some lines, such as line $3$,
there is no theory over $\Sigma_{1}$ to extend.
In such cases, we cannot use \Cref{def:addsort} to construct
a many-sorted variant.

We introduce the theories shown in \Cref{tab-theories-sigma-2} to cover these cases.
The theory $\Ttwo$ contains two kinds of structures: (i) structures whose domains both have at least $3$ elements; and (ii) structures with exactly two elements in the domain of $\s$ and an infinite number of elements in the domain of $\s_{2}$. 
The theory ${\Toddtwo}$ has structures with exactly one element in the domain of $\s$ and either an odd or an infinite number of elements in the domain of $\s_2$. 
The theory $\Tonetwo$ is similar: it has structures with exactly one element in the domain of $\s$ and an infinite number of elements in the domain of $\s_2$. 
Finally, $\Ttwotwo$ is similar to $\Tonetwo$ except that its structures have exactly 2 elements in the domain of $\s$.

\begin{example}
    The theory $\Ttwo$ was first defined in \cite{Casal2018} and later used in \cite{SZRRBT-21}, where it was proved to be polite (and therefore smooth, stably infinite, and finitely witnessable) without being strongly polite (and therefore not strongly finitely witnessable). 
    The justification proofs for~\Cref{tab-summary} show that $\Ttwo$ is convex as well.\footnote{We thank Oded Padon for raising the question of whether there exists a theory that is polite and convex, but not strongly polite.}
\end{example}

\subsection{Theories over a One-sorted Non-empty Signature}
\label{sec:theories_one_sort_with_fun}
We continue to the next column
of \Cref{tab-summary}, with one-sorted non-empty signatures.
\Cref{sec:addf} shows how to construct
non-empty theories from one-sorted theories over the empty signature, while preserving all their properties.
In \Cref{sec:addnc}, we provide a similar construction which generates non-convex theories from the theories in the first column of examples.
And in \Cref{sec:additionalonenonempty}, we introduce additional theories not captured by the above constructions.
Two of these theories are described in more detail in
\Cref{sec:matitheories}.

\subsubsection{Extending a theory with a unary function symbol while preserving properties}
\label{sec:addf}
Whenever we have a theory over an empty signature,
we can construct a variant of it over a non-empty signature
by introducing a function symbol and interpreting it as the identity function.
This extension preserves all the properties that we consider.
This is formalized as follows.
\begin{definition}
\label{def:addfun}
Let $\Sigma_{n}$ be an empty signature with
sorts $S=\{\s_{1}, \ldots, \s_{n}\}$, 
and let $\T$ be a $\Sigma_{n}$-theory. 
The signature $\Sigma^{n}_{s}$ has sorts $S$ and a single unary function symbol
$s$ of arity $\s_{1}\rightarrow\s_{1}$, and
$\Taddf$ is the $\Sigma^{n}_{s}$-theory axiomatized by $\ax(\T)\cup\{\forall\, x.\:[s(x)=x]\}$, where $x$ is a variable of sort $\s_{1}$.
\end{definition}
\begin{restatable}{lemma}{addfunlem}
\label{lem:addfun}
For every theory $\T$ over an empty signature $\Sigma_{n}$ with sorts $S=\{\s_{1}, \ldots, \s_{n}\}$:
$\T$
is stably infinite, smooth, finitely witnessable, 
strongly finitely witnessable, or convex w.r.t. $S$ if and only if $\Taddf$ is, respectively, stably infinite, smooth, finitely witnessable, strongly finitely witnessable, or convex w.r.t. $S$.
\end{restatable}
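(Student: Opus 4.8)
The plan is to exploit the fact that the extension $\Taddf$ interprets the new symbol $s$ rigidly as the identity, so that the new signature adds no genuine expressive power over the sorts in $S$. The central observation is a term-elimination fact: in every model of $\Taddf$, the axiom $\forall\,x.\:[s(x)=x]$ forces $s^{\A}$ to be the identity on $\s_1^{\A}$, hence for every term $\alpha$ of sort $\s_1$ we have $(s^k(\alpha))^{\A}=\alpha^{\A}$. Consequently, every quantifier-free $\Sigma^n_s$-formula $\phi$ is $\Taddf$-equivalent to the $\Sigma_n$-formula $\overline{\phi}$ obtained by replacing each subterm $s^k(x)$ by $x$ (which is well defined since the only function symbol is $s$, and every $\Sigma^n_s$-term over the variables is of the form $s^k(x)$ for some variable $x$). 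I would first state and prove this normalization claim, as it is the engine for every direction below.

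Next I would set up the correspondence between models. Given a $\Sigma_n$-structure $\A$, let $\widehat{\A}$ be the $\Sigma^n_s$-structure with the same domains and with $s^{\widehat{\A}}=\mathrm{id}_{\s_1^{\A}}$; conversely, every model $\B$ of $\Taddf$ arises this way from its $\Sigma_n$-reduct, since the axiom pins $s^{\B}$ down completely. This gives a bijection between $\T$-interpretations and $\Taddf$-interpretations that preserves all domains and all variable assignments, and by the normalization claim it preserves satisfaction of quantifier-free formulas (identifying $\phi$ with $\overline{\phi}$). With this dictionary in hand, each of the five equivalences is a matter of transporting the relevant definition across the bijection. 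For stable infiniteness, smoothness, and the two witnessability notions, one checks that a witnessing $\T$-interpretation with prescribed domain cardinalities corresponds to a $\Taddf$-interpretation with the same cardinalities and vice versa; crucially, $\vars_{\s}(\phi)=\vars_{\s}(\overline{\phi})$ for every sort $\s$, so the domain-coverage conditions $\sigma^{\A}=\vars_{\sigma}(\wit(\phi))^{\A}$ match up. For convexity, I would use that entailments $\tdash\phi\rightarrow\bigvee_i u_i=v_i$ transfer between the theories because the only atoms of sort $\s_1$ over the variables involve terms that normalize to variables.

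The one point requiring real care is the witness functions. A witness for $\T$ produces $\Sigma_n$-formulas, but a witness for $\Taddf$ must be a computable map on $\Sigma^n_s$-formulas; so I would define the witness for $\Taddf$ by $\wit'(\phi)=\wit(\overline{\phi})$ and verify conditions $(i)$ and $(ii)$ (resp.\ $(ii')$) using the normalization claim to move between $\phi$ and $\overline{\phi}$, and symmetrically define a witness for $\T$ from one for $\Taddf$ by restricting to $\Sigma_n$-inputs. I expect the main obstacle to be the strong-witnessability direction $(ii')$: there one must also normalize the arrangement $\delta_V$ and check that satisfiability of $\wit(\phi)\wedge\delta_V$ together with the domain-coverage condition is preserved under the bijection of models. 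Because $s$ is interpreted as the identity and arrangements only constrain equalities and disequalities among variables, this goes through, but it is the step where the bookkeeping between the two signatures is least automatic and most deserves to be written out in full.
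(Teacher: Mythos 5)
Your proposal is correct and follows essentially the same route as the paper: the paper's Lemma~\ref{defining Ts from T} is exactly your normalization claim plus the model bijection (its $\subff{\varphi}$, $\plusf{\A}$, and $\subf{\C}$ are your $\overline{\phi}$, $\widehat{\A}$, and the reduct), and the five properties are then transported across this correspondence just as you describe. The only cosmetic difference is in the witness for $\Taddf$: the paper uses $\phi\wedge\wit(\subff{\phi})$ where you use $\wit(\overline{\phi})$, which also works since $\phi$ and $\overline{\phi}$ are $\Taddf$-equivalent and the coverage condition is checked against the variables of the witness image either way.
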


\noindent
We use the operator $\addf{\cdot}$ in various places in \Cref{tab-summary}
in order to obtain examples in non-empty signatures from existing examples over $\Sigma_{1}$ and $\Sigma_{2}$.
\begin{example}
$\addf{\Tgeqn}$ is a one-sorted theory, whose structures
have at least $n$ elements and interpret the function symbol $s$ as the identity.
As seen in the first column of examples of \Cref{tab-summary},
$\Tgeqn$ admits all the considered properties.
By \Cref{lem:addfun}, so does
$\addf{\Tgeqn}$.
\end{example}

\subsubsection{Making a theory non-convex}
\label{sec:addnc}
The last general construction that we present aims at taking a 
theory and creating a non-convex variant of it while preserving the other 
properties we consider.
This can be done with the addition of a single unary function symbol $s$.
To define such a theory, we make use of the formula
$\psiv$
from \Cref{fig-nc}. Intuitively, $\psi_{\vee}$ states that in an interpretation $\A$ in which it holds, 
$s^{\A}(s^{\A}(a))$ must equal either $s^{\A}(a)$ or $a$ itself; in other words, either $a=s^{\A}(a)=s^{\A}(s^{\A}(a))$, $a=s^{\A}(s^{\A}(a))\neq s^{\A}(a)$,
or $a\neq s^{\A}(a)=s^{\A}(s^{\A}(a))$, as shown in \Cref{scenarios for psiv}.

\begin{figure}[t]
\begin{mdframed}
\[\psiv=\forall\, x.\:\big[\big(s^{2}(x)=x\big)\vee\big(s^{2}(x)=s(x)\big)\big]\]
\end{mdframed}
\caption{The formula $\psiv$ for non-convex theories.}
\label{fig-nc}
\end{figure}

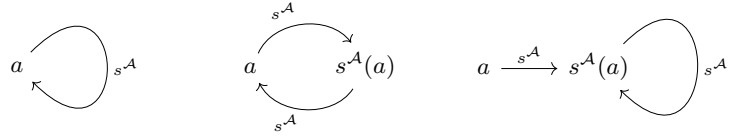
\begin{figure}[t]
\centering
\begin{minipage}{0.2\textwidth}
\begin{tikzcd}
a \arrow[loop right, distance=6em, in=315,out=45, "s^{\A}"]
\end{tikzcd}
\end{minipage}
\begin{minipage}{0.2\textwidth}
\begin{tikzcd}
a \arrow[r, bend left=60, "s^{\A}"]
& s^{\A}(a)\arrow[l, bend left=60, "s^{\A}"]
\end{tikzcd}
\end{minipage}
\begin{minipage}{0.2\textwidth}
\begin{tikzcd}
a \arrow[r, "s^{\A}"]
& s^{\A}(a)\arrow[loop right, distance=6em, in=315,out=45, "s^{\A}"]
\end{tikzcd}
\end{minipage}
\caption{Possible scenarios when $\psiv$ holds}\label{scenarios for psiv}
\end{figure}

This is especially useful for defining non-convex theories, since $(s^{2}(x)=x)\vee(s^{2}(x)=s(x))$ is valid in the theory, but neither $s^{2}(x)=x$ nor $s^{2}(x)=s(x)$ is.
Notice, of course, that non-convexity is only possible when there are at least two elements available in the domain -- otherwise, all equalities are satisfied.

\begin{definition}
\label{def:addfunconv}
Let $\T$ be a theory over an empty signature with sorts 
 $S=\{\s_{1}, \ldots, \s_{n}\}$.
 Then 
 $\Taddnc$ is the
$\Sigma^{n}_{s}$-theory axiomatized by $\ax(\T)\cup\{\psiv\}$.
\end{definition}

\begin{restatable}{lemma}{lemaddfunconv}
\label{Tvee is ...}
Let $\T$ be a theory over an empty signature $\Sigma_{n}$ with sorts $S=\{\s_{1}, \ldots, \s_{n}\}$.
Then: $\Taddnc$ is stably infinite, smooth, finitely witnessable, or strongly finitely witnessable w.r.t. $S$ if and only if $\T$ is, respectively, stably infinite, smooth, finitely witnessable, or strongly finitely witnessable w.r.t. $S$. 
 In addition, if $\T$ has a model $\A$ with $|\s_{1}^{\A}|\geq 2$, $\Taddnc$ is not convex with respect to $S$.

\end{restatable}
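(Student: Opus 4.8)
The plan is to prove Lemma \ref{Tvee is ...} by establishing the biconditionals for the four preserved properties (stable infiniteness, smoothness, finite witnessability, strong finite witnessability) via a correspondence between models of $\T$ and models of $\Taddnc$, and then separately handling the non-convexity claim. The central observation is that $\Taddnc$ is axiomatized by $\ax(\T)\cup\{\psiv\}$, and since $\T$ is over an \emph{empty} signature, a model of $\Taddnc$ is just a model of $\T$ together with a choice of interpretation $s^{\A}$ for the new function symbol satisfying $\psiv$. The key structural fact, already made explicit in \Cref{scenarios for psiv}, is that $\psiv$ forces $s^{\A}$ to act so that each element $a$ lies in one of three configurations (a fixed point, a $2$-cycle, or a point mapping into a fixed point). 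Crucially, $s^{\A}$ can \emph{always} be chosen as the identity function, which satisfies $\psiv$ trivially; this gives a canonical way to lift any $\T$-model to a $\Taddnc$-model over the same domains, and conversely the underlying-structure reduct of any $\Taddnc$-model is a $\T$-model with the same domain cardinalities.

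First I would prove the $(\Leftarrow)$ directions (if $\T$ has the property, so does $\Taddnc$). The natural route is to reuse \Cref{def:addfun} and \Cref{lem:addfun}: since interpreting $s$ as the identity satisfies $\psiv$, every model of $\Taddf$ is a model of $\Taddnc$, so the witness and interpretation constructions that work for $\Taddf$ transfer. Concretely, for each property the domains are preserved when passing between $\T$ and the extension, so cardinality-based properties (stable infiniteness, smoothness) are immediate from the fact that domain sizes are unchanged and $s^{\A}=\mathrm{id}$ is always available. For finite and strong finite witnessability, I would define the witness for $\Taddnc$ to essentially be the witness for $\T$ (the function symbol $s$ does not introduce new sorts or new variable-generating obligations in the empty-signature setting), checking conditions $(i)$, $(ii)$, and $(ii')$ directly. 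For the $(\Rightarrow)$ directions, I would take the reduct of a $\Taddnc$-model to the empty signature, which is a $\T$-model of the same domain cardinalities, and transfer the property back; the witnessability directions require verifying that a witness for $\Taddnc$ induces a witness for $\T$ of the same quality, which again works because the domains and the variable sets involved are unaffected by deleting $s$.

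Next I would prove the non-convexity claim. Assuming $\T$ has a model $\A$ with $|\s_{1}^{\A}|\geq 2$, I exhibit a violation of convexity w.r.t. $S$ using the formula $\psiv$. The conjunction of literals to use is (the quantifier-free body obtained from) $s^{2}(x)=x$ versus $s^{2}(x)=s(x)$ for a variable $x$ of sort $\s_{1}$: the disjunction $(s^{2}(x)=x)\vee(s^{2}(x)=s(x))$ is $\Taddnc$-valid by the axiom $\psiv$, yet I must show neither disjunct is individually $\Taddnc$-valid. For this I build two $\Taddnc$-models on the domain of $\A$ (which has at least two elements in $\s_{1}^{\A}$): one interpreting $s$ as a $2$-cycle swapping two distinct elements $a,b$, so that $s^{2}(x)=x$ holds but $s^{2}(x)=s(x)$ fails at $x=a$; and one interpreting $s$ so that $a\neq s(a)=s^2(a)$ (the third scenario of \Cref{scenarios for psiv}), so that $s^{2}(x)=s(x)$ holds but $s^{2}(x)=x$ fails at $x=a$. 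Both interpretations satisfy $\psiv$ and are models of $\T$ on their reducts, hence $\Taddnc$-interpretations, witnessing that convexity fails.

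The main obstacle I anticipate is the witnessability direction, specifically making conditions $(ii)$ and $(ii')$ go through cleanly in both directions. The subtlety is that a witness for $\Taddnc$ may introduce auxiliary variables and literals mentioning $s$, and I must argue that when I extract a model whose domain equals the interpretation of the witness variables, the presence of $s$ does not force extra domain elements beyond those named by variables — this is precisely where the empty-signature hypothesis on $\T$ and the shape of $\psiv$ (which only ever equates terms to already-present elements, never generating new ones) are needed. I would handle this by arguing that any term $s^{k}(x)$ evaluates, under a model satisfying $\psiv$, to the value of either $x$ or $s(x)$, so that closing the witness formula under the arrangement of $s$-values adds no genuinely new domain elements; the identity-interpretation construction then supplies the required interpretation with $\sigma^{\A}=\vars_{\sigma}(\wit(\phi))^{\A}$. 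Once this is isolated, the remaining verifications are routine cardinality bookkeeping already licensed by \Cref{lem:addfun}.
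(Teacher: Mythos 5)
Your high-level picture (lift a $\T$-model by interpreting $s$ as the identity; take reducts in the other direction; break convexity using the two non-identity configurations of \Cref{scenarios for psiv}) matches the paper's, but two of your proposed shortcuts do not go through. First, the claim that the $(\Leftarrow)$ directions ``transfer'' from \Cref{lem:addfun} because every $\Taddf$-model is a $\Taddnc$-model is a non sequitur: stable infiniteness, smoothness, and witnessability are statements about \emph{all} satisfiable quantifier-free formulas of the theory, and $\Taddnc$ satisfies strictly more quantifier-free formulas than $\Taddf$ (e.g.\ $\neg(s(x)=x)$ is $\Taddnc$-satisfiable but $\Taddf$-unsatisfiable). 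For exactly the same reason, your fallback of ``the identity-interpretation construction then supplies the required interpretation'' fails whenever $\phi$ forces $s$ to move some element. The paper instead flattens every quantifier-free $\Sigma^{n}_{s}$-formula into an empty-signature formula $\subncf{\phi}{\C}$ by introducing fresh variables $y_{i,j}$ standing for $s^{j}(z_{i})$, together with an arrangement recording their equalities under the given model, and --- for the witnessability cases --- adds explicit variable-level constraints $\Fun(\overarrow{y})$ (functionality) and $\Psiv(\overarrow{y})$ (the $\psiv$ condition) so that a function $s$ satisfying $\psiv$ can be \emph{reconstructed} on the small model produced by $\T$'s witness. This flattening-and-reconstruction step is needed even for stable infiniteness and smoothness, not only for witnessability, and it is the real content of the lemma; your sketch names the relevant observation (every $s^{k}(x)$ collapses to $x$ or $s(x)$ under $\psiv$) but does not supply the construction.

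Second, your non-convexity instance is not in the form required by the definition of convexity, whose disjuncts must be equalities \emph{between variables}: $(s^{2}(x)=x)\vee(s^{2}(x)=s(x))$ is not such a disjunction. You need to name the iterates, as the paper does, taking $\phi=(y=s(x))\wedge(z=s(y))$ and refuting convexity on $\phi\rightarrow(x=y)\vee(x=z)\vee(y=z)$; your two models (a $2$-cycle on $a,b$, and the configuration $a\neq s(a)=s(s(a))$) are then exactly the right witnesses, provided you also check that each fails $x=y$.
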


\begin{example}
The theory $\addnc{\Tgeqn}$ is a one-sorted theory whose structures have at least $n$ elements and interpret
the symbol $s$ in a way that satisfies $\psiv$ (i.e.,
for each element $a$ of the domain, one of the scenarios from \Cref{scenarios for psiv} holds).
As already mentioned, $\Tgeqn$ admits all properties.
According to \Cref{Tvee is ...},
the theory $\addnc{\Tgeqn}$ admits all properties but convexity.
\end{example}

\subsubsection{Additional theories over $\Sigma_s$}
\label{sec:additionalonenonempty}

\begin{figure}
\renewcommand{\arraystretch}{1.8}
\centering
\begin{tabular}{c|c}
Name & Axiomatization\\
\hline
$\TsM$ &  $\{[\psi^{=}_{\geq f_{1}(k)}\wedge \psi^{\neq}_{\geq f_{0}(k)}]\vee\bigvee_{i=1}^{k}[\psi^{=}_{=f_{1}(i)}\wedge \psi^{\neq}_{=f_{0}(i)}]: k\in\mathbb{N}\setminus\{0\}\}$\\
$\TSM$ &  $\ax(\TsM)\cup\{\psiv\}$\\
$\Tneqodd$ & $\{\psi_{=1}\vee[\neg\psi_{=2k}\wedge\forall\, x.\:\neg(s(x)=x)] : k\in\mathbb{N}\}$\\
$\Tneqoneinfty$  & $\{ \psi_{=1}\vee[\psi_{\geq k}\wedge\forall\, x.\:\neg(s(x)=s)] : k\in\mathbb{N}\}$\\
$\Tneqtwoinfty$ & $\{[\psi_{=2}\wedge\forall\, x.\:(s(x)=x)]\vee [\psi_{\geq k}\wedge \forall\, x.\:\neg(s(x)=x)] : k\in\mathbb{N}\}$\\
\end{tabular}
\renewcommand{\arraystretch}{1}
\caption{$\Sigma_s$-theories.}
\label{tab-theories-sigma-s}
\end{figure}

Whenever there is a $\Sigma_{1}$-theory with some properties, we can obtain a $\Sigma_s$ theory with the same properties using one of the techniques above.  To cover cases for which there is no corresponding $\Sigma_{1}$-theory, we use the theories presented in \Cref{tab-theories-sigma-s} and described below.

We start with $\Tneqodd$, $\Tneqoneinfty$, and $\Tneqtwoinfty$, 
deferring the discussion on $\TsM$ and $\TSM$ to \Cref{sec:matitheories}.
The theory $\Tneqodd$ has structures $\A$ with either an infinite or an odd number of elements and with the property that if $\A$ is not trivial, then $s^{\A}(a)\neq a$ for all $a\in \s^{\A}$. 
%
The theory $\Tneqoneinfty$ has all structures $\A$ that either: (i) are trivial; or (ii) have infinitely many elements and for which $s^{\A}(a)\neq a$ for each $a\in\s^{\A}$. 
Similarly, $\Tneqtwoinfty$ has structures $\A$ that either: (i) have exactly two elements and interpret $s$ as the identity; or (ii) have infinitely many elements and interpret $s$ in such a way that $s^{\A}(a)\neq a$ for all $a\in \s^{\A}$. 

\subsubsection{On the theories $\TsM$ and $\TSM$}
\label{sec:matitheories}
We now introduce the theories $\TsM$ and $\TSM$.
The importance of these theories is that both of them
are \emph{one-sorted} theories that are polite but not strongly polite (the first is also convex and the second is not).
Their existence improves on the result of \cite{SZRRBT-21}, which introduced
a \emph{two-sorted} theory that is polite but not strongly polite (namely $\Ttwo$).

For their axiomatizations, we use the formulas 
from \Cref{fig-card-s}, in which $s$ is a unary function symbol. $\psi^{\eq}_{\geq n}$ ($\psi^{\eq}_{=n}$) states that a structure $\A$ has at least (exactly) $n$ elements $a$ satisfying $s^{\A}(a)=a$; similarly, $\psi^{\diff}_{\geq n}$ ($\psi^{\diff}_{=n}$) states that a structure $\A$ has at least (exactly) $n$ elements $a$ satisfying $s^{\A}(a)\neq a$. 

\begin{figure}[t]
\begin{mdframed}
\[\psi^{\eq}_{\geq n}=\exists\, x_{1}.\cdots \exists\, x_{n}.\:\big[\bigwedge_{i=1}^{n}\big(s(x_{i})=x_{i}\big)\wedge \bigwedge_{1\leq i<j\leq n}\neg(x_{i}=x_{j})\big]\]

\[\psi^{\diff}_{\geq n}=\exists\, x_{1}.\cdots \exists\, x_{n}.\:\big[\bigwedge_{i=1}^{n}\neg\big( s(x_{i})=x_{i}\big)\wedge \bigwedge_{1\leq i<j\leq n}\neg(x_{i}=x_{j})\big]\]

\[\psi^{\eq}_{=n}=\exists\, x_{1}.\cdots \exists\, x_{n}.\:\big[\bigwedge_{i=1}^{n}\big(s(x_{i})=x_{i}\big)\wedge \bigwedge_{1\leq i<j\leq n}\neg(x_{i}=x_{j})\wedge\forall\, x.\:\big[\big(s(x)=x\big)\rightarrow\bigvee_{i=1}^{n}x=x_{i}\big]\big]\]

\[\psi^{\diff}_{=n}=\exists\, x_{1}.\cdots \exists\, x_{n}.\:\big[\bigwedge_{i=1}^{n}\neg\big(s(x_{i})=x_{i}\big)\wedge \bigwedge_{1\leq i<j\leq n}\neg(x_{i}=x_{j})\wedge\forall\, x.\:\big[\neg\big(s(x)=x\big)\rightarrow\bigvee_{i=1}^{n}x=x_{i}\big]\big].\]

\end{mdframed}
\caption{Cardinality formulas for signatures with a unary function symbol $s$.}
\label{fig-card-s}
\end{figure}

Further, the axiomatization requires a function $f$ from $\mathbb{N}$ to $\{0,1\}$ that is not computable with the property that for $k>0$, $f$ maps half of the numbers in the interval $[1,2^{k}]$
to $1$ and the other half to $0$.
The existence of such a function is formalized below. We start by defining counting functions $f_0$ and $f_1$.

\begin{definition}
Let $f:\mathbb{N}\setminus\{0\}\rightarrow\{0,1\}$.
For $i\in\{0,1\}$ and $n\in\mathbb{N}$, $f_{i}(n)$ is defined by: 
$f_{i}(n)=|f^{-1}(i)\cap[1,n]|$.
\end{definition}

\noindent
Intuitively,
$f_0(n)$ counts how many numbers between $1$ and $n$ (inclusive) are mapped by $f$ to $0$ and $f_{1}(n)$ counts how many are mapped to 1. Because $f(n)$ always equals $0$ or $1$, it is easy to see that for every $n>0$, $n=f_{1}(n)+f_{0}(n)$.

\begin{restatable}{lemma}{lemmatifexists}
\label{lem:mati-f-exists}
There exists a function $f:\mathbb{N}\setminus\{0\}\rightarrow\{0,1\}$ such that $f(1)=1$ with the following two properties:
\begin{enumerate}
\item $f$ is not computable;
\item for every $k\in\mathbb{N}\setminus\{0\}$, $f_{0}(2^{k})=f_{1}(2^{k})$.
\end{enumerate}

\end{restatable}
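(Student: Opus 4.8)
The plan is to construct the desired $f$ by a diagonalization argument, building it in blocks aligned with the dyadic intervals so that the balance condition is automatic while leaving enough freedom to defeat every potential Turing machine.

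First I would fix the block structure dictated by condition (2). For each $k \geq 1$, the interval $I_k = [2^{k-1}+1, 2^k]$ has exactly $2^{k-1}$ elements, and the disjoint union of $I_1, I_2, \ldots, I_k$ together with the point $1$ recovers $[1,2^k]$. The requirement $f_0(2^k) = f_1(2^k)$ for all $k$ is equivalent to saying that among the first $2^k$ positive integers, exactly $2^{k-1}$ are sent to $1$. I would verify that this global balance holds for every $k$ if and only if, within each block $I_k$ (for $k \geq 2$), the number of inputs sent to $1$ equals the number sent to $0$; since $|I_k| = 2^{k-1}$ is even for $k \geq 2$, I can split each such block evenly. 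The base point $f(1)=1$ forces the block $I_1 = \{2\}$ to take value $0$, giving $f_1(2)=f_0(2)=1$, so the balance holds at $k=1$ as well. Thus balancing each block independently (with $f(1)=1$, $f(2)=0$ fixed) yields condition (2) automatically, regardless of the internal choices.

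Next I would use the remaining freedom inside large blocks to diagonalize against computability. Enumerate all Turing machines $M_0, M_1, M_2, \ldots$. For the $e$-th requirement I want to guarantee that $M_e$ does not compute $f$, i.e., there is some input $n$ on which either $M_e(n)$ diverges or $M_e(n) \neq f(n)$. I would choose, for each $e$, a distinct witness block $I_{k_e}$ with $k_e$ large enough that $I_{k_e}$ contains at least two elements (so $k_e \geq 2$ suffices, and I can take the $k_e$ strictly increasing so the blocks are disjoint); inside that block I pick a specific witness point $n_e$. Because each block of size $2^{k-1} \geq 2$ must receive an equal number of $0$'s and $1$'s, I still have the liberty to assign either value to $n_e$ as long as I compensate at another point of the same block. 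Concretely, I run $M_e$ on $n_e$: if it halts with output $b \in \{0,1\}$, I set $f(n_e) = 1-b$ and choose the rest of the block to restore the even split; if $M_e(n_e)$ does not halt, any balanced assignment on the block works and the divergence already witnesses non-computability. In either case $M_e$ fails to compute $f$, so $f$ is not computable, establishing condition (1). Since the witness blocks are disjoint and every block is filled in a balanced way, no two requirements interfere and condition (2) is preserved throughout.

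The main obstacle is ensuring that the diagonalization against computability never conflicts with the rigid balance constraint. The key observation that resolves this is that each block $I_k$ for $k \geq 2$ has even size at least $2$, so forcing one designated value at the witness point $n_e$ still leaves an odd number—hence a nonempty, adjustable remainder—of positions in the block that I can freely partition to make the block balanced. I would make this precise by noting that after committing $f(n_e)$, I must place $2^{k-2}$ or $2^{k-2}-1$ further $1$'s among the remaining $2^{k-1}-1$ points, which is always achievable since $0 \le 2^{k-2} \le 2^{k-1}-1$ for $k\ge 2$. A minor point to check is that the fixed values $f(1)=1$ and $f(2)=0$ are consistent with the balance at $k=1$, which they are as noted above, and that I never reuse a block for two different machines, which is guaranteed by choosing the $k_e$ strictly increasing.
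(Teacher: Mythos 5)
Your proposal is correct, but it proves the lemma by a genuinely different route than the paper. The paper fixes one concrete non-computable function $F$ (e.g., a halting function with $F(1)=1$) and builds a \emph{computable re-indexing} of it: in each dyadic range $(2^{k+1},2^{k+2}]$ the first $2^{k}$ positions copy $F$ shifted by $2^{k+1}$, and the remaining positions are padded first with $1$'s and then with $0$'s so that exactly half of $[1,2^{k+2}]$ is mapped to $1$; non-computability is then inherited because $F(n)=f(n+2^{\kappa(n)+1})$ for a computable $\kappa$, so computing $f$ would compute $F$. You instead run a direct diagonalization against an enumeration of all Turing machines, reserving one even-sized block $I_{k_e}=[2^{k_e-1}+1,2^{k_e}]$ per machine, forcing $f(n_e)\neq M_e(n_e)$ at a witness point when $M_e(n_e)$ halts, and using the remaining $2^{k_e-1}-1$ positions of the block to restore the even split. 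Your reduction of the global balance condition to per-block balance (given $f(1)=1$, $f(2)=0$) is right, and the counting check $0\le 2^{k-2}\le 2^{k-1}-1$ for $k\ge 2$ correctly shows the forced witness value never conflicts with balancing. What each approach buys: yours is shorter and avoids the paper's bookkeeping lemmas about $\kappa(n)$ and the padding counts, and it needs no pre-existing non-computable function beyond the halting set implicit in the enumeration; the paper's construction is a uniform operator $F\mapsto ff(F)$ that balances \emph{any} given non-computable function while preserving non-computability via a clean many-one-style reduction, which is more reusable even though that generality is not needed for this lemma. One small point you gloss over: if $M_e(n_e)$ halts with an output outside $\{0,1\}$ it already fails to compute $f$, so any balanced assignment on that block works; worth one sentence, but not a gap.
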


\begin{example}
The constant function that assigns $0$ to all natural numbers
satisfies neither the first nor the second condition of \Cref{lem:mati-f-exists}.
The function that assigns $0$ to even numbers and $1$ to odd 
numbers satisfies the second condition, but not the first.
Of course, any non-computable function satisfies the first condition. An example could be found by a function that
returns $1$ if the Turing machine that is encoded by the given number halts and $0$ otherwise, under some encoding.
Finding a function that admits both conditions is more challenging.
\end{example}

Let $f$ be some function with the properties listed in \Cref{lem:mati-f-exists}.  We can now define $\TsM$ 
over $\Sigma_s$ (note that $f$ itself is not a part of the signature, but is rather used to help define the axioms of $\TsM$).
$\TsM$ consists of those structures $\A$ that either (i) have a finite cardinality $n$, with $f_{1}(n)$ elements satisfying $s^{\A}(a)=a$, and $f_{0}(n)$
elements satisfying $s^{\A}(a)\neq a$ (and thus $\A$ satisfies $\psi^{=}_{\geq f_{1}(k)}\wedge \psi^{\neq}_{\geq f_{0}(k)}$ for $k\leq n$, and $\psi^{=}_{= f_{1}(n)}\wedge \psi^{\neq}_{= f_{0}(n)}$ and hence $\bigvee_{i=1}^{k}[\psi^{=}_{=f_{1}(i)}\wedge \psi^{\neq}_{=f_{0}(i)}]$ for all $k\geq n$); or (ii) have infinitely many elements, with infinitely many elements satisfying each condition, $s^{\A}(a)=a$ and $s^{\A}(a)\neq a$ (and thus $\A$ satisfies $\psi^{=}_{\geq f_{1}(k)}\wedge \psi^{\neq}_{\geq f_{0}(k)}$ for all $k\in\mathbb{N}$). Note that the description is well-defined because an element must always satisfy either $s^{\A}(a)=a$ or $s^{\A}(a)\neq a$, but never both or neither of these.
%
The theory $\TSM$ is similar to $\TsM$, but in addition to $\ax(\TsM)$ its structures must also satisfy $\psiv$.

\begin{remark}
The construction of $\TSM$ from $\TsM$ is very similar to the
general construction of 
\Cref{def:addfunconv}.
However, the corresponding result,
\Cref{Tvee is ...}, according to which all properties
but convexity
are preserved by this operation, is only shown 
in \Cref{Tvee is ...}
for cases where the original signature is empty, which is not the case for
$\TsM$. 
Obtaining $\TSM$ from $\TsM$ is not done by adding a function symbol, but rather by changing the axiomatization of the already existing
function symbol.  While we do prove that $\TSM$ has the required properties,
a general result in the style of \Cref{Tvee is ...} for arbitrary signatures,
with the ability to preserve an existing function symbol instead of adding a new one, is left for future work.
\end{remark}

\begin{example}
Let $\A_{n}$ be a $\Sigma_s$-model with domain $\{a_{1}, \ldots , a_{n}\}$ such that:
\[s^{\A_{n}}(a_{i})=\begin{cases*}
    a_{i} & if $1\leq i\leq f_{1}(n)$;\\
    a_{1} & if $f_{1}(n)<i\leq n$\\
\end{cases*}\]
(where the second condition may be void if $n=1$). Then $\A_{n}$ is a model of both $\TsM$ and $\TSM$.

If $\kappa$ is an infinite cardinal, let $\A_{\kappa}$ be a $\Sigma_s$-model with domain $A\cup\{a_{n} : n\in\mathbb{N}\setminus\{0\}\}$ (where $A$ is a set of cardinality $\kappa$ disjoint from $\{a_{n} : n\in\mathbb{N}\setminus\{0\}\}$) such that $s^{\A_{\kappa}}(a_{i})=a_{i}$ for each $i\in \mathbb{N}\setminus\{0\}$, and $s^{\A_{\kappa}}(a)=a_{1}$ for each $a\in A$.  Then $\A_{\kappa}$ is a model of both $\TsM$ and $\TSM$.
\end{example}

To show that $\TsM$ is smooth and finitely witnessable, we construct, given a $\TsM$-interpretation.
another $\TsM$-interpretation by (possibly) adding
two disjoint sets of elements to the interpretation,
one whose elements will satisfy
$s(a)=a$, and one whose elements will satisfy $s(a)\neq a$.

To show that it is not strongly finitely witnessable, we use the following lemmas, which are interesting in their own right.
According to the first, the $\mc$
function of $\TsM$ is not computable.
 
\begin{restatable}{lemma}{mctsmnc}
\label{mcofTsMisnotcomputable}
The $\mc$ function of $\TsM$ is not computable.
\end{restatable}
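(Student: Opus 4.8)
The plan is to prove that the $\mc$ function of $\TsM$ is not computable by reducing the computation of $f$ to the computation of $\mc$. Recall that $f$ itself is not computable (by \Cref{lem:mati-f-exists}), so if I can show that an algorithm computing $\mc$ would yield an algorithm computing $f$, I obtain the desired contradiction. The key observation is that $\mc(\phi)$ for suitable formulas $\phi$ encodes cardinality information about the models of $\TsM$, and the axiomatization ties those cardinalities directly to the counting functions $f_0$ and $f_1$, hence to $f$ itself.

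First I would fix, for each $n \in \mathbb{N}\setminus\{0\}$, a quantifier-free formula (or a formula whose minimal-model cardinality is easy to read off) that forces the presence of a prescribed number of fixed points and non-fixed points of $s$. Concretely, using the cardinality formulas of \Cref{fig-card-s}, I would consider a formula asserting the existence of $k$ distinct elements satisfying $s(x)=x$, i.e. an instance built from $\psi^{\eq}_{\geq k}$ together with a single witness of a non-fixed point, and examine the minimal cardinality of a $\TsM$-model satisfying it. Because the axiomatization of $\TsM$ permits a finite model of cardinality $n$ only when it has exactly $f_1(n)$ fixed points and $f_0(n)$ non-fixed points, the minimal $n$ accommodating $k$ fixed points is determined by the least $n$ with $f_1(n) \geq k$. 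Thus $\mc$ applied to such formulas reveals where $f_1$ crosses each threshold, and from the sequence of these crossings one recovers $f_1(n)$ for all $n$, and therefore $f(n) = f_1(n) - f_1(n-1)$ for all $n$.

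The main step is therefore to pin down, for a carefully chosen family of formulas $\phi_k$ (demanding at least $k$ fixed points, or exactly the right mix of fixed and non-fixed points), the exact value of $\mc(\phi_k)$ in terms of the counting functions, and to verify that these values suffice to reconstruct $f$. I would argue that the model $\A_n$ from the preceding example (with $f_1(n)$ fixed points and the rest mapped to $a_1$) is a minimal witness, so that no strictly smaller $\TsM$-model satisfies $\phi_k$; this gives $\mc(\phi_k)$ exactly. Combining the values $\mc(\phi_1), \mc(\phi_2), \ldots$ lets a hypothetical algorithm for $\mc$ compute $f_1$ and hence $f$, contradicting the non-computability of $f$.

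The hard part will be ensuring the reduction is genuinely effective and that the chosen formulas isolate the correct minimal cardinalities without ambiguity. In particular, I must rule out that an infinite model or a finite model of an unintended cardinality could yield a smaller or misleading value of $\mc$, which requires a precise reading of which finite cardinalities $n$ actually admit $\TsM$-models (only those for which the clause $\psi^{=}_{=f_1(n)} \wedge \psi^{\neq}_{=f_0(n)}$ is consistent with the requested $k$ fixed points). I would also need to confirm that the map from the sequence of $\mc$-values back to $f$ is itself computable, so that computability of $\mc$ really does transfer to computability of $f$. Once these bookkeeping points about the interplay between $f_0$, $f_1$, and the admissible model sizes are settled, the contradiction with \Cref{lem:mati-f-exists} is immediate.
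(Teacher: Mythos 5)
Your proposal is correct and follows essentially the same route as the paper: both reduce the computation of the non-computable $f$ to $\mc$ by using quantifier-free formulas that force $k$ pairwise-distinct fixed points of $s$, exploiting that a finite $\TsM$-model of cardinality $n$ has exactly $f_{1}(n)$ fixed points (and that every satisfiable quantifier-free formula has a finite model, so $\mc$ is not short-circuited by infinite models). The paper phrases the recovery incrementally, showing $f(n+1)=1$ iff $\mc(\phi_{n})=n+1$ for $\phi_{n}$ asserting $f_{1}(n)+1$ distinct fixed points, whereas you read off the threshold crossings of $f_{1}$; these are interchangeable, and your added non-fixed-point witness is unnecessary but harmless.
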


The second lemma that is needed in order to prove that $\TsM$ is not strongly finitely witnessable, is quite surprising.
As it turns out, 
for quantifier-free formulas,
the set of $\TsM$-satisfiable formulas coincides with the set of satisfiable formulas.
That is, even though the definition of $\TsM$ is very complex,
it induces the same satisfiability relation,
over quantifier-free formulas, 
as the simplest theory possible -- the theory axiomatized by the empty set (or, equivalently, all valid first-order sentences).

\begin{restatable}{lemma}{thmtrisattsm}
\label{Decidability of TsM}
Every quantifier-free $\Sigma_{s}$-formula that is satisfiable is $\TsM$-satisfiable.
\end{restatable}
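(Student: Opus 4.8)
The plan is to show that any satisfiable quantifier-free $\Sigma_s$-formula $\phi$ is in fact $\TsM$-satisfiable. Since the signature $\Sigma_s$ is one-sorted with a single unary function symbol $s$ and equality, a satisfiable quantifier-free formula is satisfiable in some ordinary $\Sigma_s$-interpretation $\A$; by \Cref{LowenheimSkolemDownwards} I may assume $\s^{\A}$ is finite or countably infinite. The whole difficulty is that $\A$ need not be a $\TsM$-model: the number of fixed points of $s$ (elements with $s^{\A}(a)=a$) and the number of non-fixed-points need not match the ratios $f_1(n), f_0(n)$ prescribed by the axiomatization. So the approach is to \emph{modify} $\A$ into a $\TsM$-interpretation $\B$ that still satisfies $\phi$, by adding fresh elements of the appropriate two kinds (fixed points, and points mapped elsewhere) until the counts line up with what some axiom of $\TsM$ permits.

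First I would fix a satisfying interpretation $\A$ for $\phi$ and let $V=\vars(\phi)$ be the finitely many variables occurring in $\phi$. The key observation is that $\phi$ only constrains the finite substructure generated by $V^{\A}$ under $s^{\A}$ together with the equalities/disequalities and the fixed-point/non-fixed-point status of those finitely many terms; adding new elements that are \emph{not} named by any term of $\phi$ cannot change whether $\phi$ holds, provided I do not alter the values of $s$ on the existing named elements. Concretely, I would take $\B$ to have the same interpretation of variables as $\A$, keep $s^{\B}$ agreeing with $s^{\A}$ on the old domain, and then extend the domain with two disjoint reservoirs: a set $F$ of new fixed points (each with $s^{\B}(a)=a$) and a set $N$ of new non-fixed points (each mapped, say, to some designated old or fresh element so that $s^{\B}(a)\neq a$). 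This is exactly the ``adding two disjoint sets of elements'' device the paper announces for proving smoothness and finite witnessability of $\TsM$, reused here.

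Next I would verify that by choosing the sizes of $F$ and $N$ appropriately, the resulting $\B$ can be made to satisfy some disjunct of an axiom of $\TsM$. If $\A$ is infinite I can take $\B$ infinite with infinitely many elements of each kind, so $\B$ satisfies $\psi^{\eq}_{\geq f_1(k)}\wedge\psi^{\diff}_{\geq f_0(k)}$ for every $k$ and is thus a model. If $\A$ is finite with $p$ fixed points and $q$ non-fixed points, I need to reach a total $n$ with $f_1(n)$ fixed points and $f_0(n)$ non-fixed points; since property~2 of \Cref{lem:mati-f-exists} guarantees $f_0(2^k)=f_1(2^k)$ and the counts $f_1, f_0$ are nondecreasing and unbounded, I can find a target $n$ (e.g. a large power of $2$) with $f_1(n)\geq p$ and $f_0(n)\geq q$, and pad $F$ and $N$ to sizes $f_1(n)-p$ and $f_0(n)-q$ respectively, making $\B$ satisfy $\psi^{\eq}_{=f_1(n)}\wedge\psi^{\diff}_{=f_0(n)}$ and hence the $n$-th disjunct. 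The main obstacle is precisely this bookkeeping: I must argue that from \emph{any} starting pair $(p,q)$ there is a reachable admissible total $n$ in the range of the axiomatization, i.e.\ that $f_1$ and $f_0$ simultaneously dominate $p$ and $q$ at some $n$; this follows because $n=f_1(n)+f_0(n)\to\infty$ with each summand nondecreasing and, along powers of two, equal, but it requires care to check that the chosen $n$ genuinely yields a $\TsM$-model and that the padding elements leave $\phi$ satisfied. Finally I would conclude that $\B\vDash\phi$ and $\B$ is a $\TsM$-interpretation, so $\phi$ is $\TsM$-satisfiable.
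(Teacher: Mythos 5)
Your proposal is correct and uses essentially the same construction as the paper: extend the satisfying interpretation with fresh fixed points and fresh non-fixed points of $s$, and observe that this padding cannot change the value of any term of $\phi$ and hence leaves the quantifier-free formula satisfied. The only difference is that the paper avoids your finite-case bookkeeping with $f_1$ and $f_0$ entirely by always adding two \emph{countably infinite} reservoirs, since any interpretation with infinitely many elements of each kind satisfies every axiom of $\TsM$ regardless of $f$.
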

\noindent
Note that \Cref{Decidability of TsM} does not hold
for quantified formulas in general. For example,
the formula $\forall\, x. \: s(x) \neq x$
is satisfiable but not $\TsM$-satisfiable:
because $f(1)=1$, every $\TsM$-interpretation
$\A$
must have at least one element $a$ with $s^{\A}(a)=a$.



The arguments for $\TSM$ are very similar, and require
minor changes in the corresponding proofs for $\TsM$.


\begin{remark}
We remark on the connection between the results regarding
$\TsM$ and $\TSM$, 
and those of \cite{DBLP:conf/lpar/CasalR13}.
What we show here is that 
$\TsM$ ($\TSM$) is 
polite but
not strongly polite.
Figure 1 of \cite{DBLP:conf/lpar/CasalR13} summarizes the relations
between these two properties for the one-sorted case.
It shows that 
polite theories that are 
axiomatized by a
universal set of axioms, and whose quantifier-free
satisfiability problem is decidable,
are strongly polite.
While $\TsM$ is decidable for quantifier-free formulas
(this is a corollary of \Cref{Decidability of TsM}),
its presentation here is definitely not as a universal theory.
On the other hand, \cite{DBLP:conf/lpar/CasalR13} also shows that
decidable polite theories for which checking if a finite interpretation
belongs to the theory is decidable are also strongly polite.
However, it is undecidable,
given an interpretation, to check whether it belongs
to $\TsM$ (and $\TSM$): such an algorithm would lead to an algorithm to compute $f$ as well. 
Thus, the theories $\TsM$ and $\TSM$ are polite, but do not
meet the criteria for strong politeness from \cite{DBLP:conf/lpar/CasalR13}.
And indeed, they are not strongly polite.
\end{remark}

\subsection{Theories Over Many-sorted Non-empty Signatures}
\label{sec:theories_two_sort_with_fun}
For the last column of \Cref{tab-summary}, all possible theories
can be obtained from theories that were already defined, using a combination of 
 \Cref{def:addfun,def:addsort,def:addfunconv},
 and so there is no need to present additional theories specifically for many-sorted non-empty signatures.

 \begin{example}
Line $1$ includes in its last column the theory $\addf{\adds{\Tgeqn}}$,
which is obtained from 
$\adds{\Tgeqn}$
using \Cref{def:addfun},
where the latter theory is obtained from
$\Tgeqn$
using \Cref{def:addsort}.
This theory admits all properties, including convexity.
To obtain a non-convex variant, the theory
$\addnc{\adds{\Tgeqn}}$ is constructed in a similar fashion,
using \Cref{def:addfunconv} instead of \Cref{def:addfun}.

\end{example}

With many-sorted non-empty signatures, we can always find an example for each combination of properties, except for those that are trivially impossible due to \Cref{SMimpliesSI,SFWimpliesFW} (i.e., theories that are strongly finitely witnessable but not finitely witnessable and theories that are smooth but not stably infinite).
This is nicely depicted by \Cref{venn-all}.
\Cref{SMimpliesSI,SFWimpliesFW} are represented in this figure by the location
of the circles: the circle for smooth theories is entirely inside the circle
for stably infinite theories, and similarly for strongly finitely witnessable and finitely witnessable theories.
Then, for every region in this figure, the right-most column of \Cref{tab-summary} has an example, the sole exception being the region that represents unicorn theories.

\begin{figure}[t]
\centering
    \begin{tikzpicture}[scale=0.8]
\def\firstcircle{(-1,0) coordinate (a) circle (1.5cm)}
\def\secondcircle{(1,0) coordinate (b)  circle (1.5cm)}
\def\thirdcircle{(-1,0) coordinate (c) circle (2.5cm)}
\def\fourthcircle{(1,0) coordinate (d)  circle (2.5cm)}
\def\fifthcircle{(0, 2.7) coordinate (e) circle (2.5cm)}
    \begin{scope}
\clip \fourthcircle;
\fill[blue!50] \firstcircle;
\end{scope}
\begin{scope}
\clip \thirdcircle;
\fill[red!50] \secondcircle;
    \end{scope}
    \begin{scope}
\clip \secondcircle;
\fill[violet!50] \firstcircle;
    \end{scope}
\draw \firstcircle;
\draw \secondcircle;
\draw \thirdcircle;
\draw \fourthcircle;
\draw\fifthcircle;
\node[fill=white,label={$\stainf$}] (B) at (-3.4,1.3) {};
\node[fill=white,label={$\finwit$}] (B) at (3.4,1.3) {};
\node[fill=white,label={\textcolor{red}{$\stainf$ and $\strfinwit$}}] (B) at (1.7,-2.2) {};
\node[fill=white,label={\textcolor{blue}{Polite}}] (B) at (-1.7,-2.2) {};
\node[fill=white,label={$\strfinwit$}] (B) at (2.7,0.3) {};
\node[fill=white,label={$\smooth$}] (B) at (-2.6,0.5) {};
\node[fill=white,label={\textcolor{violet}{Strongly polite}}] (B) at (0,-2.7) {};
\node[fill=white,label={$\convex$}] (B) at (0,2.7) {};
    \end{tikzpicture}
    \caption{A diagram of the various notions studied in this paper.}\label{venn-all}
\end{figure}
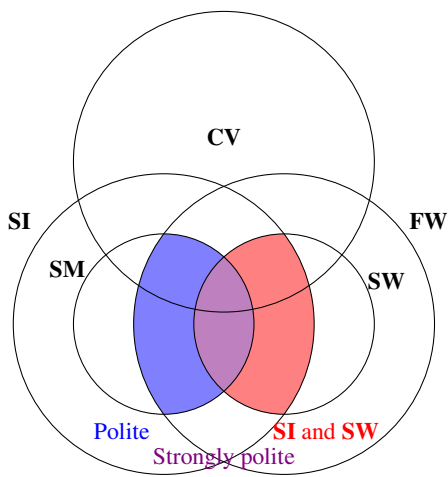

\begin{remark}
For non-empty signatures, we chose to include functions rather than predicates.
This is not essential as we can replace function symbols by predicate symbols by including the sort of the result of the function as the last component of the arity of the predicate, 
and then adding an axiom that forces the predicate to be a function.
\end{remark}

\section{Polite Combination without Smoothness}\label{Countable and uncountable smoothness}\label{Politetheoresrerevisited}

Polite combination of theories was introduced in \cite{ranise:inria-00000570}.
There, it was claimed that in order to combine a theory $\T$ with any other theory using polite combination, it suffices for $\T$
to be smooth and finitely witnessable (that is, polite).
Later, in \cite{JB10-TR}, this condition was corrected, and it was shown that in fact a stronger requirement is needed from $\T$: it has
to be smooth and strongly finitely witnessable (that is, strongly polite) to be applicable for the combination method.

Given that weakening strong finite witnessability to finite witnessability results in a condition that does not suffice,
it is natural to ask whether there is any other way to weaken the required conditions for polite combination.
Rather than weakening strong finite witnessability to finite witnessability, here
we consider another option: weakening the smoothness condition to stable infiniteness.
Thus, the main result of this section is that polite combination
can be done for theories that are stably infinite and strongly finitely witnessable, even if they are not smooth.

Our contribution can be understood by viewing \Cref{venn-all},
ignoring the circle that represents convexity (a property unrelated to the current section).
\cite{JB10-TR} shows that polite combination can be done for the purple region, which represents
smooth and strongly finitely witnessable theories.
\cite{JB10-TR} also presented an example showing that expanding the same combination method to the blue region, which represents 
smooth and finitely witnessable theories, results in an error.
Here we instead expand polite theory combination to the red region, which represents stably infinite and strongly finitely witnessable theories.
Now, the red region, if not empty, is only populated by unicorn theories (see \Cref{Overview}).
If such theories do not exist, then the result follows immediately.
Until this is settled, however, we provide a direct proof,
regardless of the existence of unicorn theories.

The next theorem  
shows that polite theory combination can be done for theories that are not necessarily strongly polite
(smooth and strongly finitely witnessable), but rather that are simply stably infinite and strongly finitely witnessable.

\begin{restatable}{theorem}{newcombteo}
\label{newcombinationtheorem}
Let $\Sigma_{1}$ and $\Sigma_{2}$ be disjoint signatures with sorts $S_{1}$ and $S_{2}$; let $\T_{1}$ be a $\Sigma_{1}$-theory, $\T_{2}$ be a $\Sigma_{2}$-theory, and $\T=\T_{1}\oplus \T_{2}$; and let $\phi_{1}$ be a quantifier-free $\Sigma_{1}$-formula and $\phi_{2}$ a quantifier-free $\Sigma_{2}$-formula.

Assume that $\T_{2}$ is stably-infinite and strongly finitely witnessable w.r.t. $S=S_{1}\cap S_{2}$, with strong witness $\wit$. 
Let $\psi=\wit(\phi_{2})$,  
$V_{\s}=\vars_{\s}(\psi)$ for every $\s\in S$
and $V=\bigcup_{\s\in S}\vars_{\s}(\psi)$.
Then the following are equivalent:
\begin{enumerate}
\item $\phi_{1}\wedge\phi_{2}$ is $\T$-satisfiable;
\item there exists an arrangement $\delta_{V}$ over $V$ such that $\phi_{1}\wedge\delta_{V}$ is $\T_{1}$-satisfiable and $\psi\wedge\delta_{V}$ is $\T_{2}$-satisfiable.
\end{enumerate}
\end{restatable}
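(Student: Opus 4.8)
The plan is to prove the two implications separately, with the forward direction being routine and the backward direction containing the real content. For $(1)\Rightarrow(2)$, I would start from a $\T$-interpretation $\A$ satisfying $\phi_1\wedge\phi_2$. Since $\A$ is a model of $\T=\T_1\oplus\T_2$, its reduct to $\Sigma_1$ is a $\T_1$-interpretation and its reduct to $\Sigma_2$ is a $\T_2$-interpretation, and both reducts agree on the shared sorts $S$ and on the values of the variables in $V$. I would let $\delta_V$ be the arrangement of $V$ induced by the equivalence relation that equates $x,y\in V_\sigma$ exactly when $x^\A=y^\A$; then $\A\vDash\delta_V$ by construction. Since $\phi_1$ is a $\Sigma_1$-formula, the reduct satisfies $\phi_1\wedge\delta_V$, giving $\T_1$-satisfiability. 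For the $\Sigma_2$ side, I would use property $(i)$ of the strong witness: $\phi_2$ is $\T_2$-equivalent to $\exists\,\overarrow{w}.\,\wit(\phi_2)=\exists\,\overarrow{w}.\,\psi$, so from $\A\vDash\phi_2$ I can expand $\A$ to interpret the fresh witness variables $\overarrow{w}$ so that $\psi$ holds; the arrangement $\delta_V$ is preserved since $V\subseteq\vars(\psi)$, so $\psi\wedge\delta_V$ is $\T_2$-satisfiable.

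For $(2)\Rightarrow(1)$, assume an arrangement $\delta_V$ with $\phi_1\wedge\delta_V$ being $\T_1$-satisfiable (say in $\A_1$) and $\psi\wedge\delta_V$ being $\T_2$-satisfiable. This is where the usual Nelson-Oppen-style fusion argument runs, and where the hypotheses on $\T_2$ are consumed. First I would invoke strong finite witnessability: since $\psi=\wit(\phi_2)$ and $\psi\wedge\delta_V$ is $\T_2$-satisfiable with $V=\vars_S(\psi)$, property $(ii')$ yields a $\T_2$-interpretation $\A_2$ satisfying $\psi\wedge\delta_V$ in which every shared sort is exhausted by the witness variables, i.e. $\sigma^{\A_2}=\vars_\sigma(\psi\wedge\delta_V)^{\A_2}=V_\sigma^{\A_2}$ for each $\sigma\in S$. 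The point of $(ii')$ is that $\A_2$'s shared domains are \emph{finite} and controlled entirely by $V$. To glue $\A_1$ and $\A_2$ along the shared sorts, their shared domains must have equal cardinality; the arrangement $\delta_V$ already forces the variables in $V$ to realize the same partition in both interpretations, so the shared domain of $\A_2$ (which equals $V^{\A_2}$) injects into that of $\A_1$. The remaining gap is that $\sigma^{\A_1}$ may be strictly larger than $V_\sigma^{\A_1}$; this is exactly where stable infiniteness of $\T_2$ enters.

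The main obstacle, and the heart of the argument, is reconciling the cardinalities of the shared sorts so that an isomorphism of the shared parts can be built and the two structures amalgamated into a single $\Sigma$-structure. In the classical strongly polite proof this is handled by smoothness of $\T_2$, which lets one inflate $\sigma^{\A_2}$ up to $|\sigma^{\A_1}|$ directly. Here smoothness is unavailable, so I would proceed in two stages. I would first apply stable infiniteness of $\T_2$ to $\psi\wedge\delta_V$ to obtain a $\T_2$-interpretation in which every shared sort is infinite; combined with the downward L\"owenheim-Skolem theorem (\Cref{LowenheimSkolemDownwards}) I can assume these shared domains are countably infinite. I would likewise ensure, via L\"owenheim-Skolem applied to $\A_1$, that each shared domain of the $\T_1$-side is either finite (pinned exactly by $V$ through the arrangement) or countable. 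The delicate bookkeeping is matching the finite-cardinality shared sorts exactly while padding the infinite ones to a common countable cardinality, all while keeping the witness variables realizing the partition dictated by $\delta_V$ so that the identification of the named elements remains consistent across both interpretations. Once the shared sorts of a suitably chosen pair of $\T_1$- and $\T_2$-interpretations are made isomorphic over the interpretations of $V$, I would transport the $\Sigma_2$-structure along this isomorphism and take the union of the two interpretations on the disjoint non-shared symbols, obtaining a $\Sigma$-interpretation whose $\Sigma_i$-reduct models $\T_i$ and which satisfies $\phi_1$ and $\psi$ (hence $\phi_2$, by property $(i)$ of the witness), and therefore $\phi_1\wedge\phi_2$, establishing $\T$-satisfiability. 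I expect the subtlety to lie in showing that stable infiniteness alone, rather than full smoothness, suffices to achieve a \emph{common countable} cardinality on the shared infinite sorts without disturbing the finitely-pinned ones.
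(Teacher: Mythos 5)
Your forward direction and the overall fusion skeleton of your backward direction match the paper's proof, but there is a genuine gap at exactly the point you label ``delicate bookkeeping'': reconciling a \emph{finite} shared domain on the $\T_{1}$-side whose cardinality lies strictly between the number of $\delta_{V}$-classes and $\omega$. The two sources of $\T_{2}$-interpretations you invoke give only the two extremes: property $(ii')$ of the strong witness yields the minimal model, in which $\s^{\A_{2}}=V_{\s}^{\A_{2}}$, and stable infiniteness together with \Cref{LowenheimSkolemDownwards} yields a countably infinite one. If $|\s^{\A_{1}}|=5$ while $\delta_{V}$ has only $3$ classes of sort $\s$, neither produces a $\T_{2}$-model of $\psi\wedge\delta_{V}$ of size exactly $5$, and an injection of $V^{\A_{2}}$ into $\s^{\A_{1}}$ is not enough to amalgamate --- you need the shared domains to have equal cardinality. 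Your fallback of controlling the $\T_{1}$-side instead does not work either: downward L\"owenheim--Skolem only collapses infinite domains to countable ones, and since no hypothesis whatsoever is placed on $\T_{1}$, there is no reason it has a model of $\phi_{1}\wedge\delta_{V}$ whose finite shared domains are ``pinned exactly by $V$'' --- they may contain unnamed elements you cannot remove.

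The missing idea is the paper's \Cref{SI+SFW=CS}: stable infiniteness \emph{plus} strong finite witnessability together let one realize \emph{every} cardinality between the minimal one and $\omega$. Concretely, one enlarges the arrangement with $\kappa(\s)$ minus the number of existing $\delta_{V}$-classes many fresh variables of sort $\s$, declared pairwise distinct and distinct from the old ones; this enlarged arrangement is still $\T_{2}$-satisfiable precisely because stable infiniteness supplies an infinite model in which the fresh variables can be given distinct values; re-applying condition $(ii')$ of the strong witness to the \emph{enlarged} arrangement then yields a model whose sort $\s$ has exactly $\kappa(\s)$ elements. (A further compactness argument --- the $\Gamma_{M}$ construction in the paper's proof of that lemma --- is needed to handle shared sorts that must simultaneously be made countably infinite while others stay finite.) Once this surrogate for smoothness up to $\omega$ is available, the rest of your plan --- matching cardinalities and invoking the combination theorem of \cite{JB10-TR}, stated as \Cref{LemmafromPolitetheoriesrevisited} --- goes through essentially as the paper does it.
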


\noindent
This result greatly improves the state-of-the-art in polite theory combination, which requires proving that one of the theories is both smooth and strongly finitely witnessable.
Thanks to this theorem, proving smoothness can be replaced by proving stable infiniteness,
which is typically a much easier task.

\section{Conclusion}\label{Conclusion}

As mentioned, there are two main contributions offered in this paper, both associated with the theme of theory combination. In \Cref{Overview}, we provide a table with examples for almost all the combinations of stable infiniteness, smoothness, convexity, finite witnessability, and strong finite witnessabillity known not to be impossible.
\Cref{Somerestrictions} provides theorems proving the sharpness of the examples provided.
The second contribution is a new combination theorem, according to which polite theory combination can be done without smoothness, provided we have instead stable infiniteness.

Many ideas for future work rise from the studies here presented. A first direction would be to settle the question of whether unicorn theories exist: if they do not, a proof would probably involve an interesting generalization of the upward L\"owenheim-Skolem theorem for many-sorted logic and would imply that strongly polite theories are just simply stably-infinite and strongly finitely witnessable theories,
thus greatly simplifying the proof of \Cref{newcombinationtheorem}; if unicorn theories do exist, one wonders if they can be combined in some meaningful way. 
Another direction of future work involves considering other model-theoretic properties in our table, such as shininess, gentleness, flexibility, and so on, as well as the effect of taking proper subsets of sorts for signatures containing more than one sort.

\newpage

\bibliographystyle{plain}
\bibliography{bib}

\newpage
\appendix

\newgeometry{left=25mm,bottom=25mm, top=25mm, right=25mm}

\section{Additional Notions for the Appendix}
We provide additional notions that will be used in the appendix.

If we replace each occurrence of a variable $x$ in a formula $\varphi$ by $y$, we denote the resulting formula by $\varphi[x/y]$; if we replace the variables $x_{1}$ through $x_{n}$ of $\varphi$ by, respectively, $y_{1}$ through $y_{n}$,
the resulting formula is denoted by $\varphi[x_{1}/y_{1}, \cdots , x_{n}/y_{n}]$.

The following result 
is a corollary of exercise $10.5$ in Section $10$ of \cite{BradleyManna2007}.

\begin{theorem}\label{uninterpretedfunctionsis convex}
    If $\T$ is a theory over a one-sorted signature with only one unary function, and if $\T$ is axiomatized by the empty set (that is, its functions are \emph{uninterpreted}), then $\T$ is convex.
\end{theorem}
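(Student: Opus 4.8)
The plan is to recognize that $\T$ is exactly the theory of equality with a single uninterpreted unary function $s$, and to prove convexity by the standard congruence-closure / canonical-model argument, of which the cited exercise~10.5 of \cite{BradleyManna2007} is the general form; the one-unary-function case is a specialization. Throughout, since $\ax(\T)=\emptyset$ every $\Sigma$-structure is a $\T$-model, so $\T$-satisfiability coincides with plain satisfiability and $\tdash\psi$ means $\psi$ is valid. Fix a conjunction of literals $\phi$ and variables $u_1,v_1,\ldots,u_n,v_n$. If $\phi$ is unsatisfiable then $\tdash\phi\rightarrow u_1=v_1$ holds vacuously and we are done, so I would assume $\phi$ is satisfiable.

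Next I would build the congruence closure. Let $T$ be the finite set of all subterms occurring in $\phi$ together with the variables $u_1,v_1,\ldots,u_n,v_n$ (every term here has the shape $s^k(x)$). Let $\sim$ be the least equivalence relation on $T$ that contains $(\ell,r)$ for each positive literal $\ell=r$ of $\phi$ and is closed under congruence for $s$ (if $t,t'\in T$ with $s(t),s(t')\in T$ and $t\sim t'$, then $s(t)\sim s(t')$). The correctness of congruence closure---which is precisely what exercise~10.5 packages---gives the two facts I need: $\phi$ is satisfiable iff no disequality literal $p\neq q$ of $\phi$ satisfies $p\sim q$; and for $p,q\in T$, $\tdash\phi\rightarrow p=q$ iff $p\sim q$.

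Then I would realize the minimal equalities in a single canonical model $\A$ and read off convexity. Let the domain of $\A$ be the set $T/{\sim}$ of congruence classes, augmented by fresh elements forming an infinite $s$-spine used to totalize $s$ on those classes $[t]$ with $s(t)\notin T$; set $s^{\A}([t])=[s(t)]$ whenever $s(t)\in T$ (well defined by congruence) and send the remaining classes and all fresh elements along the spine, and interpret each variable $x\in T$ as $[x]$. An easy induction shows $t^{\A}=[t]$ for every $t\in T$, so $\A\vDash p=q$ iff $p\sim q$ for $p,q\in T$. Consequently $\A\vDash\phi$: positive literals hold since $\ell\sim r$, and each disequality holds because $\phi$ is satisfiable. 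Now suppose $\tdash\phi\rightarrow\bigvee_{i=1}^n u_i=v_i$. As $\A\vDash\phi$ we get $\A\vDash\bigvee_i u_i=v_i$, so $\A\vDash u_j=v_j$ for some $j$; since $u_j,v_j\in T$ this yields $u_j\sim v_j$, hence $\tdash\phi\rightarrow u_j=v_j$, which is exactly convexity.

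The main obstacle is the model construction in the last step: totalizing the unary function $s$ on the fresh completion elements so that the resulting structure is a genuine $\Sigma$-interpretation while (i) respecting congruence, (ii) introducing no spurious equalities among the members of $T$, and (iii) still satisfying every disequality of $\phi$. This is the only place where real care is needed, and it is exactly the content that the cited exercise abstracts away; the convexity conclusion itself is then a short semantic argument, so the cleanest write-up would invoke exercise~10.5 for the ``$\tdash\phi\rightarrow p=q$ iff $p\sim q$'' equivalence and use the canonical model solely to select the forced disjunct.
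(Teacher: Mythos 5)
Your proof is correct. The paper itself gives no proof of this theorem --- it merely states it as a corollary of exercise 10.5 of Bradley \& Manna --- so your congruence-closure plus canonical-model argument is precisely the content that citation stands in for, worked out in full: the two facts you extract ($\phi$ satisfiable iff no disequality is collapsed by $\sim$, and $\tdash\phi\rightarrow p=q$ iff $p\sim q$ for $p,q\in T$) are exactly what is needed, and the one delicate step you flag (totalizing $s$ on classes $[t]$ with $s(t)\notin T$) is handled soundly; in fact, since $T$ is subterm-closed, any totalization that leaves $s^{\A}$ unchanged on $\{[t] : s(t)\in T\}$ preserves the values of all terms of $T$, so the infinite spine is more machinery than required but is not wrong.
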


\section{Proof of \Cref{Barrett's theorem on convexity}}

\btonc*

\begin{proof}
Suppose $\T$ is not stably-infinite \wrt $S$: then there exists a quantifier-free formula $\phi^{\prime}$ that is $\T$-satisfiable, but every $\T$-interpretation $\A$ that satisfies $\phi^{\prime}$ has $\s^{\A}$ finite for some $\s\in S$. Since every quantifier-free formula is equivalent to a disjunction of conjunctions of literals called its \emph{Disjunctive Normal Form} ($DNF$), being a conjunction of literals called a {\em cube}, 
we state that a cube $\phi$ in the $DNF$ of $\phi^{\prime}$ also has this property: suppose that this is not actually true, and so every $\T$-satisfiable cube in the $DNF$ of $\phi^{\prime}$ is satisfied by a $\T$-interpretation $\A$ with $\s^{\A}$ infinite for every $\s\in S$; since $\phi^{\prime}$, 
being equivalent to a disjunction of these cubes, must also be satisfied by each of these $\T$-interpretations $\A$, we reach a contradiction. So, to summarize, there is a conjunction of literals $\phi$ that is $\T$-satisfiable, but every $\T$-interpretation $\A$ that satisfies $\phi$ must have $\min\{|\sigma^{\A}| : \sigma\in S\}$ finite.

By taking $\overarrow{x}=\vars(\phi)$, we have that any $\T^{\prime}$-interpretations $\B$, for $\T^{\prime}$ the theory with axiomatization $\ax^{\prime}=\ax(\T)\cup \{\exists\, \overarrow{x}.\:\phi\}$, must have $\min\{|\sigma^{\B}| : \sigma\in S\}$ finite. 
Therefore,
the set
\[\ax^{\prime}\cup\{\psi^{\s}_{\geq k} : \sigma\in S, k\in\mathbb{N}\}\]
is unsatisfiable, since the latter set of formulas states that the domain of sort $\sigma$ of an interpretation is infinite, for each $\sigma\in S$. By \Cref{Compactness}, there must exist some $K_{\sigma}$ for each $\sigma\in S$ and a finite subset of $\ax^{\prime}$ that, together with the set
$\{\psi^{\s}_{\geq K_{\s}} : \sigma\in S\}$
is unsatisfiable (notice this is the case since the formula
$\psi^{\s}_{\geq k^{\prime}}\rightarrow \psi^{\s}_{\geq k}$ is satisfied if $k^{\prime}\geq k$, being therefore enough to take $K_{\sigma}$ as the maximum of these indices). Of course, this means $\ax^{\prime}\cup \{\psi^{\s}_{\geq K_{\s}} : \sigma\in S\}$ is unsatisfiable, and so if $\B$ is a $\T^{\prime}$-interpretation with $|\tau^{\B}|\geq K_{\tau}$ for every sort $\tau\in S\setminus\{\sigma\}$, then $|\s^{\B}|<K_{\s}$.

Let $z_{\sigma,1}$ through $z_{\sigma,K_{\s}}$ be fresh variables of sort $\sigma$ for every $\s\in S$: we wish now to show that 
\[\dash_{\T}\phi\rightarrow \bigvee_{\sigma\in S}\bigvee_{1\leq i<j\leq K_{\s}}z_{\sigma,i}=z_{\sigma, j}\]
but $\not\dash_{\T}\phi\rightarrow z_{\s, i}=z_{\s, j}$ for any $\s\in S$ and $1\leq i<j\leq K_{\s}$, contradicting the fact that $\T$ is supposed to be convex, and thus allowing us to reach the conclusion that $\T$ is stably-infinite.

So, suppose $\C$ is a $\T$-interpretation that satisfies $\phi$ (and so is a $\T^{\prime}$-interpretation), and we show that it must also satisfy $\bigvee_{\sigma\in S}\bigvee_{1\leq i<j\leq K_{\s}}z_{\sigma,i}=z_{\sigma, j}$ by contradiction: suppose that $\C$ is able not to satisfy $\bigvee_{1\leq i<j\leq K_{\tau}}z_{\tau, i}=z_{\tau, j}$ for all $\tau\in S\setminus\{\s\}$, and so has at least $K_{\tau}$ elements of each of these sorts. From our analysis of $\T^{\prime}$, this means that $\s^{\C}$ must have less than $K_{\s}$ elements, and by the pigeonhole 
principle $\C$ satisfies $\bigvee_{1\leq i<j\leq K_{\s}}z_{\s, i}=z_{\s, j}$, and thus $\bigvee_{\sigma\in S}\bigvee_{1\leq i<j\leq K_{\s}}z_{\sigma,i}=z_{\sigma, j}$ as we had previously stated.

But, by one of the hypothesis of the lemma, any $\T$-interpretation has more than one element in the domain of sort $\sigma$, for each $\sigma\in S$; so, if $\C$ in addition satisfies $\phi$, for each $\s\in S$ and pair of elements $1\leq i<j\leq K_{\s}$, given that $z_{\s,i}$ and $z_{\s,j}$ are variables that do not occur in $\phi$, we can construct a $\T$-interpretation $\C_{\s, i,j}$ that only differs from $\C$ in the values given to $z_{\s,i}$ and $z_{\s,j}$, where we have instead $z_{\s,i}^{\C_{\s, i,j}}\neq z_{\s,j}^{\C_{\s, i,j}}$. Since $\C$ and $\C_{\s,i,j}$ agree on the values given to the variables in $\phi$, and $\phi$ is satisfied by $\C$, we have that $\C_{\s,i,j}$ satisfies $\phi$ but not $z_{\s,i}=z_{\s,j}$, meaning
\[\not\dash_{\T}\phi\rightarrow z_{\s,i}=z_{\s,j}\]
for each $\s\in S$ and $1\leq i<j\leq K_{\s}$, thus finishing the proof.
\end{proof}

\section{Proof of \Cref{SI empty theories are convex}}
\sietac*

\begin{proof}
Fix a theory $\T$ over the empty signature that is stably-infinite w.r.t. the set of all of its sorts $S$. Suppose $\phi$ is a conjunction of literals, and $u_{1}, v_{1}, ... , u_{n},v_{n}$ are variables with sorts in some subset of $S$ (say, $u_{i}$ and $v_{i}$ both have sort $\s_{i}$), such that 
$\tdash \phi\rightarrow\bigvee_{i=1}^{n}u_{i}=v_{i}$.

Now, it is easy to see we cannot have $\tdash \phi\rightarrow(u_{i}=v_{i})$ for $u_{i}$ or $v_{i}$ not in $\phi$: in fact, take a $\T$-interpretation $\A$ which satisfies $\phi$, and since $\T$ is stably-infinite, there exists an infinite $\T$-interpretation $\A^{\prime}$ which also satisfies $\phi$; by changing its value on $u_{i}$, respectively $v_{i}$, to a value different from $v_{i}^{\A^{\prime}}$, respectively $u_{i}^{\A^{\prime}}$, we obtain a third $\T$-intepretation $\A^{\prime\prime}$ that still satisfies $\phi$ but not $u_{i}=v_{i}$. So we may restrict ourselves to the pairs $(x_{1}, y_{1}), ... , (x_{m}, y_{m})$ among $\{(u_{1}, v_{1}), ... , (u_{n}, v_{n})\}$ where both $x_{i}$ and $y_{i}$ are variables in $\phi$ while still having $\dash_{\T}\phi\rightarrow\bigvee_{i=1}^{m}x_{i}=y_{i}$; we will keep on denoting the sort of $x_{j}$ and $y_{j}$ by $\s_{j}$, for simplicity.

Now, assuming for contradiction that $\not\tdash \phi\rightarrow(x_{i}=y_{i})$ for all $1\leq i\leq m$, there are $\T$-interpretations $\B_{i}$ which satisfy $\phi$ but not $x_{i}=y_{i}$; without loss of generality, by the stable-infiniteness of $\T$, we may assume that all $\B_{i}$ are infinite on all their sorts, and by \Cref{LowenheimSkolemDownwards} we may assume as well that $\B_{1}$ through $\B_{m}$ have the same countable domain (say $\mathbb{N}$) for all sorts. That means, since we have no function or predicate symbols, that all interpretations $\B_{i}$ are over the same model $\B$ of $\T$ that is countable in each domain.

Now, for any $\T$-interpretation $\B_{\lambda}$ on $\B$, we define an equivalence relation $E^{\s}_{\lambda}$ on the variables of $\phi$ of sort $\s$ by making $xE^{\s}_{\lambda}y$ iff $x^{\B_{\lambda}}=y^{\B_{\lambda}}$; we also define an equivalence relation $xE^{\s}y$ iff $xE^{\s}_{\lambda}y$ for all $\T$-interpretations $\B_{\lambda}$ that satisfy $\phi$. Because of the interpretations $\B_{i}$, we have that $x_{i}\overline{E^{\s_{i}}}y_{i}$ for each $i$, 
where $\overline{E^{\s_{i}}}$ is the complement of the relation $E^{\s_{i}}$. Now, we state that it is possible to define an interpretation $\B^{\prime}$ on $\B$ such that $xE^{\s}y$ if, and only if, $x^{\B^{\prime}}=y^{\B^{\prime}}$, for variables $x$ and $y$ in $\phi$: in addition, $\B^{\prime}$ is a $\T$-interpretation that satisfies $\phi$, while not satisfying $\bigvee_{i=1}^{m}x_{i}=y_{i}$, what leads to a contradiction.

It is rather easy to define $\B^{\prime}$: it only needs to map all variables in an equivalence class of $E^{\s}$ to the same element, while mapping variables in a different equivalence class to a different element; this is clearly possible since $\B$ has countably many elements in each domain. Of course, then $x^{\B^{\prime}}=y^{\B^{\prime}}$ iff $xE^{\s}y$. Furthermore, since we are over the empty signature, $\phi$ is a conjunction of equalities and disequalities: if $x$ and $y$ are of sort $\s$ and $x=y$ (respectively $\neg(x=y)$) is one of the literals of $\phi$, then for any $\T$-interpretation $\B_{\lambda}$ that satisfies $\phi$, we must have that it also satisfies $x=y$ (respectively $\neg(x=y)$), and therefore $xE^{\s}y$ (respectively $x\overline{E^{\s}}y$); this means $x^{\B^{\prime}}=y^{\B^{\prime}}$ ($x^{\B^{\prime}}\neq y^{\B^{\prime}}$), and so $\B^{\prime}$ indeed satisfies $\phi$.

Finally, one lands at a contradiction: since $\B^{\prime}$ does not satisfy $x_{i}=y_{i}$ for any $1\leq i\leq m$, it cannot possibly satisfy $\bigvee_{i=1}^{m}x_{i}=y_{i}$, although it does satisfy $\phi$. The conclusion must be that, for some $i$ between $1$ and $m$, $\tdash \phi\rightarrow x_{i}=y_{i}$.
\end{proof}

\section{Proof of \Cref{OS+ES+-SI=>FW}}

\begin{restatable}{lemma}{IMSI}
\label{Infinite Model => SI}
If $\Sigma$ is a signature without function or predicate symbols, $S$ is a set of sorts, and $\T$ is a theory with a model $\A$ where all domains are infinite, $\T$ is stably-infinite w.r.t. $S$.
\end{restatable}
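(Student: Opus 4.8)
The plan is to exploit the fact that, over an empty signature, the satisfaction of a quantifier-free formula depends only on the pattern of equalities and disequalities it imposes on its variables, and that such a finite pattern can always be realized inside a model with infinite domains. First I would take an arbitrary $\T$-satisfiable quantifier-free formula $\phi$ and fix a $\T$-interpretation $\B$ satisfying it. Assigning the free variables of $\phi$ their values under $\B$ induces, for each sort $\sigma$, an equivalence relation $E_{\sigma}$ on $\vars_{\sigma}(\phi)$ (two variables being related exactly when $\B$ maps them to the same element), whose number of classes is at most $|\vars_{\sigma}(\phi)|$, hence finite.

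Next I would build a new $\T$-interpretation $\A'$ whose underlying structure is the given model $\A$. Since every domain $\sigma^{\A}$ is infinite, it contains at least as many elements as there are $E_{\sigma}$-classes, so I can choose an assignment of the variables of $\phi$ into $\A$ that realizes precisely the relations $E_{\sigma}$: variables in the same class are sent to a common element, and variables in distinct classes to distinct elements. Because $\A$ is a model of $\T$, the interpretation $\A'$ is a $\T$-interpretation, and since all domains of $\A$ (in particular those of the sorts in $S$) are infinite, $\A'$ already meets the required cardinality condition for stable infiniteness w.r.t. $S$.

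It then remains to argue that $\A'\vDash\phi$. Here I would use that $\Sigma$ has no function or predicate symbols other than equality: every atomic subformula of $\phi$ has the form $x=y$ for variables $x,y$, so the truth value of $\phi$ under any interpretation is a Boolean function of the truth values of these equalities, which in turn are determined solely by the induced equality pattern. Since $\A'$ realizes the same pattern as $\B$, and $\B\vDash\phi$, we conclude $\A'\vDash\phi$, finishing the proof.

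The step I expect to be the crux is this last one --- justifying that quantifier-free formulas over the empty signature are invariant under any reassignment of variables that preserves the equality/disequality pattern. This is exactly where the emptiness of the signature is essential: in the presence of function or predicate symbols, two assignments inducing the same equalities could still disagree on $\phi$, and the transfer from $\B$ to $\A'$ would break down.
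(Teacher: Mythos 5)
Your proof is correct and follows essentially the same route as the paper's: both fix a satisfying interpretation $\B$, transfer the variable assignment into the infinite model $\A$ via an injection that preserves the equality pattern (your equivalence classes $E_{\sigma}$ correspond to the paper's bijections $h_{\sigma}$ on $\vars_{\sigma}(\phi)^{\B}$), and conclude by noting that over an empty signature the truth value of a quantifier-free formula is determined by the truth values of its atomic equalities. No gaps.
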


\begin{proof}
Let $\phi$ be a quantifier-free $\Sigma$-formula and $\B$ a $\T$-interpretation that satisfies $\phi$. For each $\s\in S$, $\vars_{\s}(\phi)$ is finite, and so is $\vars_{\s}(\phi)^{\B}$, meaning there is a subset $C(\s)$ of $\s^{\A}$ with the same cardinality as $\vars_{\s}(\phi)^{\B}$; let $h_{\s}:\vars_{\s}(\phi)^{\B}\rightarrow C(\s)$ be bijections for each $\s\in S$.

We define an interpretation $\A^{\prime}$ on $\A$ such that, for every $x\in\vars_{\s}(\phi)$, $x^{\A^{\prime}}=h_{\s}(x^{\B})$, and for every variable $x$ not in $\phi$ we may define $x^{\A^{\prime}}$ arbitrarily. Now, let $x=y$ be an atomic subformula of $\phi$, with $x$ and $y$ of sort $\s$: since $h_{\s}$ is a bijection, $x^{\A^{\prime}}=y^{\A^{\prime}}$ iff $x^{\B}=y^{\B}$; since all atomic subformulas of $\phi$ receive precisely the same truth-value in either $\A^{\prime}$ or $\B$, and a quantifier-free formula's truth value is entirely determined by the truth-values of its atomic subformulas, we get that $\A^{\prime}$ satisfies $\phi$, and so $\T$ is stably-infinite w.r.t. $S$.   
\end{proof}

\begin{restatable}{lemma}{simaxmod}
\label{SI=>Maximum model}
If a one-sorted theory over the empty signature is not stably-infinite, it has a model of maximum finite cardinality.
\end{restatable}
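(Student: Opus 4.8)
The plan is to argue, via the contrapositive of \Cref{Infinite Model => SI} together with compactness, that a non-stably-infinite theory over the empty one-sorted signature has only finite models and that their cardinalities are bounded; a model of maximum cardinality then exists.

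First I would observe that, by the contrapositive of \Cref{Infinite Model => SI} (specialized to the single sort, where ``all domains infinite'' just means the one domain is infinite), a one-sorted empty-signature theory $\T$ that is not stably infinite can have no model whose domain is infinite. Hence every model of $\T$ is finite. I would also note that $\T$ must have at least one model: if it had none, then no quantifier-free formula would be $\T$-satisfiable, and $\T$ would be vacuously stably infinite, contrary to assumption. Thus the set $C=\{|\sigma^{\A}| : \A \text{ is a model of } \T\}$ is a nonempty set of finite cardinals, and it suffices to show that $C$ is bounded.

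To show $C$ is bounded, I would suppose otherwise: that for every $n$ there is a model of $\T$ of cardinality at least $n$. Consider the set of sentences $\ax(\T)\cup\{\psi_{\geq n} : n\in\mathbb{N}\}$. Any finite subset of this set involves only $\ax(\T)$ together with finitely many cardinality formulas, the strongest of which is some $\psi_{\geq N}$; since $\psi_{\geq N}$ entails $\psi_{\geq k}$ for every $k\leq N$, a model of $\T$ of cardinality at least $N$ satisfies this entire finite subset. By \Cref{Compactness}, the whole set is then satisfiable, and any interpretation satisfying it is a model of $\T$ whose domain satisfies $\psi_{\geq n}$ for all $n$ and is therefore infinite. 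This contradicts the fact, established above, that every model of $\T$ is finite.

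Consequently $C$ is a nonempty, bounded subset of $\mathbb{N}$, so it has a maximum; any model realizing that maximum is a model of $\T$ of maximum finite cardinality, as required. I expect the only delicate point to be the bookkeeping in the compactness step, namely verifying that unboundedness of the finite model sizes is exactly what makes every finite subset of $\ax(\T)\cup\{\psi_{\geq n}\}$ satisfiable; this becomes routine once the monotonicity $\psi_{\geq N}\rightarrow\psi_{\geq k}$ for $k\leq N$ is noted.
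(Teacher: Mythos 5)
Your proof is correct and follows essentially the same route as the paper's: it rules out infinite models via \Cref{Infinite Model => SI}, notes that a model must exist (else vacuous stable infiniteness), and uses \Cref{Compactness} with the formulas $\psi_{\geq n}$ to bound the finite cardinalities. The only cosmetic difference is that you run the compactness step in the contrapositive direction (unbounded sizes would yield an infinite model) whereas the paper extracts a finite unsatisfiable subset from $\ax(\T)\cup\{\psi_{\geq n} : n\in\mathbb{N}\}$; the two are logically equivalent presentations of the same argument.
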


\begin{proof}
Let $\T$ be a one-sorted, not stably-infinite theory over the empty signature; $\T$ cannot have infinite models because of \Cref{Infinite Model => SI}; 
and it has finite models since, otherwise, it would be vacuously stably-infinite. So, suppose that $\T$ has models of arbitrarily finite size: since 
$\Gamma=\{\psi_{\geq n} : n\in\mathbb{N}\}$ 
is only satisfied by structures with an 
infinite domain, we have that $\ax(\T)\cup\Gamma$ is unsatisfiable. 
By \Cref{Compactness}, there must exist finite subsets $\ax_{0}\subseteq \ax(\T)$ and $\Gamma_{0}\subseteq \Gamma$ such that $\ax_{0}\cup \Gamma_{0}$ is 
unsatisfiable. Let $N$ be the largest index of a formula $\psi_{\geq n}$ showing up in $\Gamma_{0}$, and we can derive from the fact that $\psi_{\geq j}$ 
implies $\psi_{\geq i}$ for $j>i$ that $\ax_{0}\cup\{\psi_{\geq N}\}$ is unsatisfiable, meaning that there are no structures that satisfy $\ax_{0}$ with 
more than $N$ elements in their domains. But models of $\ax(\T)$ are also models of $\ax_{0}$, what implies that $\ax(\T)$ has no models with more than 
$N$ elements, contradicting our hypothesis. 
\end{proof}

\osessifw*

\begin{proof}
By \Cref{SI=>Maximum model}, a one-sorted, not stably-infinite theory over the empty signature must have a model of maximum size. Let $M$ be that finite cardinality; we can then prove that, for a quantifier-free formula $\phi$ and fresh variables $x_{1}$ through $x_{M}$, 
$\wit(\phi)=\phi\wedge \bigwedge_{i=1}^{M}x_{i}=x_{i}$
is a witness. 
We begin by noticing that $\exists\,\overarrow{x}.\:\wit(\phi)$ and $\phi$ are $\T$-equivalent, where $\overarrow{x}=\vars(\wit(\phi))\setminus\vars(\phi)$. Now, assume that the $\T$-interpretation $\A$ satisfies $\wit(\phi)$: since the maximum size of a model of $\T$ is $M$, we have $|\s^{\A}|\leq M$. 
We define another $T$-interpretation $\A'$ 
by changing the value of $\A$ only on the variables of $\overarrow{x}$ so that $x_{i}\mapsto x_{i}^{\A^{\prime}}$ is surjective. $\wit(\phi)$ remains valid in $\A'$ and, in addition, $\vars(\wit(\phi))^{\A^{\prime}}=\s^{\A^{\prime}}$.
\end{proof}

\section{Proof of \Cref{SI+SFW=S}}
\sisfweqs*

\begin{proof}
Suppose that $\T$ is stably-infinite and strongly finitely witnessable w.r.t. its only sort $\sigma$; let $\A$ be a $\T$-interpretation which satisfies a quantifier-free formula $\phi$, and take a cardinal $\kappa(\sigma)\geq |\sigma^{\A}|$.

If $\kappa(\sigma)$ is infinite: since $\T$ is assumed to be stably-infinite, there exists a $\T$-interpretation $\B$ that satisfies $\phi$ with $\sigma^{\B}$ infinite; then, by Lowenheim-Skolem's theorem for one-sorted logic (upwards if $\kappa(\sigma)>|\sigma^{\B}|$, downwards if $\kappa(\sigma)<|\sigma^{\B}|$), there exists a $\T$-interpretation $\C$ with $|\sigma^{\C}|=\kappa(\sigma)$ that satisfies $\varphi$.

So assume $\kappa(\sigma)$ is finite, and call it $m$: of course, in this case, $|\sigma^{\A}|$ must also be finite. Let $\wit$ be the strong witness of $\T$, and we know that, since $\phi$ and $\exists\,\overarrow{x}.\:\wit(\phi)$ are $\T$-equivalent (for $\overarrow{x}=\vars(\wit(\phi))\setminus\vars(\phi)$), there exists a $\T$-interpretation $\A^{\prime}$, with same underlying structure as that of $\A$, that satisfies $\wit(\phi)$. Let $W=\vars(\wit(\phi))$, with the equivalence relation $F$ over $W$ such that $xFy$ iff $x^{\A^{\prime}}=y^{\A^{\prime}}$ (and the corresponding arrangement 
being $\delta_{W}$), and let $n$ be the number of elements in $W^{\A^{\prime}}$, that is, the 
number of equivalence classes in $W/F$. 

If $m=n$, we have that $m=\kappa(\s)\geq |\s^{\A}|=|\s^{\A^{\prime}}|\geq |W^{\A^{\prime}}|=n$, meaning $W^{\A^{\prime}}$ actually equals $\s^{\A^{\prime}}$ and therefore $\A^{\prime}$ is already a $\T$-interpretation that satisfies $\wit(\phi)$ with $\s^{\A^{\prime}}=\vars_{\s}(\wit(\phi))^{\A^{\prime}}$; otherwise, take a set $U$ of $m-n$ 
fresh variables, and define the equivalence relation $E$ on $V=W\cup U$ such that $xEy$ iff $xFy$ or 
$x=y$, which then has $m$ equivalence classes, with corresponding arrangement $\delta_{V}$ on $V$. 

The formula $\wit(\phi)\wedge\delta_{V}$ is satisfied by an infinite $\T$-interpretation: to see that, notice that since $\wit(\phi)\wedge\delta_{W}$ is quantifier-free and satisfied by $\A^{\prime}$, and $\T$ is stably-infinite, there must exist an infinite $\T$-interpretation $\B$ that satisfies $\wit(\phi)\wedge\delta_{W}$; since $\wit(\phi)\wedge\delta_{V}$ is equivalent to
\[\wit(\phi)\wedge\delta_{W}\wedge\bigwedge_{y\in U}\bigwedge_{x\in V\setminus\{y\}}\neg(x=y),\]
and since $\B$ is infinite and the variables in $U$ are fresh, we can change the value of $\B$ only on $U$, 
thus creating a $T$-interpretation $\B'$ that satisfies
$\wit(\phi)\wedge\delta_{V}$.

Since the quantifier-free formula $\wit(\phi)\wedge\delta_{V}$ is $\T$-satisfiable, there must exist a $\T$-interpretation $\C$ that satisfies $\wit(\phi)\wedge\delta_{V}$ with $\sigma^{\C}=V^{\C}$. Since $V/E$ has $m$ equivalence classes, from the fact that $\C$ satisfies $\delta_{V}$ it follows that $|\sigma^{\C}|=m$; and since $\C$ satisfies $\wit(\phi)$, it also satisfies $\exists\, \overarrow{x}.\: \wit(\phi)$ (for $\overarrow{x}=\vars(\wit(\phi))\setminus\vars(\phi)$), and given that the latter formula and $\phi$ are $\T$-equivalent, we get that $\C$ satisfies $\phi$, and so $\T$ is smooth.
\end{proof}

\section{Proof of \Cref{OS+ES+-SI+-SFW=>-C}}
\osessisfwcc*

\begin{proof}
Assume that $\T$ is one-sorted and convex without being strongly finitely witnessable nor being stably-infinite, over the empty signature. By \Cref{SI=>Maximum model}, we know that $\T$ must have a model of maximum finite cardinality, say $M$. We can now state that $M>1$, meaning the maximum model of $\T$ has more than one element, since otherwise $\T$ would consist only of the trivial model, and one can easily prove in that case that $\wit(\phi)=\phi$ is a strong witness: if $\phi\wedge\delta_{V}$ is $\T$-satisfiable, it is satisfied in the $\T$-interpretation $\A$ with only one element in its domain, and of course $\vars(\phi\wedge\delta_{V})^{\A}=\s^{\A}$.

But now we can reach a contradiction: suppose $\phi$ is a conjunction of literals that is tautological, such as $x=x$; we have that 
$\dash_{\T}\phi\rightarrow\bigvee_{i=1}^{M+1}x=y_{i}$
since, by the pigeonhole principle, the disjunction on the right of the implication must always be true in $\T$-interpretations. But we do not have $\dash_{\T}\phi\rightarrow x=y_{i}$ for any $1\leq i\leq M+1$, since $\phi$ is a tautology and there are $\T$-interpretations on a structure with $M>1$ elements. Therefore $\T$ is not convex, against our hypothesis that it in fact was.
\end{proof}

\section{Proof of \Cref{lem:addsort}}

\begin{definition}
    A witness $wit$ is called {\em variable-dependent} if 
    there is a function $\chi$ from sets of variables to formulas such that
    for every formula $\phi$, $wit(\phi)=\phi\wedge \chi(\vars(\phi))$.
\end{definition}

\begin{restatable}{lemma}{witarevariabledependent}
\label{lem:witarevariabledependent}
Every theory $\T$ defined over an empty signature $\Sigma$ that is finitely witnessable (respectively strongly finitely witnessable) w.r.t. $S\subseteq \S_{\Sigma}$, has a witness (strong witness) that is variable-dependent.
\end{restatable}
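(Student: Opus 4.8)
The plan is to exhibit, uniformly for both cases, a single concrete variable-dependent (strong) witness built from an arbitrary given (strong) witness $\wit_{0}$ of $\T$, exploiting the rigidity of the empty signature. The central structural observations I would rely on are: over an empty signature a quantifier-free formula $\varphi$ is $\T$-equivalent to the disjunction of those arrangements of $\vars(\varphi)$ consistent with it; the truth of $\varphi$ in a $\T$-interpretation depends only on the arrangement its variables induce; and, since there are no function or predicate symbols to respect, any $\T$-model may be freely relabelled and any surjection of a finite set of fresh variables onto its domains realized. I would also use that finite witnessability forces every $\T$-satisfiable quantifier-free formula (equivalently, every $\T$-satisfiable arrangement) to have a \emph{finite} $\T$-model, so that minimal finite admissible cardinalities exist.

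Concretely, for a finite set $V$ of sorted variables I would set $\wit(\varphi)=\varphi\wedge\chi(\vars(\varphi))$, where $\chi(V)=\bigwedge_{w\in W}(w=w)$ and, for each $\sigma\in S$, $W_{\sigma}$ is a set of fresh variables (chosen canonically from $V$) with $|W_{\sigma}|=\max_{\delta}|\vars_{\sigma}(\wit_{0}(\delta))|$, the maximum ranging over the finitely many arrangements $\delta$ of $V$. This $\chi$ is computable, since $\wit_{0}$ is total computable and $V$ has finitely many arrangements; and because the fillers $w=w$ place no constraint on $V$, condition $(i)$ is immediate: $\exists\,\overarrow{w}.\,\chi(V)$ is $\T$-valid, so $\varphi$ and $\exists\,\overarrow{w}.\,\wit(\varphi)$ are $\T$-equivalent.

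For condition $(ii)$/$(ii')$, given an external arrangement $\delta_{V'}$ (empty in the plain case) with $\varphi\wedge\chi(V)\wedge\delta_{V'}$ $\T$-satisfiable, I would pass to the arrangement $\delta^{+}$ over $V\cup V'$ induced by a satisfying interpretation; over the empty signature truth is fixed by the arrangement, so $\tdash\big(\delta^{+}\rightarrow(\varphi\wedge\delta_{V'})\big)$. Writing $(k_{\sigma})$ for its class counts, finite witnessability lets me pick the minimal finite $\T$-admissible cardinality tuple $(c_{\sigma})$ with $c_{\sigma}\geq k_{\sigma}$, build a $\T$-model of those sizes that places $V\cup V'$ according to $\delta^{+}$, and map each $W_{\sigma}$ surjectively onto the $c_{\sigma}-k_{\sigma}$ elements not named by $V\cup V'$. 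The resulting $\T$-interpretation satisfies $\varphi\wedge\chi(V)\wedge\delta_{V'}$ and has each domain exactly equal to the value set of $\vars_{\sigma}(\varphi\wedge\chi(V)\wedge\delta_{V'})$, as required --- \emph{provided} $|W_{\sigma}|\geq c_{\sigma}-k_{\sigma}$.

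Verifying that last inequality in the strong case is the step I expect to be the main obstacle, because $V'$ (hence $k_{\sigma}$) is unbounded while $|W_{\sigma}|$ depends only on $V$. The key is that $c_{\sigma}-k_{\sigma}$ is bounded by the gap to the next admissible cardinality, and strong finite witnessability of $\wit_{0}$ forces those gaps to be small: applying $(ii')$ of $\wit_{0}$ to a fixed arrangement $\delta$ of $V$ against an external all-distinct arrangement demanding $k_{\sigma}$ elements, the covered model it produces shows that the least admissible cardinality above $k_{\sigma}$ exceeds $k_{\sigma}$ by at most $|\vars_{\sigma}(\wit_{0}(\delta))|\leq |W_{\sigma}|$. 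Thus the $V$-local filler count already dominates every gap an arbitrary external arrangement can expose, which is precisely what makes a purely variable-dependent witness possible; the plain case is the special instance in which $\delta_{V'}$ is empty and only arrangements of $V$ itself occur. The delicate bookkeeping I anticipate is carrying this bound simultaneously across all sorts in $S$ (so that one finite admissible tuple works at once) and checking that the auxiliary external arrangements invoked are themselves $\T$-satisfiable.
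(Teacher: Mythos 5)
Your construction is fine for the plain (finitely witnessable) case, but the strong case has a fatal flaw, and it is not among the difficulties you flagged. Condition $(ii')$ quantifies over \emph{arbitrary} finite variable sets, so the external arrangement $\delta_{V'}$ may mention and constrain your filler variables $W$ themselves; your counting argument silently assumes the $W_{\sigma}$ remain free to be surjected onto the leftover elements, and once $\delta_{V'}$ pins them that freedom is gone. Concretely, take $\T=\Tgeqn$ with $n\geq 2$ (strongly finitely witnessable over the one-sorted empty signature, line $1$ of the table), $\varphi=(x=x)$, and your $\wit(\varphi)=\varphi\wedge\bigwedge_{w\in W}(w=w)$. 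Choose $V'=\{x\}\cup W$ and let $\delta_{V'}$ identify all of these variables. Then $\wit(\varphi)\wedge\delta_{V'}$ is $\Tgeqn$-satisfiable (send every variable to a single element of an $n$-element model), yet any interpretation $\B$ with $\s^{\B}=\vars_{\s}(\wit(\varphi)\wedge\delta_{V'})^{\B}$ would have $|\s^{\B}|=1<n$, so no covered model exists and $(ii')$ fails. The obstruction is structural rather than a bookkeeping issue: because $\bigwedge_{w}(w=w)$ places no constraint on $W$, the conjunction $\wit(\varphi)\wedge\delta_{V'}$ remains satisfiable even for arrangements whose class counts are not the cardinalities of any $\T$-model, whereas a strong witness must render exactly those instances \emph{unsatisfiable}. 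No purely tautological padding can achieve this for $\Tgeqn$.

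The paper's proof avoids this by setting $\chi(V)=\bigvee_{E}\wit_{0}(\delta_{V}^{E})$, the disjunction of the \emph{given} witness $\wit_{0}$ over all arrangements of $V$. This formula is trivial only after existentially quantifying its fresh variables (which is all condition $(i)$ needs); as a quantifier-free formula it carries whatever constraints $\wit_{0}$ imposes on them, so over-collapsing external arrangements become unsatisfiable, and the covered model for $(ii')$ is obtained directly from $\wit_{0}$'s own strong-witness property applied to $\wit_{0}(\delta_{V}^{E_{0}})\wedge\delta_{W}^{G}$ (where $G$ is the arrangement the satisfying interpretation induces on $V$ together with the external variables) rather than assembled by hand from a cardinality estimate. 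Your observations about the empty signature --- truth determined by arrangements, free relabelling of models --- and your gap bound via $(ii')$ against an all-distinct arrangement are correct, and they do carry the computability claim, condition $(i)$, and the plain witness case; it is only the strong case that forces the witness to be non-trivial on its fresh variables.
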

\color{black}

\begin{proof}
Take a quantifier-free formula $\phi$. Let $V=\vars(\phi)$, take the set $E(V)$ of equivalence relations on $V$, and for a $E\in E(V)$, let $\delta_{V}^{E}$ be the arrangement induced by $E$ on $V$; we then define
\[\chi(V)=\bigvee_{E\in E(V)}\wit(\delta_{V}^{E})\quad\text{and}\quad \wit_{0}(\phi)=\phi\wedge\chi(V),\] 
and we state that, if $\wit$ is a witness, respectively a strong witness, $\wit_{0}$ is also a witness, respectively a strong witness. 
\begin{enumerate}
\item We start by showing that, if $\overarrow{x}=\vars(\wit_{0}(\phi))\setminus\vars(\phi)$, $\phi$ and $\exists\,\overarrow{x}.\:\wit_{0}(\phi)$ are $\T$-equivalent; of course, if $\exists\,\overarrow{x}.\:\wit_{0}(\phi)=\phi\wedge\exists\,\overarrow{x}.\:\chi(V)$ (what we can do since $\phi$ contain none of the variables in $\overarrow{x}$) is satisfied by a $\T$-interpretation $\A$, so is $\phi$, so let us focus on the other direction instead.

So assume the $\T$-interpretation $\A$ satisfies $\phi$, and let $E_{0}$ be the equivalence on $V$ such that $xE_{0}y$ iff $x^{\A}=y^{\A}$, meaning $\A$ satisfies $\delta_{V}^{E_{0}}$. Then $\A$ must also satisfy $\exists\,\overarrow{y}.\:\wit(\delta_{V}^{E_{0}})$, for $\overarrow{y}=\vars(\wit(\delta_{V}^{E_{0}})\setminus\vars(\phi)$, and thus there exists a $\T$-interpretation $\A^{\prime}$, differing from $\A$ at most on $\overarrow{y}$, that satisfies $\wit(\delta_{V}^{E_{0}})$ (and thus $\exists\,\overarrow{y}.\:\wit(\delta_{V}^{E_{0}})$ and $\delta_{V}^{E_{0}}$ as well).

Because we are in an empty signature, all atomic subformulas of $\phi$ are equalities of variables $x=y$: and this formula is satisfied in $\A^{\prime}$ iff $xE_{0}y$, what happens in turn iff $x^{\A}=y^{\A}$; this means the atomic subformulas of $\phi$ receive the same truth-values in $\A$ and $\A^{\prime}$, and since $\phi$ is quantifier-free we get that $\A^{\prime}$ satisfies $\phi$. Of course, $\A^{\prime}$ also satisfies $\wit(\delta_{V}^{E_{0}})$, and thus satisfies $\phi\wedge\chi(V)$, meaning $\A$ satisfies $\exists\,\overarrow{y}.\:\phi\wedge\chi(V)$; since the variables in $\overarrow{y}$ are a subset of the variables in $\overarrow{x}$, we get $\A$ satisfies $\exists\,\overarrow{x}.\:\phi\wedge\chi(V)$. Thus $\phi$ and $\exists\,\overarrow{x}.\:\wit_{0}(\phi)$ are indeed $\T$-equivalent.

\item \begin{enumerate}
\item Suppose that $\wit$ is a witness, and let $\A$ be a $\T$-interpretation that satisfies $\wit_{0}(\phi)$. Let $E_{0}$ be
the equivalence on $V$ such that $xE_{0}y$ iff $x^{\A}=y^{\A}$, and we know that $\A$ also satisfies $\delta_{V}^{E_{0}}$, and thus 
$\exists\,\overarrow{y}.\:\wit(\delta_{V}^{E_{0}})$, for 
$\overarrow{y}=\vars(\wit(\delta_{V}^{E_{0}}))\setminus\vars(\delta_{V}^{E_{0}})$ (contained in $\overarrow{x}=\vars(\wit_{0}(\phi))\setminus\vars(\phi)$). Changing the value of $\A$ at most on the variables $\overarrow{y}$, we obtain a second $\T$-interpretation $\A^{\prime}$ that 
satisfies $\wit(\delta_{V}^{E_{0}})$; but of course $\A^{\prime}$ also satisfies $\exists\,\overarrow{y}.\:\wit(\delta_{V}^{E_{0}})$ and thus $\delta_{V}^{E_{0}}$. 

Now, since $\wit$ is a witness, there must exist a $\T$-interpretation $\B$ that satisfies $\wit(\delta_{V}^{E_{0}})$ with

\[\s^{\B}=\vars_{\s}(\wit(\delta_{V}^{E_{0}}))^{\B}\]
for each 
$\s\in S$. But, because $\B$ satisfies $\wit(\delta_{V}^{E_{0}})$, it satisfies $\exists\,\overarrow{y}.\:\wit(\delta_{V}^{E_{0}})$ and thus $\delta_{V}^{E_{0}}$; again,
since we are on the empty signature, this means the atomic subformulas of $\phi$ receive the same truth-value in either $\A$ or $\B$, and so $\B$ satisfies $\phi$ (and, since $\B$ also satisfies $\wit(\delta_{V}^{E_{0}})$ and thus $\bigvee_{E\in E(V)}\wit(\delta_{V}^{E})$, $\B$ then satisfies $\phi\wedge\chi(V)$). 
Furthermore, $\vars_{\s}(\wit(\delta_{V}^{E_{0}}))\subseteq \vars_{\s}(\wit_{0}(\phi))$, meaning
$\s^{\B}=\vars_{\s}(\wit_{0}(\phi))^{\B}$ and, therefore, $\wit_{0}$ is also a witness.

\item Suppose now $\wit$ is a strong witness, $U$ is a set of variables, $F$ is an equivalence on $U$, $\delta^{F}_{U}$ is the corresponding arrangement, and $\A$ is a $\T$-interpretation that satisfies $\wit_{0}(\phi)\wedge\delta^{F}_{U}$. 

Let $E_{0}$ be the equivalence relation on $V$ such that $xE_{0}y$ iff $x^{\A}=y^{\A}$: this means that $\A$ satisfies $\delta_{V}^{E_{0}}$, and thus 
$\exists\,\overarrow{y}.\:\wit(\delta_{V}^{E_{0}})$, for $\overarrow{y}=\vars(\wit(\delta_{V}^{E_{0}}))\setminus\vars(\delta_{V}^{E_{0}})$; as we did in the case that $\wit$ was only a witness, we have that there is a $\T$-interpretation $\A^{\prime}$, 
differing from $\A$ at most on $\overarrow{y}$, that satisfies $\wit(\delta_{V}^{E_{0}})$ (and, of course, $\A^{\prime}$ also satisfies $\exists\,\overarrow{y}.\:\wit(\delta_{V}^{E_{0}})$ and therefore $\delta_{V}^{E_{0}}$).

Define then the set of variables  $W=U\cup V$ with the equivalence $G$ such that $xGy$ iff $x^{\A}=y^{\A}$, and notice that $\A$ satisfies $\delta_{W}^{G}$, a formula that implies both $\delta_{V}^{E_{0}}$ and $\delta_{U}^{F}$. 

Since $\A$ satisfies $\wit(\delta_{V}^{E_{0}})\wedge\delta_{W}^{G}$, and $\wit$ is a strong witness, there must exist a $\T$-interpretation $\B$ that satisfies $\wit(\delta_{V}^{E_{0}})\wedge\delta_{W}^{G}$ with 
\[\s^{\B}=\vars_{\s}(\wit(\delta_{V}^{E_{0}})\wedge\delta_{W}^{G})^{\B}\]
for each $\s\in S$. 

Because $\B$ satisfies $\delta_{W}^{G}$, it must satisfy $\delta_{V}^{E_{0}}$, meaning $\B$ and $\A$ satisfy precisely the same atomic subformulas of $\phi$, and thus $\B$ satisfies $\phi$ (since that is a quantifier-free formula), and 
consequently $\wit_{0}(\phi)$ (since $\B$ also satisfies $\wit(\delta_{V}^{E_{0}})$ and thus $\chi(V)$); furthermore, because $\B$ satisfies $\delta_{W}^{G}$, it also satisfies $\delta_{U}^{F}$. Finally, 
notice the variables of sort $\s$ of $\wit(\delta_{V}^{E_{0}})\wedge\delta_{W}^{G}$ are a subset of the set of variables of sort $\s$ of $\wit_{0}(\phi)\wedge\delta_{U}^{F}$: first of all,
because $\wit(\delta_{V}^{E_{0}})$ is a subformula of $\chi(V)$ and thus $\wit_{0}(\phi)$; and second, because the variables of $\delta_{W}^{G}$ ($W=U\cup V$) are also
variables of $\delta_{U}^{F}$ and $\phi$ (and thus $\wit_{0}(\phi)$). So we have that $\s^{\B}=\vars_{\s}(\wit_{0}(\phi)\wedge\delta_{U}^{F})^{\B}$, proving $\wit$ is indeed a strong witness.\qedhere
\end{enumerate}
\end{enumerate}
\end{proof}

For simplicity of notation, the symbols $\A$ and $\B$ will be reserved, in the proofs of Lemmas \ref{technical result on adding a sort} and \ref{lem:addsort}, for $\Sigma_{1}$-interpretations, while $\C$ and $\D$ will denote $\Sigma_{2}$-interpretations.
Still in the following results, given a $\Sigma_{2}$-interpretation $\C$, we denote by $\subs{\C}$ the $\Sigma_{1}$-interpretation with $\s^{\subs{\C}}=\s^{\C}$, and $x^{\subs{\C}}=x^{\A}$ for every variable $x$ of sort $\s$. 

\begin{lemma}\label{technical result on adding a sort}
Take a $\Sigma_{2}$-interpretation $\C$. It is then true that, for any $\Sigma_{1}$-formula $\varphi$, $\subs{\C}$ satisfies $\varphi$ if, and only if, $\C$ satisfies $\varphi$.
\end{lemma}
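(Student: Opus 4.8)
The plan is to prove both directions simultaneously by structural induction on $\varphi$, exploiting the fact that $\subs{\C}$ is nothing more than the reduct of $\C$ to the smaller signature $\Sigma_{1}$. The two interpretations share the same domain for the sort $\s$, namely $\s^{\subs{\C}}=\s^{\C}$, they assign the same value to every variable $x$ of sort $\s$, and---because both signatures are empty---the only nonlogical symbol that can occur in a $\Sigma_{1}$-formula is the equality $=_{\s}$, interpreted as the identity on $\s^{\C}$ in either case. Before the induction I would record the one commutation fact that makes the quantifier step work: reassigning a variable $x$ of sort $\s$ commutes with taking the reduct, i.e. $\subs{(\C[x\mapsto a])}=(\subs{\C})[x\mapsto a]$ for every $a\in\s^{\C}$.

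For the base case, since $\Sigma_{1}$ is empty its atomic formulas are exactly the equalities $x=y$ between variables of sort $\s$, and $\subs{\C}\vDash x=y$ iff $x^{\subs{\C}}=y^{\subs{\C}}$ iff $x^{\C}=y^{\C}$ iff $\C\vDash x=y$, directly from the definition of $\subs{\C}$. The cases for the Boolean connectives are then immediate from the induction hypothesis, since the truth value of $\neg\psi$, $\psi_{1}\wedge\psi_{2}$, etc.\ is determined by the truth values of the immediate subformulas, which are preserved by assumption.

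The only case requiring attention is that of quantifiers, and here the key is that every quantifier in a $\Sigma_{1}$-formula ranges over the sole sort $\s$. For $\varphi=\exists\, x.\:\psi$ with $x$ of sort $\s$, I would argue that $\subs{\C}\vDash\exists\, x.\:\psi$ iff there is some $a\in\s^{\subs{\C}}$ with $(\subs{\C})[x\mapsto a]\vDash\psi$; using $\s^{\subs{\C}}=\s^{\C}$ together with the commutation fact, this is equivalent to the existence of some $a\in\s^{\C}$ with $\subs{(\C[x\mapsto a])}\vDash\psi$, which by the induction hypothesis applied to $\C[x\mapsto a]$ becomes $\C[x\mapsto a]\vDash\psi$, i.e.\ $\C\vDash\exists\, x.\:\psi$; the universal case is completely analogous. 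The main (and essentially only) obstacle is bookkeeping: one must observe that no variable or quantifier in a $\Sigma_{1}$-formula ever touches the sort $\s_{2}$, so that the agreement of $\C$ and $\subs{\C}$ on $\s$ alone is enough to transfer satisfaction in both directions; once this is noted, the induction is routine.
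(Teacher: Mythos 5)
Your proof is correct and follows essentially the same route as the paper's: a structural induction in which the atomic and connective cases are immediate from $\s^{\subs{\C}}=\s^{\C}$ and $x^{\subs{\C}}=x^{\C}$, and the quantifier case is handled by the observation that reassigning a variable of sort $\s$ commutes with passing to the reduct (the paper phrases this as constructing $\D$ with $x^{\D}=x^{\A}$ so that $\subs{\D}=\A$). No gaps.
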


\begin{proof}
We prove that $\subs{\C}$ satisfies $\varphi$ iff $\C$ satisfies $\varphi$ by induction on the complexity of $\varphi$.
\begin{enumerate}
\item Suppose $\varphi$ is $x=y$, for $x$ and $y$ variables of sort $\s$; then, since $x^{\subs{\C}}=x^{\C}$ and $y^{\subs{\C}}=y^{\C}$, $\subs{\C}$ satisfies $\varphi$ iff $\C$ does so.
\item Suppose $\varphi=\neg\psi$ or $\varphi=\psi\vee\xi$, the proof for the connectives $\wedge$ and $\rightarrow$ being analogous. In the first case, $\subs{\C}$ satisfies $\varphi$ iff it does not satisfy $\psi$, what happens by induction hypothesis iff $\C$ does not satisfy $\psi$, equivalent to $\C$ satisfying $\varphi$.

In the second case, $\subs{\C}$ satisfies $\varphi$ iff it satisfies either $\psi$ or $\xi$, what happens by induction hypothesis iff $\C$ satisfies either $\psi$ or $\xi$; of course, this is equivalent to $\C$ satisfying $\varphi$.
\item Finally, suppose $\varphi=\exists\, x.\:\psi$, the case for the quantifier $\forall$ being very similar. $\subs{\C}$ satisfies $\varphi$ iff there exists a 
second $\Sigma_{1}$-interpretation $\A$, differing from $\subs{\C}$ at most on $x$, such that $\A$ satisfies $\psi$. Taking the $\Sigma_{2}$-interpretation $\D$ that differs from $\C$ at most on $x$, where $x^{\D}=x^{\A}$, one has $\subs{\D}=\A$, and so it is clear by induction hypothesis that 
$\A$ satisfies $\psi$ iff $\D$ satisfies $\psi$. This is equivalent to the fact that $\C$ satisfies $\varphi$.
\end{enumerate}
\end{proof}

It will be also important, in what is to come, to consider, given a quantifier-free $\Sigma_{2}$-formula $\phi$ and an 
interpretation $\C$ that satisfies it, a quantifier-free $\Sigma_{1}$-formula $\subsf{\phi}{\C}$ obtained by: replacing each 
atomic subformula $u=v$ on $\phi$, where both $u$ and $v$ are of sort $\s_{2}$, by 
either a tautology in $\Sigma_{1}$ (such as $x=x$, for $x$ of sort $\s$ already in $\phi$), if $u^{\C}=v^{\C}$, or a 
contradiction in $\Sigma_{1}$ (such as $\neg(x=x)$), if $u^{\C}\neq v^{\C}$; notice that $\subs{\C}$ satisfies $\subsf{\phi}{\C}$ iff $\C$ satisfies $\phi$.

\lemaddsort*

\begin{proof}
As expected, we must use \Cref{technical result on adding a sort} again and again. 
\begin{enumerate}
\item If $\T$ is stably-infinite, take a quantifier-free $\Sigma_{2}$-formula $\phi$ and a $\Tadds$-interpretation $\C$ that satisfies $\phi$; we then have that $\subs{\C}$ is a $\T$-interpretation 
(since, if $\C$ satisfies $\psi$, for a $\psi$ in $\ax(T)$,
$\subs{\C}$ certainly satisfies $\psi$ as well) that satisfies $\subsf{\phi}{\C}$. Since $\T$ is assumed to be stably-infinite, 
there exists an infinite $\T$-interpretation $\A$ that satisfies $\subsf{\phi}{\C}$. By then picking a $\Tadds$-interpretation $\D$ such that: 
$\subs{\D}=\A$; $|\s_{2}^{\D}|\geq\omega$; and $u^{\D}=v^{\D}$ iff $u^{\C}=v^{\C}$ for variables $u$ and $v$ of sort $\s_{2}$, we get that $\D$ is infinite in both domains and satisfies $\phi$.

Reciprocally, suppose now that $\Tadds$ is stably-infinite, and then take a quantifier-free $\Sigma_{1}$-formula 
$\phi$ and a $\T$-interpretation $\A$ that satisfies $\phi$; it follows that any $\Tadds$-interpretation $\C$ with 
$\subs{\C}=\A$ must satisfy $\phi$, and from the fact that $\Tadds$ is stably-infinite, there exists a $\Tadds$-interpretation $\D$, infinite on both 
domains, that satisfies $\phi$. Then the $\T$-interpretation $\subs{\D}$ has an infinite domain and satisfies $\phi$, proving $\T$ is stably-infinite.

\item Start assuming $\T$ is smooth; then take a quantifier-free $\Sigma_{2}$-formula $\phi$, a $\Tadds$-interpretation 
$\C$ that satisfies $\phi$, and cardinals $\kappa(\s)\geq |\s^{\C}|$ and $\kappa(\s_{2})\geq |\s_{2}^{\C}|$. We 
know that $\subs{\C}$ is a $\T$-interpretation that satisfies $\subsf{\phi}{\C}$, and then there exists a $\T$-interpretation $\A$ that satisfies 
$\subsf{\phi}{\C}$ with $|\s^{\A}|=\kappa(\s)$. Taking a $\Tadds$-interpretation $\D$ with: $\subs{\D}=\A$; 
$|\s_{2}^{\D}|=\kappa(\s_{2})$; and $u^{\D}=v^{\D}$, for all variables $u$ and $v$ of sort $\s_{2}$, iff $u^{\C}=v^{\C}$, we get that $\D$ 
satisfies $\phi$, and $|\s^{\D}|=\kappa(\s)$ and $|\s_{2}^{\D}|=\kappa(\s_{2})$.

Reciprocally, if $\Tadds$ is smooth, take a quantifier-free $\Sigma_{1}$-formula $\phi$, a $\T$-interpretation $\A$ that 
satisfies $\phi$, and a cardinal $\kappa\geq |\s^{\A}|$. Any $\Tadds$-
interpretation $\C$ with $\subs{\C}=\A$ must satisfy $\phi$, and then there must exist a $\Tadds$-interpretation $\D$, since 
this theory is smooth, with $|\s^{\D}|=\kappa$, $|\s_{2}^{\D}|=|\s_{2}^{\C}|$ and that 
satisfies $\phi$. Of course $\subs{\D}$ is then a $\T$-interpretation with $|\s^{\subs{\D}}|=\kappa$ and that satisfies $\phi$.

\item
Suppose $\Tadds$ is finitely witnessable, with witness $\wit$, and we proceed to prove $\T$ is also finitely witnessable, with witness 
\[\wit_{1}(\phi)=\wit(\phi)[u_{1}/x,\ldots u_{n}/x],\]
where $\vars_{\s_{2}}(\wit(\phi))=\{u_{1}, \ldots, u_{n}\}$, and $x$ is any variable in $\phi$ (say, the first, so that it is unique). Start by picking a 
$\Sigma_{1}$-formula $\phi$ and a $\T$-interpretation $\A$ that satisfies $\phi$; we have that the $\Tadds$-interpretation $\C$, with $\subs{\C}=\A$ and 
$|\s_{2}^{\C}|=1$, satisfies $\phi$. Since $\wit$ is supposed to be a witness for $\Tadds$, $\C$ must also satisfy $\exists\,\overarrow{x}.\:\wit(\phi)$, for 
$\overarrow{x}=\vars(\wit(\phi))\setminus\vars(\phi)$; then there is an interpretation $\C^{\prime}$, differing from $\C$ at most on $\overarrow{x}$, that satisfies $\wit(\phi)$, and thus $\subs{\C^{\prime}}$ satisfies
$\wit_{1}(\phi)$ (given that, for any two $u,v\in \vars_{\s_{2}}(\wit(\phi))$, $u^{\C}=v^{\C}$). Since $\subs{\C^{\prime}}$ differs from $\A$ at most on the 
values given to $\overarrow{x}$, $\A$ satisfies $\exists\,\overarrow{x}.\:\wit_{1}(\phi)$. Reciprocally, if $\A$ is a $\T$-interpretation that satisfies 
$\exists\,\overarrow{x}.\:\wit_{1}(\phi)$, the same $\C$ as above satisfies $\exists\,\overarrow{x}.\:\wit(\phi)$ and thus $\phi$, meaning $\subs{\C}=\A$ satisfies $\phi$. With both cases, we have proved $\phi$ and 
$\exists\,\overarrow{x}.\:\wit_{1}(\phi)$ are $\T$-equivalent.

Now, suppose $\A$ is a $\T$-interpretation that satisfies $\wit_{1}(\phi)$, and so the $\Tadds$-interpretation $\C$ with $\subs{\C}=\A$ and $|\s_{2}^{\C}|=1$ satisfies $\wit(\phi)$. There must exist then a $\Tadds$-interpretation $\D$ that satisfies $\wit(\phi)$, and thus $\phi$, with $\s^{\D}=\vars_{\s}(\wit(\phi))^{\D}$ and $\s_{2}^{\D}=\vars_{\s_{2}}(\wit(\phi))^{\D}$; then $\subs{\D}$ is a $\T$-interpretation that satisfies $\phi$ with $\s^{\subs{\D}}=\vars_{\s}(\wit(\phi))^{\subs{\D}}$. Changing the value given by $\subs{\D}$ to the variables on $\overarrow{x}$ so $\subs{\D}^{\prime}$ satisfies $\wit_{1}(\phi)$, from the fact that $\vars_{\s}(\wit(\phi))=\vars_{\s}(\wit_{1}(\phi))$ we obtain that $\T$ is finitely witnessable.

Reciprocally, assume $\T$ is finitely witnessable, and thanks to \Cref{lem:witarevariabledependent} we may take a witness (here, $\phi$ is a quantifier-free $\Sigma_{1}$-formula) $\wit(\phi)=\phi\wedge\psi(\vars(\phi))$, where $\psi(\vars(\phi))$ is a formula that depends only on $\vars(\phi)$; we wish to prove that then $\Tadds$ is also finitely witnessable, with witness (and here, $\phi$ is a quantifier-free $\Sigma_{2}$-formula)
\[\wit_{2}(\phi)=\phi\wedge\psi(\vars_{\s}(\phi))\wedge (u=u),\]
with $u$ a variable of sort $\s_{2}$. Start by taking a quantifier-free $\Sigma_{2}$-formula $\phi$ and a $\Tadds$-interpretation $\C$ that satisfies $\phi$, and we have that $\subs{\C}$ satisfies $\subsf{\phi}{\C}$, and thus $\exists\,\overarrow{x}.\:\wit(\subsf{\phi}{\C})=\subsf{\phi}{\C}\wedge\exists\,\overarrow{x}.\:\psi(\vars(\subsf{\phi}{\C}))$, for 
\[\overarrow{x}=\vars(\wit(\subsf{\phi}{\C}))\setminus\vars(\subsf{\phi}{\C})=\vars(\wit_{2}(\phi))\setminus\big(\vars(\phi)\cup\{u\}\big),\]
since $\vars(\subsf{\phi}{\C})=\vars_{\s}(\phi)$. There must exist an interpretation $\A$, differing from $\subs{\C}$ at most on $\overarrow{x}$, that satisfies $\subsf{\phi}{\C}\wedge\psi(\vars(\subsf{\phi}{\C}))$; take then the interpretation $\D$, differing from $\C$ at most on $x$ in $\overarrow{x}$, where $x^{\D}=x^{\A}$, and that satisfies $\phi\wedge\psi(\vars(\subsf{\phi}{\C}))\wedge u=u$. Since $\D$ differs from $\C$ at most on the variables of $\overarrow{x}$, we are done with one direction of the biconditional. But, if we assume the $\Tadds$-interpretation $\C$ satisfies $\exists\,\overarrow{x}.\:\wit_{2}(\phi)$ (i.e., $\phi\wedge(u=u)\wedge\exists\,\overarrow{x}.\:\psi(\vars_{\s}(\phi))$), we already have it satisfies $\phi$, and so $\phi$ and $\exists\,\overarrow{x}.\:\wit_{2}(\phi)$ are $\Tadds$-equivalent.

Now, if the $\Tadds$-interpretation $\C$ satisfies $\wit_{2}(\phi)$, $\subs{\C}$ satisfies $\subsf{\phi}{\C}\wedge\psi(\vars_{\s}(\phi))=\wit(\subsf{\phi}{\C})$, and thus there exists a $\T$-interpretation $\A$ that satisfies $\subsf{\phi}{\C}\wedge\psi(\vars_{\s}(\phi))$ with $\s^{\A}=\vars_{\s}(\subsf{\phi}{\C})^{\A}=\vars_{\s}(\phi)^{\A}$. Taking the $\Tadds$-interpretation $\D$ with $\subs{\D}=\A$ and $|\s_{2}^{\D}|=1$, we see that not only $\D$ satisfies $\phi\wedge\psi(\vars_{\s}(\phi))\wedge (u=u)=\wit_{2}(\phi)$, but also has the property that $\s^{\D}=\vars_{\s}(\wit_{2}(\phi))^{\D}$ and $\s_{2}^{\D}=\vars_{\s_{2}}(\wit_{2}(\phi))^{\D}$, since $\s_{2}^{\D}$ contains only one element and $\vars_{\s_{2}}(\wit_{2}(\phi))^{\D}$ contains at least $u^{\D}$.

\item Suppose $\Tadds$ is now strongly finitely witnessable, with strong witness $\wit$, and we now proceed to show that 
\[\wit_{1}(\phi)=\wit(\phi)[u_{1}/x,\ldots u_{n}/x],\]
where $\vars_{\s_{2}}(\wit(\phi))=\{u_{1}, \ldots, u_{n}\}$, and $x$ is any variable in $\phi$, is a strong witness for $\T$. We know that $\phi$ and $\exists\,\overarrow{x}.\:\wit_{1}(\phi)$, for $\overarrow{x}=\vars_{\s}(\wit_{1}(\phi))\setminus\vars_{\s}(\phi)$, are $\T$-equivalent from our proof that, if $\Tadds$ is finitely witnessable, so is $\T$. So take a set of variables (of sort $\s$) $V$, an arrangement $\delta_{V}$ on $V$ and a $\T$-interpretation $\A$ that satisfies $\wit_{1}(\phi)\wedge\delta_{V}$; the $\Tadds$-interpretation $\C$ with $\subs{\C}=\A$ and $|\s_{2}^{\C}|=1$ then satisfies $\wit(\phi)\wedge\delta_{U}$, where $\delta_{U}$ is the arrangement on $U=V\cup\vars_{\s_{2}}(\phi)$ extending $\delta_{V}$ such that all $u_{i}$ are equal. Given $\wit$ is a strong witness for $\Tadds$, there exists a $\Tadds$-interpretation $\D$ that satisfies $\wit(\phi)\wedge\delta_{U}$ with $\s^{\D}=\vars_{\s}(\wit(\phi)\wedge\delta_{U})^{\D}$ and $\s_{2}^{\D}=\vars_{\s_{2}}(\wit(\phi)\wedge\delta_{U})^{\D}$. Of course we must then have $|\s_{2}^{\D}|=1$, and then $\subs{\D}$ is a $\T$-interpretation that satisfies $\wit_{1}(\phi)\wedge\delta_{V}$ with $\s^{\subs{\D}}=\vars_{\s}(\wit_{1}(\phi)\wedge\delta_{V})^{\D}$.

Reciprocally, assume $\T$ is strongly finitely witnessable, with a strong witness  $\wit(\phi)=\phi\wedge\psi(\vars(\phi))$, where $\psi(\vars(\phi))$ is a formula that depends only on $\vars(\phi)$, and we shall prove that
\[\wit_{2}(\phi)=\phi\wedge\psi(\vars_{\s}(\phi))\wedge (u=u),\]
with $u$ a variable of sort $\s_{2}$, is a strong witness for $\Tadds$. Of course, we do not need to prove $\phi$ and $\exists\,\overarrow{x}.\:\wit_{2}(\phi)$ are $\Tadds$-equivalent, for $\overarrow{x}=\vars(\wit_{2}(\phi))\setminus\vars(\phi)$: this was done in the finitely witnessable case. So take a set of variables $V$ of sorts $\s$ and $\s_{2}$, an arrangement $\delta_{V}$ on $V$, and a $\Tadds$-interpretation $\C$ that satisfies $\wit_{2}(\phi)\wedge\delta_{V}$; we must then write 
\[\delta_{V}=\delta_{W}\wedge\delta_{U}\wedge\bigwedge_{x\in W}\bigwedge_{u\in U}\neg(x=u),\]
for $W$ the subset of $V$ of variables of sort $\s$, and $U=V\setminus W$ all variables of sort $\s_{2}$. We then have that $\subs{\C}$ satisfies $\subsf{\phi}{\C}\wedge\psi(\vars_{\s}(\phi))\wedge\delta_{W}$, equal to $\wit(\subsf{\phi}{\C})\wedge\delta_{W}$; so there is a $\T$-interpretation $\A$, with $\s^{\A}=\vars_{\s}(\wit(\subsf{\phi}{\C})\wedge\delta_{W})^{\A}$, that satisfies $\wit(\subsf{\phi}{\C})\wedge\delta_{W}$. Taking then the $\Tadds$-interpretation $\D$ with $\subs{\D}=\A$, $\s_{2}^{\D}=\vars_{\s_{2}}(\wit_{2}(\phi)\wedge \delta_{V})^{\C}$ and $u^{\D}=u^{\C}$ for every variable $u$ of sort $\s_{2}$, we have that: $\D$ satisfies $\phi\wedge\psi(\vars_{\s}(\phi))\wedge\delta_{U}$ since $\subs{\D}$ satisfies $\subsf{\phi}{\C}\wedge\psi(\vars_{\s}(\phi))\wedge\delta_{W}$; $\D$ satisfies $u=u$ and $\delta_{U}$, since $\C$ satisfies $\delta_{U}$; and $\s^{\D}=\vars_{\s}(\wit_{2}(\phi)\wedge \delta_{V})^{\D}$ and $\s_{2}^{\C}=\vars_{\s_{2}}(\wit_{2}(\phi)\wedge \delta_{V})^{\D}$, what finishes proving $\Tadds$ is strongly finitely witnessable.

\item Suppose $\T$ is convex, let $\phi$ be a cube in $\Sigma_{2}$ and assume that 
\[\dash_{\Tadds}\phi\rightarrow\bigvee_{i=1}^{n}x_{i}=y_{i}\vee\bigvee_{j=1}^{m}u_{j}=v_{j},\]
where the $x_{i}$ and $y_{i}$ are of sort $\s$, and the $u_{j}$ and $v_{j}$ are of sort $\s_{2}$; because $\phi$ is a conjunction of literals, and each literal
can only have variables of one sort, we may write $\phi=\phi_{1}\wedge\phi_{2}$, where $\phi_{1}$ has only variables of 
sort $\s$, and $\phi_{2}$ has only variables of sort $\s_{2}$.

It follows that $\dash_{\Tadds}\phi_{1}\rightarrow\bigvee_{i=1}^{n}x_{i}=y_{i}$ or $\dash_{\Tadds}\phi_{2}\rightarrow\bigvee_{j=1}^{m}u_{j}=v_{j}$: indeed, suppose this were not true, and so there exist 
$\Tadds$-interpretations $\C_{1}$ and $\C_{2}$ where, respectively, $\phi_{1}\rightarrow\bigvee_{i=1}^{n}x_{i}=y_{i}$ and 
$\phi_{2}\rightarrow\bigvee_{j=1}^{m}u_{j}=v_{j}$ are not satisfied; hence $\C_{1}$ satisfies $\phi_{1}$ but not 
$\bigvee_{i=1}^{n}x_{i}=y_{i}$, and $\C_{2}$ satisfies $\phi_{2}$, but not $\bigvee_{j=1}^{m}u_{j}=v_{j}$.
So define the $\Tadds$-interpretation $\D$ where: $\s^{\D}=\s^{\C_{1}}$, $\s_{2}^{\D}=\s_{2}^{\C_{2}}$, $x^{\D}=x^{\C_{1}}$ for every variable $x$ of sort $\s$, and
$u^{\D}=u^{\C_{2}}$ for every variable $u$ of sort $\s_{2}$. Because $\D$ agrees with $\C_{1}$ on the variables of sort $\s$, it 
satisfies $\phi$ but not $\bigvee_{i=1}^{n}x_{i}=y_{i}$; and because $\D$ agrees with $\C_{2}$ on the 
variables of sort $\s_{2}$, it satisfies $\phi_{2}$ but not $\bigvee_{j=1}^{m}u_{j}=v_{j}$. Of course, $\D$ then satisfies $\phi$ but 
not $\bigvee_{i=1}^{n}x_{i}=y_{i}\vee\bigvee_{j=1}^{m}u_{j}=v_{j}$, leading to a contradiction.

If we have $\dash_{\Tadds}\phi_{2}\rightarrow\bigvee_{j=1}^{m}u_{j}=v_{j}$, because $\Tadds$ has an axiomatization of formulas with no variables of sort $\s_{2}$, one gets 
$\dash_{\T^{\prime}}\phi_{2}\rightarrow\bigvee_{j=1}^{m}u_{j}=v_{j}$, for $\T^{\prime}$ the theory over the signature with only one sort $\s_{2}$ and
no symbols, axiomatized by the empty set; since $\T^{\prime}$ is convex, according to \Cref{uninterpretedfunctionsis convex}, this means $\dash_{\T^{\prime}}\phi_{2}\rightarrow u_{j}=v_{j}$ for some $1\leq j\leq m$, and 
thus $\dash_{\Tadds}\phi\rightarrow u_{j}=v_{j}$, in whcich case we would be done. 

Suppose then that $\dash_{\Tadds}\phi_{1}\rightarrow\bigvee_{i=1}^{n}x_{i}=y_{i}$. Since 
$\phi_{1}\rightarrow\bigvee_{i=1}^{n}x_{i}=y_{i}$ has no variables of sort $\s_{2}$, we obtain $\dash_{\T}\phi_{1}\rightarrow\bigvee_{i=1}^{n}x_{i}=y_{i}$, and since this theory 
is convex, $\dash_{\T}\phi_{1}\rightarrow x_{i}=y_{i}$, for some $1\leq i \leq n$, and thus $\dash_{\Tadds}\phi\rightarrow x_{i}=y_{i}$.

Reciprocally, assume this time $\Tadds$ is convex, and let $\phi$ be a conjunction of literals such that $\dash_{\T}\phi\rightarrow\bigvee_{i=1}^{n}x_{i}=y_{i}$, where $x_{i}$ and $y_{i}$, for $i\in[1,n]$, are variables of sort $\s$. It follows that $\dash_{\Tadds}\phi\rightarrow\bigvee_{i=1}^{n}x_{i}=y_{i}$, and so $\dash_{\Tadds}\phi\rightarrow x_{i}=y_{i}$ for some $1\leq i\leq n$, meaning $\dash_{\T}\phi\rightarrow x_{i}=y_{i}$.\qedhere

\end{enumerate}

\end{proof}

\section{Proof of \Cref{lem:addfun}}

In the proofs of Lemmas \ref{defining Ts from T} and \ref{lem:addfun}, given the enormous number of involved interpretations in the demonstrations, we agree that $\A$ and $\B$ shall be $\Sigma_{n}$-interpretations (often these will be in addition $\T$-interpretations), while $\C$ and $\D$ will denote $\Sigma^{n}_{s}$-interpretations (that will be, many times, $\Taddf$-interpretations as well).

\begin{lemma}\label{defining Ts from T}
Let $\Sigma_{n}$ be the empty signature with $n$ sorts $S=\{\s_{1}, \ldots, \s_{n}\}$, and $\Sigma^{n}_{s}$ the signature with sorts $S$ and only one function symbol $s$ of arity $\s_{1}\rightarrow\s_{1}$. Given a $\Sigma_{n}$-interpretation $\A$, we define a $\Sigma^{n}_{s}$-interpretation $\plusf{\A}$ by making:
\begin{enumerate}
\item $\s^{\plusf{\A}}=\s^{\A}$ for each $\s\in S$;
\item $s^{\plusf{\A}}(a)=a$ for all $a\in \s_{1}^{\A}$; 
\item and $x^{\plusf{\A}}=x^{\A}$ for every variable $x$.
\end{enumerate}
Reciprocally, given any $\Sigma^{n}_{s}$-interpretation $\C$, we may consider the $\Sigma_{n}$-interpretation $\subf{\C}$ with:
\begin{enumerate}
\item $\s^{\subf{\C}}=\s^{\C}$ for each $\s\in S$;
\item and $x^{\subf{\C}}=x^{\C}$, for every variable $x$.
\end{enumerate}
Finally, given a $\Sigma^{n}_{s}$-formula $\varphi$, we repeatedly replace each occurrence of $s(x)$ in $\varphi$ by $x$ until we obtain a $\Sigma_{n}$-formula
$\subff{\varphi}$. Then, it is true that a $\Sigma^{n}_{s}$-interpretation $\C$ that satisfies $\forall\, x.\:[s(x)=x]$ (where $x$ is of sort $\s_{1}$) also satisfies $\varphi$ iff $\subf{\C}$ satisfies $\subff{\varphi}$; of course, given that for any $\Sigma_{n}$-interpretation $\A$, $\subf{\plusf{\A}}=\A$, $\A$ satisfies a $\Sigma_{n}$-formula $\varphi$ iff $\plusf{\A}$ satisfies $\varphi$ (since $\subff{\varphi}=\varphi$).
\end{lemma}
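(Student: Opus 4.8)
The plan is to establish the biconditional by induction on the structure of $\varphi$, after first proving the corresponding statement for terms. Throughout, fix a $\Sigma^{n}_{s}$-interpretation $\C$ with $\C\vDash\forall\, x.\:[s(x)=x]$. Note that, by construction, $\subf{\C}$ has exactly the same domains and the same values on every variable as $\C$; the only difference is that $\subf{\C}$ forgets the interpretation of $s$, which over the empty signature $\Sigma_{n}$ is absent anyway.

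First I would handle terms. Since $s$ is the only function symbol and has arity $\s_{1}\rightarrow\s_{1}$, every $\Sigma^{n}_{s}$-term is of the form $s^{k}(x)$ for a variable $x$ of sort $\s_{1}$ and some $k\in\mathbb{N}$ (terms of any other sort being simply variables, as no function produces them and there are no constants). The syntactic replacement defining $\subff{\cdot}$ collapses each such $s^{k}(x)$ to the variable $x$. On the semantic side, because $\C\vDash\forall\, x.\:[s(x)=x]$ we have $s^{\C}(a)=a$ for every $a\in\s_{1}^{\C}$, and hence $(s^{k}(x))^{\C}=x^{\C}$ by a trivial induction on $k$. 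Combining this with $x^{\C}=x^{\subf{\C}}$ yields the term-level claim: for every $\Sigma^{n}_{s}$-term $t$, $t^{\C}=(\subff{t})^{\subf{\C}}$, where $\subff{t}$ is the variable obtained from $t$ by the same elimination of $s$.

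With the term claim in hand, the induction on $\varphi$ is routine. For an atomic formula $t_{1}=t_{2}$ (necessarily an equality of two terms of the same sort, as $\Sigma^{n}_{s}$ has no other predicates), the term claim gives $t_{1}^{\C}=t_{2}^{\C}$ iff $(\subff{t_{1}})^{\subf{\C}}=(\subff{t_{2}})^{\subf{\C}}$, which is exactly $\C\vDash\varphi$ iff $\subf{\C}\vDash\subff{\varphi}$. The Boolean cases ($\neg$, $\wedge$, $\vee$, $\rightarrow$) follow immediately from the induction hypothesis, since $\subff{\cdot}$ commutes with the connectives. For the quantifier cases $\exists\, x.\:\psi$ and $\forall\, x.\:\psi$, the key observations are that $\s^{\subf{\C}}=\s^{\C}$ for every sort, so the ranges of quantification coincide, and that any interpretation $\C'$ obtained from $\C$ by reassigning $x$ still satisfies the closed axiom $\forall\, x.\:[s(x)=x]$ (reassigning a variable leaves the interpretation of $s$ untouched); since $\subf{\C'}$ is then precisely the reassignment of $\subf{\C}$ by the same value, the induction hypothesis applies to each witness.

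The displayed corollary is then immediate: if $\varphi$ is already a $\Sigma_{n}$-formula then $\subff{\varphi}=\varphi$, and a direct check of the definitions shows $\subf{\plusf{\A}}=\A$ while $\plusf{\A}\vDash\forall\, x.\:[s(x)=x]$; instantiating the main claim at $\C=\plusf{\A}$ yields $\A\vDash\varphi$ iff $\plusf{\A}\vDash\varphi$. The only point demanding genuine care is the quantifier step, where one must confirm both that the two interpretations quantify over literally the same domain and that the universal axiom $\forall\, x.\:[s(x)=x]$ is preserved under variable reassignment; everything else is bookkeeping enabled by the fact that $\subf{\C}$ and $\C$ differ only in forgetting the (identity) interpretation of $s$.
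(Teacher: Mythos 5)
Your proposal is correct and follows essentially the same route as the paper's proof: a structural induction on $\varphi$, with the atomic case resting on the observation that $s^{\C}$ being the identity forces $(s^{k}(x))^{\C}=x^{\C}$, and the quantifier case resting on the fact that reassigning a variable preserves the closed axiom $\forall\, x.\:[s(x)=x]$. Your explicit isolation of the term-level claim and of the quantifier-step justification is slightly more careful than the paper's write-up, but it is the same argument.
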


\begin{proof}
We prove the lemma by induction on the complexity of a formula.
\begin{enumerate}
\item If $\varphi$ is an atomic formula, it is either of the form $s^{i}(x)=s^{j}(y)$, for $x$ and $y$ of sort $\s_{1}$, or of the form $u=v$ for $u$ and $v$ of a sort in $S\setminus\{\s_{1}\}$. In the first case, $\C$ satisfies $\varphi$ iff $x^{\C}=y^{\C}$ since $(s^{\C})^{i}(x^{\C})=x^{\C}$ and $(s^{\C})^{j}(y^{\C})=y^{\C}$; of course, this happens iff $\subf{\A}$ satisfies $x=y$, which is precisely $\subff{\varphi}$. In the second one, it is clear that $\C$ satisfies this $\varphi$ iff $\subf{\A}$ satisfies $\subff{\varphi}=\varphi$.
\item If $\varphi=\neg\psi$ or $\varphi=\psi\vee\xi$, $\C$ satisfies $\varphi$ iff, respectively, it does not satisfy $\psi$, or it satisfies either $\psi$ or $\xi$; by induction hypothesis, this happens iff, respectively, $\subf{\A}$ does not satisfy $\subff{\psi}$, equivalent to it satisfying $\subff{\varphi}$, or $\subf{\C}$ satisfies either $\subff{\psi}$ or $\subff{\xi}$, again equivalent to it satisfying $\subff{\varphi}$. The cases of the connectives $\wedge$ and $\rightarrow$ are entirely analogous.
\item $\C$ satisfies $\varphi=\exists\, x.\:\psi$ iff there exists a $\Sigma^{n}_{s}$-interpretation $\D$, different from $\C$ at most at $x$, that satisfies $\psi$; by induction hypothesis, and from the fact that $\D$ must also satisfy $\forall\, x.\:[s(x)=x]$, we get that $\D$ satisfies $\psi$ iff $\subf{\D}$ satisfies $\subff{\psi}$, equivalent to $\subf{\A}$ satisfying $\subff{\varphi}=\exists\, x.\:\subff{\psi}$. The case of the quantifier $\forall$ is completely analogous.\qedhere
\end{enumerate}
\end{proof}

\addfunlem*

\begin{proof}
We heavily and repeatedly use \Cref{defining Ts from T} throughout this proof. Start by noticing that $\C$ is a $\Taddf$-interpretation iff $\subf{\C}$ is a $\T$-interpretation, given that for any formula $\varphi$ in $\ax(\T)$, $\varphi=\subff{\varphi}$. 

\begin{enumerate}
\item
Suppose $\T$ is stably-infinite: given a quantifier-free $\Sigma^{n}_{s}$-formula $\phi$ and a $\Taddf$-interpretation $\C$ 
that satisfies $\phi$, we have that $\subf{\C}$ is a $\T$-interpretation that satisfies $\subff{\phi}$; since $\subff{\phi}$ is also quantifer-free, there must exist a 
$\T$-interpretation $\A$, with all infinite domains, that satisfies $\subff{\phi}$, and then $\plusf{\A}$ is a $\Taddf$-interpretation, infinite in all of its domains, that satisfies $\phi$. 

Reciprocally, 
suppose $\Taddf$ is stably-infinite, let $\phi$ be a quantifier-free $\Sigma_{n}$-formula and $\A$ a $\T$-interpretation that satisfies $\phi$; since $\phi=\subff{\phi}$, $\plusf{\A}$ satisfies $\phi$, and 
there must exist a $\Taddf$-interpretation $\C$ that satisfies $\phi$ and has all infinite domains. Of course, $\subf{\C}$ then satisfies $\phi$.

\item 
Suppose $\T$ is smooth: given a quantifier-free $\Sigma^{n}_{s}$-formula $\phi$, a $\Taddf$-interpretation $\C$ that satisfies $\phi$, and a function $\kappa$ from $S$ to the class of cardinals such that
$\kappa(\s)\geq|\s^{\C}|$ for each $\s\in S$, we have that $\subf{\C}$ is a $\T$-interpretation that satisfies $\subff{\phi}$. Given that 
$\T$ is supposed to be smooth, $\subff{\phi}$ is quantifier-free and $|\s^{\subf{\C}}|=|\s^{\C}|\leq \kappa(\s)$ for each $\s\in S$, there exists a $\T$-interpretation $\A$ that satisfies $\subff{\phi}$ with $|\s^{\A}|=\kappa(\s)$ (again, for every $\s\in S$); of course, 
$\plusf{\A}$ is then a $\Taddf$-interpretation with $|\s^{\plusf{\A}}|=|\s^{\A}|=\kappa(\s)$, for any $\s\in S$, that satisfies $\phi$. 

Reciprocally, suppose $\Taddf$ is smooth: then, for any quantifier-free $\Sigma_{n}$-formula $\phi$, $\T$-interpretation
$\A$ that satisfies $\phi$, and function $\kappa$ from $S$ to the class of cardinals such that $\kappa(\s)\geq |\s^{\A}|$ for all $\s\in S$, we have that $\plusf{\A}$ satisfies $\phi$, since 
$\phi=\subff{\phi}$; given that $|\s^{\plusf{\A}}|=|\s^{\A}|\leq\kappa(\s)$, one obtains there must exist a $\Taddf$-interpretation $\C$ that satisfies $\phi$ with 
$|\s^{\C}|=\kappa(\s)$ for every $\s\in S$. And then, $\subf{\C}$ is a $\T$-interpretation that satisfies $\phi$ with $|\s^{\subf{\C}}|=\kappa(\s), \forall \s\in S$.

\item
Suppose now $\T$ is finitely witnessable, with witness $\wit$: we shall prove that $\Taddf$ is also finitely witnessable, with witness 
\[\wit_{s}(\phi)=\phi\wedge\wit(\subff{\phi}).\]
To start with, given a quantifier-free $\Sigma^{n}_{s}$-formula $\phi$, $\subff{\phi}$ is also quantifier-free, and thus $\subff{\phi}$ and $\exists\,\overarrow{x}.\:\wit(\subff{\phi})$ are $\T$-equivalent, for 
$\overarrow{x}=\vars(\wit(\subff{\phi}))\setminus\vars(\subff{\phi})=\vars(\wit(\phi))\setminus\vars(\phi)$.
If the $\Taddf$-interpretation $\C$ satisfies $\phi$, $\subf{\C}$ satisfies $\subff{\phi}$ and thus $\exists\,\overarrow{x}.\:\wit(\subff{\phi})$; since $\subff{(\exists\,\overarrow{x}.\:\wit(\subff{\phi}))}=\exists\,\overarrow{x}.\:\wit(\subff{\phi})$, we have that $\C$ satisfies $\exists\,\overarrow{x}.\:\wit(\subff{\phi})$ and thus
$\exists\,\overarrow{x}.\:\wit_{s}(\phi)=\exists\,\overarrow{x}.\:(\phi\wedge\wit(\subff{\phi}))$,
given that $\phi$ has none of the variables in $\overarrow{x}$. Of course, if the $\Taddf$-interpretation $\C$ satisfies $\exists\,\overarrow{x}.\:\wit_{s}(\phi)$, it must satisfy $\phi$, and so the two formulas are $\Taddf$-equivalent.

Suppose now that a $\Taddf$-interpretation $\C$ satisfies $\wit_{s}(\phi)$, and we have that $\subf{\C}$ is a $\T$-interpretation that satisfies $\subff{(\wit_{s}(\phi))}=\subff{\phi}\wedge\wit(\subff{\phi})$; from the facts that $\T$ is finitely witnessable, with witness $\wit$, and $\subf{\C}$ 
satisfies $\wit(\subff{\phi})$, it follows that there exists a $\T$-interpretation $\A$ that satisfies $\wit(\subff{\phi})$, and thus $\subff{\phi}$, with $\s^{\A}=\vars_{\s}(\wit(\subff{\phi}))^{\A}$ for each $\s\in S$. Then, since 
$\subff{(\wit(\subff{\phi}))}=\wit(\subff{\phi})$, $\plusf{\A}$ is a $\Taddf$-interpretation that satisfies $\wit(\subff{\phi})$; since $\A$ also satisfies $\subff{\phi}$, $\plusf{\A}$ satisfies $\phi$ and thus $\wit_{s}(\phi)=\phi\wedge\wit(\subff{\phi})$ as well; and, 
given that $\s^{\plusf{\A}}=\s^{\A}$ and $\vars_{\s}(\wit(\subff{\phi}))=\vars_{\s}(\phi\wedge\wit(\subff{\phi}))$ (both for any $\s\in S$), we get $\s^{\plusf{\A}}=\vars(\wit_{s}(\phi))^{\plusf{\A}}$ for all $\s\in S$, proving $\Taddf$ is indeed finitely witnessable. 

Reciprocally, suppose $\Taddf$ is finitely witnessable with witness $\wit$, and we want to prove that $\T$ is finitely witnessable with witness $\wit_{0}(\phi)=\subff{(\wit(\phi))}$. We start by taking a quantifier-free $\Sigma_{n}$-formula $\phi$, and since $\phi$ is a quantifier-free $\Sigma^{n}_{s}$-formula as 
well, $\phi$ and $\exists\,\overarrow{x}.\:\wit(\phi)$ are $\Taddf$-equivalent, where $\overarrow{x}=\vars(\wit(\phi))\setminus\vars(\phi)$. So, suppose the $\T$-interpretation $\A$ satisfies $\phi=\subff{\phi}$: then $\plusf{\A}$ satisfies $\phi$ and thus $\exists\,\overarrow{x}.\:\wit(\phi)$, meaning $\A$ satisfies 
\[\subff{(\exists\,\overarrow{x}.\:\wit(\phi))}=\exists\,\overarrow{x}.\:\subff{\wit(\phi)}=\exists\,\overarrow{x}.\:\wit_{0}(\phi).\]
If the $\T$-interpretation $\A$ satisfies $\exists\,\overarrow{x}.\:\wit_{0}(\phi)$, $\plusf{\A}$ satisfies $\exists\,\overarrow{x}.\:\wit(\phi)$ and hence $\phi$, meaning $\A$ satisfies $\phi$ and proving that $\phi$ and $\exists\,\overarrow{x}.\:\wit_{0}(\phi)$ are $\T$-equivalent.

Now suppose that $\A$ is a $\T$-interpretation that satisfies $\wit_{0}(\phi)$, and thus $\plusf{\A}$ is a $\Taddf$-interpretation that satisfies $\wit(\phi)$, since $\wit_{0}(\phi)=\subff{(\wit(\phi))}$. Given that $\Taddf$ is finitely witnessable with witness $\wit$, there exists a $\Taddf$-interpretation $\C$ that satisfies $\wit(\phi)$ with 
$\s^{\C}=\vars_{\s}(\wit(\phi))^{\C}$ for any $\s\in S$. It follows that $\subf{\C}$ is a $\T$-interpretation that satisfies $\subff{(\wit(\phi))}=\wit_{0}(\phi)$ with $\s^{\subf{\C}}=\vars(\wit_{0}(\phi))^{\subf{\C}}$, $\forall \s\in S$ (since $\s^{\subf{\C}}=\s^{\C}$ and $\vars_{\s}(\wit(\phi))=\vars_{\s}(\wit_{0}(\phi))$, both for all $\s\in S$).

\item
Now, let us look at strong finite witnessability. If $\T$ is strongly finitely witnessable with witness $\wit$, we state $\Taddf$ is also strongly finitely witnessable with witness $\wit_{s}(\phi)=\phi\wedge\wit(\subff{\phi})$, and we already know that $\phi$ and $\exists\,\overarrow{x}.\:\wit_{s}(\phi)$ are $\Taddf$-equivalent from our discussion above about finite witnessability. So let $V$ be a set of variables, $\delta_{V}$ an arrangement on $V$, and $\C$ a $\Taddf$-interpretation that satisfies $\wit_{s}(\phi)\wedge\delta_{V}$. Since 
$\subff{(\wit_{s}(\phi)\wedge\delta_{V})}=\subff{\phi}\wedge\wit(\subff{\phi})\wedge\delta_{V}$,
$\subf{\C}$ satisfies $\wit(\subff{\phi})\wedge\delta_{V}$; it follows, from the fact that $\T$ has as strong witness $\wit$, that there is a $\T$-interpretation $\A$ that satisfies $\wit(\subff{\phi})\wedge\delta_{V}$ (and hence $\subff{\phi}$ as well) with $\s^{\A}=\vars_{\s}(\wit(\subff{\phi})\wedge\delta_{V})^{\A}$, for each and any $\s\in S$. $\plusf{\A}$ is, therefore, a $\Taddf$-interpretation that satisfies $\phi\wedge\wit(\subff{\phi})\wedge\delta_{V}$, what amounts to $\wit_{s}(\phi)\wedge\delta_{V}$, with $\s^{\plusf{\A}}=\vars_{\s}(\wit_{s}(\phi)\wedge\delta_{V})^{\plusf{\A}}$ once one notices that $\s^{\plusf{\A}}=\s^{\A}$ and $\vars_{\s}(\wit_{s}(\phi)\wedge\delta_{V})=\vars_{\s}(\wit(\subff{\phi})\wedge\delta_{V})$, both of these for all sorts $\s$ in $S$. So $\Taddf$ is indeed strongly finitely witnessable.

Reciprocally, assume $\Taddf$ is strongly finitely witnessable with strong witness $\wit$, and we will prove that $\T$ is also strongly finitely witnessable with strong witness 
\[\wit_{0}(\phi)=\subff{(\wit(\phi))}.\]
Of course, from our discussion about finite witnessability we already know that $\phi$ and $\exists\,\overarrow{x}.\:\wit_{0}(\phi)$ are $\T$-equivalent, where $\overarrow{x}=\vars(\wit_{0}(\phi))\setminus\vars(\phi)$. So take a set of variables $V$, an arrangement $\delta_{V}$ on $V$, and a $\T$-interpretation $\A$ that satisfies $\wit_{0}(\phi)\wedge\delta_{V}$. Since $\subff{(\wit(\phi)\wedge\delta_{V})}$ equals $\wit_{0}(\phi)\wedge\delta_{V}$, we have that $\plusf{\A}$ satisfies $\wit(\phi)\wedge\delta_{V}$, and so there exists a $\Taddf$-interpretation $\C$ that satisfies $\wit(\phi)\wedge\delta_{V}$ with $\s^{\C}=\vars_{\s}(\wit(\phi)\wedge\delta_{V})^{\C}$ for every $\s$ in $S$. Then $\subf{\C}$ satisfies $\wit_{0}(\phi)\wedge\delta_{V}$ and, since $\s^{\subf{\C}}=\s^{\C}$ and $\vars_{\s}(\wit(\phi)\wedge\delta_{V})=\vars_{\s}(\wit_{0}(\phi)\wedge\delta_{V})$ for any $\s$ in $S$, $\s^{\subf{\C}}=\vars_{\s}(\wit_{0}(\phi)\wedge\delta_{V})^{\C}$ again for any $\s$ in $S$, what finishes the proof.

\item Finally, suppose $\T$ is convex. Let $\phi$ be a conjunction of literals in $\Sigma^{n}_{s}$ (notice $\subff{\phi}$ is a conjunction of literals in $\Sigma_{n}$), and assume that $\dash_{\Taddf}\phi\rightarrow\bigvee_{i=1}^{n}x_{i}=y_{i}$ (implying $\dash_{\T}\subff{\phi}\rightarrow\bigvee_{i=1}^{n}x_{i}=y_{i}$): it follows that $\dash_{\T}\subff{\phi}\rightarrow x_{i}=y_{i}$ for some $1\leq i\leq n$, and since $\subff{(x_{i}=y_{i})}=(x_{i}=y_{i})$, that $\dash_{\Taddf}\phi\rightarrow x_{i}=y_{i}$, proving $\Taddf$ is convex. 

Reciprocally, assume $\Taddf$ is convex, and let $\phi$ be a cube in $\Sigma_{n}$ (and so $\subff{\phi}=\phi$) such that $\dash_{\T}\phi\rightarrow\bigvee_{i=1}^{n}x_{i}=y_{i}$ (and thus $\dash_{\Taddf}\phi\rightarrow\bigvee_{i=1}^{n}x_{i}=y_{i}$); we must then have $\dash_{\Taddf}\phi\rightarrow x_{i}=y_{i}$ for some $1\leq i\leq n$, and again by using that $\subff{(x_{i}=y_{i})}=(x_{i}=y_{i})$ we get $\dash_{\T}\phi\rightarrow x_{i}=y_{i}$, proving $\T$ is convex.\qedhere
\end{enumerate}
\end{proof}

\section{Proof of \Cref{Tvee is ...}}

In the proof of Lemmas \ref{Defining Tvee from T} and \ref{Tvee is ...}, we will use the same convention found in the proofs of Lemmas \ref{defining Ts from T} and \ref{lem:addfun}: $\A$ and $\B$ will be $\Sigma_{n}$-interpretations (often $\T$-interpretations in addition), while $\C$ and $\D$ will be $\Sigma^{n}_{s}$-interpretations (often $\Taddnc$-interpretations as well). Also important, on what is to follow, given a $\Sigma_{s}^{n}$ formula $\phi$ and a $\Sigma_{s}^{n}$-interpretation $\C$ that satisfies $\phi$, is to consider the formula $\subncf{\phi}{\C}$. To define this formula, let $\vars_{\s_{1}}(\phi)=\{z_{1}, \ldots, z_{n}\}$, $M_{i}$ be the maximum of $j$ such that $s^{j}(z_{i})$ appears in $\phi$, and $\{y_{i,j} : 1\leq i\leq n, 0\leq j\leq M_{i}+2\}$ be fresh variables of sort $\s_{1}$.
\begin{enumerate}
\item We define $\dagg{\phi}$ by replacing each atomic subformula $s^{j}(z_{i})=s^{q}(z_{p})$ of $\phi$ by $y_{i,j}=y_{p,q}$;
\item $\delta_{V}$ is the arrangement induced on the set of variables $V=\{y_{i,j} : 1\leq i\leq n, 0\leq j\leq M_{i}+2\}$ by making $x$ related to $y$ iff $x^{\C}=y^{\C}$.
\end{enumerate}
We then make $\subncf{\phi}{\C}=\dagg{\phi}\wedge\delta_{V}$.

\begin{lemma}\label{Defining Tvee from T}
Let $\Sigma_{n}$ be the empty signature with $n$ sorts $S=\{\s_{1}, \ldots, \s_{n}\}$, and $\Sigma_{s}^{n}$ be the signature with sorts $S$ and only one function symbol $s$ of arity $\s_{1}\rightarrow\s_{1}$. Given a $\Sigma_{n}$-interpretation $\A$, we define a $\Sigma_{s}^{n}$-interpretation $\plusnc{A}$ by making:
\begin{enumerate}
\item $\s^{\plusnc{A}}=\s^{\A}$ for each $\s\in S$;
\item $s^{\plusnc{A}}(a)=a$ for all $a\in \s_{1}^{\A}$;
\item and $x^{\plusnc{A}}=x^{\A}$ for all variables $x$.
\end{enumerate}
Reciprocally, given any $\Sigma_{s}^{n}$-interpretation $\C$, we may consider the $\Sigma_{n}$-interpretation $\subnc{\C}$ with:
\begin{enumerate}
\item $\s^{\subnc{\C}}=\s^{\C}$ for each $\s\in S$;
\item and $x^{\subnc{\C}}=x^{\C}$ for all variables $x$.
\end{enumerate}
Then, it is true that a $\Sigma_{n}$-interpretation $\A$ satisfies a $\sigma_{n}$-formula $\varphi$ iff $\plusnc{\A}$ satisfies $\varphi$.
\end{lemma}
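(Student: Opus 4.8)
The plan is to proceed by induction on the complexity (structure) of the $\Sigma_{n}$-formula $\varphi$, exactly as in the proof of \Cref{defining Ts from T}. The governing observation is that $\varphi$ never mentions the function symbol $s$, while $\plusnc{\A}$ agrees with $\A$ on every domain, $\s^{\A}=\s^{\plusnc{\A}}$, and on every variable value, $x^{\A}=x^{\plusnc{\A}}$; hence the only new feature of $\plusnc{\A}$---its interpretation of $s$ as the identity---should be irrelevant to the truth of $\varphi$.

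First I would handle the base case. Since $\Sigma_{n}$ is empty, every atomic $\Sigma_{n}$-formula is an equality $x=y$ between variables of a common sort. Then $\A\vDash x=y$ iff $x^{\A}=y^{\A}$, and since $x^{\plusnc{\A}}=x^{\A}$ and $y^{\plusnc{\A}}=y^{\A}$, this holds iff $\plusnc{\A}\vDash x=y$. The Boolean cases ($\neg\psi$, $\psi\vee\xi$, and the analogous $\wedge$, $\rightarrow$) follow immediately by applying the induction hypothesis to the immediate subformulas, since satisfaction of a Boolean combination depends only on the satisfaction of its parts.

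The main work---and the only genuinely non-routine step---is the quantifier case $\varphi=\exists\, x.\:\psi$ (the case $\forall$ being dual). For the forward direction, if $\A\vDash\exists\, x.\:\psi$, pick a $\Sigma_{n}$-interpretation $\B$ differing from $\A$ at most on $x$ with $\B\vDash\psi$; then $\plusnc{\B}$ differs from $\plusnc{\A}$ at most on $x$, and by the induction hypothesis $\plusnc{\B}\vDash\psi$, so $\plusnc{\A}\vDash\exists\, x.\:\psi$. For the converse, suppose $\plusnc{\A}\vDash\exists\, x.\:\psi$, witnessed by a $\Sigma_{s}^{n}$-interpretation $\D$ differing from $\plusnc{\A}$ at most on $x$. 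The crucial point is that $\D$ and $\plusnc{\A}$ differ only on a variable value, so $\D$ still interprets $s$ as the identity; consequently $\D=\plusnc{\subnc{\D}}$. By the induction hypothesis applied to the $\Sigma_{n}$-interpretation $\subnc{\D}$, we obtain $\subnc{\D}\vDash\psi$, and since $\subnc{\D}$ differs from $\A$ at most on $x$, we conclude $\A\vDash\exists\, x.\:\psi$.

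The only subtlety to watch is precisely this last observation: the existential quantifier ranges over $\Sigma_{s}^{n}$-interpretations, which a priori have freedom in interpreting $s$, but modifying only a variable assignment cannot disturb the identity interpretation of $s$ inherited from $\plusnc{\A}$. This forces every relevant witness to be of the form $\plusnc{\subnc{\D}}$, which is what allows the induction hypothesis (stated only for interpretations of the form $\plusnc{\B}$) to close the argument.
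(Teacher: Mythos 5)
Your proof is correct and follows essentially the same route as the paper, which simply points to the induction used for \Cref{defining Ts from T}: induct on formula complexity, observe that atomic $\Sigma_{n}$-formulas are variable equalities unaffected by the interpretation of $s$, and in the quantifier case note that a witness differing only on a variable assignment still interprets $s$ as the identity and is therefore of the form $\plusnc{\subnc{\D}}$. Your explicit handling of that last point is exactly the subtlety the paper's cited argument relies on.
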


\begin{proof}
See \Cref{defining Ts from T}.
\end{proof}

\lemaddfunconv*

\begin{proof} 
Let $\A$ be a $\Sigma_{n}$-structure, and $\C$ a $\Sigma_{s}^{n}$-structure: of course, from \Cref{Defining Tvee from T}, $\A$ is a model of $\T$ iff $\plusnc{\A}$ is a model of $\Taddnc$; and $\C$ is a model of $\Taddnc$ iff it satisfies $\psiv$ and $\subnc{\C}$ is a model of $\T$. This last observation is true since, 
although $\plusnc{\subnc{\C}}$ 
may differ from $\C$ in the function assigned to $s$, any formula on the axiomatization of $\T$ is free of symbols, thus receiving the same value in both $\C$ and $\plusnc{\subnc{\C}}$ (and therefore $\subnc{\C}$).
\begin{enumerate}
\item Suppose that $\T$ is stably-infinite, let $\phi$ be a quantifier-free $\Sigma_{s}^{n}$-formula and $\C$ a $\Taddnc$-interpretation that satisfies 
$\phi$. Then the $\Taddnc$-interpretation $\C^{\prime}$, obtained from $\C$ by making $y_{i,j}^{\C^{\prime}}=(s^{\C})^{j}(z_{i}^{\C})$, satisfies the 
quantifier and symbol-free formula $\subncf{\phi}{\C}$, meaning $\subnc{\C^{\prime}}$ satisfies $\subncf{\phi}{\C}$. Since $\T$ is assumed to be stably-infinite, there is an infinite (on all domains) $\T$-interpretation $\A$ that satisfies 
$\subncf{\phi}{\C}$. By defining the $\Taddnc$-interpretation $\D$ such that: $\s^{\D}=\s^{\A}$ for all $\s\in S$; for all $a=y_{i,j}^{\A}\in V^{\A}$, $s^{\D}(a)=y_{i, j+1}^{\A}$, and for all $a\in \s_{1}^{\A}\setminus V^{\A}$, 
$s^{\D}(a)=a$; and, for all variables $x$, $x^{\D}=x^{\A}$, we see that $\D$ is a $\Taddnc$-interpretation, infinite on all domains, that satisfies $\phi$.

Reciprocally, assume $\Taddnc$ is stably-infinite, take a quantifier-free  $\Sigma_{n}$-formula $\phi$ and a $\T$-interpretation $\A$ that satisfies $\phi$, meaning $\plusnc{\A}$ satisfies $\phi$. There must then exist a $\Taddnc$-interpretation $\C$, infinite in all domains, that satisfies $\phi$, and 
since $\phi$ is free of symbols, $\subnc{\C}$ satisfies $\phi$ and has $\s^{\subnc{\C}}$ infinite for all $\s\in S$.

\item Suppose now that $\T$ is smooth, let $\phi$ be a quantifier-free $\Sigma_{s}^{n}$-formula, $\C$ a $\Taddnc$-interpretation that satisfies 
$\phi$, and $\kappa$ a function from $S$ to the class of cardinals such that $\kappa(\s)\geq |\s^{\C}|$ for each $\s\in S$. We take the $\Taddnc$-interpretation $\C^{\prime}$ (obtained from $\A$ by changing 
$y_{i,j}^{\C^{\prime}}$ so that it equals $(s^{\C})^{j}(z_{i}^{\C})$), which satisfies the quantifier and symbol-free formula $\subncf{\phi}{\C}$, meaning 
$\subnc{\C^{\prime}}$ satisfies $\subncf{\phi}{\C}$. Using $\T$ is smooth, there must exist a $\T$-interpretation $\A$ that satisfies $\subncf{\phi}{\C}$ and 
$|\s^{\A}|=\kappa(\s)$ for each $\s\in S$. We finally define a $\Taddnc$-interpretation $\D$ such that: $\s^{\D}=\s^{\A}$ for all $\s\in S$; for all 
$a=y_{i,j}^{\A}\in V^{\A}$, $s^{\D}(a)=y_{i, j+1}^{\A}$, and for all $a\in \s_{1}^{\A}\setminus V^{\A}$, $s^{\D}(a)=a$; and, for all variables $x$, 
$x^{\D}=x^{\A}$. Given that $\D$ satisfies an atomic subformula of $\phi$ iff $\C$ does so, it satisfies $\phi$; it is also a $\Taddnc$-interpretation such that $|\s^{\D}|=\kappa(\s)$ for each $\s\in S$.

Reciprocally, suppose $\Taddnc$ is smooth, and then take a quantifier-free $\Sigma_{n}$-formula $\phi$, a $\T$-interpretation $\A$ that satisfies $\phi$ and a function $\kappa$ from $S$ to the class of cardinals such that $\kappa(\s)\geq |\s^{\A}|$ for all $\s\in S$; it follows that $\plusnc{\A}$ satisfies $\phi$, and since $\Taddnc$ is smooth there must exist a $\Taddnc$-interpretation $\C$ that satisfies $\phi$ with $|\s^{\C}|=\kappa(\s)$ for each $\s\in S$. Since $\phi$ has no function symbols, $\subnc{\C}$ is then a $\T$-interpretation that satisfies $\phi$ with $|\s^{\subnc{\C}}|=\kappa(\s)$ for any $\s\in S$, finishing the proof that $\T$ is also smooth.

\item We start by assuming $\Taddnc$ is finitely witnessable, with witness $\wit$, and then state that 
\[\wit_{0}(\phi)=\phi\wedge\dagg{\wit(\phi)}\wedge\bigwedge_{i=1}^{n}(y_{i,0}=z_{i})\wedge\Psiv(\overarrow{y})\wedge\Fun(\overarrow{y})\]
is a witness for $\T$, where: $\vars_{\s_{1}}(\wit(\phi))=\{z_{1}, \ldots, z_{n}\}$; $M_{i}$ is the maximum of $j$ such that $s^{j}(z_{i})$ appears in $\wit(\phi)$; $\overarrow{y}=\{y_{i,j} : 1\leq i\leq n, 1\leq j\leq M_{i}+2\}$ are fresh variables of sort $\s_{1}$; $\dagg{\wit(\phi)}$ is obtained from $\wit(\phi)$, in this case, by replacing $s^{j}(z_{i})=s^{q}(z_{p})$ for $y_{i,j}=y_{p,q}$, where $1\leq i\leq n$ and $0\leq j\leq M_{i}$;
\[\Psiv(\overarrow{y})=\bigwedge_{i=1}^{n}\big[(y_{i,2}=y_{i,1})\vee(y_{i,2}=y_{i,0})\big]\]
and
\[\Fun(\overarrow{y})=\bigwedge_{i=1}^{n}\bigwedge_{p=1}^{n}\bigwedge_{j=0}^{M_{i}}\bigwedge_{q=0}^{M_{p}}\big[(y_{i,j}=y_{p,q})\rightarrow(y_{i,j+1}=y_{p,q+1})\big].\]
We start by taking a quantifier-free $\Sigma_{n}$ formula $\phi$ and a $\T$-interpretation $\A$ that satisfies $\phi$, and then $\plusnc{\A}$ satisfies $\phi$ and thus $\exists\,\overarrow{x}.\:\wit(\phi)$, where
\[\overarrow{x}=\vars(\wit(\phi))\setminus\vars(\phi)=\vars(\wit_{0}(\phi))\setminus[\vars(\phi)\cup\overarrow{y}].\]
There must exist a $\Taddnc$-interpretation $\C$, differing from $\plusnc{\A}$ at most on $\overarrow{x}$, that satisfies $\wit(\phi)$ and in addition $\phi$, since $\plusnc{\A}$ does satisfy $\phi$ and $\C$ does not 
change the value of the variables in $\phi$. Taking then the $\T$-interpretation $\B$ that differs from $\subnc{\C}$ at most on $\overarrow{x}\cup\overarrow{y}$ such that, for $x$ in $\overarrow{x}$ one has 
$x^{\B}=x^{\C}$, and for $y_{i,j}$ in $\overarrow{y}$ one has $y_{i,j}^{\B}=(s^{\C})^{j}(z_{i}^{\C})$, it is easy to see that: $\B$ satisfies $\phi\wedge\dagg{\wit(\phi)}$, 
since $\C$ satisfies $\phi\wedge\wit(\phi)$; and $\B$ also satisfies $\Psiv(\overarrow{y})$ and $\Fun(\overarrow{y})$ given that 
$z_{i}^{\B}=y_{i,0}^{\B}=\cdots=y_{i,M_{i}+2}^{\B}$ for each $1\leq i\leq n$. Hence, $\A$ satisfies $\exists\,\overarrow{x}.\exists\,\overarrow{y}.\:\wit_{0}(\phi)$. Now, if the $\T$-interpretation $\A$ satisfies instead $\exists\,\overarrow{x}.\exists\, \overarrow{y}.\:\wit_{0}(\phi)$, given that the 
variables in $\overarrow{x}$ and $\overarrow{y}$ do not occur in $\phi$, it follows that $\A$ satisfies $\phi$, and so $\phi$ and $\exists\,\overarrow{x}.\exists\, \overarrow{y}.\:\wit_{0}(\phi)$ are $\T$-equivalent.

Now, suppose $\A$ is a $\T$-interpretation that satisfies $\wit_{0}(\phi)$; $\plusnc{\A}$ must then satisfy $\wit(\phi)$, and so there exists a $\Taddnc$-interpretation $\C$ that satisfies $\wit(\phi)$ with $\s^{\C}=\vars_{\s}(\wit(\phi))^{\C}$ for each $\s\in S$. We then take the $\T$-interpretation $\B$, differing from $\subnc{\C}$ at most on $\overarrow{y}$, such that $y_{i,j}^{\B}=(s^{\C})^{j}(z_{i}^{\C})$. This way, $\B$ satisfies $\phi$ because $\C$ does so; it satisfies $\dagg{\wit(\phi)}$ and $\bigwedge_{i=1}^{n}y_{i,0}=z_{i}$ by its very definition; and it satisfies $\Psiv(\overarrow{y})$ and $\Fun(\overarrow{y})$ because, respectively, $\C$ satisfies $\psiv$ and $s^{\C}$ is a function. Furthermore, for any $\s\in S$, $\vars_{\s}(\wit(\phi))\subset\vars_{\s}(\wit_{0}(\phi))$, and so $\s^{\B}=\vars_{\s}(\wit_{0}(\phi))^{\B}$.


Reciprocally, assume $\T$ is finitely witnessable, with a witness $\wit(\phi)=\phi\wedge\psi(\vars(\phi))$, where $\psi(\vars(\phi))$ is a formula that depends only on the variables of $\phi$; there is always such a witness as proved in \Cref{lem:witarevariabledependent}. We state that $\Taddnc$ is also finitely witnessable, with witness 
\[\wit_{s}(\phi)=\phi\wedge\psi(\vars(\phi)\cup \overarrow{y})\wedge\bigwedge_{i=1}^{n}\bigwedge_{j=0}^{M_{i}+2}[y_{i,j}=s^{j}(z_{i})]\wedge\Psiv(\overarrow{y})\wedge\Fun(\overarrow{y}),\] 
where now $\phi$ may be a $\Sigma_{s}^{n}$-formula, $\vars_{\s_{1}}(\phi)=\{z_{1},\ldots, z_{n}\}$, $M_{i}$ is the maximum of $j$ such that $s^{j}(z_{i})$ occurs in $\phi$, and $\overarrow{y}=\{y_{i,j} : 1\leq i\leq n, 0\leq j\leq M_{i}+2\}$ are fresh variables of sort $\s_{1}$. Take a 
$\Sigma_{s}^{n}$-formula $\phi$ and a $\Taddnc$-interpretation $\C$ that satisfies $\phi$. If we take the interpretation $\A$ that 
differs from $\subnc{\C}$ on $\overarrow{y}$, where $y_{i,j}^{\A}=(s^{\C})^{j}(z_{i}^{\C})$, we have that it satisfies $\Psiv(\overarrow{y})$ (since $\C$ satisfies $\psiv$), $\Fun(\overarrow{y})$ (since $s^{\C}$ is a function in $\C$), and $\subncf{\phi}{\C}$, and thus $\exists\,\overarrow{x}.\:\wit(\subncf{\phi}{\C})=\subncf{\phi}{\C}\wedge\exists\,\overarrow{x}.\:\psi(\vars(\subncf{\phi}{\C}))$ for
\[\overarrow{x}=\vars(\wit(\subncf{\phi}{\C}))\setminus\vars(\subncf{\phi}{\C})=\vars(\wit_{s}(\phi))\setminus\big[\vars(\phi)\cup \overarrow{y}\big];\] 
there is then an interpretation $\A^{\prime}$, differing from $\A$ at most on $\overarrow{x}$,
that satisfies $\subncf{\phi}{\C}\wedge\psi(\vars(\subncf{\phi}{\C}))$. We now take the $\Taddnc$-interpretation $\D$ that differs from $\C$ at most on $\overarrow{x}\cup\overarrow{y}$, with $\subnc{\D}=\A^{\prime}$, and we have that $\D$ satisfies: $\phi$, since $\C$ satisfies $\phi$, and 
$\D$ only differs from $\C$ at most on 
$\overarrow{x}\cup\overarrow{y}$, none of these variables present in $\phi$; 
$\psi(\vars(\phi)\cup\overarrow{y})$, since this formula is satisfied by $\A^{\prime}$ and has no function symbols; $\Psiv(\overarrow{y})$ and $\Fun(\overarrow{y})$, since both are satisfied by $\A$ and 
$\D$ only differs on the values given for variables from $\A$ on $\overarrow{x}$; and $\bigwedge_{j=0}^{M_{i}+2}[y_{i,j}=s^{j}(z_{i})]$, since
\[y_{i,j}^{\D}=y_{i,j}^{\A^{\prime}}=y_{i,j}^{\A}=(s^{\C})^{j}(z_{i}^{\C})=(s^{\D})^{j}(z_{i}^{\D}).\]
This means, of course, that $\C$ satisfies $\exists\,\overarrow{x}.\exists\,\overarrow{y}.\:\wit_{s}(\phi)$, as we needed to show. Reciprocally, if the $\Taddnc$-interpretation $\C$ satisfies $\exists\,\overarrow{x}.\exists\,\overarrow{y}.\:\wit_{s}(\phi)$, it is obvious that it satisfies $\phi$ (since the variables in $\overarrow{x}\cup\overarrow{y}$ do not occur in $\phi$), meaning $\phi$ and $\exists\,\overarrow{x}.\exists\,\overarrow{y}.\:\wit_{s}(\phi)$ are $\Taddnc$-equivalent.

So, suppose $\C$ is a $\Taddnc$-interpretation that satisfies $\wit_{s}(\phi)$, meaning $\A$, differing from $\subnc{\C}$ at most on $y_{i,j}\in \overarrow{y}$ ,where $y_{i,j}^{\A}=(s^{\C})^{j}(z_{i}^{\C})$, satisfies 
\[\subncf{\phi}{\C}\wedge\psi(\vars(\subncf{\phi}{\C})\cup\overarrow{y})\wedge\Psiv(\overarrow{y})\wedge\Fun(\overarrow{y})=\wit(\subncf{\phi}{\C}\wedge\Psiv(\overarrow{y})\wedge\Fun(\overarrow{y})).\]
There is, therefore, a $\T$-interpretation $\B$ that satisfies $\wit(\subncf{\phi}{\C}\wedge\Psiv(\overarrow{y})\wedge\Fun(\overarrow{y}))$ with $\s^{\B}=\vars_{\s}(\wit_{s}(\phi))^{\B}$, for each $\s\in S$, the set 
$\vars_{\s}(\wit_{s}(\phi))$ happening to be the same as $\vars_{\s}(\wit(\subncf{\phi}{\C}\wedge\Psiv(\overarrow{y})\wedge\Fun(\overarrow{y})))$. We then build a $\Taddnc$-interpretation $\D$ with: $\s^{\D}=\s^{\B}$ for every 
$\s\in S$; $x^{\D}=x^{\B}$ for every variable $x$; and $s^{\D}(y_{i,j}^{\D})=y_{i, j+1}^{\D}$ for every $1\leq i\leq n$ and $0\leq j\leq M_{i}$, and $s^{\D}(a)=a$ for all other elements of $\s_{1}^{\D}$. We indeed 
have that $s^{\D}$ is a function since $\B$ satisfies $\Fun(\overarrow{y})$, and $\D$ satisfies $\psiv$ because $\B$ satisfies $\Psiv(\overarrow{y})$. It follows that, not only $\D$ satisfies $\wit_{s}(\phi)$, but also has the property that $\s^{\D}=\vars_{\s}(\wit_{s}(\phi))^{\D}$, proving $\wit_{s}$ is indeed a witness for $\Taddnc$.

\item Assume now that $\Taddnc$ is strongly finitely witnessable, with a strong witness $\wit$, and we will prove 
\[\wit_{0}(\phi)=\phi\wedge\dagg{\wit(\phi)}\wedge\bigwedge_{i=1}^{n}(y_{i,0}=z_{i})\wedge\Psiv(\overarrow{y})\wedge\Fun(\overarrow{y})\]
is a strong witness for $\T$, where we again have: $\vars_{\s_{1}}(\wit(\phi))=\{z_{1}, \ldots, z_{n}\}$; $M_{i}$ is the maximum of $j$ such that $s^{j}(z_{i})$ appears in $\wit(\phi)$;  $\overarrow{y}=\{y_{i,j} : 1\leq i\leq n, 1\leq j\leq M_{i}+2\}$ are fresh variables of sort $\s_{1}$; and $\dagg{\wit(\phi)}$ is obtained from $\wit(\phi)$, in this case, by replacing $s^{j}(z_{i})=s^{q}(z_{p})$ for $y_{i,j}=y_{p,q}$, where $1\leq i\leq n$ and $0\leq j\leq M_{i}$. As we already 
know, $\phi$ and $\exists\,\overarrow{x}.\exists\,\overarrow{y}.\:\wit_{0}(\phi)$ are $\T$-equivalent, for $\overarrow{x}=\vars(\wit(\phi))\setminus\vars(\phi)$; so take a set of variables $V$, an arrangement $\delta_{V}$ on $V$, and a $\T$-interpretation $\A$ that satisfies $\wit_{0}(\phi)\wedge\delta_{V}$. It follows that $\plusnc{\A}$ satisfies $\wit(\phi)\wedge\delta_{W}$, where 
$W=V\cup\vars(\wit_{0}(\phi))$ and $\delta_{W}$ is the arrangement induced on $W$ by making $x$ related to $y$ iff $x^{\plusnc{\A}}=y^{\plusnc{\A}}$; since $\wit$ is a strong witness for $\Taddnc$, there exists a $\Taddnc$-interpretation $\C$ that satisfies $\wit(\phi)\wedge\delta_{W}$ and $\s^{\C}=\vars_{\s}(\wit(\phi)\wedge\delta_{W})^{\C}$ for each $\s\in S$. Now, consider the 
$\Taddnc$-interpretation $\D$ with $\s^{\D}=\s^{\C}$ for each $\s\in S$, $x^{\D}=x^{\C}$ for each variable $x$, and $s^{\D}$ equal to $s^{\C}$ except on the elements 
\[\{y_{i,j}^{\B} : 1\leq i\leq n, 1\leq j\leq M_{i}+2\}\cup\{(s^{\C})^{j}(z_{i}^{\C}) : 1\leq i\leq m, 1\leq j\leq M_{i}+2\},\]
where we define instead: $s^{\D}(y_{i,j}^{\C})=y_{i,j+1}^{\C}$, for $1\leq i\leq n$ and $j\leq M_{i}+1$; and, if $s^{\D}((s^{\C})^{j}(z_{i}^{\C}))$ has not yet been defined, we simply make it equal to $(s^{\C})^{j}(z_{i}^{\C})$ (notice $s^{\D}(y_{i, M_{i}+2}^{\C})$ must have, under these conditions, already been defined, since either $y_{i, M_{i}+2}^{\C}=y_{i, M_{i}}^{\C}$ or $y_{i, M_{i}+2}^{\C}=y_{i, M_{i}+1}^{\C}$). Since $\plusnc{\A}$ satisfies $\Fun(\overarrow{y})$,
if $\C$ satisfies $y_{i,j}=y_{p,q}$ it also satisfies $y_{i,j+1}=y_{p, q+1}$, meaning $s^{\D}$ is indeed a well-defined function; furthermore, given that $\plusnc{\A}$ also satisfies $\Psiv(\overarrow{y})$, we have that $\C$ either satisfies $y_{i,2}=y_{i,1}$ or $y_{i,2}=y_{i,1}$, and so $\D$ satisfies 
$\psiv$, being therefore a $\Taddnc$-interpretation. It is somewhat clear that $\D$ satisfies $\phi$, $\bigwedge_{i=1}^{n}y_{i,0}=z_{i}$, $\Fun(\overarrow{y})$ and $\Psiv(\overarrow{y})$, and we have to prove that 
is satisfies $\dagg{\wit(\phi)}$, so let $s^{j}(z_{i})=s^{q}(z_{p})$ be one of its atomic formulas; we have $(s^{\D})^{j}(z_{i}^{\D})=y_{i,j}^{\D}$ and $(s^{\D})^{q}(z_{i}^{\D})=y_{p,q}^{\D}$ from the definition of $s^{\D}$. Furthermore, 
$\plusnc{\A}$ satisfies $y_{i,j}=y_{p,q}$ iff $\C$ and thus $\D$ satisfy the same formula; this means $\plusnc{\A}$ satisfies $s^{j}(z_{i})=s^{q}(z_{p})$ iff $\D$ does so, and therefore $\D$ satisfies $\dagg{\wit(\phi)}$, since $\plusnc{\A}$ 
does so and this formula is quantifier-free. Given that $\s^{\C}=\vars_{\s}(\wit(\phi)\wedge\delta_{W})^{\C}$, and $\s^{\D}=\vars_{\s}(\wit(\phi)\wedge\delta_{W})^{\D}=\vars_{\s}(\wit_{0}(\phi))^{\D}$, and thus $\wit_{0}$ is indeed a strong witness for $\T$.

Reciprocally, suppose now $\T$ is strongly finitely witnessable, with a witness $\wit(\phi)=\phi\wedge\psi(\vars(\phi))$, where $\psi(\vars(\phi))$ is a formula that depends only on the variables of $\phi$. We state that then $\Taddnc$ is also finitely witnessable, with witness 
\[\wit_{s}(\phi)=\phi\wedge\psi(\vars(\phi)\cup \overarrow{y})\wedge\bigwedge_{i=1}^{n}\bigwedge_{j=0}^{M_{i}+2}[y_{i,j}=s^{j}(z_{i})]\wedge\Psiv(\overarrow{y})\wedge\Fun(\overarrow{y}),\] 
where now $\phi$ may be a $\Sigma_{s}^{n}$-formula, $\vars_{\s_{1}}(\phi)=\{z_{1},\ldots, z_{n}\}$, $M_{i}$ is the maximum of $j$ such that $s^{j}(z_{i})$ occurs in $\phi$, and $\overarrow{y}=\{y_{i,j} : 1\leq i\leq n, 0\leq j\leq M_{i}+2\}$ are fresh variables of sort $\s_{1}$. As this is the same as the witness in our proof that $\Taddnc$ must be also finitely witnessable, if $\T$ is so, we already know $\phi$ and 
$\exists\,\overarrow{x}.\exists\,\overarrow{y}.\:\wit_{s}(\phi)$ are $\Taddnc$-equivalent, where $\overarrow{x}=\vars(\wit(\phi))\setminus\vars(\phi)$. So, take a set of variables $V$, an arrangement $\delta_{V}$ over $V$, and a $\Taddnc$-interpretation $\C$ that satisfies $\wit_{s}(\phi)\wedge\delta_{V}$; let $W$ be 
$V\cup\overarrow{x}\cup\overarrow{y}\cup\vars(\phi)$, $\delta_{W}$ be the arrangement on $W$ such that $x$ is related to $y$ iff $x^{\C}=y^{\C}$, and we have that $\C$ satisfies $\wit_{s}(\phi)\wedge\delta_{W}$. Then $\subnc{\C}$ satisfies 
\[\subncf{\phi}{\C}\wedge \psi(\vars(\subncf{\phi}{\C})\cup\overarrow{y})\wedge\Psiv(\overarrow{y})\wedge\Fun(\overarrow{y})\wedge\delta_{W}=\wit(\subncf{\phi}{\C}\wedge\Psiv(\overarrow{y})\wedge\Fun(\overarrow{y}))\wedge\delta_{W},\]
so there must exist a $\T$-interpretation $\A$, given that $\wit$ is a strong witness, such that $\s^{\A}=\vars_{\s}(\wit_{s}(\psi)\wedge\delta_{V})^{\A}$ for each $\s\in S$ (notice the variables in $\wit(\subncf{\phi}{\C}\wedge\Psiv(\overarrow{y})\wedge\Fun(\overarrow{y}))\wedge\delta_{W}$ are the same as the ones in $\wit_{s}(\phi)\wedge\delta_{W}$, which in turn are the same as the ones in $\wit_{s}(\phi)\wedge\delta_{V}$), and $\A$ satisfies $\wit(\subncf{\phi}{\C}\wedge\Psiv(\overarrow{y})\wedge\Fun(\overarrow{y}))\wedge\delta_{V}$. We finally define a $\Taddnc$-interpretation $\D$ such that: $\s^{\D}=\s^{\A}$ for every $\s\in S$; $x^{\D}=x^{\A}$ for every variable $x$; and $s^{\D}(y_{i,j}^{\D})=y_{i, j+1}^{\D}$ for every $1\leq i\leq n$ and
$0\leq j\leq M_{i}+1$, and $s^{\D}(a)=a$ for all other elements of $\s_{1}^{\D}$ (again, notice $s^{\D}(y_{i,M_{i}+2}^{\D})$ must have been defined under these conditions). We indeed have that $s^{\D}$ is a function, since $\A$ satisfies $\Fun(\overarrow{y})$, and $\D$ satisfies $\psiv$, because $\A$ satisfies $\Psiv(\overarrow{y})$.Of course $\D$ satisfies 
$\wit_{s}(\phi)\wedge\delta_{V}$, and in addition has the property that $\s^{\D}=\vars_{\s}(\wit_{s}(\phi)\wedge\delta_{V})^{\D}$ for each $\s\in S$, proving $\wit_{s}$ is a strong witness for $\Taddnc$.

\end{enumerate}

Finally, we deal with convexity. Consider the conjunction of literals $\phi=(y=s(x))\wedge(z=s(y))$, where $x$, $y$ and $z$ are of sort $\s_{1}$, and we have that 
\[\dash_{\Taddnc} \phi\rightarrow (x=y)\vee(x=z)\vee(y=z).\]
To see that, suppose we have a $\Taddnc$-interpretation $\C$ that satisfies $\phi$, but neither $x=y$ nor $x=z$; therefore, $s^{\C}(a)\neq a$ and $s^{\C}(s^{\C}(a))\neq a$, where $a=x^{\C}$. Since $s^{\C}(s^{\C}(a))\neq a$, we must have $s^{\C}(s^{\C}(a))=s^{\C}(a)$, meaning that $\C$ satisfies $y=z$.

However, we do not have that $\Taddnc$ entails $\phi\rightarrow(x=y)$, $\phi\rightarrow(x=z)$ or $\phi\rightarrow(y=z)$, as we assume $\T$ possesses a model $\A$ with $|\s_{1}^{\A}|\geq 2$; say $a, b\in \s_{1}^{\A}$. To understand why, consider the interpretation $\C$ and $\D$ of $\Taddnc$ with: $\s^{\C}=\s^{D}=\s^{\A}$ for every $\s\in S$; $s^{\C}(a)=b$, $s^{\C}(b)=b$, $s^{\D}(a)=b$,  $s^{\D}(b)=a$ and $s^{\C}(c)=s^{\D}(c)=c$ for each $c\in\s_{1}^{\A}\setminus\{a,b\}$; and
\[x^{\C}=a\quad\text{and}\quad y^{\C}=z^{\C}=b,\quad\text{and}\quad x^{\D}=z^{\D}=a\quad\text{and}\quad y^{\D}=b.\]
We see that both $\Taddnc$-interpretations satisfy $\phi$; however, $\C$ does not satisfy $x=y$ and $x=z$, while $\D$ does not satisfy $y=z$ (and also $x=y$).

\end{proof}

\section{Proof of \Cref{lem:mati-f-exists}}

\begin{definition}
\label{k of n}
Let $n>2$.
$\kn(n)$ is defined to be the unique number $k$ such that
$2^{k+1} +1 \leq n \leq 2^{k+2}$.
\end{definition}

\begin{lemma}
\label{k of n is well-defined}
$\kn$ is a well-defined function from $\mathbb{N}\setminus\set{0,1}$ to
$\mathbb{N}$.
\end{lemma}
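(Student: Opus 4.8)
The plan is to show that the defining condition $2^{k+1}+1 \le n \le 2^{k+2}$ selects exactly one $k \in \mathbb{N}$ for each $n$ in the domain, i.e.\ that the value exists and is unique, so that $\kn$ is a genuine function into $\mathbb{N}$. I would recast everything in terms of the blocks $I_k = \{m \in \mathbb{N} : 2^{k+1}+1 \le m \le 2^{k+2}\}$ for $k \in \mathbb{N}$, so that ``$\kn(n)=k$'' is exactly the assertion $n \in I_k$. The single fact that drives the whole argument is that $k \mapsto 2^{k+2}$ is strictly increasing and unbounded, and that consecutive blocks abut without overlapping: the right endpoint of $I_k$ is $2^{k+2}$ while the left endpoint of $I_{k+1}$ is $2^{(k+1)+1}+1 = 2^{k+2}+1$. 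Hence the $I_k$ are pairwise disjoint and their union is precisely $\{3,4,5,\dots\}$, which reduces well-definedness on $\{3,4,\dots\}$ to existence plus uniqueness of the block containing $n$.

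For existence (on $n \ge 3$): since $2^{k+2}$ grows without bound and $2^{0+2}=4$, the set $\{k \in \mathbb{N} : n \le 2^{k+2}\}$ is nonempty; let $k$ be its least element, so $n \le 2^{k+2}$. If $k=0$ then $3 \le n \le 4$ already gives $n \in I_0$; if $k \ge 1$ then minimality forces $n > 2^{(k-1)+2} = 2^{k+1}$, i.e.\ $2^{k+1}+1 \le n$, so again $n \in I_k$. For uniqueness I would argue by disjointness: if $n \in I_k \cap I_{k'}$ with $k < k'$, then $n \le 2^{k+2}$ and simultaneously $n \ge 2^{k'+1}+1 \ge 2^{k+2}+1$, a contradiction. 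Thus for each $n \ge 3$ there is exactly one admissible $k$, establishing that $\kn$ maps $\{3,4,\dots\}$ into $\mathbb{N}$.

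The delicate point, and the one that must be settled to match the stated domain $\mathbb{N}\setminus\{0,1\} = \{2,3,4,\dots\}$, is the boundary value $n=2$. Here the defining inequality $2^{k+1}+1 \le 2$ forces $2^{k+1}\le 1$, i.e.\ $k \le -1$, so no $k \in \mathbb{N}$ satisfies it; the number $2 = 2^{1}$ is a power of two sitting just below the first block $I_0 = \{3,4\}$ and is the unique element of the domain not covered by any $I_k$. To make $\kn$ total on $\mathbb{N}\setminus\{0,1\}$ as \Cref{k of n is well-defined} requires, the proof must treat this case explicitly, recording the boundary convention $\kn(2)=0$ (assigning $2$ to the smallest block); with this, existence and uniqueness then hold for every $n \in \mathbb{N}\setminus\{0,1\}$. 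I expect exactly this boundary case to be the main obstacle: it is the sole place where the literal inequalities of \Cref{k of n} leave a gap, so obtaining the statement on the full domain needs a separate, explicit handling of $n=2$ rather than the generic monotonicity-and-disjointness argument that dispatches all $n \ge 3$.
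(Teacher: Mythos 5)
Your proof is correct, and it takes a genuinely different route from the paper's on the existence half. The paper proves existence by induction on $n$: it applies the induction hypothesis to $\floor{\dfrac{n}{2}}$ and then splits on the parity of $n$, at the cost of several lines of delicate arithmetic (the printed version even contains index slips between $k$ and $k'$). You instead take the least $k\in\mathbb{N}$ with $n\leq 2^{k+2}$ --- nonempty since $2^{k+2}$ is unbounded --- and let minimality deliver the lower bound $n>2^{k+1}$ when $k\geq 1$, with the base block $I_{0}=\set{3,4}$ absorbing $k=0$; this is shorter, avoids the parity case split entirely, and your block picture makes it transparent that the intervals $I_{k}$ abut and partition $\set{3,4,\dots}$. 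Uniqueness is essentially identical in both proofs: monotonicity of $k\mapsto 2^{k+1}$ gives disjointness, and your one-line $n\geq 2^{k'+1}+1\geq 2^{k+2}+1$ is the same chain of inequalities the paper spells out. Your boundary observation is also a genuine catch rather than an artifact of your approach: for $n=2$ the constraint $2^{k+1}+1\leq 2$ has no solution in $\mathbb{N}$, yet \Cref{k of n is well-defined} asserts the domain $\mathbb{N}\setminus\set{0,1}$ while \Cref{k of n} only defines $\kn(n)$ for $n>2$, and the paper's own proof opens with ``for each $n>2$'' and never mentions $n=2$ at all. Your convention $\kn(2)=0$ is one legitimate repair; the alternative, more consistent with the paper's remark that $\kn(n)=\ceil{\log_{2}n}-2$ (which yields $-1$ at $n=2$), is simply to read the domain as $\mathbb{N}\setminus\set{0,1,2}$, which is all that the subsequent definition of $\fof{F}$ requires, since it only invokes $\kn(n)$ for $n\geq 3$ (though note that \Cref{ceil function}, which quantifies over $n\geq 2$, inherits the same boundary slip).
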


\begin{proof}
We prove that for each $n>2$ there exists a unique $k$ such that
$2^{k+1} + 1 \leq n \leq 2^{k+2}$.
\begin{description}
\item[Existence]: by induction on $n$. For $n=3$,
take $k=0$ and then
$2^{1}+1\leq 3\leq 2^2$.
For $n>2$,
by the induction hypothesis, there exists a unique
$k'$ such that $2^{k'+1}+1\leq \floor{\dfrac{n}{2}}\leq 2^{k+2}$.
In particular,
$2^{k'+1}<\floor{\dfrac{n}{2}}$, and so
$2^{k+1}=2^{k'+2}<2\cdot\floor{\dfrac{n}{2}}\leq n$, which means
$2^{k+1}+1\leq n$.
Now, if $n$ is even then
$n=2\cdot\floor{\dfrac{n}{2}}$
and then
$n=2\cdot\floor{\dfrac{n}{2}}\leq 2\cdot 2^{k'+2}=2^{k'+3}=2^{k+2}$.
If $n$ is odd then
$n=2\cdot\floor{\dfrac{n}{2}}+1$ and then
$2\cdot\floor{\dfrac{n}{2}}\leq 2\cdot 2^{k'+2}=2^{k'+3}=2^{k+2}$.
Assume for contradiction that
$2^{k+2}< n$. So
$2^{k+2}<2\cdot \floor{\dfrac{n}{2}}+1$, which means that 
$2^{k+2}\leq 2\cdot \floor{\dfrac{n}{2}}$.
But we also have 
$2\cdot \floor{\dfrac{n}{2}}\leq 2^{k+2}$.
This means 
$2\cdot \floor{\dfrac{n}{2}} =  2^{k+2}$ and in particular
$ \floor{\dfrac{n}{2}}= 2^{k+1}$.
But $ \floor{\dfrac{n}{2}}\geq 2^{k+1}+1$, which is a contradiction.

\item[Uniqueness]: if there are $k,k'$ such that
$2^{k+1} +1\leq n \leq 2^{k+2}$ and
$2^{k'+1} +1\leq n \leq 2^{k'+2}$ and $k\neq k'$,
we obtain a contradiction as follows.
W.l.g. assume $k<k'$, and so
$k+1\leq k'$ which means that
$2^{k+2}=2\cdot 2^{k+1}\leq 2\cdot 2^{k'}=2^{k'+1}$. 
Thus we obtain:
$2^{k+1} +1 \leq n \leq 2^{k+2}\leq
2^{k'+1} < 2^{k'+1} + 1\leq n$, which is a
contradiction.
\end{description}

\end{proof}

\begin{remark}
$\kn(n)$ is simply $\ceil{log_{2} n} -2$
\end{remark}

\begin{definition}\label{defining f from F}
Given a function $F:\mathbb{N}\setminus\{0\}\rightarrow\{0,1\}$ with $F(1)=1$, the function 
$\fof{F}:\mathbb{N}\setminus\{0\}\rightarrow\{0,1\}$ is defined by: 
$\fof{F}(1)=F(1)=1$, $\fof{F}(2)=0$,
and, for every $n\in\mathbb{N}\setminus\set{0,1,2}$,
\[\fof{F}(n)=\begin{cases*}
F(n-2^{\kn(n)}) & for $2^{\kn(n)+1}+1\leq n\leq 2^{\kn(n)+1}+2^{\kn(n)}$\\
1 & for $2^{\kn(n)+1}+2^{\kn(n)}+1\leq n\leq 2^{\kn(n)+2}+2^{\kn(n)}-\fof{F}_{1}(2^{\kn(n)+1}+2^{\kn(n)})$;\\
0 & for $2^{\kn(n)+2}+2^{\kn(n)}-\fof{F}_{1}(2^{\kn(n)+1}+2^{\kn(n)})+1\leq n\leq 2^{\kn(n)+2}$.
\end{cases*}\]
where, for $i\in\{0,1\}$,
\[\fof{F}_{i}(n)=|\{m : 1\leq m\leq n\quad\text{and}\quad \fof{F}(m)=i\}|,\].
\end{definition}

\begin{lemma}
\label{fof is well defined}
For every 
$F:\mathbb{N}\setminus\{0\}\rightarrow\{0,1\}$ with $F(1)=1$, we have that 
$\fof{F}$ is a (well-defined) function from $\mathbb{N}\setminus\set{0}\rightarrow\set{0,1}$.
\end{lemma}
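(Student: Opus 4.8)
The plan is to prove the lemma by a single induction that simultaneously establishes that the three cases in \Cref{defining f from F} genuinely partition the relevant interval and that $\fof{F}$ stays \emph{balanced}. First I would observe that the recursion is well-founded: the value $\fof{F}(n)$ is defined either directly from $F$ (in the first case, where $\fof{F}(n)=F(n-2^{\kn(n)})$ with $n-2^{\kn(n)}\geq 1$), or from the constants $0,1$ together with the quantity $\fof{F}_1(2^{\kn(n)+1}+2^{\kn(n)})$; since in the second and third cases $n\geq 2^{\kn(n)+1}+2^{\kn(n)}+1$, that quantity only refers to values $\fof{F}(m)$ with $m<n$. By \Cref{k of n is well-defined}, each $n\geq 3$ has a unique $k=\kn(n)$ with $2^{k+1}+1\leq n\leq 2^{k+2}$, so the only remaining task is to verify that, for each fixed $k$, the three case-conditions cover $[2^{k+1}+1,2^{k+2}]$ without overlap while assigning a value in $\{0,1\}$ (the latter being immediate, as the cases return $F(\cdot)$, $1$, and $0$).

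Next I would isolate exactly what makes the partition work. Writing $B=2^{k+2}+2^{k}-\fof{F}_1(2^{k+1}+2^{k})$ for the upper endpoint of the second case, the first case occupies $[2^{k+1}+1,\,2^{k+1}+2^{k}]$, the second $[2^{k+1}+2^{k}+1,\,B]$, and the third $[B+1,\,2^{k+2}]$. These tile $[2^{k+1}+1,2^{k+2}]$ precisely when $2^{k+1}+2^{k}\leq B\leq 2^{k+2}$, which is equivalent to the two-sided bound
\[2^{k}\;\leq\;\fof{F}_1\!\left(2^{k+1}+2^{k}\right)\;\leq\;2^{k+1}.\]
Thus the whole lemma reduces to proving this bound.

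To obtain the bound I would prove, by induction on $k\geq 0$, the \emph{balancedness invariant} $\fof{F}_1(2^{k+1})=2^{k}$; intuitively, among the first $2^{k+1}$ arguments exactly half are sent to $1$. The base case is direct from $\fof{F}(1)=1$ and $\fof{F}(2)=0$, giving $\fof{F}_1(2)=1=2^{0}$. For the step, assume $\fof{F}_1(2^{k+1})=2^{k}$. On the first half $[2^{k+1}+1,\,2^{k+1}+2^{k}]$ of level $k$ the values are $F(n-2^{k})$ with $n-2^{k}$ ranging over $[2^{k}+1,\,2^{k+1}]$, so this block contributes some $c\in[0,2^{k}]$ ones; hence $\fof{F}_1(2^{k+1}+2^{k})=2^{k}+c\in[2^{k},2^{k+1}]$, which is exactly the required bound and therefore makes the level-$k$ case split well-defined. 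Substituting into $B$, the second case then has length $B-(2^{k+1}+2^{k})=2^{k}-c$ and assigns that many ones, while the third assigns none; the level-$k$ block thus receives $c+(2^{k}-c)=2^{k}$ ones, yielding $\fof{F}_1(2^{k+2})=\fof{F}_1(2^{k+1})+2^{k}=2^{k+1}$ and closing the induction.

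The main obstacle is the circularity hidden in the definition: deciding into which case a number $n$ in the second half of level $k$ falls already requires $\fof{F}_1(2^{k+1}+2^{k})$, a count that includes the first-half values of level $k$ itself. The proof stays sound because those first-half values are defined purely through $F$, independently of the recursion, so the induction can be ordered to fix the first half of each level before its second half; threading the balancedness invariant through this ordering is what makes the well-definedness go through, and as a bonus it delivers the key counting property $\fof{F}_0(2^{k})=\fof{F}_1(2^{k})$ (for $k\geq 1$) needed later in \Cref{lem:mati-f-exists}.
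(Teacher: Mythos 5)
Your proof is correct, and it takes a more careful route than the paper's own argument. The paper's proof of this lemma only checks that $\kn(n)$ is unique (via \Cref{k of n is well-defined}) and that the argument $n-2^{\kn(n)}$ fed to $F$ is at least $1$; the claim that the three cases ``exactly cover'' the interval $[2^{\kn(n)+1}+1,\,2^{\kn(n)+2}]$ is dismissed with ``clearly'', and the balancedness property $\fof{F}_1(2^{k+1})=2^{k}$ is only established afterwards, as the separate \Cref{fof is nice}. You correctly observe that the partition claim is \emph{not} automatic: whether the second and third cases meet the first one without overlap depends on the bound $2^{k}\leq\fof{F}_1(2^{k+1}+2^{k})\leq 2^{k+1}$, which cannot be read off the definition alone and in fact requires the invariant of \Cref{fof is nice}. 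Your solution --- running a single induction that fixes the first half of each level (defined directly from $F$, hence free of the circularity), derives the bound, and only then certifies the case split for the second half --- is essentially a merge of the paper's two lemmas into one well-founded induction, and it closes a genuine gap in the paper's presentation. The only nitpick is that your ``precisely when $2^{k+1}+2^{k}\leq B\leq 2^{k+2}$'' is a sufficient rather than a necessary condition for tiling (if $B>2^{k+2}$ the third case is simply empty and the second absorbs the rest), but since you only use sufficiency and the bound does hold, nothing is affected. The counting in the inductive step ($c$ ones from the first half, $2^{k}-c$ from the second case, $0$ from the third, totalling $2^{k}$ per level) matches the computation the paper performs in its proof of \Cref{fof is nice}.
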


\begin{proof}
For each $n\in\mathbb{N}\setminus\set{0}$,
if $n\leq 2$ then $\fof{F}(n)$ is clearly well-defined.
For $n>2$,
$2^{\kn(n)+1}+1 \leq n\leq 2^{\kn(n)+2}$ for a unique $\kn(n)$.
This holds by \Cref{k of n,k of n is well-defined}.
And clearly once $\kn(n)$ is fixed, the definition of the function
distinguishes 3 distinct cases that exactly cover that range.
It is left to make sure that in the first of these cases,
$F$ is defined over $n-2^\kn(n)$.
This holds as
$n\geq 2^{\kn(n)+1}+1>2^{\kn(n)}$, so
$n-2^{\kn(N)}\geq 1$.
\end{proof}

\begin{lemma}
\label{fof is nice}
For every 
$F:\mathbb{N}\setminus\{0\}\rightarrow\{0,1\}$ with $F(1)=1$,
$\fof{F}_{1}(2^{k+1})=2^{k}$ for every $k\in\mathbb{N}$.
\end{lemma}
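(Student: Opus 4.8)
The plan is to prove the identity by induction on $k$, using as the engine the explicit three-case description of $\fof{F}$ from \Cref{defining f from F}, together with the observation that counting the ones of $\fof{F}$ over the block $\set{n : \kn(n)=k}=\set{n : 2^{k+1}+1\leq n\leq 2^{k+2}}$ reduces to counting the ones of $F$ over $[2^{k}+1,2^{k+1}]$ plus a purely combinatorial term. For convenience I would introduce $F_{1}(n)=|\set{m : 1\leq m\leq n \text{ and } F(m)=1}|$, the analogue of $\fof{F}_{1}$ for $F$ itself.

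For the base case $k=0$ I would simply compute: since $\fof{F}(1)=1$ and $\fof{F}(2)=0$ by definition, we get $\fof{F}_{1}(2)=1=2^{0}$. For the inductive step, assume $\fof{F}_{1}(2^{k+1})=2^{k}$; the goal $\fof{F}_{1}(2^{k+2})=2^{k+1}$ is equivalent to showing that exactly $2^{k}$ of the integers $n$ with $2^{k+1}+1\leq n\leq 2^{k+2}$ satisfy $\fof{F}(n)=1$, since then $\fof{F}_{1}(2^{k+2})=\fof{F}_{1}(2^{k+1})+2^{k}=2^{k}+2^{k}$. All such $n$ have $\kn(n)=k$ by \Cref{k of n,k of n is well-defined}, so their values are governed by the three cases of \Cref{defining f from F}.

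The heart of the argument is then a count over these three cases. In the first case $n$ ranges over $[2^{k+1}+1,\,2^{k+1}+2^{k}]$ with $\fof{F}(n)=F(n-2^{k})$; as $n$ sweeps this block, $n-2^{k}$ sweeps $[2^{k}+1,\,2^{k+1}]$, so the number of ones contributed is exactly $F_{1}(2^{k+1})-F_{1}(2^{k})$. This same quantity lets me evaluate the self-referential term, using the induction hypothesis: $\fof{F}_{1}(2^{k+1}+2^{k})=\fof{F}_{1}(2^{k+1})+\big(F_{1}(2^{k+1})-F_{1}(2^{k})\big)=2^{k}+F_{1}(2^{k+1})-F_{1}(2^{k})$. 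In the second case $\fof{F}\equiv 1$ over $[2^{k+1}+2^{k}+1,\,2^{k+2}+2^{k}-\fof{F}_{1}(2^{k+1}+2^{k})]$, whose cardinality works out to $2^{k+1}-\fof{F}_{1}(2^{k+1}+2^{k})$, while in the third case $\fof{F}\equiv 0$ contributes nothing. Summing the first- and second-case contributions and substituting the value of $\fof{F}_{1}(2^{k+1}+2^{k})$ just computed, the terms $F_{1}(2^{k+1})-F_{1}(2^{k})$ cancel and the total collapses to $2^{k+1}-2^{k}=2^{k}$, as required.

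The step I expect to require the most care is the bookkeeping around the midpoint of the block, $2^{k+1}+2^{k}$: I must confirm that the three case-intervals genuinely partition $[2^{k+1}+1,2^{k+2}]$ with no gap or overlap (which follows once I check $\fof{F}_{1}(2^{k+1}+2^{k})\in[2^{k},2^{k+1}]$, itself immediate from $0\leq F_{1}(2^{k+1})-F_{1}(2^{k})\leq 2^{k}$, so that the second-case count stays nonnegative), and that the recursion defining $\fof{F}$ on the second and third cases refers only to arguments $\leq 2^{k+1}+2^{k}<n$, so that invoking both the induction hypothesis and the first-case count is legitimate rather than circular. Everything else is a routine arithmetic simplification.
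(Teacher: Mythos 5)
Your proposal is correct and follows essentially the same route as the paper's proof: induction on $k$, splitting the block $[2^{k+1}+1,2^{k+2}]$ according to the three cases of the definition, computing $\fof{F}_{1}(2^{k+1}+2^{k})$ via the induction hypothesis, and observing that the first- and second-case contributions cancel to leave $2^{k}$ (the paper writes your quantity $F_{1}(2^{k+1})-F_{1}(2^{k})$ as an abstract count $m$, which is the only cosmetic difference). The paper additionally works out $k=1$ explicitly as a second base case, but as your argument shows this is not needed, since the general inductive step already applies from $k=0$ onward.
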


\begin{proof}
By induction on $k$.
For $k=0$, we have that
$\fof(F)(2^0)=\fof(F)(1)=1$ and 
$\fof(F)(2^1)=\fof(F)(2)=0$.
Thus $\fof(F)_{1}(2^{1})=\fof(F)_{1}(2)=1=2^{0}$.

For $k=1$, we show that $\fof(F)(4)=2$.
Recall that 
$\fof(F)(2^0)=\fof(F)(1)=1$ and 
$\fof(F)(2^1)=\fof(F)(2)=0$.
It is left to compute $\fof(F)(3)$, and $\fof(F)(4)$ (which depend on it).
Clearly, $\kn(3)=0$, and
$2^{1}+1\leq 3\leq 2^1+2^0$, thus
$\fof(F)(3)=F(3-1)=F(2)$.
We distinguish two cases:
\begin{enumerate}
\item If $F(2)=1$ then $\fof(F)(3)=1$ and so $\fof(F)_{1}(3)=2$.
In this case,
$2^{2}+2^{0}-\fof(F)_{1}(2^1+2^0)+1=
4+1-\fof(F)_{1}(3)+1=
4\leq 4 \leq 2^2
$, and so
$\fof(F)(4)=0$ and therefore
$\fof(F)_{1}(4)=2$.
\item If $F(2)=0$, then $\fof(F)(3)=0$and so $\fof(F)_{1}(3)=1$.
In this case,
$2^{1}+2^{0}+1\leq 4 \leq
4=
4+1-1=
4+1-\fof(F)_{1}(3)=
2^{2}+2^{0}-\fof(F)_{1}(2^{1}+2^0)$.
Therefore, 
$\fof(F)(4)=1$ and thus
$\fof(F)_{1}(4)=2$.
\end{enumerate}

Now assume $\fof{F}_{1}(2^{k+1})=2^{k}~(\ast)$.
We prove that
$\fof{F}_{1}(2^{k+2})=2^{k+1}$.
Denote $|\set{2^{k+1}+1\leq n\leq 2^{k+2}\mid f(n)=1}|$ by 
$M$.
Clearly, 
$\fof{F}_{1}(2^{k+2})=\fof{F}_{1}(2^{k+1})+M$.
By $(\ast)$, we have
$\fof{F}_{1}(2^{k+2})=2^{k}+M$.
Since we would like to prove that 
$\fof{F}_{1}(2^{k+2})=2^{k+1}$, it suffices to show that
$M=2^{k}$.

Let $m$ be the number of numbers $n$ from $2^{k+1}+1$ to $2^{k+1}+2^{k}$ (inclusive) such that $F(n-2^{k})=1$.
Notice that for each such $n$, we have that $\kn(n)=k$.
Thus, $\fof{F}_{1}(2^{k+1}+2^{k})=\fof{F}_{1}(2^{k+1})+m=2^{k}+m$.
Hence by the definition of $\fof{F}$, the number of numbers $n$
from $2^{k+1}+2^{k}+1$ to $2^{k+2}+2^{k}-\fof{F}_{1}(2^{k+1} + 2^{k})$ is
$(2^{k+2}+2^{k}-\fof{F}_{1}(2^{k+1} + 2^{k})) - (2^{k+1}+2^{k}+1) + 1=
(2^{k+2}+2^{k}-2^{k}-m) - (2^{k+1}+2^{k}+1) + 1=
2^{k+2}-m - 2^{k+1} - 2^{k} - 1 + 1 =
2^{k}(4 - 2 - 1) - m= 
2^{k}\cdot 1 - m=
2^{k} - m$.
Also, from the definition of $\fof{F}$, we have that the number of numbers $n$ from $2^{k+2}+2^{k}-\fof{F}_{1}(2^{k+1}+2^{k})+1$ to
$2^{k+2}$ (inclusive) is $0$.
In total, we get that 
$M=m+2^{k}-m+0=2^{k}$.
\end{proof}

Intuitively, suppose $\fof{F}$ has been defined for $n$ up to a certain power of two $2^{k+1}$. Because we want $\fof{F}$ to be equal to $F$ (not necessarily given the same arguments, but only in terms of the sequence of $0$'s and $1$'s they output) as often as possible, we define $\fof{F}(n)=F(n-2^{k})$ for the following quarter of the next power of two, that is $2^{k+1}+1\leq n\leq 2^{k+1}+2^{k}$. Notice that we are continuing from where we stopped last: that is, the last value for which we defined $\fof{F}$ as a function of $F$ was $m=2^{(k-1)+1}+2^{k-1}$, when $\fof{F}(m)$ equalled $F(2^{k}+2^{k-1}-2^{k-1})=F(2^{k})$, while the next value of $n$ is $n=2^{k+1}+1$, when we get $\fof{F}(n)$ equal to $F(2^{k+1}+1-2^{k})=F(2^{k}+1)$. 

However, after having done that, we must address that, at the same time, we want the number of times that $\fof{F}$ equals one is, more or less, the same as the number of times it equals zero; we do that by demanding that $\fof{F}$ equals $1$ for numbers up to $2^{k+2}$ exactly $2^{k+1}$ times, that is, $\fof{F}_{1}(2^{k+2})=2^{k+1}$. In order to accomplish that, we make $\fof{F}(n)$ equal to $1$ for $n>2^{k+1}+2^{k}$ until $\fof{F}_{1}(n)$ becomes $2^{k+1}$; after that, we let $\fof{F}$ simply equal zero until the next power of two.  

For an example, take a certain $F:\mathbb{N}\setminus\{0\}\rightarrow\{0,1\}$ with $F(1)=1$; say, with $F(1)=1$, $F(2)=0$, $F(3)=0$, $F(4)=0$, $F(5)=0$, $F(6)=1$, $F(7)=0$ and $F(8)=1$. We can then draw the elucidating diagram in \Cref{diagram for f}.

\begin{figure}[t]\label{diagram for f}
\centering
\adjustbox{scale=0.7,center}{%
\begin{tikzcd}[row sep=0.8em]
    n &[-2em] 1 &[-2em] 2 &[-2em] 3 &[-2em] 4 &[-2em] 5 &[-2em] 6 &[-2em] 7 &[-2em] 8 &[-2em] 9 &[-2em] 10 &[-2em] 11 &[-2em] 12 &[-2em] 13 &[-2em] 14 &[-2em] 15 &[-2em] 16 &[-2em] \cdots\\
    F(n) & 1 \arrow[d] & 0\arrow[dr] & 0\arrow[drr] & 0\arrow[drr] & 0\arrow[drrrr] & 1\arrow[drrrr] & 0\arrow[drrrr] & 1\arrow[drrrr] & \cdots &  &  &  &  &  &  &  &\\
    \fof{F}(n) & 1 & \textcolor{ForestGreen}{0} & 0 & \textcolor{red}{1} & 0 & 0 & \textcolor{red}{1} & \textcolor{red}{1} & 0 & 1 & 0 & 1 & \textcolor{red}{1} & \textcolor{red}{1} & \textcolor{ForestGreen}{0} & \textcolor{ForestGreen}{0} & \cdots\\
    \fof{F}_{1}(n) & 1 & 1 & 1 & 2 & 2 & 2 & 3 & 4 & 4 & 5 & 5 & 6 & 7 & 8 & 8 & 8 & \cdots\\
\end{tikzcd}}
\caption{A diagram for the function $f$.}
\end{figure}
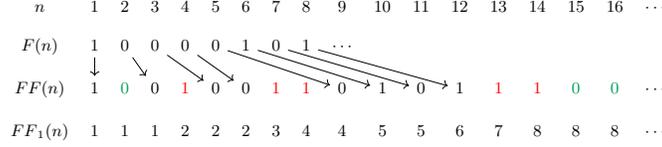

\begin{lemma}\label{ceil function}
Given a function $F:\mathbb{N}\setminus\{0\}\rightarrow\{0,1\}$ with $F(1)=1$, if we define $f$ to be $\fof{F}$,
then
for every $n\geq 2$ it is true that $F(n)=f(n+2^{\kn(n)}+1)$.
\end{lemma}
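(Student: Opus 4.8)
The plan is to read the claimed identity straight off the first clause of \Cref{defining f from F}. That clause says that whenever an argument $m$ falls in the lower half $[2^{\kn(m)+1}+1,\,2^{\kn(m)+1}+2^{\kn(m)}]$ of its dyadic block, $f(m)=\fof{F}(m)=F(m-2^{\kn(m)})$; in other words, on exactly these arguments $f$ simply \emph{copies} the values of $F$. The entire content of the lemma is therefore to show that the argument $m=n+2^{\kn(n)}+1$ is precisely the copy-site that reproduces $F(n)$: namely, that this $m$ lands in the lower half of its block (rather than in the padding ranges where the definition forces $f$ to be $1$ or $0$), and that the attendant back-shift $m-2^{\kn(m)}$ equals $n$.

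First I would fix $n\geq 2$, set $k=\kn(n)$ and $m=n+2^{\kn(n)}+1$, and compute $\kn(m)$. For this I would use the closed form from the Remark, $\kn(j)=\ceil{\log_{2} j}-2$ (legitimate by \Cref{k of n,k of n is well-defined}), together with the defining inequalities $2^{k+1}+1\leq n\leq 2^{k+2}$: knowing where $n$ sits inside its block $[2^{k+1}+1,\,2^{k+2}]$ pins down the block of $m$ and hence $\kn(m)$ as an explicit function of $k$. With $\kn(m)$ in hand the two obligations become purely arithmetic: (i) verify the membership $2^{\kn(m)+1}+1\leq m\leq 2^{\kn(m)+1}+2^{\kn(m)}$ that places $m$ in the copying clause of \Cref{defining f from F}; and (ii) verify $m-2^{\kn(m)}=n$, so that the copied value is $F(n)$. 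Combining (i) and (ii) with the first clause yields $f(m)=F(m-2^{\kn(m)})=F(n)$, which is exactly the statement.

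I would dispatch the smallest arguments and the block boundaries separately, peeling them off just as the base cases are handled in \Cref{fof is well defined,fof is nice}. In particular, $n=2$ is degenerate, since $\kn$ was only defined on $\mathbb{N}\setminus\set{0,1,2}$; there one must invoke the closed form $\kn(2)=\ceil{\log_{2}2}-2$ to give the index a meaning, after which the identity reduces to the explicitly tabulated values $f(1),f(2),f(3)$. Likewise the left endpoints $n=2^{k+1}+1$ of each block deserve a direct check, to make sure $m$ does not slip into the neighbouring dyadic block.

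The main obstacle is steps~(i)--(ii): all the difficulty lies in the dyadic bookkeeping that fixes $\kn(m)$ exactly and shows that the shift built into the index is precisely the one undone by the back-shift $m-2^{\kn(m)}$. Everything else is a one-line appeal to the definition, so establishing the correct relation between $\kn(m)$ and $\kn(n)$, and matching the two halves of each block against the intervals in the definition, is where the entire care of the argument must be concentrated.
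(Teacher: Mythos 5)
Your strategy --- compute $\kn(m)$ for the shifted argument $m$, verify that $m$ falls in the copying clause of \Cref{defining f from F}, then undo the shift --- is precisely the paper's strategy. The problem is the index you commit to. Taking the statement literally, with $m=n+2^{\kn(n)}+1$, your obligations (i) and (ii) cannot both be met: writing $k=\kn(n)$, so that $2^{k+1}+1\leq n\leq 2^{k+2}$, either $m\leq 2^{k+2}$, in which case $\kn(m)=k$ but $m\geq 2^{k+1}+2^{k}+2$ lies strictly beyond the copy range $[2^{k+1}+1,\,2^{k+1}+2^{k}]$, so $f(m)$ is set by the padding clauses rather than by $F$; or $m\geq 2^{k+2}+1$, in which case $\kn(m)=k+1$ but the back-shift yields $m-2^{\kn(m)}=n-2^{k}+1$, which equals $n$ only when $k=0$. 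This is not a bookkeeping detail that more care will push through: the printed identity is actually false. For any $F$ with $F(3)=F(4)=0$ one gets $f(5)=f(6)=0$, hence $\fof{F}_{1}(6)=\fof{F}_{1}(4)=2$ (by \Cref{fof is nice}), and the second clause then forces $f(7)=f(8)=1$; so $f(5+2^{\kn(5)}+1)=f(8)=1$ irrespective of $F(5)$, contradicting the claim whenever $F(5)=0$ (the paper's own example function is such an $F$). The two indices $n+2^{\kn(n)}+1$ and $n+2^{\kn(n)+1}$ coincide exactly when $\kn(n)=0$, i.e.\ for $n\in\{3,4\}$, which is why the small cases misleadingly check out.

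The statement contains a typo: the intended identity --- the one the paper's proof establishes, and the one the computability argument in \Cref{lem:mati-f-exists} actually uses --- is $F(n)=\fof{F}(n+2^{\kn(n)+1})$, with the $+1$ in the exponent. Once you aim at $m=n+2^{k+1}$, your own outline goes through verbatim and, in fact, uniformly: from $2^{k+1}+1\leq n\leq 2^{k+2}$ you get $2^{k+2}+1\leq m\leq 2^{k+2}+2^{k+1}\leq 2^{k+3}$, so $\kn(m)=k+1$ and $m$ always lands inside the copy range of its block --- none of the endpoint case-splitting you planned is needed --- and the back-shift gives $m-2^{\kn(m)}=m-2^{k+1}=n$ exactly, whence $f(m)=F(n)$. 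The only residual edge case is $n=2$, where $\kn$ is undefined (and where your literal index $2+2^{-1}+1$ is not even an integer): there the corrected identity reads $f(3)=F(2)$, which holds directly from the first clause since $\kn(3)=0$, consistent with the convention $\kn(2)=\ceil{\log_{2}2}-2=-1$ suggested by the Remark.
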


\begin{proof}
Since 
$2^{\kn(n)+1}+1\leq n \leq 2^{\kn(n)+2}$, we also have
$2^{(\kn(n)+1})+1\leq n + 2^{\kn(n)+1}\leq 2^{(\kn(n)+1)+1}+2^{\kn(n)+1}\leq 2^{(\kn(n)+1)+2}$.
Hence $\kn(n + 2^{\kn(n)+1})=\kn(n)+1$.
By the definition of $\fof{F}$, we therefore have
$F(n)=F(n+2^{\kn(n)+1}-2^{\kn(n)+1)}=\fof{F}(n+2^{\kn(n)+1})$.

\end{proof}

\lemmatifexists*

\begin{proof}
We start by taking a non-computable function $F:\mathbb{N}\setminus\{0\}\rightarrow\{0,1\}$ with $F(1)=1$ (for example: given an enumeration $\{T_{n}:n\in\mathbb{N}\setminus\{0\}\}$ of all Turing machines, $F(n)=1$ iff $T_{n}$ halts, where we can change the enumeration so that the first Turing machine halts).

Using \Cref{defining f from F}, we can define $f:\mathbb{N}\setminus\{0\}\rightarrow\{0,1\}$ as equal to $\fof{F}$. 
This function is not computable since otherwise, by \Cref{ceil function}, $F$ would be computable (as $\kn$ is computable as well). And, from \Cref{defining f from F}, if we make $k\geq 2$ equal to $k^{\prime}+1$ for $k^{\prime}\in\mathbb{N}\setminus\{0\}$, 
$f_{1}(2^{k})=\fof{F}_{1}(2^{k^{\prime}+1})=2^{k^{\prime}}$ (from \Cref{fof is nice}), 
and $f_{0}(2^{k})=2^{k}-f_{1}(2^{k})=2^{k^{\prime}}$; for $k=1$, we know that $f(1)=1$ and $f(2)=0$ by \Cref{defining f from F} of $\fof{F}$ (and thus $f$), hence implying $f_{0}(2^{0})=f_{1}(2^{0})$. To summarize, for any $k\in\mathbb{N}\setminus\{0\}$, $f_{0}(2^{k})=f_{1}(2^{k})$.
\end{proof}

\section{Proof of \Cref{mcofTsMisnotcomputable}}

\begin{lemma}\label{mincard of TsM is all defined}
    Every quantifier-free $\TsM$-satisfiable formula is satisfied by a finite $\TsM$-interpretation.
\end{lemma}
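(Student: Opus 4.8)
The plan is to start from any $\TsM$-satisfiable quantifier-free formula $\phi$ and a $\TsM$-interpretation $\A$ that satisfies it. If $\A$ is already finite there is nothing to prove, so the only real case is when $\A$ is infinite, which by the definition of $\TsM$ means it has infinitely many elements $a$ with $s^{\A}(a)=a$ and infinitely many with $s^{\A}(a)\neq a$. I would then build a finite $\TsM$-interpretation $\B$ still satisfying $\phi$ by collapsing $\A$ down to the finitely many elements that $\phi$ can actually ``see'', and then padding the result up to a carefully chosen cardinality.

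Concretely, let $d$ be the maximal nesting depth of $s$ occurring in $\phi$, and let $D_{0}=\{(s^{\A})^{j}(x^{\A}) : x\in\vars(\phi),\ 0\le j\le d\}$, a finite subset of the domain of $\A$. Split $D_{0}$ into the elements that are fixed points of $s^{\A}$ and those that are not, say $p$ of the former and $q$ of the latter. I would choose $n=2^{k}$ with $k\ge 1$ large enough that $2^{k-1}\ge\max(p,q)$, and take the domain of $\B$ to consist of $D_{0}$ together with $f_{1}(n)-p$ fresh ``fixed'' elements and $f_{0}(n)-q$ fresh ``non-fixed'' elements; since $f_{1}(2^{k})=f_{0}(2^{k})=2^{k-1}$ by \Cref{lem:mati-f-exists}, both quantities are non-negative and the total cardinality is exactly $f_{1}(n)+f_{0}(n)=n$. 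I would define $s^{\B}$ to coincide with $s^{\A}$ on every $a\in D_{0}$ with $s^{\A}(a)\in D_{0}$ (which is automatic whenever $a$ is reached at depth strictly below $d$), to send each remaining ``boundary'' element of $D_{0}$ to itself if it was a fixed point and to some fixed point otherwise, to fix each fresh fixed element, and to send each fresh non-fixed element to a fixed point; variables are interpreted exactly as in $\A$.

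Two verifications then remain. For satisfaction of $\phi$, an easy induction on $j\le d$ shows $(s^{\B})^{j}(x^{\B})=(s^{\A})^{j}(x^{\A})$ for each variable $x$, because $s^{\B}$ agrees with $s^{\A}$ all along these paths; hence every term of depth at most $d$ keeps its value, the equality pattern underlying the atoms of the quantifier-free $\phi$ is preserved, and $\B\models\phi$. For membership in $\TsM$, the construction guarantees that $\B$ has exactly $f_{1}(n)$ elements with $s^{\B}(a)=a$ and exactly $f_{0}(n)$ with $s^{\B}(a)\neq a$, where $n=|\B|$; since $f_{1}$ and $f_{0}$ are non-decreasing, $\B$ satisfies $\psi^{\eq}_{=f_{1}(n)}\wedge\psi^{\diff}_{=f_{0}(n)}$ and, for each axiom index $k$, either this matches a disjunct of the $k$-th axiom (when $k\ge n$) or $\B$ satisfies the first disjunct $\psi^{\eq}_{\ge f_{1}(k)}\wedge\psi^{\diff}_{\ge f_{0}(k)}$ (when $k<n$), so $\B$ is a finite $\TsM$-interpretation.

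The main obstacle, and the genuinely non-trivial point, is ensuring that the padded structure lands on a legitimate cardinality of $\TsM$: a finite model of size $n$ is forced to have \emph{exactly} $f_{1}(n)$ fixed points and $f_{0}(n)$ non-fixed points, so I must pick $n$ with $f_{1}(n)\ge p$ and $f_{0}(n)\ge q$ simultaneously. This is precisely what the balanced property $f_{1}(2^{k})=f_{0}(2^{k})=2^{k-1}$ of \Cref{lem:mati-f-exists} supplies, since it lets both counts grow without bound while remaining equal. The only other delicate point is the treatment of the boundary elements of $D_{0}$, whose $s$-image is never referenced by $\phi$ and can therefore be redirected freely to keep the fixed/non-fixed bookkeeping consistent.
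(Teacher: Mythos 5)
Your proposal is correct and follows essentially the same route as the paper's proof: restrict attention to the finitely many elements named by terms of $\phi$, pad with fresh fixed and non-fixed elements up to a power-of-two cardinality where $f_{0}$ and $f_{1}$ balance, redirect the dangling $s$-images of boundary elements so the fixed/non-fixed counts come out exactly right, and check by induction on term depth that $\phi$'s atoms keep their truth values. The only cosmetic difference is that the paper targets cardinality $2^{k+1}$ with $2^{k}$ elements on each side rather than your $2^{k}$ split as $2^{k-1}+2^{k-1}$.
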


\begin{proof}
    Suppose $\phi$ is a quantifier-free formula, and let $\A$ be a $\TsM$-interpretation that satisfies $\phi$: we may assume that it is infinite. The set
    $\{\alpha : \text{$\alpha$ is a term in $\phi$}\}$ is finite, and therefore so is $A=\{\alpha : \text{$\alpha$ is a term in $\phi$}\}^{\A}$. Let 
    \[m_{0}=|\{a\in A: s^{\A}(a)\neq a\}|\quad\text{and}\quad m_{1}=|\{a\in A : s^{\A}(a)=a\}|,\]
    and take a $k\in\mathbb{N}\setminus\{0\}$ such that $2^{k}>\max\{m_{0}, m_{1}\}$. Take as well sets $B$ and $C$ with, respectively, $2^{k}-m_{1}$ and $2^{k}-m_{0}$ elements, disjoint from 
    $A$, and we define a $\TsM$-interpretation $\B$ with: $\s^{\B}=A\cup B\cup C$;
    \[s^{\B}(a)=\begin{cases*}
        a & if $a\in A$ and $s^{\A}(a)=a$, or $a\in B$;\\ 
        s^{\A}(a) & if $s^{\A}(a)\neq a$ but $s^{\A}(a)\in A$;\\
        \text{any element in $B$} & if $s^{\A}(a)\neq a$ and $s^{\A}(a)\notin A$, or $a\in C$        
    \end{cases*};\]
    and $x^{\B}=x^{\A}$ for any variable $x$ in $\phi$, and arbitrary otherwise. $\B$ has $m_{1}$ elements in $A$, plus all $2^{k}-m_{1}$ of those in $B$, 
    satisfying $s^{\B}(a)=a$, adding to a total of $2^{k}$; and $m_{0}$ elements in $A$, plus all $2^{k}-m_{1}$ of the 
    elements in $C$, satisfying $s^{\B}(a)\neq a$, to a grand total of $2^{k}$, meaning $\B$ is a $\TsM$-interpretation with $2^{k+1}$ elements, and thus 
    finite. 
    
    Furthermore, let $s^{j}(x)$ be a term in $\phi$: if $j=0$, $(s^{j}(x))^{\B}=x^{\B}=x^{\A}=(s^{j}(x))^{A}$, so assume the result holds for an arbitrary $j$; in that case, if $s^{j+1}(x)$ is still a term in $\phi$,
    \[(s^{j+1}(x))^{\B}=s^{\B}\big((s^{j}(x)^{\B}\big)=s^{\A}\big((s^{j}(x)^{\A}\big)=(s^{j+1}(x))^{\A},\]
    proving that for any terms 
    $\alpha$ in $\phi$, $\alpha^{\B}=\alpha^{\A}$; since $\phi$ is a quantifier-free formula in a 
    signature without predicates, all of whose terms receive the same value in either $\A$ or $\B$, and $\A$ satisfies $\phi$, we have that $\B$ also satisfies $\phi$, finishing the proof.
\end{proof}

\mctsmnc*

\begin{proof}
We start by proving that $f(n+1)=1$ iff $\mc(\phi_{n})=n+1$, for 
\[\phi_{n}=\bigwedge_{i=1}^{f_{1}(n)+1}\big(s(x_{i})=x_{i}\big)\wedge \bigwedge_{1\leq i<j\leq f_{1}(n)+1}\neg(x_{i}=x_{j})\]
a formula that is only true in a model $\B$ when there exist at least $f_{1}(n)+1$ elements in $\B$ satisfying $s^{\B}(a)=a$.

Notice that the theory $\TsM$ has models of all finite, non-zero cardinalities: indeed, given a $n\in\mathbb{N}\setminus\{0\}$, one 
such model $\A$ has domain $\{a_{1}, \ldots, a_{n}\}$, with $s^{\A}(a_{i})=a_{i}$ for each $1\leq i\leq f_{1}(n)$ (remember $f_{1}(n)\geq 1$
for all $n\in\mathbb{N}\setminus\{0\}$), and $s^{\A}(a_{j})=a_{1}$ for each $f_{1}(n)<j\leq n$, if there are 
any such $j$ (and as long as $n>1$ one has $f_{1}(n)<n$). Notice as well that, for all $p\geq q$, 
$\psi^{=}_{=p}\rightarrow\psi^{=}_{\geq q}$
and
$\psi^{\neq}_{=p}\rightarrow\psi^{\neq}_{\geq q}$
and so a model $\A$ of $\TsM$ must satisfy either $\psi^{=}_{=f_{1}(k)}\wedge\psi^{\neq}_{=f_{0}(k)}$, for some $k\in\mathbb{N}\setminus\{0\}$, or $\psi^{=}_{\geq f_{1}(k)}\wedge\psi^{\neq}_{\geq f_{0}(k)}$ for all $k\in\mathbb{N}\setminus\{0\}$, in 
which case $\A$ is infinite; if our model $\A$ is finite, it must then satisfy $\psi^{=}_{=f_{1}(k)}\wedge\psi^{\neq}_{=f_{0}(k)}$, for some $k\in\mathbb{N}\setminus\{0\}$, and therefore have exactly $f_{1}(k)$ elements satisfying 
$s^{\A}(a)=a$ and $f_{0}(k)$ elements satisfying $s^{\A}(a)\neq a$. Since an element $a$ of $\A$ must satisfy either $s^{\A}(a)=a$ or $s^{\A}(a)\neq a$ and never both of them, we have that $\A$ must have precisely $f_{1}(k)+f_{0}(k)=k$ elements, and so a model $\A$ of $\TsM$ has $k$ elements in its domain iff it contain $f_{1}(k)$ elements satisfying 
$s^{\A}(a)=a$ and $f_{0}(k)$ elements satisfying $s^{\A}(a)\neq a$.

 So, let us prove the implication from left to right in the biconditional $f(n+1)=1\Leftrightarrow \mc(\phi_{n})=n+1$. If $\A$ is a model of $\TsM$ that satisfies $\phi_{n}$
 and has minimal (finite, because of \Cref{mincard of TsM is all defined}) cardinality $\mc(\phi_{n})$ among the models of $\TsM$ that satisfy this formula, we have that $\A$ has at least $f_{1}(n)+1$ elements $a$ that satisfy $s^{\A}(a)=a$, meaning $f_{1}(\mc(\phi_{n}))\geq f_{1}(n)+1$. 
 Since we are assuming $f(n+1)=1$, $f_{1}(n+1)=f_{1}(n)+1$, and since any model of $\TsM$ with $n+1$ elements must have $f_{1}(n+1)=f_{1}(n)+1$ elements satisfying $s^{\A}(a)=a$, and thus actually satisfy $\phi_{n}$, we have $n+1\geq \mc(\phi_{n})$. If $\mc(\phi_{n})$ were strictly less than $n$, we would get $n\geq \mc(\phi_{n})$, and since $f_{1}$ is non-decreasing, $f_{1}(n)\geq f_{1}(\mc(\phi_{n}))\geq f_{1}(n)+1$, what is absurd: we must have instead $\mc(\phi_{n})=n+1$.
 
 Reciprocally, assume $\mc(\phi_{n})=n+1$, and we know that some model $\A$ of $\TsM$ with $n+1$ elements satisfies $\phi_{n}$, and therefore has at least $f_{1}(n)+1$ elements that satisfy $s^{\A}(a)=a$, from what follows that $f_{1}(n+1)\geq f_{1}(n)+1$, and thus the two values are equal (since $f_{1}(n)$ and $f_{1}(n+1)$ can only differ by $0$ or $1$). So $f(n+1)=1$.

To summarize, if $\TsM$ has a computable $\mc$ function and we know the values of $f(1),\ldots, f(n)$ (and therefore of $f_{0}(n)$ and $f_{1}(n)$), we can calculate $f(n+1)$ algorithmically as well.
\end{proof}

\section{Proof of \Cref{Decidability of TsM}}

\thmtrisattsm*

\begin{proof}
If the quantifier-free formula $\phi$ is satisfiable, then it must be satisfied by some $\T$-interpretation $\A$, where $\T$ is the theory with all $\Sigma_{s}$-structures as models, axiomatized by the 
empty set. Take then enumerable sets $A$ and $B$ disjoint from $\s^{\A}$ and each other, and build the interpretation $\B$ with: $\s^{\B}=\s^{\A}\cup A\cup B$;
\[\s^{\B}(a)=\begin{cases*}
    s^{\A}(a) & if $a\in \s^{\A}$;\\
    a & if $a\in A$;\\
    \text{any element of $A$} & if $a\in B$;
\end{cases*}\]
and $x^{\B}=x^{\A}$ for all variables $x$. It is obvious $\B$ is a $\TsM$-interpretation, since it has infinite 
elements satisfying each condition, $s^{\B}(a)=a$ or $s^{\B}(a)\neq a$ (respectively, all of those in $A$ and $B$). Furthermore, 
for any term $\alpha$ in $\phi$, $\alpha^{\B}=\alpha^{\A}$, since $x^{\B}=x^{\A}$ for every variable $x$, and, assuming as induction hypothesis that $(s^{j}(x))^{\B}=(s^{j}(x))^{\A}$ for some $j$, 
\[(s^{j+1}(x))^{\B}=s^{\B}\big((s^{j}(x))^{\B}\big)=s^{\A}\big((s^{j}(x))^{\A}\big)=(s^{j+1}(x))^{\A}.\]
Since $\phi$ is a quantifier-free formula in a signature without predicates, we get that $\phi$ is satisfied by the $\TsM$-interpretation $\B$ (given that is satisfied by $\A$).
\end{proof}



\section{Proofs for \Cref{tab-summary}}
\subsection{$\Tgeqn$}

$\Tgeqn$ is defined by a single axiom, which has the form
$\exists\,\overarrow{x}.\:\psi$ for a quantifier-free formula $\psi$.
Such theories are called \emph{existential} in \cite{DBLP:journals/jar/ShengZRLFB22} and are proven
there to be strongly polite.
Thus we obtain the next lemma:

\begin{lemma}\label{Tgeqn is smooth}
$\Tgeqn$ is smooth, and thus stably-infinite.
It is also
strongly finitely witnessable, and thus finitely witnessable.
\end{lemma}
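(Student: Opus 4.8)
The plan is to prove all four properties directly from the defining feature of $\Tgeqn$: it is a theory over the empty one-sorted signature $\Sigma_1$, axiomatized by the single cardinality constraint $\psi_{\geq n}$. First I would record the standard empty-signature observation (the same one driving the proof of \Cref{Infinite Model => SI}): since there are no function or predicate symbols, the truth value of a quantifier-free formula in an interpretation is completely determined by which pairs of variables occurring in it are assigned equal values. Thus modifying an interpretation on elements not named by any variable, or on variables not occurring in the formula, cannot change satisfaction. This principle is the engine for both halves of the lemma.

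For smoothness, let $\A$ be a $\Tgeqn$-interpretation satisfying a quantifier-free $\phi$, and let $\kappa \geq |\s^{\A}|$. Because $\A$ is a model we have $|\s^{\A}| \geq n$, hence $\kappa \geq n$. I would build $\B$ with $\s^{\B} = \s^{\A} \cup D$ for a fresh set $D$ disjoint from $\s^{\A}$ with $|\s^{\B}| = \kappa$, keeping $x^{\B} = x^{\A}$ for every variable. Adjoining unnamed elements alters no equality among the variables of $\phi$, so $\B \vDash \phi$; and $|\s^{\B}| = \kappa \geq n$ makes $\B$ a $\Tgeqn$-interpretation of the prescribed cardinality. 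Stable infiniteness is then immediate from \Cref{SMimpliesSI}.

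For strong finite witnessability I would take the (plainly computable) witness $\wit(\phi) = \phi \wedge \bigwedge_{1 \leq i < j \leq n} \neg(x_i = x_j)$, with $x_1, \dots, x_n$ fresh. Condition $(i)$ holds because $\exists\, x_1 \cdots x_n.\, \wit(\phi)$ equals $\phi \wedge \psi_{\geq n}$, and $\psi_{\geq n}$ is satisfied by every $\Tgeqn$-interpretation, so this is $\Tgeqn$-equivalent to $\phi$. For $(ii')$, given a finite $V$ and an arrangement $\delta_V$ with $\wit(\phi) \wedge \delta_V$ $\Tgeqn$-satisfiable, pick a model $\B$ and let $\A$ be its restriction to the domain $\vars(\wit(\phi) \wedge \delta_V)^{\B}$, retaining all variable values (and sending variables outside the formula anywhere in this nonempty domain). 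The appended disequalities force $x_1^{\B}, \dots, x_n^{\B}$ pairwise distinct, so the domain has at least $n$ elements and $\A$ is again a $\Tgeqn$-interpretation; emptiness of the signature gives $\A \vDash \wit(\phi) \wedge \delta_V$, and by construction $\s^{\A} = \vars_{\s}(\wit(\phi) \wedge \delta_V)^{\A}$. Finite witnessability then follows from \Cref{SFWimpliesFW}.

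The one point demanding care — the main obstacle — is the strong-witness step: after collapsing the domain down to the values actually realized by the variables, one must still guarantee at least $n$ elements, lest the result fail to be a $\Tgeqn$-model. This is exactly what the pairwise disequalities $\bigwedge_{i<j}\neg(x_i = x_j)$ secure, and it is why a naive witness such as $\phi \wedge \bigwedge_i (x_i = x_i)$ would not work: the $n$ fresh variables must be forced \emph{distinct}, not merely present. Everything else reduces to the empty-signature principle that quantifier-free satisfaction depends only on the equality type of the variables.
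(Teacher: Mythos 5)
Your proof is correct, but it takes a different route from the paper. The paper disposes of this lemma in one line: $\Tgeqn$ is axiomatized by the single sentence $\psi_{\geq n}$, which has the form $\exists\,\overarrow{x}.\:\psi$ for quantifier-free $\psi$, and theories of this shape (``existential'' theories) are already known to be strongly polite by a cited result of Sheng et al.; smoothness and strong finite witnessability then come for free. You instead prove everything from scratch, leaning on the empty one-sorted signature: smoothness by padding the domain with unnamed elements (which cannot disturb a quantifier-free formula when there are no function or predicate symbols), and strong finite witnessability via the witness $\phi\wedge\bigwedge_{1\leq i<j\leq n}\neg(x_{i}=x_{j})$ followed by restricting the domain to the values realized by the variables. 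Both arguments are sound. The trade-off is that the citation route is shorter and applies to existential theories over arbitrary (even non-empty, many-sorted) signatures, whereas your argument exploits the emptiness of $\Sigma_{1}$ in an essential way (domain restriction and extension are only this painless when there are no function symbols to close under). On the other hand, your proof is self-contained and makes visible the one genuinely delicate point, which the black-box citation hides: the fresh variables in the witness must be forced pairwise \emph{distinct}, not merely introduced, or the collapsed domain in condition $(ii')$ could drop below $n$ elements and fall outside the theory. Your identification of that as the crux is exactly right.
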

It is left to show that it is convex:
\begin{lemma}
$\Tgeqn$ is convex.
\end{lemma}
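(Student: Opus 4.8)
The plan is to avoid a direct model construction entirely and instead reduce convexity to results already established in the excerpt. The key observation is that $\Tgeqn$ lives over the empty one-sorted signature $\Sigma_{1}$, and for such theories \Cref{SI empty theories are convex} tells us that stable infiniteness (w.r.t. the set of all sorts) already implies convexity (w.r.t. any set of sorts). So the whole task collapses to verifying that $\Tgeqn$ is stably infinite w.r.t. its only sort $\sigma$.

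First I would record that $\Tgeqn$ is smooth w.r.t. $\{\sigma\}$, which is exactly \Cref{Tgeqn is smooth}. Then, applying \Cref{SMimpliesSI} (smoothness implies stable infiniteness), I obtain that $\Tgeqn$ is stably infinite w.r.t. $\{\sigma\}$. Since $\Sigma_{1}$ has $\sigma$ as its unique sort, $\{\sigma\}$ is precisely the set of all sorts of the signature, so the hypothesis ``stably infinite w.r.t. the set of all of its sorts'' in \Cref{SI empty theories are convex} is met. That theorem then yields that $\Tgeqn$ is convex w.r.t. any set of sorts, in particular w.r.t. $\{\sigma\}$, which is convexity in the one-sorted sense used in the statement.

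There is essentially no hard step here: the only thing to check is the matching of hypotheses, namely that the ``set of all sorts'' for $\Sigma_{1}$ coincides with the single-sort set over which smoothness (hence stable infiniteness) was proved, and that the empty-signature requirement of \Cref{SI empty theories are convex} holds for $\Tgeqn$. Both are immediate from \Cref{def:sigsofex} and the axiomatization $\{\psi_{\geq n}\}$. The only conceivable alternative would be a self-contained argument building, for a conjunction of literals $\phi$ with $\tdash \phi \rightarrow \bigvee_{i} u_{i}=v_{i}$, a $\Tgeqn$-interpretation that satisfies $\phi$ while falsifying every individual equality $u_{i}=v_{i}$; but this is exactly the construction carried out in the proof of \Cref{SI empty theories are convex} (using that an infinite, hence size-$\geq n$, model can separate any variables not forced equal by $\phi$), so citing that theorem is both cleaner and avoids duplicating the argument.
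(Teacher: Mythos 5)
Your proposal is correct and follows exactly the paper's argument: the paper likewise derives stable infiniteness of $\Tgeqn$ from \Cref{Tgeqn is smooth} and then invokes \Cref{SI empty theories are convex} to conclude convexity. The hypothesis-matching you spell out (empty signature, the single sort being the set of all sorts) is implicit in the paper but entirely accurate.
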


\begin{proof}
Since $\Tgeqn$ is stably infinite from \Cref{Tgeqn is smooth}, \Cref{SI empty theories are convex} guarantees it is also convex.
\end{proof}


 \subsection{$\Tinfty$}

\begin{lemma}
\label{lem:tinftysmooth}
$\Tinfty$ is smooth, and thus stably-infinite.
\end{lemma}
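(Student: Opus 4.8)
The plan is to prove smoothness directly by a padding argument and then obtain stable infiniteness for free from \Cref{SMimpliesSI}. Fix a quantifier-free $\Sigma_{1}$-formula $\phi$, a $\Tinfty$-interpretation $\A$ satisfying $\phi$, and a cardinal $\kappa \geq |\s^{\A}|$. Since $\A$ is a $\Tinfty$-interpretation, its domain $\s^{\A}$ is infinite, and hence $\kappa$ is an infinite cardinal as well; this is the only place where the special shape of $\Tinfty$ (all models infinite) enters.

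First I would build $\B$ by enlarging the domain of $\A$ without changing anything else. Concretely, choose a set $D$ of cardinality $\kappa$ disjoint from $\s^{\A}$, set $\s^{\B} = \s^{\A} \cup D$, and let $x^{\B} = x^{\A}$ for every variable $x$. Because $\Sigma_{1}$ is empty, there are no function or predicate symbols to reinterpret (equality is forced to be the identity), so $\B$ is a well-defined $\Sigma_{1}$-interpretation. Since $\kappa \geq |\s^{\A}|$ and $\kappa$ is infinite, $|\s^{\B}| = \max(|\s^{\A}|, \kappa) = \kappa$; in particular $\s^{\B}$ is infinite, so $\B$ satisfies every axiom $\psi_{\geq k}$ and is therefore a $\Tinfty$-interpretation.

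Next I would check $\B \vDash \phi$. Over the empty signature every atomic subformula of $\phi$ is an equality $x = y$ between variables, and $x^{\B} = x^{\A}$, $y^{\B} = y^{\A}$ give $x^{\B} = y^{\B}$ iff $x^{\A} = y^{\A}$. Thus each atomic subformula receives the same truth value under $\A$ and $\B$, and since the truth value of a quantifier-free formula is determined by those of its atomic subformulas, $\B \vDash \phi$. This exhibits a $\Tinfty$-interpretation of the prescribed cardinality $\kappa$ satisfying $\phi$, establishing smoothness w.r.t. the only sort $\s$. Stable infiniteness then follows immediately from \Cref{SMimpliesSI}.

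There is essentially no hard step here: the only points requiring care are the cardinal arithmetic guaranteeing $|\s^{\B}| = \kappa$ exactly (where infiniteness of $\kappa$, inherited from $\A$ being an infinite model, is used) and the standard observation that over an empty signature enlarging the domain cannot disturb the satisfaction of a quantifier-free formula. This padding-up construction is simply the smoothness analogue of the stable-infiniteness argument already used for theories with an infinite model over an empty signature.
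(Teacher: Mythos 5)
Your proof is correct and follows essentially the same route as the paper, which simply observes that any enlargement of a $\Tinfty$-model is again a $\Tinfty$-model; you have merely spelled out the padding construction and the preservation of quantifier-free formulas over the empty signature in full detail. No gaps.
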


\begin{proof}
Given any $\Tinfty$-interpretation $\A$, the theory is seen to be smooth since every larger interpretation $\B$ must necessarily be a $\Tinfty$-interpretation as well.
\end{proof}

\begin{lemma}\label{Tinfty is not FW}
$\Tinfty$ is not finitely witnessable, and thus not strongly finitely witnessable.
\end{lemma}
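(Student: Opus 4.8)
The plan is to derive a contradiction directly from the definition of finite witnessability, exploiting the fact that every model of $\Tinfty$ has an infinite domain whereas any witness formula mentions only finitely many variables. Suppose toward a contradiction that $\Tinfty$ is finitely witnessable w.r.t. its only sort $\sigma$, with witness $\wit$.

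First I would pick a trivially satisfiable quantifier-free formula, say $\phi = (x = x)$. This formula is $\Tinfty$-satisfiable, since any infinite structure satisfies it. By condition $(i)$ of finite witnessability, $\phi$ and $\exists\,\overarrow{w}.\:\wit(\phi)$ are $\Tinfty$-equivalent (where $\overarrow{w} = \vars(\wit(\phi)) \setminus \vars(\phi)$), so $\exists\,\overarrow{w}.\:\wit(\phi)$ is $\Tinfty$-satisfiable. The key step is then to observe that this makes $\wit(\phi)$ itself $\Tinfty$-satisfiable: a $\Tinfty$-interpretation satisfying the existential closure can be modified on the variables $\overarrow{w}$ to yield a $\Tinfty$-interpretation satisfying $\wit(\phi)$ (the underlying structure is unchanged, so it remains a model of $\Tinfty$).

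Having established that $\wit(\phi)$ is $\Tinfty$-satisfiable, I would invoke condition $(ii)$: there must exist a $\Tinfty$-interpretation $\A$ satisfying $\wit(\phi)$ with $\sigma^{\A} = \vars_\sigma(\wit(\phi))^{\A}$. Here lies the contradiction. Since $\wit(\phi)$ is a quantifier-free formula, $\vars_\sigma(\wit(\phi))$ is a \emph{finite} set of variables, so $\vars_\sigma(\wit(\phi))^{\A}$ is a finite subset of $\sigma^{\A}$; the equation $\sigma^{\A} = \vars_\sigma(\wit(\phi))^{\A}$ would then force $\sigma^{\A}$ to be finite. But $\A$ is a $\Tinfty$-interpretation, so by definition of $\Tinfty$ its domain $\sigma^{\A}$ is infinite -- a contradiction. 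Hence no such witness $\wit$ can exist, and $\Tinfty$ is not finitely witnessable.

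The final clause follows immediately by the contrapositive of \Cref{SFWimpliesFW}: strong finite witnessability implies finite witnessability, so failing to be finitely witnessable rules out strong finite witnessability as well. I do not expect any genuine obstacle here; the only point requiring a moment of care is the intermediate step confirming that $\wit(\phi)$ (and not merely its existential closure) is $\Tinfty$-satisfiable, which is needed to legitimately apply condition $(ii)$.
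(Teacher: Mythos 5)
Your proof is correct and follows essentially the same route as the paper's: both use the tautology $x=x$, pass from its $\Tinfty$-satisfiability through condition $(i)$ to the satisfiability of $\wit(x=x)$ itself, and then derive a contradiction from condition $(ii)$ because $\vars(\wit(x=x))^{\A}$ is finite while every model of $\Tinfty$ has an infinite domain. No issues.
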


\begin{proof}
Suppose $\wit$ is a witness. For any variable $x$, $x=x$ is satisfied by all $\Tinfty$-interpretations $\A$, and therefore $\wit(x=x)$ is satisfied by some 
$\A^{\prime}$, where we only change the value given by $\A$ to variables in 
$\vars(\wit(x=x))\setminus\vars(x=x)$. There must then exist a $\Tinfty$-interpretation $\B$ that satisfies $\wit(x=x)$, where $\sigma^{\B}=\vars(\wit(x=x))^{\B}$. 
Of course, this is impossible: $\vars(\wit(x=x))$ must necessarily be finite, and therefore so is $\vars(\wit(x=x))^{\B}$, while $\B$ is a model of 
$\Tinfty$ if and only if its domain is infinite.
\end{proof}

\begin{lemma}
$\Tinfty$ is convex.
\end{lemma}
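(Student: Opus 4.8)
The plan is to obtain convexity of $\Tinfty$ as an immediate corollary of the reciprocal relationship between stable infiniteness and convexity over empty signatures, exactly as was done for $\Tgeqn$. Since $\Tinfty$ is a theory over the empty one-sorted signature $\Sigma_{1}$, the heavy lifting has already been carried out in the general \Cref{SI empty theories are convex}, which states that any theory over an empty signature that is stably infinite w.r.t. the set of all of its sorts is convex w.r.t. any set of sorts.

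First I would recall that \Cref{lem:tinftysmooth} has just established that $\Tinfty$ is smooth and therefore stably infinite w.r.t. its only sort $\sigma$ (one may either cite the ``and thus stably-infinite'' clause of that lemma directly, or invoke \Cref{SMimpliesSI} to pass from smoothness to stable infiniteness). The one point that deserves a moment of care is matching the hypotheses of \Cref{SI empty theories are convex}: that theorem demands stable infiniteness w.r.t. the \emph{set of all} sorts, whereas \Cref{lem:tinftysmooth} phrases it w.r.t. the only sort. For a one-sorted signature these coincide, since $\{\sigma\}$ is precisely the set of all sorts of $\Sigma_{1}$, so there is no gap to bridge.

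Having lined up the hypotheses, the conclusion follows in a single step: $\Tinfty$ is an empty-signature theory that is stably infinite w.r.t. all its sorts, hence by \Cref{SI empty theories are convex} it is convex (w.r.t. any set of sorts, in particular its only sort). Concretely, the proof is just: ``Since $\Tinfty$ is stably infinite by \Cref{lem:tinftysmooth}, \Cref{SI empty theories are convex} guarantees that it is also convex.'' There is no genuine obstacle here; the entire argument has been packaged into the earlier theorems, and this lemma merely instantiates them for the specific theory $\Tinfty$.
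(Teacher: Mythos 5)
Your proof is correct and matches the paper's own argument exactly: the paper also derives convexity of $\Tinfty$ as an immediate corollary of \Cref{SI empty theories are convex} together with \Cref{lem:tinftysmooth}, which supplies stable infiniteness. Your remark that ``the only sort'' and ``the set of all sorts'' coincide for the one-sorted signature $\Sigma_{1}$ is a correct (if implicit in the paper) hypothesis check.
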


\begin{proof}
A corollary of \Cref{SI empty theories are convex,lem:tinftysmooth}.
\end{proof}


\subsection{$\Teven$}

\begin{lemma}\label{Teven is SI}
$\Teven$ is stably-infinite.
\end{lemma}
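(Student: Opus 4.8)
The plan is to derive stable infiniteness of $\Teven$ from the general reduction already established in \Cref{Infinite Model => SI}, which says that over a signature with no function or predicate symbols, a theory is stably infinite with respect to a set of sorts $S$ as soon as it possesses a single model whose domains are all infinite. Since $\Teven$ lives over the empty one-sorted signature $\Sigma_{1}$, this lemma applies directly, so the whole task reduces to producing one infinite $\Teven$-model.

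First I would recall the axiomatization $\Teven=\{\neg\psi_{=2k+1}:k\in\mathbb{N}\}$ and exhibit the infinite model: let $\A$ be any $\Sigma_{1}$-structure with $\sigma^{\A}$ countably infinite (say $\sigma^{\A}=\mathbb{N}$). For each $k\in\mathbb{N}$ the formula $\psi_{=2k+1}$ forces a domain of exactly $2k+1$ elements; an infinite domain has no finite cardinality at all, so $\A\vDash\neg\psi_{=2k+1}$ for every $k$, and hence $\A$ is a model of $\Teven$. Then I would invoke \Cref{Infinite Model => SI} with $S=\{\sigma\}$ and this model $\A$ to conclude that $\Teven$ is stably infinite w.r.t. its only sort $\sigma$.

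The only thing genuinely needing a check here is that the infinite structure really validates \emph{all} of the axioms simultaneously, which is immediate once one observes that infinitude excludes every odd (indeed every finite) cardinality; there is no substantive obstacle in this lemma. As a self-contained alternative, were one to prefer not to cite \Cref{Infinite Model => SI}, one could unfold its proof for this case: given a $\Teven$-satisfiable quantifier-free $\phi$ witnessed by some interpretation $\B$, transfer the (finitely many) values $\vars_{\sigma}(\phi)^{\B}$ injectively into $\sigma^{\A}=\mathbb{N}$ via a bijection onto a subset of the same cardinality, and interpret the variables of $\phi$ accordingly; because the signature is empty the truth value of $\phi$ depends only on which variables are identified, so this transferred interpretation over the infinite structure $\A$ still satisfies $\phi$ and is a $\Teven$-interpretation. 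Either route settles the claim, but the one-line appeal to \Cref{Infinite Model => SI} is the cleanest.
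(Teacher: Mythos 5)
Your proposal is correct and matches the paper's own proof, which likewise just observes that $\Teven$ is over a one-sorted empty signature with infinite models and then invokes \Cref{Infinite Model => SI}. Your extra verification that an infinite structure satisfies every axiom $\neg\psi_{=2k+1}$ is a harmless elaboration of the same argument.
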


\begin{proof}
Follows from \Cref{Infinite Model => SI}, given that $\Teven$ is defined on a signature with only one sort and has infinite models.
\end{proof}

\begin{lemma}
$\Teven$ is finitely witnessable, not strongly finitely witnessable and not smooth.
\end{lemma}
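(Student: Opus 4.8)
The plan is to treat the three claims separately, obtaining the failure of strong finite witnessability from the other two together with \Cref{SI+SFW=S}, rather than by a direct construction.

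For finite witnessability I would exhibit an explicit witness. Set $\wit(\phi)=\phi\wedge(y_{1}=y_{1})\wedge(y_{2}=y_{2})$, where $y_{1},y_{2}$ are two fresh variables of sort $\s$. Condition $(i)$ is immediate, since $\exists y_{1}.\exists y_{2}.\:\wit(\phi)$ is $\Teven$-equivalent to $\phi$. For condition $(ii)$, assume $\wit(\phi)$ is $\Teven$-satisfiable, say by $\A$, and let $c=|\vars_{\s}(\phi)^{\A}|$ be the number of distinct values the variables of $\phi$ receive. Because the signature is empty, any interpretation inducing the same equality pattern on $\vars_{\s}(\phi)$ still satisfies $\phi$; and the models of $\Teven$ are precisely the structures whose domain is infinite or of even, nonzero cardinality. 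The idea is to build a model $\B$ whose whole domain is the image of $\vars_{\s}(\wit(\phi))$, with its cardinality forced to be even by valuing $y_{1},y_{2}$ suitably: if $c$ is even and positive, keep $c$ elements and collapse both $y_{i}$ onto existing values; if $c$ is odd, use $c+1$ elements, sending $y_{1}$ to a fresh element and $y_{2}$ onto an existing one; and if $c=0$ (i.e. $\phi$ has no variables), use the two elements $y_{1}^{\B}\neq y_{2}^{\B}$. In each case $\B$ has even domain of size at least $2$, hence lies in $\Teven$, satisfies $\wit(\phi)$, and meets $\s^{\B}=\vars_{\s}(\wit(\phi))^{\B}$.

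For the failure of smoothness I would take the valid formula $\phi=(x=x)$ and a $\Teven$-interpretation $\A$ with $|\s^{\A}|=2$. With the cardinal $\kappa(\s)=3\geq 2=|\s^{\A}|$, smoothness would demand a $\Teven$-interpretation of domain size exactly $3$ satisfying $\phi$; but $3$ is odd, so no such model exists, and $\Teven$ is not smooth.

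Finally, strong finite witnessability fails by a short indirect argument. By \Cref{Teven is SI}, $\Teven$ is stably infinite w.r.t. its only sort, and we have just shown it is not smooth. Were $\Teven$ strongly finitely witnessable, \Cref{SI+SFW=S} would force it to be smooth, a contradiction; hence it is not strongly finitely witnessable. The hard part is the finite-witnessability construction: one must simultaneously collapse the domain onto the variables and control its parity, which is exactly why the witness needs two auxiliary variables and a case split on $c$ (the degenerate case $c=0$ being the reason one variable does not suffice); the other two claims are then essentially immediate given the earlier results. As a backup, one could instead prove non-strong-witnessability directly by choosing an arrangement $\delta_{V}$ on the variables of $\wit(\phi)$, realized in an infinite model supplied by stable infiniteness, that pins them to an odd number of equivalence classes, so that $(ii')$ would require a $\Teven$-model of odd finite size.
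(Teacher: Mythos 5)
Your proposal is correct, but it takes a genuinely different route from the paper: for this lemma the paper gives no direct argument at all, simply deferring all three claims to Section 3.4 of the cited work of Sheng et al., where $\Teven$ was originally introduced and analyzed. You instead give a self-contained proof. Your witness $\wit(\phi)=\phi\wedge(y_{1}=y_{1})\wedge(y_{2}=y_{2})$ with the parity case split on $c=|\vars_{\s}(\phi)^{\A}|$ is sound: over the empty signature the truth of a quantifier-free formula depends only on the equality pattern of its variables, so collapsing the domain to $\vars_{\s}(\wit(\phi))^{\B}$ while padding to even size $\geq 2$ with the two auxiliary variables does produce a $\Teven$-model, and two auxiliary variables suffice in every case (note that over this signature every formula in fact has at least one variable, so your $c=0$ branch is vacuous but harmless). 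Your non-smoothness counterexample ($\kappa(\s)=3$ against a two-element model) is immediate. The most interesting divergence is your treatment of non-strong-finite-witnessability: rather than the direct arrangement-parity argument (which you correctly sketch as a backup, and which is the style of argument the cited reference and this paper's appendix use for $\Tmn$ and $\Toddtwo$), you derive it from \Cref{Teven is SI} and \Cref{SI+SFW=S}, since a stably infinite, strongly finitely witnessable one-sorted theory would have to be smooth. This is a clean and legitimate shortcut with no circularity, since \Cref{SI+SFW=S} is established independently of any facts about $\Teven$; what it buys is brevity, at the cost of resting on a nontrivial theorem where the direct argument would be elementary.
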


\begin{proof}
See Section $3.4$ of \cite{SZRRBT-21}
\end{proof}

\begin{lemma}
$\Teven$ is convex.
\end{lemma}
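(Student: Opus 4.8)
The plan is to obtain convexity as an immediate corollary of stable infiniteness, exactly as was done for $\Tinfty$ and $\Tgeqn$ earlier. The key observation is that $\Teven$ is a theory over the empty one-sorted signature $\Sigma_{1}$, whose single sort $\sigma$ is precisely the set of all of its sorts. Thus the hypotheses of \Cref{SI empty theories are convex} are within reach, and nothing beyond invoking two earlier results is needed.

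First I would note that by \Cref{Teven is SI}, the theory $\Teven$ is stably infinite w.r.t. its only sort $\sigma$; since $\{\sigma\}$ is the set of all sorts of $\Sigma_{1}$, this is exactly stable infiniteness w.r.t. the set of all of its sorts, as required. Then I would apply \Cref{SI empty theories are convex}, which states that any theory over an empty signature that is stably infinite w.r.t. the set of all of its sorts is convex w.r.t. any set of sorts. Specializing to $\Teven$, this yields convexity w.r.t. its only sort, which is the desired conclusion.

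There is no genuine obstacle here: all the substantive work has already been carried out in establishing \Cref{Teven is SI} (which itself follows from \Cref{Infinite Model => SI}, using that $\Teven$ has infinite models over an empty signature). The only point to double-check is that the two hypotheses of \Cref{SI empty theories are convex} are met — emptiness of $\Sigma_{1}$ (immediate) and stable infiniteness w.r.t. \emph{all} sorts (supplied by \Cref{Teven is SI}) — after which the result follows formally. Concretely, the proof would read: ``A corollary of \Cref{SI empty theories are convex,Teven is SI}.''
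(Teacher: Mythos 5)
Your proposal is correct and matches the paper's own proof exactly: the paper likewise derives convexity of $\Teven$ as an immediate corollary of its stable infiniteness (established via the infinite-models argument) together with the theorem that stably infinite theories over empty signatures are convex. Nothing further is needed.
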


\begin{proof}
Since $\Teven$ is stably-infinite, from \Cref{Teven is SI}, \Cref{SI empty theories are convex} guarantees $\Teven$ is convex.
\end{proof}


\subsection{$\Tninfty$}

\begin{lemma}\label{Tninfty is SI}
$\Tninfty$ is stably-infinite.
\end{lemma}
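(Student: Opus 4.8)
The plan is to obtain this as a direct corollary of \Cref{Infinite Model => SI}, which states that any theory over a signature without function or predicate symbols that has a model whose domains are all infinite is stably infinite with respect to any set of sorts. Since $\Tninfty$ is a theory over the empty one-sorted signature $\Sigma_{1}$, it suffices to exhibit a single $\Tninfty$-model whose domain is infinite, and then invoke the lemma with $S=\{\s\}$.

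First I would check that any structure $\A$ with $\s^{\A}$ infinite is indeed a $\Tninfty$-model. Each axiom of $\Tninfty$ has the form $\psi_{=n}\vee\psi_{\geq k}$ for some $k\in\mathbb{N}$, and the disjunct $\psi_{\geq k}$ holds in $\A$ for every $k$ precisely because $\s^{\A}$ has at least $k$ elements; hence $\A$ satisfies the whole axiomatization. This establishes that $\Tninfty$ possesses an infinite model. Applying \Cref{Infinite Model => SI} then yields that $\Tninfty$ is stably infinite with respect to its only sort.

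I do not anticipate any genuine obstacle here: the signature is empty, so by \Cref{Infinite Model => SI} the existence of one infinite model is enough to transport satisfiability of any quantifier-free formula into the infinite setting, and that existence is immediate from the shape of the axioms. This is exactly the same pattern already used for $\Teven$ in \Cref{Teven is SI}, and the argument for $\Tninfty$ is identical in structure.
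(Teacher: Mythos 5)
Your proposal is correct and matches the paper's own proof, which likewise derives the result directly from \Cref{Infinite Model => SI} after observing that $\Tninfty$ is a one-sorted theory over the empty signature with infinite models. Your added verification that every infinite structure satisfies each axiom $\psi_{=n}\vee\psi_{\geq k}$ is a harmless elaboration of the same argument.
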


\begin{proof}
Given \Cref{Infinite Model => SI}, and the facts that $\Tninfty$ is a theory over the empty signature with only one sort and infinite models, the result follows.
\end{proof}

\begin{lemma}
$\Tninfty$ is not smooth.
\end{lemma}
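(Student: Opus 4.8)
The plan is to exploit the fact that the admissible domain cardinalities of $\Tninfty$ have a \emph{gap}. Reading off the axiomatization, a structure is a model of $\Tninfty$ exactly when its domain has cardinality $n$ or is infinite: for any finite $m\neq n$, the instance $\psi_{=n}\vee\psi_{\geq m+1}$ of the axiom scheme fails in a structure with exactly $m$ elements, so such structures are excluded, while structures of size $n$ satisfy $\psi_{=n}$ and infinite ones satisfy every $\psi_{\geq k}$. In particular, no model of $\Tninfty$ has a domain of cardinality $n+1$. Smoothness, by contrast, would force the existence of models of \emph{every} cardinal above that of a given model, so I would look for a witness to the failure of smoothness precisely at the forbidden cardinality $n+1$.

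Concretely, first I would fix the trivially $\Tninfty$-satisfiable quantifier-free formula $x=x$ (for $x$ of sort $\sigma$) together with a $\Tninfty$-interpretation $\A$ whose domain has exactly $n$ elements; such an interpretation exists, since any $n$-element set with the identity interpretation of $=$ is a model of $\Tninfty$. Then I would take the cardinal function $\kappa$ with $\kappa(\sigma)=n+1$, which satisfies the smoothness hypothesis $\kappa(\sigma)=n+1\geq n=|\sigma^{\A}|$. If $\Tninfty$ were smooth, there would be a $\Tninfty$-interpretation $\B$ satisfying $x=x$ with $|\sigma^{\B}|=n+1$; but $n+1$ is a finite cardinal different from $n$, so by the structural description above no model of $\Tninfty$ can have a domain of this cardinality, a contradiction.

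There is essentially no hard step here: the entire argument rests on correctly reading off from the axiomatization which cardinalities occur among the models of $\Tninfty$, and the only point requiring a moment's care is checking that $n+1$ genuinely lies in the cardinality gap, which it does, being a finite cardinal distinct from $n$ for every positive integer $n$. Thus the counterexample given by the triple $(x=x,\,\A,\,\kappa)$ directly refutes smoothness, with no need to invoke witnessability or any of the earlier structural lemmas.

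\begin{proof}
Every model of $\Tninfty$ has domain of cardinality exactly $n$ or infinite: a structure with $n$ elements satisfies $\psi_{=n}$, an infinite one satisfies $\psi_{\geq k}$ for all $k$, and a structure with a finite number $m\neq n$ of elements violates the axiom $\psi_{=n}\vee\psi_{\geq m+1}$. In particular, $\Tninfty$ has no model whose domain has cardinality $n+1$. Now let $\A$ be a $\Tninfty$-interpretation with $|\sigma^{\A}|=n$, which exists, and let $\kappa(\sigma)=n+1$, so that $\kappa(\sigma)\geq|\sigma^{\A}|$. The formula $x=x$ is satisfied by $\A$, but there is no $\Tninfty$-interpretation $\B$ with $|\sigma^{\B}|=n+1$ satisfying $x=x$, since no model of $\Tninfty$ has cardinality $n+1$. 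Hence $\Tninfty$ is not smooth.
\end{proof}
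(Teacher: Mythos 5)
Your proof is correct and follows essentially the same route as the paper, which likewise observes that $\Tninfty$ has a model of cardinality $n$ but no model of any larger finite cardinality, so the smoothness requirement fails at $\kappa(\sigma)=n+1$. Your version just spells out the cardinality gap and the explicit counterexample $(x=x,\A,\kappa)$ in more detail than the paper's one-line argument.
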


\begin{proof}
Notice that $\Tninfty$ has models with $n$ elements in their domains, but no models with $m>n$ elements.
\end{proof}

\begin{lemma}\label{Tninfty is not FW}.
$\Tninfty$ is not finitely witnessable, and thus not strongly finitely witnessable.
\end{lemma}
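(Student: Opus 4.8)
The plan is to follow the template of \Cref{Tinfty is not FW}, adapting it to the one point where $\Tninfty$ differs from $\Tinfty$: namely, $\Tninfty$ admits a finite model (one of cardinality exactly $n$), so the trivially satisfiable formula $x = x$ no longer forces an infinite domain. First I would fix a quantifier-free formula that instead forces strictly more than $n$ elements: taking fresh variables $x_{1}, \ldots, x_{n+1}$, set
\[
\phi = \bigwedge_{1 \leq i < j \leq n+1} \neg(x_{i} = x_{j}).
\]
This $\phi$ is $\Tninfty$-satisfiable, since any infinite $\Tninfty$-model can interpret $x_{1}, \ldots, x_{n+1}$ as distinct elements. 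Crucially, any $\Tninfty$-interpretation satisfying $\phi$ has at least $n+1$ elements in its domain, and since every model of $\Tninfty$ has either exactly $n$ or infinitely many elements, such an interpretation must in fact be infinite.

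Next, assuming toward a contradiction that $\wit$ is a witness for $\Tninfty$, I would use property $(i)$ to conclude that $\wit(\phi)$ is $\Tninfty$-satisfiable (since $\phi$ and $\exists\,\overarrow{w}.\:\wit(\phi)$ are $\Tninfty$-equivalent, with $\overarrow{w} = \vars(\wit(\phi)) \setminus \vars(\phi)$). Property $(ii)$ then supplies a $\Tninfty$-interpretation $\B$ satisfying $\wit(\phi)$ with $\sigma^{\B} = \vars(\wit(\phi))^{\B}$. Applying property $(i)$ once more, $\B$ satisfies $\exists\,\overarrow{w}.\:\wit(\phi)$ and hence $\phi$, so $\sigma^{\B}$ contains the at least $n+1$ distinct elements $x_{1}^{\B}, \ldots, x_{n+1}^{\B}$ and is therefore infinite. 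But $\vars(\wit(\phi))$ is a finite set of variables, so $\vars(\wit(\phi))^{\B} = \sigma^{\B}$ would be finite, which is a contradiction. Strong finite witnessability then fails as well by \Cref{SFWimpliesFW}.

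The argument is essentially routine once the right $\phi$ is chosen. The only point requiring care—and the main (modest) obstacle—is recognizing that one cannot use a cardinality sentence such as $\psi_{\geq n+1}$ directly, since a witness acts on quantifier-free formulas. This is precisely why I would work with the quantifier-free matrix $\bigwedge_{1 \leq i < j \leq n+1} \neg(x_{i} = x_{j})$ over free variables, and then invoke the $n$-or-infinite dichotomy of $\Tninfty$-models to upgrade ``more than $n$ elements'' to ``infinite,'' thereby reproducing the finiteness contradiction that drove the proof for $\Tinfty$.
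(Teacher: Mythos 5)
Your proposal is correct and matches the paper's own proof essentially verbatim: the paper uses the same quantifier-free formula $\bigwedge_{1\leq i<j\leq n+1}\neg(x_{i}=x_{j})$, obtains a $\Tninfty$-interpretation whose domain equals the (finite) image of $\vars(\wit(\phi))$, and derives the same contradiction from the fact that any finite $\Tninfty$-model has exactly $n$ elements while a model of $\phi$ needs at least $n+1$. The only cosmetic difference is where you place the contradiction (finite versus infinite domain), which is immaterial.
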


\begin{proof}
Suppose we have a witness $\wit$, and we shall use the quantifier-free formula
\[\phi=\bigwedge_{1\leq i<j\leq n+1}\neg(x_{i}=x_{j});\]
since $\phi$ is satisfied by some infinite $\Tninfty$-interpretations, so is $\wit(\phi)$. There must then exist a $\Tninfty$-interpretation $\A$ that satisfies $\wit(\phi)$ ( and so $\phi$) and $\s^{\A}=\vars(\wit(\phi))^{\A}$. This is, of course, absurd: if $\A$ satisfies $\phi$, it has at least $n+1$ elements in its domain, while any finite $\Tninfty$-interpretation must have precisely $n$ elements in its domain (recall that $\A$ is finite).
\end{proof}

\begin{lemma}
$\Tninfty$ is convex.
\end{lemma}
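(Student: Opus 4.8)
The plan is to recognize that this lemma falls into the exact same pattern as the convexity proofs already given for $\Tgeqn$, $\Tinfty$, and $\Teven$: invoke the general reduction \Cref{SI empty theories are convex}, which states that any theory over an empty signature that is stably infinite w.r.t.\ all of its sorts is convex. Since $\Tninfty$ is by definition a theory over $\Sigma_{1}$ --- the empty one-sorted signature --- the only additional hypothesis we need is stable infiniteness w.r.t.\ its single sort. That was established in the immediately preceding \Cref{Tninfty is SI}.

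Concretely, the argument is a two-line corollary. First I would note that $\Tninfty$ has an empty signature and a single sort, so being stably infinite w.r.t.\ that sort is the same as being stably infinite w.r.t.\ the set of all its sorts. Then I would cite \Cref{Tninfty is SI} to supply this stable infiniteness, and apply \Cref{SI empty theories are convex} to conclude that $\Tninfty$ is convex (w.r.t.\ any set of sorts, in particular its only one).

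There is essentially no obstacle here, since all of the real work was done in proving \Cref{SI empty theories are convex} (a compactness-and-arrangement argument over the empty signature) and \Cref{Tninfty is SI} (which follows from \Cref{Infinite Model => SI} because $\Tninfty$ has infinite models). The only thing worth double-checking is that the empty-signature hypothesis of \Cref{SI empty theories are convex} genuinely applies: $\Tninfty$ is listed among the $\Sigma_{1}$-theories in \Cref{tab-theories-sigma-0}, and $\Sigma_{1}$ is the empty one-sorted signature, so the hypothesis is met. Thus the proof reads simply: \emph{Since $\Tninfty$ is stably-infinite, from \Cref{Tninfty is SI}, \Cref{SI empty theories are convex} guarantees $\Tninfty$ is convex.}
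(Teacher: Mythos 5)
Your proposal matches the paper's proof exactly: both cite \Cref{Tninfty is SI} for stable infiniteness and then apply \Cref{SI empty theories are convex} to conclude convexity. The argument is correct and no further comment is needed.
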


\begin{proof}
Combine \Cref{SI empty theories are convex} with the fact that $\Tninfty$ is stably-infinite, present in \Cref{Tninfty is SI}.
\end{proof}


\subsection{$\Tone$}

\begin{lemma}
$\Tone$ is not stably-infinite, and thus not smooth.
\end{lemma}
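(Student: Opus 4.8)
The plan is to unwind the definition of $\Tone=\T_{\leq 1}$ and exhibit a single satisfiable formula witnessing the failure of stable infiniteness. Recall from \Cref{tab-theories-sigma-0} that $\Tone$ is the $\Sigma_{1}$-theory axiomatized by $\{\psi_{\leq 1}\}$; hence every $\Tone$-model $\A$ satisfies $\psi_{\leq 1}$, which forces $|\sigma^{\A}|\leq 1$. Since domains are nonempty, in fact $|\sigma^{\A}|=1$ for every $\Tone$-interpretation $\A$, so $\Tone$ has \emph{no} interpretation with an infinite domain. This is the single structural fact the whole argument rests on.

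First I would confirm that $\Tone$ is consistent, so that the stable-infiniteness condition is not vacuously satisfied: the one-element structure is a model of $\psi_{\leq 1}$, so, for instance, the quantifier-free formula $x=x$ is $\Tone$-satisfiable. Next, to contradict stable infiniteness w.r.t. $\{\sigma\}$, I would invoke the definition directly: it requires that for every $\Tone$-satisfiable quantifier-free $\phi$ there exist a $\Tone$-interpretation satisfying $\phi$ with $\sigma^{\A}$ infinite. Taking $\phi$ to be $x=x$, which is $\Tone$-satisfiable, no such interpretation can exist, because every $\Tone$-interpretation has a singleton domain. Hence $\Tone$ is not stably infinite w.r.t. its only sort. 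The ``and thus not smooth'' clause then follows immediately from \Cref{SMimpliesSI}, which states that smoothness implies stable infiniteness: by contraposition, a theory that is not stably infinite cannot be smooth.

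There is no genuine obstacle here; the only point requiring a moment's care—rather than any real difficulty—is ensuring the existence of at least one $\Tone$-satisfiable quantifier-free formula (so that the failure of stable infiniteness is meaningful and not merely vacuous), which reduces to the consistency of $\Tone$ and is settled by the trivial model. Everything else is an immediate unfolding of definitions together with \Cref{SMimpliesSI}.
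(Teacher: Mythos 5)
Your proof is correct and follows essentially the same route as the paper, whose entire argument is the one-line observation that $\Tone$ has finite models but no infinite ones; you have simply spelled out the details (the non-vacuity witness $x=x$ and the appeal to \Cref{SMimpliesSI} for the smoothness clause). Nothing further is needed.
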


\begin{proof}
Obvious, given it has finite models, but no infinite ones.
\end{proof}

\begin{lemma}
$\Tone$ is strongly finitely witnessable, and thus finitely witnessable.
\end{lemma}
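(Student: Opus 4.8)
The plan is to produce an explicit strong witness and then invoke \Cref{SFWimpliesFW} for finite witnessability. First I would record the structural fact that pins the theory down: $\Tone = \T_{\leq 1}$ is axiomatized by $\{\psi_{\leq 1}\}$, and since every domain is nonempty, every $\Tone$-interpretation $\A$ has $|\s^{\A}| = 1$; that is, all models are trivial. This single observation drives the whole argument, because on a one-element domain every nonempty set of variables has image equal to the entire domain.

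Next I would propose the candidate $\wit(\phi) = \phi \wedge (w = w)$, where $w$ is a fresh variable of the unique sort $\s$ with $w \notin \vars(\phi)$. This map is trivially computable, so it remains to check the two defining clauses of a strong witness. For clause $(i)$ I have $\overarrow{w} = \{w\}$ and $\exists\, w.\: \wit(\phi) = \phi \wedge \exists\, w.\:(w = w)$ (pulling $\phi$ out of the scope since $w$ does not occur in it); as $\exists\, w.\:(w = w)$ is valid over any nonempty domain, $\phi$ and $\exists\, \overarrow{w}.\: \wit(\phi)$ are $\Tone$-equivalent.

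For clause $(ii')$, I would take an arbitrary finite set $V$ and arrangement $\delta_V$ such that $\wit(\phi) \wedge \delta_V$ is $\Tone$-satisfiable, and let $\A$ be any $\Tone$-interpretation satisfying it. Since $|\s^{\A}| = 1$ and $w$ occurs in $\wit(\phi)$, the set $\vars_{\s}(\wit(\phi) \wedge \delta_V)$ is nonempty, so its image $\vars_{\s}(\wit(\phi) \wedge \delta_V)^{\A}$ is a nonempty subset of the singleton $\s^{\A}$ and therefore equals $\s^{\A}$. Thus $\A$ itself is the required interpretation, and $\wit$ is a strong witness.

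The only delicate point --- and the reason I conjoin $w = w$ rather than simply taking $\wit(\phi) = \phi$ as in the trivial-model argument inside the proof of \Cref{OS+ES+-SI+-SFW=>-C} --- is the degenerate case of a quantifier-free formula carrying no variables of sort $\s$: there $\vars_{\s}(\phi \wedge \delta_V)^{\A}$ would be empty while $\s^{\A}$ is a singleton, breaking $(ii')$. Adding the fresh variable guarantees $\vars_{\s}(\wit(\phi) \wedge \delta_V)$ is always nonempty, which is exactly the condition needed to fill the one-element domain. Finite witnessability of $\Tone$ then follows at once from \Cref{SFWimpliesFW}.
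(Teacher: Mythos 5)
Your proof is correct and follows essentially the same route as the paper, which simply takes $\wit(\phi)=\phi$ and observes that every $\Tone$-interpretation is the trivial (singleton-domain) model, so $\vars(\phi\wedge\delta_V)^{\A}=\s^{\A}$ automatically. The extra conjunct $w=w$ you add is harmless but unnecessary: over the empty one-sorted signature every atomic formula is an equality of variables of sort $\s$, so a quantifier-free formula cannot be variable-free (a fact the paper itself invokes in the $\Tneqodd$ argument), and the degenerate case you guard against never arises.
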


\begin{proof}
Trivially, $\wit(\phi)=\phi$ is a strong witness, given that if $\phi\wedge\delta_{V}$ is satisfied by the $\Tone$-interpretation $\A$, $\A$ is the trivial model and we already have $\vars(\phi\wedge\delta_{V})^{\A}=\s^{\A}$.
\end{proof}

\begin{lemma}
$\Tone$ is convex.
\end{lemma}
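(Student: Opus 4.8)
The plan is to exploit the fact that $\Tone=\Tleqn$ with $n=1$ is axiomatized by $\{\psi_{\leq 1}\}$, so every $\Tone$-interpretation has a domain of cardinality at most $1$; since domains are nonempty, each such interpretation in fact has exactly one element in its sole sort $\s$. The entire argument rests on this single structural observation. Note that, unlike the convexity proofs for $\Tgeqn$, $\Tinfty$, $\Teven$, and $\Tninfty$, this one cannot be routed through \Cref{SI empty theories are convex}: that theorem requires stable infiniteness, whereas $\Tone$ has no infinite models and is therefore not stably infinite.

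First I would observe that in any $\Tone$-interpretation $\A$, every variable of sort $\s$ is interpreted as the unique element of $\s^{\A}$; hence for any two variables $u$ and $v$ we have $u^{\A}=v^{\A}$, so $\A$ satisfies $u=v$. Since this holds for every $\Tone$-interpretation, we get $\tdash u=v$, and consequently $\tdash \phi\rightarrow u=v$ for any conjunction of literals $\phi$ whatsoever.

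Then, to verify convexity, I would take a conjunction of literals $\phi$ and a finite set of variables $\{u_{1},v_{1},\ldots,u_{n},v_{n}\}$ (with $n\geq 1$) satisfying $\tdash\phi\rightarrow\bigvee_{i=1}^{n}u_{i}=v_{i}$. By the previous step, $\tdash\phi\rightarrow u_{1}=v_{1}$ already holds (indeed $\tdash\phi\rightarrow u_{i}=v_{i}$ for every $i\in[1,n]$), which is precisely what the definition of convexity requires.

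I do not anticipate any genuine obstacle here: the one-element-domain observation does all the work, and convexity becomes vacuous because every equality between variables is $\T$-valid. The only points deserving a word of care are the convention that domains are nonempty (so that "at most one element" collapses to "exactly one element") and the implicit assumption $n\geq 1$ in the definition of convexity, under which the selected disjunct $u_{1}=v_{1}$ exists.
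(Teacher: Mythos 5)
Your proof is correct and matches the paper's argument exactly: both observe that every $\Tone$-interpretation has a one-element domain, so $\dash_{\Tone} u=v$ holds for all variables $u,v$, making convexity hold vacuously. The extra remarks about nonempty domains and $n\geq 1$ are fine but not needed beyond what the paper already assumes.
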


\begin{proof}
For any pair of variables $x$ and $y$, trivially one finds that $\dash_{\Tone}x=y$. So, whenever $\dash_{\Tone}\phi\rightarrow\bigvee_{i=1}^{n}x_{i}=y_{i}$, for $\phi$ a conjunction of literals, we have $\dash_{\Tone}x_{i}=y_{i}$, for any $1\leq i\leq n$.
\end{proof}


\subsection{$\Tleqn$}

\begin{lemma}
$\Tleqn$ is not stably-infinite, and thus not smooth.
\end{lemma}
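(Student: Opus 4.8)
The plan is to exploit the fact that $\Tleqn$, being axiomatized by the single cardinality constraint $\psi_{\leq n}$, admits \emph{only} finite models: every $\Tleqn$-structure $\A$ satisfies $\psi_{\leq n}$ and hence has $|\sigma^{\A}| \leq n$. Stable infiniteness will then fail almost by definition, since establishing it would require, for some $\Tleqn$-satisfiable quantifier-free formula, a witnessing $\Tleqn$-interpretation with infinite domain — and no such interpretation exists in $\Tleqn$ at all. This is the same phenomenon exploited in the preceding $\Tone$ case.

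Concretely, I would take the trivially satisfiable quantifier-free formula $\phi = (x=x)$. It is $\Tleqn$-satisfiable, since any one-element $\Tleqn$-structure satisfies it, yet every $\Tleqn$-interpretation $\A$ satisfying $\phi$ (indeed every $\Tleqn$-interpretation whatsoever) has $|\sigma^{\A}| \leq n < \omega$. Hence there is no $\Tleqn$-interpretation satisfying $\phi$ whose domain is infinite, which directly contradicts the definition of stable infiniteness w.r.t. the single sort $\sigma$. The one point worth stating explicitly is that $\Tleqn$ really does have a satisfiable quantifier-free formula, so that the failure of stable infiniteness is genuine and not merely vacuous; the choice $\phi = (x=x)$ secures this.

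Finally, the conclusion \emph{not smooth} is immediate from the contrapositive of \Cref{SMimpliesSI}: since smoothness w.r.t.\ $S$ implies stable infiniteness w.r.t.\ $S$, a theory that is not stably infinite cannot be smooth. I expect no real obstacle in this argument — it is essentially a one-line observation analogous to the proof given for $\Tone$, namely that $\Tleqn$ has finite models but no infinite ones.
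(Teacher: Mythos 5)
Your proposal is correct and takes exactly the same route as the paper, which simply observes that $\Tleqn$ has finite models but no infinite ones; your choice of $\phi=(x=x)$ just makes explicit the non-vacuity that the paper leaves implicit. The deduction of non-smoothness via the contrapositive of \Cref{SMimpliesSI} matches the paper as well.
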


\begin{proof}
Fairly obvious, since it has finite models, but no infinite ones.
\end{proof}

\begin{lemma}
$\Tleqn$ is strongly finitely witnessable, and thus finitely witnessable.
\end{lemma}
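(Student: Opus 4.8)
The plan is to establish strong finite witnessability directly and then invoke \Cref{SFWimpliesFW} for finite witnessability. My candidate strong witness is simply the identity, $\wit(\phi)=\phi$, which is trivially computable. Condition $(i)$ is immediate: since $\wit(\phi)=\phi$, the set $\overarrow{w}=\vars(\wit(\phi))\setminus\vars(\phi)$ is empty, so $\exists\,\overarrow{w}.\:\wit(\phi)$ is just $\phi$, which is $\Tleqn$-equivalent to $\phi$. All the real work is in the strong condition $(ii')$.

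For $(ii')$, I would fix a quantifier-free formula $\phi$, a finite set of variables $V$, and an arrangement $\delta_V$, and assume $\phi\wedge\delta_V$ is satisfied by some $\Tleqn$-interpretation $\B$. Writing $W=\vars(\phi\wedge\delta_V)$ (all of the single sort $\sigma$), the key step is to \emph{restrict $\B$ to the values taken by its variables}: set $D=W^{\B}\subseteq\sigma^{\B}$ and let $\A$ be the interpretation with domain $\sigma^{\A}=D$, keeping $x^{\A}=x^{\B}$ for each $x\in W$ and assigning the remaining variables arbitrary elements of $D$. Because $\Sigma_{1}$ is empty it has no function symbols, so every nonempty subset of a domain is automatically closed and underlies a legitimate substructure; this is exactly what makes the restriction possible.

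Three observations then close the argument. First, $|\sigma^{\A}|=|D|\le|\sigma^{\B}|\le n$, so $\A\vDash\psi_{\leq n}$ and $\A$ is a genuine $\Tleqn$-interpretation. Second, over an empty signature the truth of the quantifier-free formula $\phi\wedge\delta_V$ depends only on which equalities and disequalities among the variables of $W$ hold, and these are unchanged in passing from $\B$ to $\A$; hence $\A\vDash\phi\wedge\delta_V$. Third, by construction $\sigma^{\A}=D=W^{\A}=\vars_{\sigma}(\phi\wedge\delta_V)^{\A}$, which is precisely the domain-covering property demanded by $(ii')$.

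I do not anticipate a genuine obstacle: this is the empty-signature analogue of the trivial-model argument already used for $\Tone$, with ``collapse to the single trivial element'' replaced by ``collapse to the finitely many variable values''. The only point meriting a remark is the degenerate case $W=\emptyset$; since the empty signature admits no variable-free atomic formulas, such a $\phi\wedge\delta_V$ does not arise, so $D$ is always nonempty and $\A$ is well defined.
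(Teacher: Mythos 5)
Your proof is correct and follows essentially the same route as the paper's: the identity witness $\wit(\phi)=\phi$, with condition $(ii')$ verified by restricting the satisfying interpretation's domain to the (finite, nonempty) set of values of the variables in $\phi\wedge\delta_{V}$, which stays within the cardinality bound $n$ and preserves the truth of all atomic subformulas since the signature is empty. The remark on the degenerate case $W=\emptyset$ is a small addition not present in the paper, but the argument is otherwise identical.
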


\begin{proof}
Consider the function $\wit$ from quantifier-free formulas into themselves such that $\wit(\phi)=\phi$, which is obviously computable. Since $\overarrow{x}=\vars(\wit(\phi))\setminus\vars(\phi)$ is empty, trivially $\phi$ and $\exists\, \overarrow{x}.\:\wit(\phi)=\wit(\phi)$ are $\Tleqn$-equivalent. Now, given a set of variables $V$ and an arrangement $\delta_{V}$ on $V$, suppose that $\A$ is a $\Tleqn$-interpretation that satisfies $\wit(\phi)\wedge\delta_{V}$. Let $W=\vars(\wit(\phi)\wedge\delta_{V})$ and take the $\Tleqn$-interpretation $\B$ with domain $W^{\A}$, and $x^{\B}=x^{\A}$ for every $x\in W$ (and arbitrary otherwise). Of course $\B$ is indeed a $\Tleqn$-interpretation, since $|\s^{\A}|\leq n$, and $\s^{\B}\subseteq \s^{\A}$. Furthermore, since all atomic subformulas of $\wit(\phi)\wedge\delta_{V}$, are necessarily equalities $x=y$ with both $x$ and $y$ in $W$, and $x^{\B}=x^{\A}$ and $y^{\B}=y^{\A}$, $\wit(\phi)\wedge\delta_{V}$ receives the same truth value in $\A$ and $\B$.
\end{proof}

\begin{lemma}
If $n>1$, $\Tleqn$ is not convex.
\end{lemma}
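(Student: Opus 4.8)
The plan is to exhibit an explicit witness to non-convexity, exactly mirroring the pigeonhole argument used in the proof of \Cref{OS+ES+-SI+-SFW=>-C}. Since $\Tleqn$ consists of all $\Sigma_{1}$-structures whose domain has at most $n$ elements, any $n+1$ variables must collapse in every model. I would therefore take $n+1$ fresh variables $x_{1}, \ldots, x_{n+1}$ of the single sort $\s$, let $\phi$ be a tautological cube (for instance $x_{1}=x_{1}$), and enumerate the pairs $\{(x_{i},x_{j}) : 1\leq i<j\leq n+1\}$ as $(u_{1},v_{1}), \ldots, (u_{k},v_{k})$ with $k=\binom{n+1}{2}$, so that $\bigvee_{\ell=1}^{k}u_{\ell}=v_{\ell}$ is precisely $\bigvee_{1\leq i<j\leq n+1}x_{i}=x_{j}$.

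The first step is to check the valid disjunction: by the pigeonhole principle, every $\Tleqn$-interpretation assigns the $n+1$ variables at most $n$ distinct values, so two of them must coincide, giving
\[\dash_{\Tleqn}\phi\rightarrow\bigvee_{\ell=1}^{k}u_{\ell}=v_{\ell}.\]
The second step is to falsify each individual disjunct. For a fixed pair $(i,j)$, it suffices to produce a $\Tleqn$-interpretation satisfying $\phi$ (trivial, as $\phi$ is a tautology) but with $x_{i}^{\A}\neq x_{j}^{\A}$. Here is where the hypothesis $n>1$ enters: the two-element structure $\{a,b\}$ with $a\neq b$ satisfies $\psi_{\leq n}$ since $2\leq n$, so it is a $\Tleqn$-model, and interpreting $x_{i}$ as $a$ and $x_{j}$ as $b$ witnesses $\not\dash_{\Tleqn}\phi\rightarrow u_{\ell}=v_{\ell}$ for every $\ell$. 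Combining the two steps contradicts the definition of convexity, so $\Tleqn$ is not convex.

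There is essentially no hard part to this argument; the only points demanding mild care are purely bookkeeping. First, one must package the $\binom{n+1}{2}$ pairs into the $(u_{i},v_{i})$ format required by the definition of convexity, so that the pigeonhole disjunction literally matches $\bigvee_{i=1}^{k}u_{i}=v_{i}$. Second, one must keep track of exactly where $n>1$ is used, namely in the existence of a model of cardinality at least $2$; for $n=1$ the theory collapses to the trivial model and is convex (as the companion lemma for $\Tone$ shows), so the hypothesis cannot be dropped. Everything else follows immediately, and the proof is a direct specialization of the reasoning already developed for \Cref{OS+ES+-SI+-SFW=>-C}.
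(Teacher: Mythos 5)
Your proof is correct and follows exactly the same route as the paper's: a tautological cube, the pigeonhole disjunction over $n+1$ variables, and a two-element model (available precisely because $n>1$) refuting each individual disjunct. The extra bookkeeping you mention about enumerating the pairs is fine and implicit in the paper's version.
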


\begin{proof}
Given variables $x_{1}$ through $x_{n+1}$, for any conjunction of literals $\phi$ which is a tautology (such as $x=x$), 
\[\dash_{\Tleqn}\phi\rightarrow \bigvee_{1\leq i<j\leq n+1}x_{i}=x_{j}\]
by the pigeonhole principle. But, since $n>1$, we cannot have $\dash_{\Tleqn}\phi\rightarrow x_{i}=x_{j}$ for any pair $1\leq i<j\leq n+1$, since we can always set, in a $\Tleqn$-interpretation $\A$, $x_{i}^{\A}\neq x_{j}^{\A}$.
\end{proof}


\subsection{$\Tmn$}\label{Tmn}

\begin{lemma}
$\Tmn$ is not stably-infinite, and thus not smooth.
\end{lemma}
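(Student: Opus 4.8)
The plan is to argue directly from the axiomatization, exactly as in the analogous cases of $\Tone$ and $\Tleqn$. First I would recall that every model of $\Tmn$ satisfies the single axiom $\psi_{=m}\vee\psi_{=n}$, so its domain has cardinality exactly $m$ or exactly $n$. Since $m$ and $n$ are positive integers, both are finite, and hence $\Tmn$ has \emph{no} infinite models whatsoever. This is the one structural observation the whole argument rests on.

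Next I would exhibit a $\Tmn$-satisfiable quantifier-free formula that defeats the stable-infiniteness requirement. The simplest choice is a tautology such as $x=x$: it is satisfied by every $\Tmn$-interpretation, and $\Tmn$ does have interpretations (e.g.\ one whose domain has exactly $m$ elements, which exists because $m\geq 1$), so $x=x$ is genuinely $\Tmn$-satisfiable. However, stable infiniteness w.r.t.\ the only sort $\s$ would require a $\Tmn$-interpretation satisfying $x=x$ whose domain $\s^{\A}$ is infinite, and by the observation above no such interpretation exists. Therefore $\Tmn$ is not stably infinite w.r.t.\ its only sort.

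Finally, the failure of smoothness follows immediately by the contrapositive of \Cref{SMimpliesSI}: since smoothness implies stable infiniteness, a theory that is not stably infinite cannot be smooth. I do not anticipate any real obstacle here; the only point that must not be glossed over is non-vacuity, namely that $\Tmn$ admits at least one model so that the chosen formula is truly $\Tmn$-satisfiable rather than the requirement being met vacuously, and this is guaranteed by $m,n\geq 1$.
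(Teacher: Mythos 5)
Your proof is correct and follows essentially the same route as the paper, which simply observes that $\Tmn$ has finite models but no infinite ones and hence cannot be stably infinite; you merely spell out the witnessing formula $x=x$ and the non-vacuity point, both of which the paper leaves implicit.
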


\begin{proof}
$\Tmn$ has finite models, but no infinite ones, so it cannot be stably-infinite.
\end{proof}

 \begin{lemma}
If $m>1$ and $n>1$, $\Tmn$ is finitely witnessable.
\end{lemma}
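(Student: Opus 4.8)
The plan is to write down an explicit witness and verify the two defining conditions directly. The key observation is that every $\Tmn$-interpretation $\A$ has $|\s^{\A}|\in\{m,n\}$, so $|\s^{\A}|\le N$, where $N=\max\{m,n\}$. Accordingly, I would fix $N$ fresh variables $w_{1},\ldots,w_{N}$ of the single sort $\s$ (say the first $N$ variables not occurring in $\phi$, so the choice is computable) and define
\[\wit(\phi)=\phi\wedge\bigwedge_{i=1}^{N}(w_{i}=w_{i}).\]
This is clearly computable; the tautological conjuncts do nothing semantically but ensure $\vars_{\s}(\wit(\phi))=\vars_{\s}(\phi)\cup\{w_{1},\ldots,w_{N}\}$.

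For condition $(i)$, with $\overarrow{w}=\vars(\wit(\phi))\setminus\vars(\phi)=\{w_{1},\ldots,w_{N}\}$, the formula $\exists\,\overarrow{w}.\:\wit(\phi)$ is $\exists\, w_{1}.\cdots\exists\, w_{N}.\:(\phi\wedge\bigwedge_{i}(w_{i}=w_{i}))$. Since the $w_{i}$ do not occur in $\phi$ and each $w_{i}=w_{i}$ is valid over the (always nonempty) domains, this is logically—hence $\Tmn$-—equivalent to $\phi$, giving $(i)$.

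The substance is condition $(ii)$. Assume $\wit(\phi)$ is $\Tmn$-satisfiable, witnessed by $\A$, so $|\s^{\A}|\le N$. I would define $\B$ on the same underlying structure as $\A$—in particular $\s^{\B}=\s^{\A}$ still has $m$ or $n$ elements, so $\B$ is again a $\Tmn$-interpretation—keeping $x^{\B}=x^{\A}$ for all $x\in\vars(\phi)$ and reassigning $w_{1},\ldots,w_{N}$ so that $\{w_{1}^{\B},\ldots,w_{N}^{\B}\}=\s^{\A}$. Such a surjection exists exactly because $|\s^{\A}|\le N$ (enumerate $\s^{\A}$ using the $N$ variables, repeating values when $|\s^{\A}|<N$). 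As the signature is empty, the truth value of the quantifier-free $\phi$ is determined by the equalities among its variables, on which $\B$ agrees with $\A$; hence $\B\vDash\phi$, the remaining conjuncts hold trivially, and $\B\vDash\wit(\phi)$. Finally $\s^{\B}=\s^{\A}=\{w_{1}^{\B},\ldots,w_{N}^{\B}\}\subseteq\vars_{\s}(\wit(\phi))^{\B}\subseteq\s^{\B}$, forcing $\s^{\B}=\vars_{\s}(\wit(\phi))^{\B}$, which is exactly $(ii)$.

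The only real obstacle is ensuring that the fresh variables can cover the entire domain of a satisfying interpretation; this is precisely what pins the number of fresh variables to $N=\max\{m,n\}$ and uses the fact that $\Tmn$ caps its model sizes at $N$. The construction does not actually invoke $m>1$ or $n>1$—those hypotheses are the ones relevant to the surrounding table entry, where non-convexity and the failure of stable infiniteness are also wanted—so finite witnessability holds for all positive $m$ and $n$ by the same argument.
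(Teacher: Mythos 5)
Your proof is correct and takes essentially the same approach as the paper's: both use the witness $\wit(\phi)=\phi\wedge\bigwedge_{i=1}^{\max\{m,n\}}(w_{i}=w_{i})$ and then map the fresh variables surjectively onto the domain of a satisfying interpretation. The only (immaterial) difference is that the paper constructs a new model of cardinality exactly $\max\{m,n\}$ while you reuse the underlying structure of the given model, and your observation that the hypotheses $m>1$, $n>1$ are not actually needed for finite witnessability is accurate.
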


\begin{proof}
Without loss of generality, assume that $m\geq n$. Consider a quantifier-free formula $\phi$, fresh variables $x_{1}$ through $x_{m}$, and define the witness $\wit(\phi)=\phi\wedge\bigwedge_{i=1}^{m}x_{i}=x_{i}$, 
obviously computable. Since $\phi$ and $\wit(\phi)$ are equivalent, given that $\bigwedge_{i=1}^{m}x_{i}=x_{i}$ is a tautology, we have that $\phi$ and $\exists\,\overarrow{x}.\:\wit(\phi)$ are $\Tmn$-equivalent, for $\overarrow{x}=\{x_{1}, ... , x_{m}\}=\vars(\wit(\phi))\setminus\vars(\phi)$.

So suppose that $\wit(\phi)$ is satisfied by a $\Tmn$-interpretation $\A$: let $W=\vars(\wit(\phi))$, $V=\vars(\phi)$ and $k=m-|V^{\A}|$; we define a $\Tmn$-interpretation $\B$ with domain $V^{\A}\cup\{a_{1}, ... , a_{k}\}$, for $a_{i}$ not in $\sigma^{\A}$ (so that 
$|\sigma^{\B}|=m$), $x^{\B}=x^{\A}$ for $x\in V$, and $\{x^{\B} : x\in\overarrow{x}\}=\sigma^{\B}$ (what is possible, given 
$\overarrow{x}$ and $\sigma^{\B}$ both have $m$ elements. Since $\A$ and $\B$ coincide on the variables of $\phi$, the latter satisfies 
$\phi$, and therefore $\wit(\phi)$. Finally, given that $W^{\B}=\sigma^{\B}$, we obtain that $\wit$ is indeed a witness, and $\Tmn$ is finitely witnessable.
\end{proof}

\begin{lemma}
If $|m-n|>1$, $T_{m,n}$ is not strongly finitely witnessable. 
\end{lemma}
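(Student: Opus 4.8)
The plan is to argue by contradiction: assume $\Tmn$ has a strong witness $\wit$ and derive a failure of clause $(ii')$. By symmetry we may assume $m>n$, so the hypothesis $|m-n|>1$ gives $m\geq n+2$, and hence $n+1$ is a cardinality \emph{strictly between} $n$ and $m$. Since every $\Tmn$-interpretation has domain of size exactly $m$ or exactly $n$, the strategy is to produce a single formula $\phi$ together with an arrangement $\delta_V$ over a set $V\supseteq\vars(\wit(\phi))$ such that $\wit(\phi)\wedge\delta_V$ is $\Tmn$-satisfiable, yet $\delta_V$ forces exactly $n+1$ distinct values among all variables of $\wit(\phi)\wedge\delta_V$. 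Then any model $\mathcal{C}$ witnessing $(ii')$ would satisfy $|\sigma^{\mathcal{C}}|=|\vars_\sigma(\wit(\phi)\wedge\delta_V)^{\mathcal{C}}|=n+1\notin\{n,m\}$, a contradiction.

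First I would fix $\phi$ to be the $\Tmn$-valid formula $x=x$. By clause $(i)$, the formula $\exists\,\overarrow{w}.\:\wit(\phi)$ is then $\Tmn$-valid, so $\wit(\phi)$ is satisfiable in \emph{some} $\Tmn$-model $\B$ of size $n$. Writing $W=\vars(\wit(\phi))$ and letting $E_0$ be the arrangement induced on $W$ by $\B$ (that is, $x\,E_0\,y$ iff $x^{\B}=y^{\B}$), the number $j'$ of equivalence classes of $E_0$ satisfies $j'\leq n$, simply because $W^{\B}\subseteq\sigma^{\B}$ and $|\sigma^{\B}|=n$. Next I would \emph{enlarge} $\B$ to a model $\B^{+}$ of size $m$ by adjoining $m-n$ fresh domain elements while keeping every variable assignment unchanged. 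Because the signature is empty and $\wit(\phi)$ is quantifier-free, its truth value depends only on the equalities among variable values, so $\B^{+}\vDash\wit(\phi)\wedge\delta_W^{E_0}$ and $\B^{+}$ is again a $\Tmn$-interpretation. Finally I would introduce $n+1-j'$ fresh variables $y_1,\dots,y_{n+1-j'}$, set $V=W\cup\{y_1,\dots,y_{n+1-j'}\}$, and take the arrangement $\delta_V$ extending $E_0$ in which the $y_i$ are pairwise distinct and distinct from all of $W$; interpreting them as distinct new elements of $\B^{+}$ is possible since $m-j'\geq n+1-j'$. This yields a $\Tmn$-interpretation satisfying $\wit(\phi)\wedge\delta_V$, where $\delta_V$ has exactly $j'+(n+1-j')=n+1$ classes and $\vars(\wit(\phi)\wedge\delta_V)=V$.

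With this in hand, applying $(ii')$ to $\wit(\phi)\wedge\delta_V$ gives a $\Tmn$-interpretation $\mathcal{C}$ with $\sigma^{\mathcal{C}}=\vars_\sigma(\wit(\phi)\wedge\delta_V)^{\mathcal{C}}$; since $\mathcal{C}\vDash\delta_V$ the right-hand side has exactly $n+1$ elements, forcing $|\sigma^{\mathcal{C}}|=n+1$, which is impossible. The main obstacle is that $\wit$ is an \emph{arbitrary} strong witness whose variables may be forced by $\wit(\phi)$ to take many distinct values, so one cannot simply prescribe an arrangement on $W$. The key device overcoming this is the two-step cardinality control: use a size-$n$ model to cap the number of distinct witness-variable values at $n\leq n+1$, and then use that enlarging a model is harmless for quantifier-free formulas over the empty signature to pass to a size-$m$ model, creating exactly enough room to realize $n+1$ total classes — a count that the rigid cardinality spectrum $\{n,m\}$ of $\Tmn$ can never meet.
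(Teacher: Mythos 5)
Your proof is correct and follows essentially the same route as the paper's: apply the strong-witness condition to $\wit(x=x)$ with an arrangement forcing exactly $n+1$ distinct values, using the size-$n$ model to bound the number of classes and the size-$m$ model to realize the fresh variables. The only difference is cosmetic — the paper splits into the cases $|W/E_0|<n$ and $|W/E_0|=n$ (deriving a contradiction from a too-small model in the first and adding a single fresh variable in the second), whereas you pad uniformly with $n+1-j'$ fresh variables to land on $n+1$ classes in one step.
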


\begin{proof}
Suppose $\wit$ is a strong witness, let $x$ be a variable, and take the model $\A^{\prime}$ of $\Tmn$ with $n$ elements in its domain: since $x=x$ is satisfied by any interpretation over this structure, there must be an 
interpretation $\A$ over $\A^{\prime}$ that satisfies $\wit(x=x)$. Let $V=\vars(\wit(x=x))$ and take the equivalence $E$ on $V$ such that $xEy$ iff $x^{\A}=y^{\A}$, with corresponding arrangement $\delta_{V}$, and we have two cases to consider.
\begin{enumerate}
\item If $|V/E|<n$: since $\A$ clearly satisfies $\wit(x=x)\wedge\delta_{V}$, we must have a $\Tmn$-interpretation $\B$ that satisfies $\wit(x=x)\wedge\delta_{V}$ with $\sigma^{\B}=V^{\B}$. Hence $|\sigma^{\B}|<n$, and therefore $\B$ cannot be a $\Tmn$-interpretation, leading to a contradiction.

\item If $|V/E|=n$, take a variable $y\notin\vars(\wit(x=x))$, define $W=V\cup\{y\}$ and consider the equivalence relation $F$ on $W$ such that $xFy$ iff $xEy$ or $x=y$. We state that $\wit(x=x)\wedge \delta_{V}$ is then satisfied by $\B$, a $\Tmn$-interpretation with 
$\sigma^{\B}=\sigma^{\A}\cup\{a_{1}, ... , a_{m-n}\}$ (where $\{a_{1}, ... , a_{m-n}\}\cap \sigma^{\A}=\emptyset$, what implies $\B$ has $m$ elements),  $x^{\B}=x^{\A}$ for every $x\in V$, and $y^{\B}=a_{1}$.

We see that $\B$ satisfies $\wit(x=x)\wedge\delta_{V}$, since this formula is true in $\A$, all its atomic subformulas are equalities of 
variables, and $\A$ and $\B$ coincide on $\vars(\wit(x=x))$; and $\B$ must also satisfy $\wit(\phi)\wedge\delta_{W}$, since $y^{\B}\neq x^{\B}$ for every $x\in V$. Since we are under the assumption that 
$\wit$ is a strong witness, there must exist a $\Tmn$-interpretation 
$\C$ that satisfies $\wit(x=x)\wedge\delta_{W}$ with $\sigma^{\C}=W^{\C}$, and this is impossible: since $|V/E|=n$, 
$|W/F|=n+1$, meaning that if $\C$ satisfies $\delta_{W}$, $\sigma^{\C}=W^{\C}$ must have exactly $n+1$ elements, what
contradicts the fact that $\C$ is a model of $\Tmn$ and therefore should have either $n$ or $m\geq n+2$ elements in its domain.\qedhere
\end{enumerate}
\end{proof}

\begin{lemma}
If $m>1$ and $n>1$, $\Tmn$ is not convex.
\end{lemma}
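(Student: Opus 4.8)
The plan is to imitate the pigeonhole argument already used above for $\Tleqn$, producing a single explicit witness to the failure of convexity. Write $M=\max\{m,n\}$ for the largest domain cardinality occurring among the models of $\Tmn$; note that because both $m>1$ and $n>1$, the smallest model of $\Tmn$ still has at least two elements. I would take $\phi$ to be any tautological conjunction of literals (for instance $x=x$), together with $M+1$ fresh variables $x_{1},\ldots,x_{M+1}$, and work with the disjunction of all equalities $x_{i}=x_{j}$ for $1\le i<j\le M+1$.

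The first step is to establish that the full disjunction is $\Tmn$-valid. Every $\Tmn$-interpretation has a domain of size either $m$ or $n$, hence at most $M$ elements, so any assignment of the $M+1$ variables $x_{1},\ldots,x_{M+1}$ into such a domain must identify two of them. By the pigeonhole principle we therefore obtain
\[\dash_{\Tmn}\phi\rightarrow\bigvee_{1\le i<j\le M+1}x_{i}=x_{j},\]
exactly as in the $\Tleqn$ case.

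The second step is to show that no single disjunct is entailed. Fix any pair $1\le i<j\le M+1$. Since $\min\{m,n\}\ge 2$, there is a $\Tmn$-interpretation $\A$ (on a model of either permitted size) in which $x_{i}^{\A}\neq x_{j}^{\A}$, and $\phi$ remains satisfied because it is a tautology. Hence $\not\dash_{\Tmn}\phi\rightarrow x_{i}=x_{j}$ for every pair $(i,j)$, which together with the valid disjunction above contradicts convexity.

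I do not anticipate a genuine obstacle: the argument is a direct pigeonhole computation, and it is precisely the two hypotheses that make it work, with $m,n>1$ supplying both the finite upper bound $M$ on domain sizes (so that the disjunction over $M+1$ variables is forced) and the lower bound of $2$ (so that each individual equality can be refuted in some model). The only minor point to check is purely notational, namely that the displayed disjunction is already of the shape $\bigvee_{i}u_{i}=v_{i}$ required in the definition of convexity, taking the pairs $(u_{i},v_{i})$ to enumerate the pairs $(x_{i},x_{j})$; no reindexing beyond this is needed.
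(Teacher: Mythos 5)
Your proof is correct, but it takes a different route from the paper's. The paper disposes of this lemma in one line by invoking \Cref{Barrett's theorem on convexity}: since $m>1$ and $n>1$, every model of $\Tmn$ satisfies $\psi_{\geq 2}$, and since $\Tmn$ was already shown (earlier in the same subsection) not to be stably infinite, the contrapositive of that theorem immediately rules out convexity. You instead give a direct, self-contained pigeonhole argument modelled on the paper's treatment of $\Tleqn$: the disjunction $\bigvee_{1\le i<j\le M+1}x_{i}=x_{j}$ over $M+1$ fresh variables (with $M=\max\{m,n\}$) is $\Tmn$-valid because every model has at most $M$ elements, while no single disjunct is entailed because every model has at least two. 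Both arguments are sound and both hypotheses $m,n>1$ are used in the same essential places (the upper bound making the disjunction valid, the lower bound refuting each disjunct). What the paper's approach buys is brevity and reuse of general machinery; what yours buys is an explicit, elementary witness to the failure of convexity that avoids the compactness step hidden inside the proof of the general theorem --- for $\Tmn$ the finite bound on model size is immediate, so that detour is unnecessary. Your closing remark about reindexing the pairs $(x_i,x_j)$ as $(u_i,v_i)$ is the right observation and is indeed all that is needed to match the formal definition of convexity.
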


\begin{proof}
Under the assumption that $m>1$ and $n>1$, $\Tmn$ has no models of cardinality $1$, and by \Cref{Barrett's theorem on convexity} it cannot be convex.
\end{proof}


\subsection{$\Ttwo$}

The following was proven in \cite{SZRRBT-21}:

\begin{lemma}
$\Ttwo$ is smooth, finitely witnessable, but not strongly finitely witnessable w.r.t. $\{\s, \s_{2}\}$.
\end{lemma}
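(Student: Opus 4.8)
The plan is to treat the three claims in turn, using throughout the fact that $\Sigma_{2}$ is empty: the truth value of a quantifier-free formula under an interpretation depends only on the equality type its variables induce on each sort, so adjoining fresh elements to either domain never disturbs satisfaction of such a formula. I would also record the two shapes a model $\A$ of $\Ttwo$ can take — either (i) $|\s^{\A}|\geq 3$ and $|\s_{2}^{\A}|\geq 3$, or (ii) $|\s^{\A}|=2$ and $\s_{2}^{\A}$ infinite. For smoothness, given $\A\models\phi$ and cardinals $\kappa(\s)\geq|\s^{\A}|$, $\kappa(\s_{2})\geq|\s_{2}^{\A}|$, I would keep the values of the variables of $\phi$ fixed and pad each domain with fresh elements up to $\kappa(\s)$ and $\kappa(\s_{2})$; satisfaction is preserved by the empty-signature observation, and a short case split confirms that the result is again a model: enlarging a type-(i) model stays type (i), while enlarging a type-(ii) model either keeps $|\s|=2$ (still type (ii), $\s_{2}$ still infinite) or raises $|\s|$ to at least $3$, in which case $\kappa(\s_{2})$ is already infinite and we land in type (i).

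For finite witnessability I would use the witness $\wit(\phi)=\phi\wedge\bigwedge_{i=1}^{3}(w_{i}=w_{i})\wedge\bigwedge_{j=1}^{3}(z_{j}=z_{j})$, where $w_{1},w_{2},w_{3}$ are fresh of sort $\s$ and $z_{1},z_{2},z_{3}$ fresh of sort $\s_{2}$. Equivalence (condition (i)) is immediate, since the added conjuncts are valid and all domains are non-empty, so $\exists\,\overarrow{w}.\:\wit(\phi)$ collapses to $\phi$. For condition (ii), if $\wit(\phi)$ is $\Ttwo$-satisfiable I would first pass to a \emph{type-(i)} witnessing model (a type-(ii) model becomes type (i) by adjoining a third element to $\s$, which preserves quantifier-free satisfaction), then reinterpret $w_{1},w_{2},w_{3}$ and $z_{1},z_{2},z_{3}$ as three distinct elements of their respective domains — possible because a type-(i) model has at least three elements in each sort — and finally restrict to the substructure whose domains are the sets of variable values. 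This substructure covers both sorts by construction, retains at least three elements per sort, and hence is a type-(i) model satisfying $\wit(\phi)$, as required.

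The hard and most interesting part, and where I expect the main obstacle, is showing $\Ttwo$ is \emph{not} strongly finitely witnessable. I would argue by contradiction: assume $\wit$ is a strong witness and apply it to $\phi=\neg(x_{1}=x_{2})$ with $x_{1},x_{2}$ of sort $\s$. Interpreting $x_{1},x_{2}$ as the two elements of a type-(ii) model and using condition (i), I would obtain a type-(ii) interpretation $\A_{0}$ satisfying $\wit(\phi)$ in which $x_{1},x_{2}$ name the two points of $\s^{\A_{0}}$; then every $\s$-variable of $\wit(\phi)$ takes one of these two values, so the arrangement $\delta_{W}$ that $\A_{0}$ induces on $W=\vars_{\s}(\wit(\phi))$ has exactly two classes and $\wit(\phi)\wedge\delta_{W}$ is $\Ttwo$-satisfiable. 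Condition (ii') would then furnish a $\Ttwo$-interpretation $\B$ satisfying $\wit(\phi)\wedge\delta_{W}$ with $\s^{\B}=W^{\B}$ and $\s_{2}^{\B}=\vars_{\s_{2}}(\wit(\phi))^{\B}$; the first equality forces $|\s^{\B}|=2$ and the second forces $\s_{2}^{\B}$ finite. But in $\Ttwo$ the condition $|\s^{\B}|=2$ rules out shape (i) and so demands shape (ii), i.e.\ $\s_{2}^{\B}$ infinite — a contradiction. The crux is exactly this structural tension: an arrangement can pin $\s$ to two elements, which $\Ttwo$ couples to an infinite $\s_{2}$, whereas any witnessed (covered) model is forced to keep $\s_{2}$ finite.
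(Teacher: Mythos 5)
Your proof is correct, but it takes a different route from the paper: for this lemma the paper offers no argument of its own, simply deferring to \cite{SZRRBT-21} where $\Ttwo$ was originally shown to be polite but not strongly polite (the paper only adds a separate proof of convexity via its Theorem on stably infinite empty-signature theories). Your version is self-contained and exploits exactly the right structural facts: over the empty $\Sigma_{2}$ the truth of a quantifier-free formula depends only on the induced equality types, so padding domains gives smoothness; the witness $\phi\wedge\bigwedge_{i=1}^{3}(w_{i}=w_{i})\wedge\bigwedge_{j=1}^{3}(z_{j}=z_{j})$ forces every covered model into the $|\s|\geq 3$, $|\s_{2}|\geq 3$ shape, which is closed under restriction to variable values; and the failure of strong finite witnessability comes from an arrangement that pins $\s$ to two classes, which $\Ttwo$ couples to an infinite $\s_{2}$ that no finite set of variable values can cover. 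This last tension is precisely the known reason $\Ttwo$ separates politeness from strong politeness, so your argument is in the spirit of the cited source while being verifiable in place. One cosmetic remark: you assert that the induced arrangement on $W=\vars_{\s}(\wit(\phi))$ has \emph{exactly} two classes, which presupposes $x_{1},x_{2}\in\vars(\wit(\phi))$; the definition of a witness does not literally guarantee this, but your contradiction survives either way, since $|\s^{\B}|=|W^{\B}|\leq 2$ already rules out shape (i), and shape (ii) is incompatible with $\s_{2}^{\B}$ being the value set of finitely many variables.
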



\begin{lemma}
$\Ttwo$ is convex w.r.t. $\{\s, \s_{2}\}$.
\end{lemma}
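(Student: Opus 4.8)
The plan is to obtain convexity as an immediate corollary of the smoothness of $\Ttwo$ together with \Cref{SI empty theories are convex}, rather than by unwinding the definition of convexity directly. First I would observe that $\Ttwo$ is a theory over the \emph{empty} two-sorted signature $\Sigma_{2}$, so that \Cref{SI empty theories are convex}, whose hypothesis requires an empty signature, is indeed applicable here. The set of all of its sorts is $\{\s, \s_{2}\}$.

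Next I would invoke the smoothness of $\Ttwo$ w.r.t. $\{\s, \s_{2}\}$, which was established in the preceding lemma (originally from \cite{SZRRBT-21}). By \Cref{SMimpliesSI}, smoothness w.r.t. a set of sorts $S$ implies stable infiniteness w.r.t. $S$; applying this with $S = \{\s, \s_{2}\}$ yields that $\Ttwo$ is stably infinite w.r.t. $\{\s, \s_{2}\}$, i.e. w.r.t. the set of all of its sorts. This is precisely the hypothesis of \Cref{SI empty theories are convex}.

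Finally, \Cref{SI empty theories are convex} then gives that $\Ttwo$ is convex w.r.t. \emph{any} set of sorts, and in particular w.r.t. $\{\s, \s_{2}\}$, which is exactly the desired conclusion.

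There is essentially no obstacle to overcome: the only points requiring care are purely bookkeeping, namely checking that $\{\s, \s_{2}\}$ really is the full sort set of $\Ttwo$ (so that the ``stably infinite w.r.t. the set of all of its sorts'' clause of \Cref{SI empty theories are convex} is met) and that $\Sigma_{2}$ is genuinely empty (no function or predicate symbols beyond equality). Both are immediate from \Cref{def:sigsofex} and the definition of $\Ttwo$. All the real work has already been done in \Cref{SI empty theories are convex}; the present statement simply composes that result with \Cref{SMimpliesSI} and the known smoothness of $\Ttwo$.
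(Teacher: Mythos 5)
Your proposal is correct and matches the paper's own argument, which likewise derives convexity of $\Ttwo$ directly from \Cref{SI empty theories are convex} (with stable infiniteness supplied by the previously established smoothness via \Cref{SMimpliesSI}). The only difference is that you spell out the intermediate bookkeeping explicitly, which the paper leaves implicit.
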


\begin{proof}
Follows from \Cref{SI empty theories are convex}.
\end{proof}

\subsection{$\Toddtwo$}

\begin{lemma}
$\Toddtwo$ is not stably-infinite, and thus not smooth, w.r.t. $\{\s, \s_{2}\}$.
\end{lemma}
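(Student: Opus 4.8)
The plan is to exploit the fact that the axiomatization of $\Toddtwo$ pins the cardinality of the sort $\s$ to exactly $1$. Recall that $\Toddtwo$ is axiomatized by $\{\psi^{\s}_{=1}\}\cup\{\neg\psi^{\s_{2}}_{=2k} : k\in\mathbb{N}\}$, so every model $\A$ of $\Toddtwo$ satisfies $\psi^{\s}_{=1}$ and therefore has $|\s^{\A}|=1$. Since stable infiniteness \wrt $S=\{\s,\s_{2}\}$ demands, for each $\T$-satisfiable quantifier-free formula, a $\T$-interpretation in which \emph{every} sort in $S$ (in particular $\s$) has an infinite domain, the rigid constraint $|\s^{\A}|=1$ already obstructs this.

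Concretely, first I would pick a trivially $\Toddtwo$-satisfiable quantifier-free formula, e.g.\ $x=x$ for a variable $x$ of sort $\s$; any model of $\Toddtwo$ witnesses its $\T$-satisfiability, and such models exist (for instance a structure with one element in $\s^{\A}$ and one element in $\s_{2}^{\A}$, which satisfies all axioms since $1$ is odd). Then I would observe that no $\Toddtwo$-interpretation satisfying this formula can have $\s^{\A}$ infinite, because every $\Toddtwo$-model forces $|\s^{\A}|=1$. As $\s\in S$, this shows $\Toddtwo$ is not stably infinite \wrt $S$.

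Finally, the failure of smoothness is immediate from \Cref{SMimpliesSI}: smoothness \wrt $S$ implies stable infiniteness \wrt $S$, so its contrapositive yields that a theory which is not stably infinite \wrt $S$ cannot be smooth \wrt $S$.

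There is essentially no hard step here; the only point requiring care is the reading of the definition of stable infiniteness over a \emph{set} of sorts, namely that all sorts in $S$ must be made infinite simultaneously. Because one of the two sorts is rigidly of cardinality $1$ in all models, the property fails for the entire set $S$ irrespective of the behaviour of $\s_{2}$.
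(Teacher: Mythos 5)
Your proof is correct and follows essentially the same route as the paper: both arguments rest on the observation that the axiom $\psi^{\s}_{=1}$ forces $|\s^{\A}|=1$ in every model, so no $\Toddtwo$-interpretation can make the sort $\s$ infinite, and smoothness then fails by \Cref{SMimpliesSI}. Your version is slightly more explicit about exhibiting a satisfiable formula and a witnessing model, but the substance is identical.
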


\begin{proof}
$\Toddtwo$ has a model $\A$ where $|\s^{\A}|=1$ and $|\s_{2}^{\A}|=\omega$, but no models $\B$ where both $\s^{\B}$ and $\s_{2}^{\B}$ are infinite.
\end{proof}

\begin{lemma}
$\Toddtwo$ is finitely witnessable w.r.t. $\{\s, \s_{2}\}$.
\end{lemma}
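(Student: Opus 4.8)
The plan is to exhibit an explicit witness. Recall that every model of $\Toddtwo$ has exactly one element of sort $\s$ and an odd or infinite number of elements of sort $\s_{2}$. The witness must therefore guarantee two things for any $\Toddtwo$-satisfiable formula: that the unique $\s$-element is named by some variable, and that the set of $\s_{2}$-elements can be made to coincide with the values of the $\s_{2}$-variables while still having odd cardinality. I would define
\[\wit(\phi)=\phi\wedge(u=u)\wedge(w=w),\]
where $u$ is a fresh variable of sort $\s$ and $w$ a fresh variable of sort $\s_{2}$. This function is clearly computable.

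Condition $(i)$ is immediate. Since $u$ and $w$ do not occur in $\phi$ and the conjuncts $u=u$ and $w=w$ are valid, $\exists\,u.\exists\,w.\:\wit(\phi)$ is $\T$-equivalent to $\phi\wedge\exists\,u.(u=u)\wedge\exists\,w.(w=w)$, and both existentials hold in every $\Toddtwo$-interpretation because all its domains are non-empty.

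For condition $(ii)$, suppose $\wit(\phi)$ is $\Toddtwo$-satisfiable. Since the added conjuncts are tautological, $\phi$ is satisfied by some $\Toddtwo$-interpretation $\M$, and because the signature is empty, the truth of a quantifier-free formula depends only on the arrangement it induces on its variables. Let $k_{0}=|\vars_{\s_{2}}(\phi)^{\M}|$ be the number of distinct values the $\s_{2}$-variables of $\phi$ receive in $\M$. I would then build a fresh interpretation $\B$ with $\s^{\B}$ a singleton, so that every $\s$-variable (including $u$) is named, and with $k_{0}$ distinct $\s_{2}$-elements realizing exactly the $\M$-arrangement of the $\s_{2}$-variables of $\phi$; this forces $\B\vDash\phi$. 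The value of $w$ is then chosen by parity: if $k_{0}$ is odd, set $w^{\B}$ equal to one of the already-used $\s_{2}$-values, keeping the count at the odd number $k_{0}$; if $k_{0}$ is even (including $k_{0}=0$), set $w^{\B}$ to a brand new element, raising the count to the odd number $k_{0}+1$. In both cases $\s_{2}^{\B}$ has odd cardinality and equals $\vars_{\s_{2}}(\wit(\phi))^{\B}$, while $\s^{\B}=\vars_{\s}(\wit(\phi))^{\B}$, so $\B$ is a $\Toddtwo$-interpretation satisfying $\wit(\phi)$ with all domains generated by the variables.

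The main obstacle is exactly this parity mismatch: a formula such as $\neg(x_{1}=x_{2})$ with $x_{1},x_{2}$ of sort $\s_{2}$ forces two distinct $\s_{2}$-elements, a cardinality forbidden in $\Toddtwo$, so the naive choice $\wit(\phi)=\phi$ fails condition $(ii)$. The single fresh $\s_{2}$-variable $w$ is precisely what lets us always toggle the count to an odd value, and the crux of the argument is the observation that one fresh variable suffices because merging $w$ with an existing value or splitting it off as a new element changes the count by exactly $0$ or $1$, so one of the two moves always produces an odd total.
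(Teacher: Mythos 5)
Your proof is correct and follows essentially the same route as the paper: the same witness $\phi\wedge(u=u)\wedge(w=w)$ with one fresh variable per sort, and the same parity case split on the number of distinct $\s_2$-values of the variables of $\phi$, using $w$ either to merge with an existing value (odd case) or to add one fresh element (even case). The only cosmetic difference is that you rebuild an interpretation realizing the arrangement from scratch, while the paper restricts the satisfying interpretation's $\s_2$-domain to the values of the variables; these amount to the same construction over the empty signature.
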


\begin{proof}
So, take a quantifier-free $\Sigma_{2}$-formula $\phi$, and we wish to show
\[\wit(\phi)=\phi\wedge (x=x)\wedge(y=y),\]
where $x$ is a fresh variable of sort $\s$ and $y$ is a fresh variable of sort $\s_{2}$, is a witness for $\Toddtwo$. Of course, if $\overarrow{x}=\vars(\wit(\phi))\setminus\vars(\phi)=\{x, y\}$, $\phi$ and $\exists\,\overarrow{x}.\:\wit(\phi)$ are $\Toddtwo$-equivalent since $\phi$ and $\wit(\phi)$ are, themselves, equivalent.

So assume that a $\Toddtwo$-interpretation $\A$ satisfies $\wit(\phi)$, let $V=\vars_{\s_{2}}(\phi)$ and $E$ be the equivalence on $V$ such that $y_{1}Ey_{2}$ iff $y_{1}^{\A}=y_{2}^{\A}$. There are then two cases to consider.
\begin{enumerate}
\item If $|V/E|$ is odd, we take the $\Toddtwo$-interpretation $\B$ with $\s^{\B}=\s^{\A}$, $\s_{2}^{\B}=V^{\A}$ and $z^{\B}=z^{\A}$ for every variable $z$ except $y$, making instead $y^{\B}$ equal to any value in $\s_{2}^{\B}$. Not only $\B$ satisfies $\wit(\phi)$, but $\s^{\B}=\vars_{\s}(\wit(\phi))^{\B}$ and $\s_{2}^{\B}=\vars_{\s_{2}}(\wit(\phi))^{\B}$. 
\item If $|V/E|$ is even, we take the $\Toddtwo$-interpretation $\B$ with: $\s^{\B}=\s^{\A}$; $\s_{2}^{\B}=V^{\A}\cup\{a\}$, for an element $a\notin \s_{2}^{\A}$, being therefore $|\s_{2}^{\B}|$ odd; and $z^{\B}=z^{\A}$ for every variable $z$ except $y$, where $y^{\B}=a$. Then $\B$ satisfies $\wit(\phi)$, and $\s^{\B}=\vars_{\s}(\wit(\phi))^{\B}$ and $\s_{2}^{\B}=\vars_{\s_{2}}(\wit(\phi))^{\B}$..\qedhere
\end{enumerate}
\end{proof}

\begin{lemma}
$\Toddtwo$ is not strongly finitely witnessable w.r.t. $\{\s, \s_{2}\}$.
\end{lemma}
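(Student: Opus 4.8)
The statement to prove is that $\Toddtwo$ is not strongly finitely witnessable w.r.t. $\{\s, \s_{2}\}$. The governing obstruction is a \emph{parity} phenomenon: every $\Toddtwo$-model has its $\s_{2}$-domain of odd or infinite cardinality (the axioms $\neg\psi^{\s_{2}}_{=2k}$ forbid every finite even cardinality, including $0$), whereas the $\s$-domain is always a singleton. The plan is to assume for contradiction that a strong witness $\wit$ exists, and then manufacture a formula together with an arrangement that forces an \emph{even} number of $\s_{2}$-equivalence classes, while keeping the conjunction $\T$-satisfiable. The strong-witness property would then hand us a $\Toddtwo$-interpretation whose $\s_{2}$-domain equals that even, finite set of named classes, which is impossible.

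Concretely, I would take $\phi = (x = x)$ for a variable $x$ of sort $\s$, so that $\phi$ is $\Toddtwo$-valid and, by property $(i)$ of the witness, so is $\exists\,\overarrow{w}.\:\wit(\phi)$. Since $\Toddtwo$ has a model with $\s_{2}$ infinite, I would pick such a structure $\M$ and, because $\M\vDash\exists\,\overarrow{w}.\:\wit(\phi)$, an interpretation $\A$ on $\M$ satisfying $\wit(\phi)$; crucially $\A$ is infinite on $\s_{2}$. Let $U=\vars_{\s_{2}}(\wit(\phi))$ and let $n=|U^{\A}|$ be the number of distinct $\s_{2}$-values $\A$ assigns to $U$. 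I would then introduce fresh $\s_{2}$-variables $z_{1},\dots,z_{q}$ and extend $\A$ to $\A^{+}$ by sending the $z_{i}$ to pairwise-distinct elements outside $U^{\A}$ (possible precisely because $\s_{2}^{\M}$ is infinite). Setting $V = U\cup\{z_{1},\dots,z_{q}\}$ and letting $\delta_{V}$ be the arrangement induced by $\A^{+}$, the number of $\s_{2}$-classes of $\delta_{V}$ is exactly $n+q$; I would choose $q$ so that $n+q$ is a positive even number (e.g. $q=2$ if $n$ is even and $q=1$ if $n$ is odd).

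The contradiction then comes out as follows. By construction $\A^{+}\vDash\wit(\phi)\wedge\delta_{V}$, so $\wit(\phi)\wedge\delta_{V}$ is $\T$-satisfiable, and property $(ii')$ yields a $\Toddtwo$-interpretation $\B$ satisfying $\wit(\phi)\wedge\delta_{V}$ with $\sigma^{\B}=\vars_{\sigma}(\wit(\phi)\wedge\delta_{V})^{\B}$ for each $\sigma\in\{\s,\s_{2}\}$. For $\sigma=\s$ this is harmless, as $|\s^{\B}|=1$ and $x\in\vars_{\s}(\wit(\phi))$. For $\sigma=\s_{2}$, since $\vars_{\s_{2}}(\wit(\phi)\wedge\delta_{V})=V$ and $\B\vDash\delta_{V}$, we get $|\s_{2}^{\B}|=|V^{\B}|=n+q$, a positive even number — contradicting $\B\vDash\neg\psi^{\s_{2}}_{=2k}$.

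The step I expect to be the main obstacle is guaranteeing the $\T$-satisfiability of $\wit(\phi)\wedge\delta_{V}$: one cannot simply posit an arrangement with an even number of classes and declare it satisfiable, because finite even $\s_{2}$-domains are exactly what $\Toddtwo$ excludes. The resolution — and the conceptual heart of the argument — is to anchor $\delta_{V}$ in a genuine \emph{infinite} satisfying interpretation $\A^{+}$, where the even number of \emph{named} classes coexists with infinitely many unnamed elements, so the conjunction is satisfiable; the strong-witness property then forces the domain down to precisely the named classes, landing squarely in the forbidden even-finite case. A minor point to verify along the way is that witnesses preserve the original variables (so $\vars_{\s}(\wit(\phi))\ni x$ and the $\s$-condition is consistent), and that $n+q$ is chosen positive to also rule out the degenerate empty-domain reading of the axiom at $k=0$.
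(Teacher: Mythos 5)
Your proof is correct and follows essentially the same strategy as the paper's: construct an arrangement with an even (positive, finite) number of $\s_{2}$-classes that remains $\Toddtwo$-satisfiable, then let property $(ii')$ collapse the $\s_{2}$-domain onto exactly those classes, contradicting the odd-or-infinite cardinality constraint. The only cosmetic difference is that you anchor the arrangement in an infinite-on-$\s_{2}$ model (which makes satisfiability of $\wit(\phi)\wedge\delta_{V}$ immediate), whereas the paper starts from the model with singleton domains, reads off the parity of the arrangement induced by the resulting witness interpretation, adds one fresh variable to flip it to even, and re-establishes satisfiability by enlarging the finite model by two elements.
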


\begin{proof}
Suppose, for a proof by contradiction, that $\Toddtwo$ does have a strong witness $\wit$. Take the $\Toddtwo$-interpretation $\A$ with one element of sort $\s$ and one element of sort $\s_{2}$ and a variable $y$ of sort $\s_{2}$: $\A$ satisfies $y=y$, and thus $\exists\,\overarrow{x}.\:\wit(y=y)$, for $\overarrow{x}=\vars(\wit(y=y))\setminus\vars(y=y)$. Of course, there is an interpretation $\A^{\prime}$, differing from $\A$ at most on the value assigned to the variables in $\overarrow{x}$, that satisfies $\wit(y=y)$; there must then exist a $\Toddtwo$-interpretation $\B$ that satisfies $\wit(y=y)$ and $\s^{\B}=\vars_{\s}(\wit(y=y))^{\B}$ and $\s_{2}^{\B}=\vars_{\s_{2}}(\wit(y=y))^{\B}$.

Let $V$ be $\vars_{\s_{2}}(\wit(y=y))$, and $E$ be the equivalence relation on $V$ such that $y_{1}Ey_{2}$ iff $y_{1}^{\B}=y_{2}^{\B}$, with corresponding arrangement $\delta_{V}$. Clearly $\B$ satisfies $\delta_{V}$ and $V/E$ has an odd number of equivalence classes; let $y_{0}$ be a fresh variable of sort $\s_{2}$, and take the equivalence $F$ on $V\cup\{y_{0}\}$ such that $y_{1}Fy_{2}$ iff $y_{1}=y_{2}$ or $y_{1}Ey_{2}$, with corresponding arrangement $\delta_{W}$. First, we state that $\wit(y=y)\wedge\delta_{W}$ is $\Toddtwo$-satisfiable. 

In fact, take the interpretation $\C$ with: $\s^{\C}=\s^{\B}$, both then with cardinality $1$; $\s_{2}^{\C}=\s_{2}^{\B}\cup\{a,b\}$, where $a, b\notin\s_{2}^{\B}$, and thus $|\s_{2}^{\C}|=|\s_{2}^{\B}|+2$, an odd number; and $x^{\C}=x^{\B}$ for all variables $x$ except $y_{0}$, where we use instead $y_{0}^{\C}=a$. Not only $\C$ is a $\Toddtwo$-interpretation, but one easily sees it satisfies $\wit(y=y)\wedge\delta_{W}$, meaning we should be able to find a $\Toddtwo$-interpretation $\D$ that satisfies $\wit(y=y)\wedge\delta_{W}$ with $\s^{\D}=\vars_{\s}(\wit(y=y)\wedge\delta_{W})^{\D}$ and$\s_{2}^{\D}=\vars_{\s_{2}}(\wit(y=y)\wedge\delta_{W})^{\D}$, but this is impossible: $\vars_{\s_{2}}(\wit(y=y)\wedge\delta_{W})=W$, but $|W/F|$ is an even number, what would force $\s_{2}^{\D}$ to have an even number of elements.
\end{proof}

\begin{lemma}
$\Toddtwo$ is convex.
\end{lemma}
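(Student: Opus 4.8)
The plan is to reduce convexity of the two-sorted theory $\Toddtwo$ to convexity of a one-sorted theory to which the machinery of \Cref{SI empty theories are convex} already applies. I aim to establish convexity with respect to the full set $\{\s,\s_{2}\}$, since this is the strongest instance and subsumes convexity with respect to any subset of sorts.

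First I would dispose of the degenerate cases. Suppose $\phi$ is a conjunction of literals and $(u_{1},v_{1}),\ldots,(u_{n},v_{n})$ are pairs of variables with sorts in $\{\s,\s_{2}\}$ such that $\dash_{\Toddtwo}\phi\rightarrow\bigvee_{i=1}^{n}u_{i}=v_{i}$. If $\phi$ is $\Toddtwo$-unsatisfiable, the implication $\phi\rightarrow u_{1}=v_{1}$ holds vacuously and we are done. If some pair $(u_{i},v_{i})$ consists of variables of sort $\s$, then since every $\Toddtwo$-model satisfies $\psi^{\s}_{=1}$ we have $\dash_{\Toddtwo}u_{i}=v_{i}$, hence $\dash_{\Toddtwo}\phi\rightarrow u_{i}=v_{i}$. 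So I may assume that $\phi$ is $\Toddtwo$-satisfiable and that every pair is of sort $\s_{2}$.

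Next I would separate the two sorts. Because the signature is empty, every literal of $\phi$ is an equality or disequality between two variables of a single sort, so $\phi$ splits as $\phi_{\s}\wedge\phi_{\s_{2}}$. A disequality between two sort-$\s$ variables is $\Toddtwo$-unsatisfiable (as $|\s^{\A}|=1$ in every model), so satisfiability of $\phi$ forces $\phi_{\s}$ to be a conjunction of sort-$\s$ equalities, each of which is $\Toddtwo$-valid; hence $\phi$ and $\phi_{\s_{2}}$ are $\Toddtwo$-equivalent and $\dash_{\Toddtwo}\phi_{\s_{2}}\rightarrow\bigvee_{i=1}^{n}u_{i}=v_{i}$, where now $\phi_{\s_{2}}$ and all the $u_{i},v_{i}$ mention only sort $\s_{2}$.

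Finally I would transfer the problem to one sort and invoke the known results. Let $\T_{\mathrm{odd}}$ be the one-sorted empty-signature theory whose models have an odd or infinite domain. Every $\Toddtwo$-interpretation restricts, on its $\s_{2}$-part, to a $\T_{\mathrm{odd}}$-interpretation, and conversely every $\T_{\mathrm{odd}}$-interpretation extends to a $\Toddtwo$-interpretation by adjoining a one-element domain for $\s$; since the formulas at hand mention only sort $\s_{2}$, the same induction as in \Cref{technical result on adding a sort} shows that a sort-$\s_{2}$ implication is $\Toddtwo$-valid if and only if it is $\T_{\mathrm{odd}}$-valid. Thus $\dash_{\T_{\mathrm{odd}}}\phi_{\s_{2}}\rightarrow\bigvee_{i=1}^{n}u_{i}=v_{i}$. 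Now $\T_{\mathrm{odd}}$ has a countably infinite model, so it is stably infinite by \Cref{Infinite Model => SI} and therefore convex by \Cref{SI empty theories are convex}; applying its convexity yields $\dash_{\T_{\mathrm{odd}}}\phi_{\s_{2}}\rightarrow u_{i}=v_{i}$ for some $i$, and transferring back gives $\dash_{\Toddtwo}\phi_{\s_{2}}\rightarrow u_{i}=v_{i}$, hence $\dash_{\Toddtwo}\phi\rightarrow u_{i}=v_{i}$, as required. The only delicate point is the sort-transfer step---checking that the trivial sort $\s$ neither helps nor obstructs the derivation of a purely sort-$\s_{2}$ implication---but this is precisely the bookkeeping already carried out in the add-a-sort lemmas, so it requires no new idea.
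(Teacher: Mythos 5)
Your proof is correct and follows essentially the same route as the paper's: dispose of the sort-$\s$ pairs and the unsatisfiable case, strip the (necessarily valid) sort-$\s$ literals from $\phi$, reduce to the one-sorted theory of odd-or-infinite structures, and conclude via its stable infiniteness and \Cref{SI empty theories are convex}. Your extra care on the sort-transfer step just makes explicit what the paper leaves implicit.
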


\begin{proof}
Suppose $\phi$ is a conjunction of literals, and that $\dash_{\Toddtwo}\phi\rightarrow\bigvee_{i=1}^{n}x_{i}=y_{i}$; if some pair $x_{i}$ and $y_{i}$ is of sort $\s$, since all models of $\Toddtwo$ have exactly one element of sort $\s$, it follows that $\dash_{\Toddtwo}\phi\rightarrow x_{i}=y_{i}$. So we can assume that all $x_{i}$ and $y_{i}$ are of sort $\s_{2}$; in addition, we may assume that $\phi$ is not a contradiction, given that in that case $\dash_{\Toddtwo}\phi\rightarrow x_{i}=y_{i}$ for any $1\leq i\leq n$.

Then consider the formula $\phi^{\prime}$, obtained from $\phi$ by removing the literals with variables of sort $\s$. These are necessarily of the form $x=y$, since: we have no functions or predicates; and $\s^{\A}$ has always cardinality $1$ for $\A$ a model of $\Toddtwo$, meaning it cannot satisfy $\neg(x=y)$ if $\phi$ is not a contradiction. Then, we have that $\dash_{\T_{\textit{odd}}}\phi^{\prime}\rightarrow\bigvee_{i=1}^{n}x_{i}=y_{i}$, where $\T_{\textit{odd}}$ is the theory over the one-sorted empty signature (let, in this case, the sort be $\s_{2}$ for clarity) whose models are all structures with an infinite or odd number of elements.

$\T_{\textit{odd}}$ is obviously stably-infinite, and from \Cref{SI empty theories are convex}, it follows it is convex, meaning $\dash_{\T_{\textit{odd}}}\phi^{\prime}\rightarrow x_{i}=y_{i}$ for some $1\leq i\leq n$. Of course, it follows that $\dash_{\Toddtwo}\phi\rightarrow x_{i}=y_{i}$.
\end{proof}


\subsection{$\Tonetwo$}

\begin{lemma}
$\Tonetwo$ is not stably-infinite, and thus not smooth, w.r.t. $\{\s, \s_{2}\}$.
\end{lemma}

\begin{proof}
$\Tonetwo$ has a model $\A$ where $|\s^{\A}|=1$ and $|\s_{2}^{\A}|=\omega$, but no models $\B$ where both $\s^{\B}$ and $\s_{2}^{\B}$ are infinite.
\end{proof}

\begin{lemma}
$\Tonetwo$ is not finitely witnessable, and thus not strongly finitely witnessable, w.r.t. $\{\s, \s_{2}\}$.
\end{lemma}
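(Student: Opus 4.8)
The plan is to mirror the argument used for $\Tinfty$ in \Cref{Tinfty is not FW}, exploiting the single feature of $\Tonetwo$ that blocks finite witnessability: every model has an \emph{infinite} domain for the sort $\s_{2}$. Since a witness must, for a satisfiable input, return a model whose $\s_{2}$-domain is exactly the set of values assigned to the finitely many $\s_{2}$-variables occurring in the witnessed formula, such a domain is necessarily finite, contradicting the requirement that it be infinite.

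Concretely, I would argue by contradiction: suppose $\wit$ is a witness for $\Tonetwo$ w.r.t. $\{\s,\s_{2}\}$, and fix a variable $y$ of sort $\s_{2}$. The formula $y=y$ is satisfied by every $\Tonetwo$-interpretation, so by clause $(i)$ of the definition of a witness, $\exists\,\overarrow{w}.\:\wit(y=y)$ is $\Tonetwo$-equivalent to $y=y$ and hence $\Tonetwo$-satisfiable; picking a $\Tonetwo$-interpretation that satisfies $\exists\,\overarrow{w}.\:\wit(y=y)$ and reinterpreting the fresh variables $\overarrow{w}=\vars(\wit(y=y))\setminus\vars(y=y)$ yields a $\Tonetwo$-interpretation satisfying $\wit(y=y)$ itself. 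Therefore $\wit(y=y)$ is $\Tonetwo$-satisfiable, and clause $(ii)$ gives a $\Tonetwo$-interpretation $\A$ with $\s_{2}^{\A}=\vars_{\s_{2}}(\wit(y=y))^{\A}$. The right-hand side is the image of the finite set $\vars_{\s_{2}}(\wit(y=y))$ and so is finite, whereas $\s_{2}^{\A}$ must be infinite because $\A$ is a model of $\Tonetwo$; this is the desired contradiction.

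Finally, the second claim --- that $\Tonetwo$ is not strongly finitely witnessable --- is immediate from the contrapositive of \Cref{SFWimpliesFW}, since strong finite witnessability implies finite witnessability. I do not expect any real obstacle here: the only point requiring care is the passage from the $\Tonetwo$-validity of $y=y$ to the $\Tonetwo$-satisfiability of $\wit(y=y)$, which is exactly what clause $(i)$ provides; everything else is the same finiteness-versus-infiniteness clash already exploited for $\Tinfty$, with the sort $\s$ playing no role (its domain of size one is trivially witnessable) and $\s_{2}$ supplying the contradiction.
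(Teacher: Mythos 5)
Your proposal is correct and is essentially the paper's own argument: the paper's one-line proof ("$\Tonetwo$ has a model with $|\s^{\A}|=1$ and $|\s_{2}^{\A}|=\omega$, but no models where both domains are finite") is exactly the finiteness-versus-infiniteness clash you spell out, mirroring the $\Tinfty$ case. Your expanded version, including the passage from clause $(i)$ to the $\Tonetwo$-satisfiability of $\wit(y=y)$ and the appeal to \Cref{SFWimpliesFW} for the second claim, is a faithful and valid elaboration.
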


\begin{proof}
$\Tonetwo$ has a model $\A$ where $|\s^{\A}|=1$ and $|\s_{2}^{\A}|=\omega$, but no models $\B$ where both $\s^{\B}$ and $\s_{2}^{\B}$ are finite.
\end{proof}

\begin{lemma}
$\Tonetwo$ is convex.
\end{lemma}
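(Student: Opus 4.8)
The plan is to reuse the reduction strategy already applied to $\Toddtwo$: peel off the sort-$\s$ information (which is rigid, since every model has a singleton $\s$-domain) and push the problem down to a one-sorted theory over $\s_{2}$ on which stable infiniteness, and hence convexity via \Cref{SI empty theories are convex}, is available. So I would start from a conjunction of literals $\phi$ and a finite family $\{u_{1},v_{1},\dots,u_{n},v_{n}\}$ of variables with sorts in $\{\s,\s_{2}\}$, assume $\dash_{\Tonetwo}\phi\rightarrow\bigvee_{i=1}^{n}u_{i}=v_{i}$, and dispose of the easy case first: if some pair $u_{i},v_{i}$ has sort $\s$, then because every model of $\Tonetwo$ satisfies $\psi^{\s}_{=1}$ and thus has exactly one element of sort $\s$, we have $u_{i}^{\A}=v_{i}^{\A}$ in \emph{every} $\Tonetwo$-interpretation $\A$, so $\dash_{\Tonetwo}\phi\rightarrow u_{i}=v_{i}$ and we are finished. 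Hence I may assume all pairs $u_{i},v_{i}$ are of sort $\s_{2}$; I may also assume $\phi$ is $\Tonetwo$-satisfiable, since otherwise the implication $\phi\rightarrow u_{i}=v_{i}$ holds vacuously for any $i$.

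Next I would strip $\phi$ to a one-sorted formula $\phi'$ over $\s_{2}$ by deleting every literal that mentions a variable of sort $\s$. The key observation is that, under the satisfiability assumption, no deleted literal can be a disequality $\neg(x=y)$ between sort-$\s$ variables (such a literal is $\Tonetwo$-unsatisfiable because $|\s^{\A}|=1$); every deleted literal is therefore an equality $x=y$ that holds automatically in all $\Tonetwo$-interpretations. I would then prove $\dash_{\T_{\infty}}\phi'\rightarrow\bigvee_{i=1}^{n}u_{i}=v_{i}$, where $\T_{\infty}$ is the theory over the empty one-sorted signature with sort $\s_{2}$ whose models are exactly the infinite structures (the $\s_{2}$-analogue of $\Tinfty$). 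The correspondence goes both ways: the $\s_{2}$-reduct of any $\Tonetwo$-interpretation is a $\T_{\infty}$-interpretation, and conversely any $\T_{\infty}$-interpretation $\B$ satisfying $\phi'$ extends to a $\Tonetwo$-interpretation by adjoining a singleton $\s$-domain and sending all sort-$\s$ variables to its unique point. Such an extension satisfies all the deleted equalities and hence all of $\phi$, forcing $\bigvee_{i=1}^{n}u_{i}=v_{i}$; since this disjunction depends only on the $\s_{2}$-values, it already holds in $\B$, giving the desired $\T_{\infty}$-validity.

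Finally, $\T_{\infty}$ is stably infinite (it is a theory over an empty one-sorted signature with infinite models, exactly as in \Cref{lem:tinftysmooth}, or directly by \Cref{Infinite Model => SI}), so \Cref{SI empty theories are convex} yields that it is convex; applying convexity gives $\dash_{\T_{\infty}}\phi'\rightarrow u_{i}=v_{i}$ for some $i$. Transporting this back, every $\Tonetwo$-interpretation satisfying $\phi$ has an $\s_{2}$-reduct satisfying $\phi'$ and hence satisfies $u_{i}=v_{i}$, so $\dash_{\Tonetwo}\phi\rightarrow u_{i}=v_{i}$, establishing convexity. The main obstacle I anticipate is the bookkeeping in the reduction step: precisely justifying that discarding the sort-$\s$ literals is sound (which rests jointly on $\phi$ being satisfiable and on $|\s^{\A}|=1$) and that $\T$-satisfaction of $\phi$ and of $\phi'$ match up correctly when passing between $\Tonetwo$- and $\T_{\infty}$-interpretations.
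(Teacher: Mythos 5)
Your proposal is correct and follows essentially the same route as the paper: dispose of sort-$\s$ pairs using $|\s^{\A}|=1$, reduce the remaining claim to the one-sorted theory of infinite $\s_{2}$-structures, invoke its stable infiniteness together with \Cref{SI empty theories are convex}, and transport the resulting single disjunct back to $\Tonetwo$. Your explicit construction of $\phi'$ by deleting sort-$\s$ literals (and the observation that under satisfiability these can only be equalities) is bookkeeping the paper carries out in its $\Toddtwo$ proof but glosses over for $\Tonetwo$, so if anything your write-up is the more careful of the two.
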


\begin{proof}
Suppose that $\phi$ is a cube, and that $\dash_{\Tonetwo}\phi\rightarrow\bigvee_{i=1}^{n}x_{i}=y_{i}$. If some pair $(x_{i}, y_{i})$ is of variables of 
 sort $\s$, we already have $\dash_{\Tonetwo}\phi\rightarrow x_{i}=y_{i}$, so assume all $x_{i}$ and $y_{i}$ are of sort $\s_{2}$. 
We then have that $\dash_{\T}\phi\rightarrow\bigvee_{i=1}^{n}x_{i}=y_{i}$, where the antecedent and the consequent may be seen as formulas in an empty signature with 
only one sort $\s_{2}$, and $\T$ is the theory on this signature with only infinite models, which is stably-
infinite. \Cref{SI empty theories are convex} can then be used to obtain that 
$\dash_{\T}\phi\rightarrow x_{i}=y_{i}$ for some $1\leq i\leq n$, and it of course follows that $\dash_{\Tonetwo}\phi\rightarrow x_{i}=y_{i}$: indeed, suppose this is not true, and so there exists a $\Tonetwo$-interpretation $\A$ that satisfies $\phi$ but not $x_{i}=y_{i}$; taking the $\T$-interpretation $\B$ with $\s_{2}^{\B}=\s_{2}^{\A}$ and $x^{\B}=x^{\A}$ for every variable of sort $\s_{2}$, we see that $\B$ must satisfy $\phi$, but not $x_{i}=y_{i}$, contradicting the fact that $\dash_{\T}\phi\rightarrow x_{i}=y_{i}$ for some $1\leq i\leq n$.
\end{proof}


\subsection{$\Ttwotwo$}

\begin{lemma}
$\Tonetwo$ is not stably-infinite, and thus not smooth, w.r.t. $\{\s, \s_{2}\}$.
\end{lemma}

\begin{proof}
$\Ttwotwo$ has a model $\A$ where $|\s^{\A}|=2$ and $|\s_{2}^{\A}|=\omega$, but no models $\B$ where both $\s^{\B}$ and $\s_{2}^{\B}$ are infinite.
\end{proof}

\begin{lemma}
$\Ttwotwo$ is not finitely witnessable, and thus not strongly finitely witnessable, w.r.t. $\{\s, \s_{2}\}$.
\end{lemma}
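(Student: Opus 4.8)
The plan is to reuse, essentially verbatim, the argument that establishes the analogous statement for $\Tonetwo$. The decisive structural feature of $\Ttwotwo$ is that its axiomatization contains $\psi^{\s_{2}}_{\geq k}$ for every $k\in\mathbb{N}$, so \emph{every} model $\A$ of $\Ttwotwo$ has $\s_{2}^{\A}$ infinite (while $\s^{\A}$ has exactly two elements). By contrast, finite witnessability requires, for each sort $\sigma$, a model in which $\sigma^{\A}$ coincides with the interpretation of a fixed finite set of witness variables. The incompatibility of these two facts on the sort $\s_{2}$ is the whole content of the proof.

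First I would assume, toward a contradiction, that $\Ttwotwo$ is finitely witnessable w.r.t. $\{\s,\s_{2}\}$, with witness $\wit$. Next I would take any $\T$-satisfiable quantifier-free formula $\phi$ — a tautology such as $x=x$ suffices, since $\Ttwotwo$ has models and hence $x=x$ is $\Ttwotwo$-satisfiable. By clause $(i)$ of the witness definition, $\phi$ and $\exists\,\overarrow{w}.\:\wit(\phi)$ are $\T$-equivalent (with $\overarrow{w}=\vars(\wit(\phi))\setminus\vars(\phi)$), so $\exists\,\overarrow{w}.\:\wit(\phi)$ is $\T$-satisfiable; since the variables in $\overarrow{w}$ do not occur in $\phi$, a satisfying $\T$-interpretation can be adjusted on $\overarrow{w}$ to satisfy $\wit(\phi)$ itself, establishing that $\wit(\phi)$ is $\T$-satisfiable. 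Clause $(ii)$ then furnishes a $\Ttwotwo$-interpretation $\A$ satisfying $\wit(\phi)$ with $\s_{2}^{\A}=\vars_{\s_{2}}(\wit(\phi))^{\A}$.

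The contradiction is then immediate: $\vars_{\s_{2}}(\wit(\phi))$ is a finite set of variables, hence $\vars_{\s_{2}}(\wit(\phi))^{\A}$ is finite, forcing $\s_{2}^{\A}$ to be finite — yet every model of $\Ttwotwo$ has an infinite $\s_{2}$-domain. Thus no witness can exist, and $\Ttwotwo$ is not finitely witnessable w.r.t. $\{\s,\s_{2}\}$; that it is therefore not strongly finitely witnessable follows from the contrapositive of \Cref{SFWimpliesFW}. I expect no genuine obstacle here: the single point demanding care is the passage from the $\T$-satisfiability of $\exists\,\overarrow{w}.\:\wit(\phi)$ to that of $\wit(\phi)$, which is exactly what licenses the use of clause $(ii)$, and which is settled by the freshness of the variables in $\overarrow{w}$.
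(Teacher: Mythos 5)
Your proof is correct and is essentially the same argument as the paper's, which compresses it into one line: $\Ttwotwo$ has no model with a finite $\s_2$-domain, while clause $(ii)$ of finite witnessability would force one. Your elaboration of the passage from clause $(i)$ to the $\T$-satisfiability of $\wit(\phi)$ is exactly the implicit step the paper omits.
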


\begin{proof}
$\Ttwotwo$ has a model $\A$ where $|\s^{\A}|=2$ and $|\s_{2}^{\A}|=\omega$, but no models $\B$ where both $\s^{\B}$ and $\s_{2}^{\B}$ are finite.
\end{proof}

\begin{lemma}
$\Ttwotwo$ is not convex.
\end{lemma}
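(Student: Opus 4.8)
The plan is to exhibit a direct witness to non-convexity, exploiting the fact that every model of $\Ttwotwo$ has exactly two elements in the domain of $\s$. First I would take a tautological conjunction of literals, say $\phi=(x=x)$ for $x$ a variable of sort $\s$, together with three fresh variables $x_{1}, x_{2}, x_{3}$ of sort $\s$. Since $\tdash\psi^{\s}_{=2}$, every $\Ttwotwo$-interpretation $\A$ has $|\s^{\A}|=2$, so by the pigeonhole principle at least two of $x_{1}^{\A}, x_{2}^{\A}, x_{3}^{\A}$ must coincide in any such interpretation. Hence $\tdash\phi\rightarrow(x_{1}=x_{2})\vee(x_{1}=x_{3})\vee(x_{2}=x_{3})$, giving the hypothesis of convexity.

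Next I would show that no single disjunct is $\Ttwotwo$-valid. For any chosen pair, say $(x_{1},x_{2})$, one constructs a $\Ttwotwo$-interpretation $\A$ with $\s^{\A}$ of size two and $\s_{2}^{\A}$ countably infinite (so that $\A$ is indeed a model, satisfying both $\psi^{\s}_{=2}$ and all $\psi^{\s_{2}}_{\geq k}$), and sends $x_{1}$ and $x_{2}$ to the two distinct elements of $\s^{\A}$. This $\A$ satisfies $\phi$ but not $x_{1}=x_{2}$, so $\not\tdash\phi\rightarrow x_{1}=x_{2}$; the analogous construction handles the pairs $(x_{1},x_{3})$ and $(x_{2},x_{3})$. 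Since all the variables involved have sort $\s\in\{\s,\s_{2}\}$, this contradicts convexity w.r.t. $\{\s,\s_{2}\}$, establishing the claim.

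Alternatively, and more briefly, one may invoke \Cref{Barrett's theorem on convexity}: since $\tdash\psi^{\s}_{\geq 2}$ (as $|\s^{\A}|=2$) and $\tdash\psi^{\s_{2}}_{\geq 2}$ (as $\s_{2}$ is always infinite), if $\Ttwotwo$ were convex w.r.t. $\{\s,\s_{2}\}$ it would be stably infinite w.r.t. that set, contradicting the immediately preceding lemma, which shows $\Ttwotwo$ is not stably infinite w.r.t. $\{\s,\s_{2}\}$. I expect no genuine obstacle in either route; the only points requiring care are that the variables chosen lie in the sort set $S=\{\s,\s_{2}\}$ over which convexity is being tested, and that the many-sorted definition of convexity from \Cref{Restriction: convexity} is the one applied. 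The direct argument is fully self-contained, while the second is the shortest once the earlier theorems are granted.
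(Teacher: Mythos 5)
Your proposal is correct, and your ``alternative'' route is exactly the paper's proof: the paper argues that if $\Ttwotwo$ were convex, then \Cref{Barrett's theorem on convexity} (applicable because no model has a domain of size $1$ for either sort) would force stable infiniteness, contradicting the preceding lemma. Your primary, direct argument --- taking three variables of sort $\s$, using $|\s^{\A}|=2$ and the pigeonhole principle to get the valid three-way disjunction, and then refuting each individual disjunct by sending the relevant pair to the two distinct elements of $\s^{\A}$ --- is also sound and fully self-contained; it is essentially the one-sorted pigeonhole construction the paper uses elsewhere (e.g.\ for $\Tleqn$ with $n>1$), transplanted to the sort $\s$ of $\Ttwotwo$. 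What the direct route buys is independence from \Cref{Barrett's theorem on convexity} and from the non-stable-infiniteness lemma; what the paper's route buys is brevity, since both ingredients are already established at that point in the text. Both are valid proofs of the statement.
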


\begin{proof}
If it were, \Cref{Barrett's theorem on convexity} would guarantee that $\Ttwotwo$ is also stably-infinite, since it has no models $\A$ where either $|\s^{\A}|$ or $|\s_{2}^{\A}|$ equals $1$.
\end{proof}


\subsection{$\TsM$}

\begin{lemma}\label{TsM is smooth}
$\TsM$ is smooth, and thus stably-infinite.
\end{lemma}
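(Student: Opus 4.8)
The plan is to verify smoothness directly from the construction hinted at in the text: given a quantifier-free formula $\phi$, a $\TsM$-interpretation $\A$ satisfying $\phi$, and a cardinal $\kappa \geq |\sigma^{\A}|$, I would build a $\TsM$-interpretation $\B$ with $|\sigma^{\B}| = \kappa$ that still satisfies $\phi$. The guiding principle, already used in the proofs of \Cref{mincard of TsM is all defined} and \Cref{Decidability of TsM}, is that a quantifier-free $\Sigma_{s}$-formula constrains only the finitely many elements named by the terms occurring in it; hence as long as we keep $\sigma^{\A} \subseteq \sigma^{\B}$, set $s^{\B}(a) = s^{\A}(a)$ for every $a \in \sigma^{\A}$, and put $x^{\B} = x^{\A}$ for every variable $x$, any adjoined elements are invisible to $\phi$. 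A routine induction on terms (as in those lemmas) then gives $\alpha^{\B} = \alpha^{\A}$ for every term $\alpha$ of $\phi$, so $\B \models \phi$; this disposes of the ``$\phi$ is preserved'' obligation uniformly, and it only remains to adjoin elements so that $\B$ lands in $\TsM$ with cardinality exactly $\kappa$.

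I would then split into two cases. If $\kappa$ is infinite, adjoin two disjoint sets $A'$ and $B'$, each of cardinality $\kappa$ and disjoint from $\sigma^{\A}$, declaring $s^{\B}(a) = a$ for $a \in A'$ and $s^{\B}(b) = a_{0}$ for $b \in B'$, where $a_{0} \in A'$ is fixed, so that $s^{\B}(b) = a_{0} \neq b$. Then $\B$ has infinitely many elements satisfying $s(a)=a$ and infinitely many satisfying $s(a)\neq a$, so it is an infinite model of $\TsM$, and $|\sigma^{\B}| = |\sigma^{\A}| + \kappa + \kappa = \kappa$.

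The finite case is where care is needed and is the main obstacle, since there one cannot adjoin arbitrary elements but must hit the exact counts dictated by $\TsM$. If $\kappa$ is finite, then $\kappa \geq |\sigma^{\A}|$ forces $\A$ to be finite, of some size $n$, so $\A$ has exactly $f_{1}(n)$ elements with $s(a)=a$ and $f_{0}(n)$ with $s(a)\neq a$; write $m = \kappa$. Because $f_{1}$ and $f_{0}$ are non-decreasing and $m \geq n$, we have $f_{1}(m) \geq f_{1}(n)$ and $f_{0}(m) \geq f_{0}(n)$. I would therefore adjoin $f_{1}(m) - f_{1}(n)$ new elements, each a fixed point of $s^{\B}$, together with $f_{0}(m) - f_{0}(n)$ new elements, each mapped by $s^{\B}$ to some fixed point (one exists because $f(1)=1$ gives $f_{1}(m) \geq 1$), so these satisfy $s(a) \neq a$. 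The resulting structure has exactly $f_{1}(m)$ elements with $s(a)=a$ and $f_{0}(m)$ with $s(a)\neq a$, hence is a finite model of $\TsM$, and using $n = f_{1}(n)+f_{0}(n)$ and $m = f_{1}(m)+f_{0}(m)$ its cardinality is $n + (f_{1}(m)-f_{1}(n)) + (f_{0}(m)-f_{0}(n)) = m = \kappa$. In both cases $\B \models \phi$ by the term-invariance observation, establishing smoothness; stable infiniteness then follows immediately from \Cref{SMimpliesSI}.
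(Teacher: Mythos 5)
Your proposal is correct and follows essentially the same route as the paper's proof: split on whether $\kappa$ is infinite or finite, adjoin two disjoint sets of fresh elements (of size $\kappa$ each in the infinite case, and of sizes $f_{1}(\kappa)-f_{1}(n)$ and $f_{0}(\kappa)-f_{0}(n)$ in the finite case) while keeping $s$ and all variable assignments unchanged on $\sigma^{\A}$, and conclude that $\phi$ is preserved because every term of $\phi$ evaluates identically in both interpretations. The only cosmetic difference is that you send the new non-fixed-points to a fixed point whereas the paper sends them to arbitrary elements of $\sigma^{\A}$; both choices work equally well.
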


\begin{proof}
Take a quantifier-free formula $\phi$, a $\TsM$-interpretation $\A$ that satisfies $\phi$ and a cardinal $\kappa\geq |\s^{\A}|$. If $\kappa$ is infinite, we take two sets $A$ and $B$ with $\kappa$ elements, disjoint from each other and $\s^{\A}$, and define an interpretation $\B$ as follows: $\s^{\B}=\s^{\A}\cup A\cup B$; $s^{\B}$ equal to $s^{\A}$ when restricted to $\s^{\A}$, equal to the identity when restricted to $A$, and if $a\in B$, we only require that $s^{\B}(a)\neq a$, the specific value of the function at this element being irrelevant; and, for all variables $x$, $x^{\B}=x^{\A}$.

It is clear that $\B$ is a $\TsM$-interpretation since it has infinite elements satisfying $s(a)=a$ (namely, all those in $A$) and infinite elements that satisfying $s(a)\neq a$ (those in $B$); furthermore, $|\s^{\B}|=|\s^{\A}|+|A|+|B|=\kappa$, as $|A|=|B|=\kappa$ and $\kappa$
is infinite.
Finally, $\B$ validates $\phi$ since the atomic formulas in the signature $\Sigma_{s}$ are of the form $s^{i}(x)=s^{j}(y)$, for $i,j\in\mathbb{N}$, and these maintain their truth value in $\B$ if $x, y\in \vars(\phi)$ since $s^{\B}$ equals $s^{\A}$ when restricted to $\s^{\A}$.

Now, assume that $\kappa$, and so $|\s^{\A}|$ as well, is finite, and let $\kappa=m$ and $|\s^{\A}|=n$. We consider two sets $A$ and $B$ disjoint from $\s^{\A}$ with, respectively, $f_{1}(m)-f_{1}(n)$ and $f_{0}(m)-f_{0}(n)$ elements, and define an interpretation $\B$ as follows: 
$\s^{\B}=\s^{\A}\cup A\cup B$; $s^{\B}$ as equal to $s^{\A}$ when restricted to $\s^{\A}$, as equal to the identity when restricted to $A$, and as equal to any function from $B$ to $\s^{\A}$ when 
restricted to $B$; and $x^{\B}=x^{\A}$ for every variable $x$. It is easy to see that $\B$ is then a $\TsM$-interpretation: it has 
\[n+(f_{1}(m)-f_{1}(n))+(f_{0}(m)-f_{0}(n))=m\]
elements in its domain; $f_{1}(n)$ elements in $\s^{\A}$, and all $f_{1}(m)-f_{1}(n)$ elements in $A$, satisfy $s^{\B}(a)=a$, to a total of $f_{1}(m)$; and $f_{0}(n)$ elements in $\s^{\A}$, plus all $f_{0}(m)-f_{0}(n)$ elements in $B$ (remember $B$ and $\s^{\A}$ are disjoint), satisfy $s^{\B}(a)\neq a$, to a total of $f_{0}(m)$.

Furthermore, $|\s^{\B}|=|\s^{\A}|+|A|+|B|=m$; since, again, $\B$ and $\A$ agree on 
the interpretation of the variables in $\vars(\phi)$ and the value given by $s$ to the elements that may occur in $\phi$, and so $\phi$ is true in $\B$.
\end{proof}

\begin{lemma}\label{TsM is FW}
$\TsM$ is finitely witnessable.
\end{lemma}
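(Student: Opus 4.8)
The plan is to exhibit an explicit computable witness of the form $\wit(\phi)=\phi\wedge\bigwedge_{i=1}^{N}(w_{i}=w_{i})$, where the $w_{i}$ are fresh variables of sort $\sigma$ and $N$ is a number computed from the syntactic size of $\phi$. Condition $(i)$ of finite witnessability is then immediate: since $\bigwedge_{i=1}^{N}(w_{i}=w_{i})$ is a tautology and the $w_{i}$ do not occur in $\phi$, the formulas $\phi$ and $\exists\,\overarrow{w}.\:\wit(\phi)$ are logically, hence $\TsM$-, equivalent, where $\overarrow{w}=\vars(\wit(\phi))\setminus\vars(\phi)$. All the real work goes into choosing $N$ so that condition $(ii)$ holds while keeping $\wit$ computable.

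The key idea is to always target a model whose cardinality is a power of two, where the required numbers of fixed and non-fixed points are balanced and thus independent of the uncomputable values of $f$. First I would set $t=t(\phi)$ to be the number of distinct subterms $s^{j}(x)$ occurring in $\phi$, which is a computable quantity, and let $k$ be the least integer with $2^{k}>t$, so that $N:=2^{k+1}$ is computable from $\phi$ alone. This is what makes $\wit$ computable, and in particular it avoids any attempt to evaluate $f$.

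For condition $(ii)$, suppose $\wit(\phi)$ is $\TsM$-satisfiable; then so is $\phi$, say by a $\TsM$-interpretation $\A$. Writing $A$ for the finite set of values of subterms of $\phi$ under $\A$, and $m_{1},m_{0}$ for the number of elements of $A$ satisfying $s^{\A}(a)=a$ and $s^{\A}(a)\neq a$ respectively, we have $m_{0}+m_{1}=|A|\leq t<2^{k}$. I would then run the construction from the proof of \Cref{mincard of TsM is all defined} with this particular $k$: it adjoins to $A$ a set $B$ of $2^{k}-m_{1}$ fresh fixed points and a set $C$ of $2^{k}-m_{0}$ fresh non-fixed points, yielding a $\TsM$-interpretation $\B$ with $|\sigma^{\B}|=2^{k+1}=N$ that agrees with $\A$ on every subterm of $\phi$ and therefore satisfies $\phi$. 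That $\B$ is genuinely a $\TsM$-model relies on $f_{1}(2^{k+1})=f_{0}(2^{k+1})=2^{k}$, which is exactly the balancing property of $f$ from \Cref{lem:mati-f-exists}. Since the number $N$ of fresh variables equals $|\sigma^{\B}|$, I would interpret $w_{1},\dots,w_{N}$ bijectively onto $\sigma^{\B}$; then $\B\vDash\wit(\phi)$ and $\sigma^{\B}=\{w_{1}^{\B},\dots,w_{N}^{\B}\}=\vars_{\sigma}(\wit(\phi))^{\B}$, establishing $(ii)$.

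The main obstacle is precisely the non-computability of $f$: a naive witness targeting an arbitrary cardinality $n$ would have to supply $f_{1}(n)-m_{1}$ fixed and $f_{0}(n)-m_{0}$ non-fixed fresh elements, and these counts cannot be computed. The plan circumvents this by committing to a power-of-two cardinality $2^{k+1}$, where the split into $2^{k}$ fixed and $2^{k}$ non-fixed points is forced and known, and by selecting $k$ from the syntactic bound $t$ rather than from the interpretation-dependent quantities $m_{0},m_{1}$. I expect the only routine verification needed is the one already carried out in \Cref{mincard of TsM is all defined}, namely that rewiring $s$ on $A\cup B\cup C$ preserves the value of every subterm of $\phi$, so that $\B\vDash\phi$.
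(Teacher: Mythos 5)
Your proof is correct and follows essentially the same route as the paper's: both target a model of power-of-two cardinality $2^{k+1}$ computed from a purely syntactic bound on $\phi$, exploit the balancing property $f_0(2^{k+1})=f_1(2^{k+1})=2^{k}$ to pad with fresh fixed and non-fixed points of $s$, and discharge the covering condition by mapping a block of $2^{k+1}$ fresh, tautologically constrained variables bijectively onto the domain. The only divergence is cosmetic: the paper's witness additionally names the subterms $s^{j}(z_i)$ with fresh variables $y_{i,j}$, but since (in both versions) the covering is achieved by the $x_i$-block alone, dropping those conjuncts and reusing the surgery of \Cref{mincard of TsM is all defined} loses nothing.
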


\begin{proof}
For a quantifier-free formula $\phi$, consider the witness
\[\wit(\phi)=\phi\wedge\bigwedge_{i=1}^{n}\bigwedge_{j=0}^{M_{i}+1}y_{i,j}=s^{j}(z_{i})\wedge\bigwedge_{i=1}^{2^{k+1}}x_{i}=x_{i},\]
for: $\vars(\phi)=\{z_{1}, \ldots , z_{n}\}$; $M_{i}$ the maximum of the indexes $j$ such that the term $s^{j}(z_{i})$ appears in $\phi$; $2^{k}$ the smallest power of two equal to or larger than $2M$, where $M=\sum_{i=1}^{n}(M_{i}+2)$; and $x_{i}$ and $y_{i,j}$ fresh variables. One can easily convince themselves that
$\wit(\phi)$ is computable, and it is obvious that $\phi$ and $\exists\,\overarrow{x}.\:\wit(\phi)$ are $\TsM$-equivalent, for $\overarrow{x}=\vars(\wit(\phi))\setminus\vars(\phi)$, since: if $\phi$ is true in a $\TsM$-interpretation $\A$, if we change the assignment on $\A$ so that 
$y_{i,j}^{\A^{\prime}}=(s^{\A})^{j}(z_{i}^{\A})$, it is clear that $\wit(\phi)$ is satisfied by $\A^{\prime}$; and if $\exists\,\overarrow{x}.\:\wit(\phi)$ is satisfied by a $\TsM$-interpretation $\A$, given that the variables in $\overarrow{x}$ do not occur in $\phi$, 
\[\exists\,\overarrow{x}.\:\wit(\phi)\quad\text{and}\quad \phi\wedge\exists\,\overarrow{x}.\:\big[\bigwedge_{i=1}^{n}\bigwedge_{j=0}^{M_{i}+1}y_{i,j}=s^{j}(z_{i})\wedge\bigwedge_{i=1}^{2^{k+1}}x_{i}=x_{i}\big]\]
are equivalent, and thus $\A$ satisfies $\phi$.

So, assume that the $\TsM$-interpretation $\A$ satisfies $\wit(\phi)$. Let: $m_{1}<M$ be the number of terms $s^{j}(z_{i})$ (for $1\leq i\leq n$ and $0\leq j\leq M_{i}$) appearing in $\phi$ with $(s^{\A})^{j+1}(z_{i}^{\A})=(s^{\A})^{j}(z_{i}^{\A})$; $m_{0}<M$ be the number of such terms $s^{j}(z_{i})$ satisfying 
instead $(s^{\A})^{j+1}(z_{i}^{\A})\neq(s^{\A})^{j}(z_{i}^{\A})$; and $m_{1}^{*}\leq n$ be the number of elements $(s^{\A})^{M_{i}+1}(z_{i}^{\A})$ that are not in $\{s^{j}(z_{i}) : 1\leq i\leq n, 0\leq j\leq M_{i}\}^{\A}$. 
We then have that $m_{0}+m_{1}+m_{1}^{*}\leq M<2^{k}$, and so we can take sets $A$ and $B$ disjoint from $\s^{\A}$ and each other with, respectively, $2^{k}-m_{1}-m_{1}^{*}$ and $2^{k}-m_{0}$ elements. Finally, we define a $\TsM$-interpretation $\B$, starting by setting its domain to 
\[\s^{\B}=\{s^{j}(z_{i}) : 1\leq i\leq n, 0\leq j\leq M_{i}+1\}^{\A}\cup A\cup B.\]
To define $s^{\B}$, we take an element $a_{0}\in\s^{\A}$ (and there is at least one), and make
\[s^{\B}(a)=\begin{cases*}
s^{\A}(a) & if $a=(s^{\A})^{j}(z_{i}^{\A})$ for $1\leq i\leq n$ and $0\leq j\leq M_{i}$;\\
a & if $a=(s^{\A})^{M_{i}+1}(z_{i}^{\A})$, for $1\leq i\leq n$,\\
& and $a\neq (s^{\A})^{j}(z_{l}^{\A})$ for any $1\leq l\leq n$ and $0\leq j\leq M_{i}$;\\
a & if $a\in A$;\\
a_{0} & if $a\in B$;
\end{cases*}\]
notice that the first two cases are disjoint and cover all of $\{s^{j}(z_{i}) : 1\leq i\leq n, 0\leq j\leq M_{i}+1\}^{\A}$, while the two last are disjoint from each other and from the previous ones (since $A$ and $B$ are 
disjoint from each other and from $\s^{\A}$), and finish covering $\s^{\B}$; notice, as well, that $s^{\B}(a)$ is well defined since $\s^{\A}$ is not empty, and $B$ and $\s^{\A}$ are disjoint. And $x^{\B}=x^{\A}$ for every variable in $\phi$, 
$y_{i,j}^{\B}=(s^{\B})^{j}(z_{i}^{\B})$, and $x\mapsto x^{\B}$ a bijection between $\{x_{1}, ... , x_{2^{k+1}}\}$ and $\s^{\B}$, what is possible given both sets have $2^{k+1}$ elements (and 
arbitrarily for other variables). To see that this is true for $\s^{\B}$, where this fact is not obvious, notice the set $\alpha(\phi)^{\A}$, for
$\alpha(\phi)=\{s^{j}(z_{i}) : 1\leq i\leq n, 0\leq j\leq M_{i}\}$ has $m_{0}+m_{1}$ elements, from the definition of $m_{0}$ as the number of elements $a$ in $\alpha(\phi)^{\A}$ satisfying $s^{\A}(a)=a$, and of $m_{1}$ as the number of elements $a$ in $\alpha(\phi)^{\A}$ satisfying instead $s^{\A}(a)\neq a$ (and, of course, each element of $\alpha(\phi)^{\A}$ must satisfy either $s^{\A}(a)=a$ or $s^{\A}(a)\neq a$, and never both of them); therefore, $\{s^{j}(z_{i}) : 1\leq i\leq n, 0\leq j\leq M_{i}+1\}^{\A}$ has $m_{0}+m_{1}+m_{1}^{*}$ elements, from the definition of $m_{1}^{*}$ as being the number of elements of the form $(s^{\A})^{M_{i}}(z_{i}^{\A})$ which are not in $\alpha(\phi)^{\A}$. Using $A$ has 
$2^{k}-m_{1}-m_{1}^{*}$, and $B$ has $2^{k}-m_{0}$, we obtain the aforementioned total of $2^{k+1}$.

$\B$ is a $\TsM$-interpretation since: it has $2^{k+1}$ elements in its domain; half of these elements satisfy $s^{\B}(a)=a$, explicitly $m_{1}$ of those in $\{s^{j}(z_{i}) : 1\leq i\leq n, 0\leq j\leq M_{i}\}^{\A}$ that satisfy $s^{\A}(a)=a$, 
all $m_{1}^{*}$ elements $(s^{\A})^{M_{i}+1}(z_{i}^{\A})$ that are not in $\{s^{j}(z_{i}) : 1\leq i\leq n, 0\leq j\leq M_{i}\}^{\A}$, and all those 
$2^{k}-m_{1}-m_{1}^{*}$ elements in $A$; and half satisfying $s^{\A}(a)\neq a$, 
explicitly $m_{0}$ of those in $\{s^{j}(z_{i}) : 1\leq i\leq n, 0\leq j\leq M_{i}\}^{\A}$, and $2^{k}-m_{0}$ more elements in $B$. Since $f$ is defined so 
that $f_{1}(2^{k})=f_{0}(2^{k})$ and $f(m)=f_{0}(m)+f_{1}(m)$ for all $k, m\in\mathbb{N}\setminus\{0\}$,  it is then clear that $\B$ is indeed a $\TsM$-interpretation.

Furthermore, any term $\alpha$ (necessarily of the form $s^{j}(z_{i})$ with $1\leq i\leq n$ and $0\leq j\leq M_{i}$) that appears in $\phi$ receives the same value in either $\A$ or $\B$, what we now prove by induction. Indeed, for any variable $z_{i}$ that appears in $\phi$, $z_{i}^{\B}=z_{i}^{\A}$ (and thus $(s^{\B})^{0}(z_{i}^{\B})=(s^{\A})^{0}(z_{i}^{\A})$); and then, for each $0\leq j< M_{i}$, assuming as induction hypothesis that $(s^{\B})^{j}(z_{i}^{\B})=(s^{\A})^{j}(z_{i}^{\A})$ (call that element $a$, for convenience), by the definition of $s^{\B}$ we have $s^{\B}(a)=s^{\A}(a)$, and so   
\[(s^{\B})^{j+1}(z_{i}^{\B})=s^{\B}\big((s^{\B})^{j}(z_{i}^{\B})\big)=s^{\A}\big((s^{\A})^{j}(z_{i}^{\A})\big)=(s^{\A})^{j+1}(z_{i}^{\A}).\]
Since the underlying signature has no predicates, we have that all atomic formulas of $\phi$ receive the same truth-value in either $\A$ or $\B$; since $\phi$ has no quantifiers and is satisfied by $\A$, this means that $\phi$ is also satisfied by $\B$. We also have that $\B$ satisfies $\wit(\phi)$, since $y_{i,j}^{\B}=(s^{\B})^{j}(z_{i}^{\B})$. Finally, $\vars(\wit(\phi))^{\B}=\s^{\B}$, since $\{x_{1}, \ldots , x_{2^{k+1}}\}^{\B}\subseteq \vars(\wit(\phi))^{\B}$ and, given that $x\mapsto x^{\B}$ is a bijection between $\{x_{1}, \ldots , x_{2^{k+1}}\}$ and $\s^{\B}$, $\{x_{1}, \ldots , x_{2^{k+1}}\}^{\B}=\s^{\B}$, proving that $\wit$ is indeed a witness.
\end{proof}

\begin{lemma}\label{TsM is not SFW}
$\TsM$ is not strongly finitely witnessable.
\end{lemma}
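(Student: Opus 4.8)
The plan is to assume, for contradiction, that $\TsM$ has a strong witness $\wit$, and to show that this would make the $\mc$ function of $\TsM$ computable, contradicting \Cref{mcofTsMisnotcomputable}. The first step is to observe that $\mc$ is preserved by the witness: for any quantifier-free $\phi$, since $\phi$ and $\exists\,\overarrow{w}.\:\wit(\phi)$ are $\TsM$-equivalent (property $(i)$ of a witness), every $\TsM$-model of $\phi$ can be turned into a $\TsM$-model of $\wit(\phi)$ with the same underlying domain by reassigning the fresh variables, and conversely every $\TsM$-model of $\wit(\phi)$ already satisfies $\phi$. Hence $\mc(\phi)=\mc(\wit(\phi))$, and it suffices to compute the latter.

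The core step is to express $\mc(\wit(\phi))$ as a minimum over arrangements. Writing $W=\vars(\wit(\phi))$ (a finite set, all of sort $\s$), I will show
\[\mc(\wit(\phi))=\min\big\{\,|W/E| : E\text{ an equivalence on }W,\ \wit(\phi)\wedge\delta_{W}^{E}\text{ is }\TsM\text{-satisfiable}\,\big\},\]
where $|W/E|$ denotes the number of equivalence classes. For the inequality $\leq$: given any $E$ with $\wit(\phi)\wedge\delta_{W}^{E}$ being $\TsM$-satisfiable, property $(ii')$ of the strong witness yields a $\TsM$-model $\B$ of $\wit(\phi)\wedge\delta_{W}^{E}$ with $\s^{\B}=\vars_{\s}(\wit(\phi)\wedge\delta_{W}^{E})^{\B}=W^{\B}$; since $\B\models\delta_{W}^{E}$ we get $|\s^{\B}|=|W^{\B}|=|W/E|$, and as $\B\models\wit(\phi)$ this bounds $\mc(\wit(\phi))$ from above by $|W/E|$. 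For the reverse inequality, \Cref{mincard of TsM is all defined} guarantees a finite $\TsM$-model $\A$ of $\wit(\phi)$ of minimal cardinality; letting $E_{0}$ be the arrangement it induces on $W$ (i.e. $x\,E_{0}\,y$ iff $x^{\A}=y^{\A}$), the formula $\wit(\phi)\wedge\delta_{W}^{E_{0}}$ is $\TsM$-satisfiable and $|W/E_{0}|=|W^{\A}|\leq|\s^{\A}|=\mc(\wit(\phi))$, so the minimum does not exceed $\mc(\wit(\phi))$.

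Finally I will argue that this minimum is computable. The set $W$ and the formula $\wit(\phi)$ are obtained effectively since $\wit$ is computable; the equivalence relations on $W$ form a finite, effectively enumerable set, and for each $E$ the number $|W/E|$ is immediate. The one remaining ingredient is deciding, for each $E$, whether the quantifier-free formula $\wit(\phi)\wedge\delta_{W}^{E}$ is $\TsM$-satisfiable; here \Cref{Decidability of TsM} is crucial, as it reduces $\TsM$-satisfiability of quantifier-free formulas to plain satisfiability over $\Sigma_{s}$, which is decidable by congruence closure over the finite set of subterms. Combining these, $\mc(\phi)=\mc(\wit(\phi))$ becomes computable, the desired contradiction. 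The main obstacle is the middle step -- carefully matching the cardinality of the minimal model with the class count of an arrangement in both directions -- which is exactly where properties $(i)$ and $(ii')$, together with the existence of a finite minimal model (\Cref{mincard of TsM is all defined}), must be invoked in concert; the reduction to ordinary satisfiability via \Cref{Decidability of TsM} is what keeps the satisfiability tests decidable despite $\mc$ itself being uncomputable.
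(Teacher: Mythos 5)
Your proposal is correct and follows essentially the same route as the paper: assume a strong witness, establish the identity $\mc(\phi)=\min\{|W/E| : \wit(\phi)\wedge\delta_{W}^{E}\text{ is }\TsM\text{-satisfiable}\}$ using properties $(i)$ and $(ii')$, and derive computability of $\mc$ from the computability of $\wit$ together with \Cref{Decidability of TsM}, contradicting \Cref{mcofTsMisnotcomputable}. The only cosmetic difference is that you factor the argument through $\mc(\phi)=\mc(\wit(\phi))$ and explicitly invoke \Cref{mincard of TsM is all defined} for a finite minimal model, whereas the paper states the identity for $\mc(\phi)$ directly; both variants are sound.
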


\begin{proof}
Suppose $\wit$ is a strong witness; we start by proving that, given a quantifier-free formula $\phi$,
\[\mc(\phi)=\min\{|V/E| : \text{$E\in eq$ and $\wit(\phi)\wedge\delta_{V}^{E}$ is $\TsM$-satisfiable}\},\]
where $eq$ is the set of all equivalence relations $E$ on $V=\vars(\wit(\phi))$, being the corresponding arrangements denoted by $\delta_{V}^{E}$. To prove this identity, suppose $\A$ is a $\TsM$-interpretation that satisfies $\phi$ with minimal cardinality of the domain, and so $|\s^{\A}|=\mc(\phi)$; because $\phi$ and $\exists\,\overarrow{x}.\:\wit(\phi)$ are $\TsM$-equivalent (for $\overarrow{x}=\vars(\wit(\phi))\setminus\vars(\phi)$), $\A$ also satisfies $\exists\,\overarrow{x}.\:\wit(\phi)$, and by changing the value given by $\A$ to the variables in $\overarrow{x}$, we obtain a new $\TsM$-interpretation $\A^{\prime}$ that satisfies $\wit(\phi)$ and has the same underlying structure as $\A$.

Let $E$ be the equivalence on $V$ such that $xEy$ iff $x^{\A^{\prime}}=y^{\A^{\prime}}$, and we have that $\A^{\prime}$ satisfies $\wit(\phi)\wedge\delta_{V}^{E}$ and $|V/E|\leq |\s^{\A^{\prime}}|$; since $\wit$ is supposed to be a strong witness, there must then exist a $\TsM$-interpretation $\B$ that satisfies $\wit(\phi)\wedge\delta_{V}^{E}$ and $\s^{\B}=V^{\B}$, and so $|\s^{\B}|=|V/E|$. But, again since $\phi$ and $\exists\,\overarrow{x}.\:\wit(\phi)$ are $\TsM$-equivalent, $\B$ also satisfies $\exists\,\overarrow{x}.\:\wit(\phi)$ and therefore $\phi$, meaning that $|\s^{\B}|\geq |\s^{\A}|=|\s^{\A^{\prime}}|$. With all of that, $|\s^{\B}|\geq |\s^{\A^{\prime}}|\geq |V/E|=|\s^{\B}|$, implying all are equal and thus $|V/E|=|\s^{\A}|=\mc(\phi)$. Of course, we then get
\[\mc(\phi)\in \{|V/E| : \text{$E\in eq$ and $\wit(\phi)\wedge\delta_{V}^{E}$ is $\TsM$-satisfiable}\},\]
so now suppose that there exists an equivalence $E$ on $V$ such that $\wit(\phi)\wedge\delta_{V}^{E}$ is $\TsM$-satisfiable, but $|V/E|<\mc(\phi)$, and we shall reach a contradiction. Since $\wit(\phi)\wedge\delta_{V}^{E}$ is $\TsM$-satisfiable and $\wit$ is a strong witness, there exists a $\TsM$-interpretation $\A$ that satisfies $\wit(\phi)\wedge\delta_{V}^{E}$ and $\s^{\A}=V^{\A}$, and so $|\s^{\A}|=|V/E|$. But, since $\A$ also satisfies $\exists\,\overarrow{x}.\:\wit(\phi)$, and $\phi$ and $\exists\,\overarrow{x}.\:\wit(\phi)$ are $\TsM$-equivalent, $\A$ satisfies $\phi$, contradicting the fact that the smallest $\TsM$-interpretation to satisfy $\phi$ has domain of cardinality $\mc(\phi)>|V/E|=|\s^{\A}|$. So our identity for $\mc(\phi)$ is true. 

But the right side of the identity is indeed computable: in fact, finding the set $eq$ is trivial; testing whether $\wit(\phi)\wedge\delta_{V}^{E}$ is $\TsM$-satisfiable is also decidable, since it is equivalent to testing whether the same formula is not contradictory according to \Cref{Decidability of TsM}; and finding the number of equivalence classes of $E$ is also straightforward. 
Of course, this contradicts the fact that $\TsM$ does not have a computable $\mc$ function, as proven in \Cref{mcofTsMisnotcomputable}, proving that this theory is not strongly finitely witnessable.
\end{proof}

\begin{lemma}
$\TsM$ is convex.
\end{lemma}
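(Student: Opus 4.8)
The plan is to reduce convexity of $\TsM$ to convexity of the $\Sigma_{s}$-theory $\T_{u}$ axiomatized by the empty set (the theory of an uninterpreted unary function), which is already known to be convex by \Cref{uninterpretedfunctionsis convex}. The bridge between the two theories is \Cref{Decidability of TsM}, which states that every satisfiable quantifier-free $\Sigma_{s}$-formula is $\TsM$-satisfiable.

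First I would record the following consequence of \Cref{Decidability of TsM}: for every quantifier-free $\Sigma_{s}$-formula $\alpha$, $\alpha$ is $\TsM$-satisfiable if and only if it is $\T_{u}$-satisfiable. Indeed, one direction is immediate, since every $\TsM$-interpretation is a $\Sigma_{s}$-interpretation and hence a $\T_{u}$-interpretation; the other direction is precisely \Cref{Decidability of TsM}, because $\T_{u}$-satisfiability of a quantifier-free formula is just plain satisfiability. Since validity of an implication $\beta\rightarrow\gamma$ amounts to unsatisfiability of $\beta\wedge\neg\gamma$, this equivalence lifts to validities: whenever $\beta\wedge\neg\gamma$ is quantifier-free, $\dash_{\TsM}\beta\rightarrow\gamma$ holds if and only if $\dash_{\T_{u}}\beta\rightarrow\gamma$ holds.

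Next I would apply this to the convexity condition. Let $\phi$ be a conjunction of literals and $\{u_{1},v_{1},\ldots,u_{n},v_{n}\}$ a finite set of variables, and suppose $\dash_{\TsM}\phi\rightarrow\bigvee_{i=1}^{n}u_{i}=v_{i}$. The formula $\phi\wedge\bigwedge_{i=1}^{n}\neg(u_{i}=v_{i})$ is a conjunction of literals, hence quantifier-free, so by the equivalence above we obtain $\dash_{\T_{u}}\phi\rightarrow\bigvee_{i=1}^{n}u_{i}=v_{i}$. Since $\T_{u}$ is convex, there is some $i$ with $\dash_{\T_{u}}\phi\rightarrow u_{i}=v_{i}$; here $\phi\wedge\neg(u_{i}=v_{i})$ is again quantifier-free, so the equivalence transfers this back to $\dash_{\TsM}\phi\rightarrow u_{i}=v_{i}$, which is exactly what convexity of $\TsM$ requires.

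The essential content has already been established in \Cref{Decidability of TsM}, so the main obstacle here is purely bookkeeping: verifying that every formula to which the satisfiability transfer is applied is genuinely quantifier-free (it is, being a conjunction of literals), so that \Cref{Decidability of TsM} is applicable and no quantified formula — for which the transfer fails, as noted right after \Cref{Decidability of TsM} — ever enters the argument.
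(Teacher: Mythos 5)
Your proof is correct and follows essentially the same route as the paper's: both reduce convexity of $\TsM$ to convexity of the empty-axiomatized $\Sigma_{s}$-theory via \Cref{uninterpretedfunctionsis convex}, using \Cref{Decidability of TsM} to transfer (un)satisfiability of the relevant quantifier-free formulas between the two theories. The only difference is presentational — you phrase the transfer as a clean biconditional on quantifier-free validities, whereas the paper argues by contradiction — but the mathematical content is identical.
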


\begin{proof}
Assume that $\phi$ is a cube, that $\dash_{\TsM}\phi\rightarrow\bigvee_{i=1}^{n}x_{i}=y_{i}$, but $\TsM$ is not convex, and so $\not\dash_{\TsM}\phi\rightarrow x_{i}=y_{i}$ for every $1\leq i\leq n$, meaning we can find $\TsM$-interpretations $\A_{i}$ that satisfy $\phi$ and $\neg(x_{i}=y_{i})$. We can conclude, then, that $\dash_{\T}\phi\rightarrow\bigvee_{i=1}^{n}x_{i}=y_{i}$, where $\T$ is the theory with all $\Sigma_{s}$-structures as models, axiomatized by the empty set: 
if this were not true, then there would exist a $\T$-interpretation $\A$ satisfying $\phi$ but not $\bigvee_{i=1}^{n}x_{i}=y_{i}$, meaning $\phi\wedge\neg\bigvee_{i=1}^{n}x_{i}=y_{i}$ is $\T$-satisfiable, and thus non-contradictory and $\TsM$-satisfiable, given \Cref{Decidability of TsM}, what is absurd given our assumptions. However, given that \Cref{uninterpretedfunctionsis convex} states $\T$ is convex, we must have $\dash_{\T}\phi\rightarrow x_{j}=y_{j}$, for some $1\leq j\leq n$, and so $\phi\wedge\neg(x_{j}=y_{j})$ is $\T$-unsatisfiable, itself contradicting the fact that $\A_{j}$ is a $\TsM$-interpretation (and so a $\T$-interpretation as well) that satisfies $\phi$ and $\neg(x_{j}=y_{j})$.
\end{proof}


\subsection{$\TSM$}

 \begin{lemma}
 The $\mc$ function of $\TSM$ is not computable.
 \end{lemma}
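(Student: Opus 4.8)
The plan is to mirror the proof of \Cref{mcofTsMisnotcomputable} almost verbatim, the only genuine work being to check that passing from $\TsM$ to $\TSM$ leaves untouched all the cardinality data that argument relies on. First I would record two structural facts about $\TSM$. Since $\ax(\TSM)\supseteq\ax(\TsM)$, every $\TSM$-interpretation is in particular a $\TsM$-interpretation; hence a finite $\TSM$-model with $k$ elements in its domain still has exactly $f_{1}(k)$ elements $a$ with $s(a)=a$ and exactly $f_{0}(k)$ elements with $s(a)\neq a$, inherited directly from the corresponding fact for $\TsM$. Second, $\TSM$ still has a model of every finite nonzero cardinality: the interpretation $\A_{n}$ from the earlier example is such a model, and it satisfies $\psiv$, because each $a_{i}$ with $i\leq f_{1}(n)$ is a fixed point of $s^{\A_{n}}$ (so the second disjunct $s^{2}(a_{i})=s(a_{i})$ holds trivially), while each remaining $a_{i}$ maps to $a_{1}$, which is itself fixed since $f_{1}(n)\geq 1$, so that $s^{2}(a_{i})=a_{1}=s(a_{i})$ and again the second disjunct holds.

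From these two facts the central measurement is unchanged. Writing, exactly as in the proof of \Cref{mcofTsMisnotcomputable},
\[\phi_{n}=\bigwedge_{i=1}^{f_{1}(n)+1}\big(s(x_{i})=x_{i}\big)\wedge\bigwedge_{1\leq i<j\leq f_{1}(n)+1}\neg(x_{i}=x_{j}),\]
a finite $k$-element model of $\TSM$ satisfies $\phi_{n}$ precisely when it has at least $f_{1}(n)+1$ fixed points, that is, precisely when $f_{1}(k)\geq f_{1}(n)+1$; and by the availability of $\A_{k}$ such a model exists for every such $k$. Since both the numerical condition $f_{1}(k)\geq f_{1}(n)+1$ and the realizing family $\{\A_{k}\}$ are literally the same as in the $\TsM$ case, the set of cardinalities of $\TSM$-models satisfying $\phi_{n}$ coincides with the one for $\TsM$; in particular the value $\mc(\phi_{n})$ computed for $\TSM$ equals the value computed for $\TsM$, and is finite (an infinite model such as $\A_{\kappa}$ never attains the minimum).

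I would then invoke the biconditional already established in the proof of \Cref{mcofTsMisnotcomputable}, namely $f(n+1)=1$ if and only if $\mc(\phi_{n})=n+1$: by the equality of the two $\mc$ values this biconditional now reads off the $\TSM$ value as well. Consequently, a procedure computing the $\mc$ function of $\TSM$, together with the already computed $f(1),\dots,f(n)$, would compute $f(n+1)$, contradicting the non-computability of $f$ from \Cref{lem:mati-f-exists}; hence the $\mc$ function of $\TSM$ is not computable. I expect no essential new difficulty here: the whole argument reduces to the bookkeeping claim that the minimal-cardinality values agree, and the only point requiring care is confirming that $\psiv$ holds in the finite witnessing models (verified above for $\A_{n}$), which is exactly what guarantees that the transition to $\TSM$ preserves the relevant $\mc$ values.
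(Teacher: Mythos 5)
Your proposal is correct and takes essentially the same route as the paper, which simply states that the proof is identical to that of \Cref{mcofTsMisnotcomputable}; you additionally verify the details that make this transfer legitimate (finite $\TSM$-models inherit the $f_{1}(k)$/$f_{0}(k)$ fixed-point counts, and the witnessing models $\A_{n}$ satisfy $\psiv$), which is exactly the right bookkeeping.
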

 
 \begin{proof}
 The proof is the same as the one of \Cref{mcofTsMisnotcomputable}.
 \end{proof}

 \begin{lemma}
 $\TSM$ is smooth, and thus stably-infinite.
 \end{lemma}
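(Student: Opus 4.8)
The plan is to adapt the proof of \Cref{TsM is smooth} essentially verbatim, observing that the only freedom we exercised there in constructing the enlarged interpretation was in defining the function symbol on the freshly added elements, and that this freedom is exactly enough to additionally guarantee $\psiv$. So first I would note that any $\TSM$-interpretation $\A$ is in particular a $\TsM$-interpretation that also satisfies $\psiv$, and that, since $f(1)=1$, the interpretation $\A$ contains at least one element $a_{0}$ with $s^{\A}(a_{0})=a_{0}$ (as remarked after \Cref{Decidability of TsM}).

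Next I would run the same case split as in \Cref{TsM is smooth}. Given a quantifier-free $\phi$, a $\TSM$-interpretation $\A$ satisfying it, and a cardinal $\kappa\geq|\s^{\A}|$, I take the same auxiliary sets $A$ and $B$ (of cardinality $\kappa$ in the infinite case, and of cardinalities $f_{1}(\kappa)-f_{1}(|\s^{\A}|)$ and $f_{0}(\kappa)-f_{0}(|\s^{\A}|)$ in the finite case), and define $\B$ with $\s^{\B}=\s^{\A}\cup A\cup B$, with $s^{\B}$ agreeing with $s^{\A}$ on $\s^{\A}$ and acting as the identity on $A$. The single modification is that, rather than choosing $s^{\B}$ arbitrarily on $B$ subject only to $s^{\B}(b)\neq b$, I set $s^{\B}(b)=a_{0}$ for every $b\in B$. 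Since $a_{0}\in\s^{\A}$ is disjoint from $B$, we still have $s^{\B}(b)\neq b$, so the counting of fixed and non-fixed points is identical to the one in \Cref{TsM is smooth}; hence $|\s^{\B}|=\kappa$, $\B$ has infinitely many (resp. $f_{1}(\kappa)$, $f_{0}(\kappa)$) elements of each kind, and $\phi$ is satisfied by $\B$ by the very same term-by-term induction, as the values of $s^{\B}$ on $\s^{\A}$ and on the variables of $\phi$ are unchanged.

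It then remains to check that $\B$ satisfies $\psiv$, which is where the modification pays off. Every element of $\s^{\A}$ inherits one of the three scenarios of \Cref{scenarios for psiv} from $\A$, because $s^{\B}$ restricted to $\s^{\A}$ coincides with $s^{\A}$ (note $s^{\A}(a)\in\s^{\A}$, so $s^{\B}(s^{\B}(a))=s^{\A}(s^{\A}(a))$ there) and $\A\vDash\psiv$; every element of $A$ is a fixed point and so falls under the first scenario; and every $b\in B$ satisfies $b\neq s^{\B}(b)=a_{0}=s^{\A}(a_{0})=s^{\B}(s^{\B}(b))$, the third scenario. Thus $\B\vDash\psiv$, so $\B$ is a $\TSM$-interpretation of cardinality $\kappa$ satisfying $\phi$, establishing smoothness; stable infiniteness then follows immediately from \Cref{SMimpliesSI}.

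I expect the only real obstacle to be the verification that routing all of $B$ through a single fixed point simultaneously preserves $\psiv$ and the delicate cardinality balance required for membership in $\TsM$; both hold precisely because sending a new element to an existing fixed point keeps it a non-fixed point while placing it in the admissible third scenario, so neither the $f_{0}/f_{1}$ counts nor the universal constraint $\psiv$ is disturbed.
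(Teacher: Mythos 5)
Your proof is correct and follows essentially the same route as the paper, which also adapts the smoothness proof of $\TsM$ by constraining $s^{\B}$ on the new non-fixed-point elements so that $\psiv$ is preserved. The only (harmless, arguably slightly more robust) difference is that you route all of $B$ to a single pre-existing fixed point $a_{0}\in\s^{\A}$, whereas the paper maps $B$ into the freshly added fixed-point set $A$; both land every element of $B$ in the third scenario of $\psiv$ without disturbing the $f_{0}/f_{1}$ counts.
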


 \begin{proof}
We can slightly adapt the proof of \Cref{TsM is smooth}: the only difference in the proofs is that now one must require that $s^{\B}$ maps elements of $B$ into those of $A$, so that $s^{\B}(s^{\B}(a))=s^{\B}(a)$.
\end{proof}

\begin{lemma}
$\TSM$ is finitely witnessable.
\end{lemma}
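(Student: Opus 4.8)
The plan is to reuse, essentially verbatim, the witness constructed in the proof of \Cref{TsM is FW}, since $\TSM$ lives over the same signature $\Sigma_s$ and differs from $\TsM$ only by the extra axiom $\psiv$. For a quantifier-free formula $\phi$ with $\vars(\phi)=\{z_{1},\ldots,z_{n}\}$ I would again take
\[\wit(\phi)=\phi\wedge\bigwedge_{i=1}^{n}\bigwedge_{j=0}^{M_{i}+1}y_{i,j}=s^{j}(z_{i})\wedge\bigwedge_{i=1}^{2^{k+1}}x_{i}=x_{i},\]
with $M_{i}$, $k$, and the fresh variables exactly as there. The argument that $\phi$ and $\exists\,\overarrow{x}.\:\wit(\phi)$ are $\TSM$-equivalent carries over unchanged: it merely relabels the values $(s^{\A})^{j}(z_{i}^{\A})$, which is harmless, and passing between $\A$ and its variable-reassignments never touches the underlying structure, so the $\psiv$ axiom is preserved.

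The heart of the proof is constructing a \emph{finite} $\TSM$-interpretation $\B$ from a $\TSM$-interpretation $\A$ that satisfies $\wit(\phi)$. I would keep the same domain $\s^{\B}=\{s^{j}(z_{i}):1\le i\le n,\ 0\le j\le M_{i}+1\}^{\A}\cup A\cup B$ and the same counting sets $A$, $B$ as in \Cref{TsM is FW}, so that $\B$ again has $2^{k+1}$ elements split evenly between fixed and non-fixed points and is therefore already a $\TsM$-interpretation. The single modification is in the definition of $s^{\B}$ on $B$: rather than routing the elements of $B$ to an arbitrary $a_{0}\in\s^{\A}$, I would route them to a \emph{fixed} element $a_{0}\in A$. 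Such an $a_{0}$ exists because the bound $m_{0}+m_{1}+m_{1}^{*}\le M<2^{k}$ forces $|A|=2^{k}-m_{1}-m_{1}^{*}>0$, and every element of $A$ is a fixed point of $s^{\B}$. This reroute does not disturb the cardinality bookkeeping, since $a_{0}\in A$ is disjoint from $B$, so the elements of $B$ remain non-fixed.

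With this choice I would verify $\psiv$ by cases over $\s^{\B}$. Elements of $A$ and the tail elements $(s^{\A})^{M_{i}+1}(z_{i}^{\A})$ that are turned into fixed points satisfy $s^{\B}(a)=a$, hence $\psiv$ trivially. For an orbit element $a=(s^{\A})^{j}(z_{i}^{\A})$ with $j\le M_{i}$ we have $s^{\B}(a)=s^{\A}(a)$, and $s^{\B}(s^{\B}(a))=(s^{\A})^{2}(a)$, using either that $s^{\B}=s^{\A}$ on the next orbit point or that the next point is a fixed tail point; since $\A$ is a $\TSM$-interpretation, $(s^{\A})^{2}(a)\in\{s^{\A}(a),a\}$, so one of the scenarios of \Cref{scenarios for psiv} holds for $a$ in $\B$. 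Finally, for $a\in B$ we get $s^{\B}(s^{\B}(a))=s^{\B}(a_{0})=a_{0}=s^{\B}(a)$, so $s^{2}(a)=s(a)$. Thus $\B\vDash\psiv$, making $\B$ a genuine $\TSM$-interpretation. The remaining facts — that every term of $\phi$ takes the same value in $\A$ and $\B$ (so $\B\vDash\phi$ and hence $\B\vDash\wit(\phi)$) and that $\vars(\wit(\phi))^{\B}=\s^{\B}$ through the bijection with $\{x_{1},\ldots,x_{2^{k+1}}\}$ — are copied verbatim from \Cref{TsM is FW}. The main obstacle is exactly the $\psiv$ check at the non-fixed points: it is what dictates sending $B$ through a genuine fixed point $a_{0}\in A$, and it is the place where the short-orbit property $(s^{\A})^{2}(a)\in\{s^{\A}(a),a\}$ guaranteed by $\psiv$ in $\A$ must be invoked to transport orbit segments into $\B$ without creating a violation.
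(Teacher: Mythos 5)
Your proof is correct, but it takes a genuinely different route from the paper's. The paper does not reuse the $\TsM$ witness: for $\TSM$ it switches to the shorter witness $\wit(\phi)=\phi\wedge\bigwedge_{i=1}^{n}\big(y_{i}=s(w_{i})\wedge z_{i}=s(y_{i})\big)\wedge\bigwedge_{i=1}^{2^{k+1}}x_{i}=x_{i}$, naming only the first two iterates of $s$ on each variable. The point is that $\psiv$ makes $V^{\A}=\{w_{i},y_{i},z_{i}\}^{\A}$ closed under $s^{\A}$ (since $s^{\A}(z_{i}^{\A})\in\{w_{i}^{\A},y_{i}^{\A}\}$), so the finite model $\B$ can take $s^{\B}$ to be literally the restriction of $s^{\A}$ to $V^{\A}$, with no ``tail'' elements to repair and with $\psiv$ on that part inherited outright from $\A$; the elements of $B$ are then sent into $A$, exactly as you do. Your version instead keeps the long chains $y_{i,j}=s^{j}(z_{i})$ up to $M_{i}+1$, turns the tail points into fixed points, and verifies $\psiv$ by a case analysis on the orbit -- all of which goes through: the reroute of $B$ into a fixed point $a_{0}\in A$ is the right fix (and $|A|=2^{k}-m_{1}-m_{1}^{*}>0$ does hold), the counts are unchanged, and the term-preservation and surjectivity arguments from \Cref{TsM is FW} are untouched. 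One small imprecision: when the successor of an orbit point $a$ is a fixed tail point, $s^{\B}(s^{\B}(a))$ equals $s^{\B}(a)$ rather than $(s^{\A})^{2}(a)$; the conclusion that $\psiv$ holds at $a$ is still immediate (it is the $s^{2}(x)=s(x)$ disjunct), but the displayed identity should be stated as a disjunction of the two sub-cases. In exchange for a slightly heavier verification, your approach buys uniformity with the $\TsM$ argument; the paper's buys a smaller witness and a $\psiv$-check that is essentially free on the orbit part.
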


\begin{proof}
Take a quantifier-free formula $\phi$ and consider the witness
\[\wit(\phi)=\phi\wedge\bigwedge_{i=1}^{n}y_{i}=s(w_{i})\wedge z_{i}=s(y_{i})\wedge\bigwedge_{i=1}^{2^{k+1}}x_{i}=x_{i},\]
where $\vars(\phi)=\{w_{1},\ldots, w_{n}\}$, $2^{k}$ is the smallest power of two greater than $2n$, and $x_{i}$, $y_{i}$ and $z_{i}$ are fresh variables. 
If 
a $\TSM$-interpretation $\A$ satisfies $\phi$, by making $y_{i}^{\A^{\prime}}=s^{\A}(w_{i}^{\A})$ and $z_{i}^{\A^{\prime}}=s^{\A}(y_{i}^{\A^{\prime}})$, we obtain a second $\TSM$-interpretation $\A^{\prime}$ that satisfies $\wit(\phi)$, since, for $\overarrow{x}=\vars(\wit(\phi))\setminus\vars(\phi)$, $\phi$ and $\exists\,\overarrow{x}.\:\wit(\phi)$ are $\TSM$-equivalent.

Now, suppose that the $\TSM$-interpretation $\A$ satisfies $\wit(\phi)$, and let 
\[V=\{w_{i} : 1\leq i\leq n\}\cup\{y_{i} : 1\leq i\leq n\}\cup\{z_{i} : 1\leq i\leq n\}.\]
Let $m_{1}$ be the cardinality of $\{a\in V^{\A} : s^{\A}(a)=a\}$, $m_{0}=|V^{\A}|-m_{1}$, and 
then $2^{k}>\max\{m_{0}, m_{1}\}$: 
this is because either $y_{i}^{\A}=w_{i}^{\A}$ or $z_{i}^{\A}=w_{i}^{\A}$ for each $1\leq i\leq n$, implying $|V^{\A}|\leq 2n$, and thus $m_{0}, m_{1}\leq 2n$. Given sets $A$ and $B$, disjoint from 
$\s^{\A}$, with respectively $2^{k}-m_{1}$ and $2^{k}-m_{0}$ elements, we define a new $\TSM$-interpretation $\B$ by making:
\begin{enumerate}
\item $\s^{\B}=V^{\A}\cup A\cup B$ (which has $2^{k+1}$ elements);
\item $s^{\B}$ equal to $s^{\A}$ when restricted to $V^{\A}$ (what is well-defined, since $s^{\A}(w_{i}^{\A})=y_{i}^{\A}$, $s^{\A}(y_{i}^{\A})=z_{i}^{\A}$ and $s^{\A}(z_{i}^{\A})$ equals either $w_{i}^{\A}$ or $y_{i}^{\A}$), equal to the identity when restricted to $A$, and equal to any function from $B$ to $A$ when restricted to $B$ (this way, all elements of $A$ satisfy $s^{\B}(a)=a$, while all of $B$ satisfy $s^{\B}(a)\neq a$ and $s^{\B}(s^{\B}(a))=s^{\B}(a)$);
\item and $x^{\B}=x^{\A}$ for any variable in $V$, $x_{i}\mapsto x_{i}^{\B}$ a bijection between $\{x_{1}, \ldots , x_{2^{k+1}}\}$ and $\s^{\B}$, and arbitrary otherwise.
\end{enumerate}

Now we prove that $\B$ is a $\TSM$-interpretation: it has $2^{k+1}$ elements; $2^{k}$ of them, specifically $m_{1}$ in $\{a\in V^{\A} : s^{\A}(a)=a\}$ and another $2^{k}-m_{1}$ in $A$, satisfy $s^{\B}(a)=a$; the other half, specifically 
$m_{0}$ in $V^{\A}\setminus\{a\in V^{\A} : s^{\A}(a)=a\}$ and $2^{k}-m_{1}$, satisfy $s^{\B}(a)\neq a$ instead; so $\B$ is at least a $\TsM$-interpretation. 
Furthermore: for any element $a$ of $V^{\A}$, $s^{\B}(a)=s^{\A}(a)$ and $s^{\B}(s^{\B}(a))=^{\A})s^{\A}(a))$, and we 
already have either $s^{\A}(s^{\A}(a))=a$ or $s^{\A}(s^{\A}(a))=s^{\A}(a)$; for any $a\in A$, $s^{\B}(a)=a$, and so $s^{\B}(s^{\B}(a))=a$; and for any $a\in B$, 
$s^{\B}(a)\in A$, meaning $s^{\B}(s^{\B}(a))=s^{\B}(a)$; so $\B$ also satisfies 
$\psiv$, and is therefore a $\TSM$-interpretation.

Now, let $\alpha$ be a term in $\phi$, necessarily of the form $s^{j}(w_{i})$: we know $w_{i}^{\A}$ and $s^{\A}(w_{i}^{\A})$ are in $V^{\A}$, and therefore so are $(s^{j}(w_{i}))^{\A}$ for all 
$j\in\mathbb{N}$, since $(s^{\A})^{j}(w_{i}^{\A})$ must equal either $w_{i}^{\A}$ or $s^{\A}(w_{i}^{\A})$; since $s^{\B}$ coincides with $s^{\A}$ on
$V^{\A}$, and $w_{i}^{\B}=w_{i}^{\A}$, we have that $\alpha^{\B}=\alpha^{\A}$. Given that $\wit(\phi)$ is quantifier and predicate-free, we have that it must 
receive the same value in either $\A$ or $\B$, and is therefore true in $\B$; furthermore, $x\mapsto x^{\B}$ is a 
bijection between $\{x_{1}, \ldots , x_{2^{k+1}}\}$ and $\s^{\B}$, and since $\{x_{1}, \ldots , x_{2^{k+1}}\}\subseteq \vars_{\s}(\wit(\phi))$, we have  $\vars_{\s}(\wit(\phi))^{\B}=\s^{\B}$. Hence $\wit$ is indeed a witness.
\end{proof}

\begin{theorem}\label{Decidability of TSM}
Given a quantifier-free formula $\phi$ in the signature $\Sigma_{s}$ with $\vars(\phi)=\{w_{1},\ldots, w_{n}\}$, if 
\[\overline{\phi}=\phi\wedge\bigwedge_{i=1}^{n}s^{2}(w_{i})=w_{i}\vee s^{2}(w_{i})=s(w_{i})\]
is satisfiable, then $\phi$ is $\TSM$-satisfiable.
\end{theorem}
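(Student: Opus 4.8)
The plan is to mimic the proof of \Cref{Decidability of TsM}, but with the two modifications forced by the extra axiom $\psiv$: I must start from an interpretation whose behaviour on the variables of $\phi$ already conforms to $\psiv$ (which is exactly what $\overline{\phi}$ guarantees, whereas plain satisfiability of $\phi$ would not), and I must perform the extension while keeping $\psiv$ globally true. So first I would use the hypothesis to fix a $\Sigma_s$-interpretation $\A$ satisfying $\overline{\phi}$; in particular $\A$ satisfies $\phi$ together with $s^2(w_i)=w_i \vee s^2(w_i)=s(w_i)$ for every variable $w_i$.

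The key preliminary observation is that the $s$-closure of the values of the variables is finite and already obeys $\psiv$. Writing $a_i=w_i^{\A}$ and $b_i=s^{\A}(a_i)$, the condition in $\overline{\phi}$ says $s^{\A}(b_i)\in\{a_i,b_i\}$, so the orbit $\{a_i,\,b_i,\,s^{\A}(b_i),\dots\}$ is exactly $\{a_i,b_i\}$ --- one of the three scenarios of \Cref{scenarios for psiv} --- and on each such orbit $\psiv$ holds. Hence the set $D=\bigcup_i\{a_i,b_i\}$ is finite, closed under $s^{\A}$, and every element $d\in D$ satisfies $s^2(d)=d$ or $s^2(d)=s(d)$.

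Next I would build the desired $\TSM$-interpretation $\B$ exactly in the style of the infinite case of the $\TsM$ constructions: take countably infinite sets $A$ and $B$ disjoint from each other and from $D$, set $\s^{\B}=D\cup A\cup B$, let $s^{\B}$ agree with $s^{\A}$ on $D$ (well defined since $D$ is $s^{\A}$-closed), act as the identity on $A$, and send every element of $B$ to a fixed element of $A$. Then $\B$ has infinitely many elements satisfying $s(a)=a$ (all of $A$) and infinitely many satisfying $s(a)\neq a$ (all of $B$), so it is a $\TsM$-interpretation; and it satisfies $\psiv$ everywhere --- on $D$ by the observation above, on $A$ because $s^2(a)=a$, and on $B$ because $s(b)\in A$ is a fixpoint so $s^2(b)=s(b)$ --- hence it is a $\TSM$-interpretation. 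Setting $x^{\B}=x^{\A}$ for the variables of $\phi$, a routine induction on terms shows $(s^j(w_i))^{\B}=(s^j(w_i))^{\A}$, since $s^{\B}$ and $s^{\A}$ coincide on the $s$-closed set $D$ that contains all these values. As the signature has no predicates, every atomic subformula of $\phi$ keeps its truth value, so $\B$ satisfies $\phi$ and $\phi$ is $\TSM$-satisfiable.

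The only genuinely delicate point, and the one I would spell out in full, is the orbit argument of the second paragraph: that the \emph{local} constraints $s^2(w_i)=w_i\vee s^2(w_i)=s(w_i)$ collected in $\overline{\phi}$ propagate so that $\psiv$ holds at \emph{every} element reachable from the variables, ensuring that transplanting $s^{\A}$ on $D$ into $\B$ cannot violate $\psiv$. Everything else is a direct adaptation of the proof of \Cref{Decidability of TsM}.
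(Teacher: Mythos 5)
Your proof is correct, but it takes a genuinely different route from the paper's. The paper first invokes \Cref{Decidability of TsM} to obtain a $\TsM$-interpretation $\A$ of $\overline{\phi}$, then keeps the domain fixed and surgically redefines $s$ only on those elements that lie outside $\vars(\phi)^{\A}\cup\{s^{\A}(a): a\in\vars(\phi)^{\A}\}$ and are not fixed points, redirecting each to some fixed point of $s^{\A}$; it then verifies that the numbers of elements with $s(a)=a$ and with $s(a)\neq a$ are unchanged (so the result is still a $\TsM$-model) and that $\psiv$ now holds everywhere. You instead start from an arbitrary model of $\overline{\phi}$, discard everything except the finite $s$-closed set $D$ of variable values, and re-run the padding construction from the proof of \Cref{Decidability of TsM} (two fresh countably infinite sets, one of fixed points and one mapped into it), producing an infinite $\TSM$-model directly. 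Your orbit argument --- that the per-variable constraints in $\overline{\phi}$ force the orbit of each $w_i^{\A}$ to be exactly $\{w_i^{\A},\, s^{\A}(w_i^{\A})\}$, so that $D$ is $s^{\A}$-closed and satisfies $\psiv$ pointwise, and hence transplanting $s^{\A}\!\restriction_{D}$ is safe --- is indeed the one delicate step, and it is sound. Your version is somewhat more self-contained, since it does not require the starting interpretation to be a $\TsM$-model but only a model of $\overline{\phi}$, at the cost of not preserving the domain or its cardinality; neither is needed for this statement, so both arguments are equally adequate.
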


\begin{proof}
Suppose $\overline{\phi}$ is satisfiable, and from \Cref{Decidability of TsM} there exists a $\TsM$-interpretation $\A$ that satisfies $\overline{\phi}$. We produce a $\TSM$-interpretation $\B$ as follows: $\s^{\B}=\s^{\A}$; if 
\[a\in\s^{\A}\setminus\vars(\phi)^{\A}\quad\text{and}\quad a\in \s^{\A}\setminus\{ s^{\A}(a) : a\in\vars(\phi)^{\A}\},\]
and $s^{\A}(a)\neq a$, we make $s^{\B}(a)$ equal to any element $b\in\s^{\A}$ such that $s^{\A}(b)=b$, and otherwise $s^{\B}$ equals $s^{\A}$; for all variables $x$, $x^{\B}=x^{\A}$. Since the interpretation of all variables in $\B$ is the same as in 
$\A$, for every variable $w_{i}$ of $\overline{\phi}$ one has $w_{i}^{\B}=w_{i}^{\A}$; and since $s^{\B}$ agrees with $s^{\A}$ in $\vars(\phi)^{\A}\cup\{ s^{\A}(a) : a\in\vars(\phi)^{\A}\}$, and for every 
$j\in\mathbb{N}$ one finds that $(s^{\A})^{j}(w_{i}^{\A})$ because $\A$ satisfies $\bigwedge_{i=1}^{n}s^{2}(w_{i})=w_{i}\vee s^{2}(w_{i})=s(w_{i})$, we get that for any term $\alpha$ in $\overline{\phi}$, $\alpha^{\B}=\alpha^{\A}$. Since $\Sigma_{s}$ does not have predicates (other than identity) and $\overline{\phi}$ is quantifier-free, we reach the conclusion that $\B$ clearly satisfies $\overline{\phi}$, and so $\phi$.

It is less clear, however, that $\B$ is indeed a $\TSM$-interpretation. We start by noticing $\B$ is at least a $\TsM$-
interpretation, since $s^{\B}(a)=a$ iff $s^{\A}(a)=a$: indeed, begin by assuming that $s^{\B}(a)=a$; we cannot have that $a\in\s^{\A}\setminus\vars(\phi)^{\A}$, 
$a\in \s^{\A}\setminus\{ s^{\A}(a) : a\in\vars(\phi)^{\A}\}$ and $s^{\A}(a)\neq a$, since in that case $s^{\B}(a)=b$ for an element $b$ such that $s^{\A}(b)=b$ 
(which must forcibly be different from $a$, since $s^{\A}(a)\neq a$); so $s^{\B}$ must coincide with $s^{\A}$, implying that $s^{\A}(a)=a$. Reciprocally, assume $s^{\A}(a)=a$: then we are not in the case that $a\in\s^{\A}\setminus\vars(\phi)^{\A}$, 
$a\in \s^{\A}\setminus\{ s^{\A}(a) : a\in\vars(\phi)^{\A}\}$, and so $s^{\B}$ must coincide with $s{\A}$, meaning $s^{\B}(a)=a$, as we wished to show.

Furthermore:
\begin{enumerate}
\item if $a\in \vars(\phi)^{\A}$ or $a\in\{s^{\A}(a) : a\in\vars(\phi)^{\A}\}$, by the fact that $\B$ satisfies $\overline{\phi}$ we get that either $s^{\B}(s^{\B}(a))=a$ or $s^{\B}(s^{\B}(a))=s^{\B}(a)$; 
\item if $a\in \s^{\A}\setminus\vars(\phi)^{\A}$, $a\in\s^{\A}\setminus\{s^{\A}(a) : a\in\vars(\phi)^{\A}\}$, and $s^{\A}(a)=a$, $s^{\B}(a)=a$;
\item and if $a\in \s^{\A}\setminus\vars(\phi)^{\A}$ and $a\in\s^{\A}\setminus\{s^{\A}(a) : a\in\vars(\phi)^{\A}\}$, but $s^{\A}(a)\neq a$, there is a $b\in \s^{\A}$ with $s^{\A}(b)=b$ such that $s^{\B}(a)=b$, and therefore $s^{\B}(s^{\B}(a))=s^{\B}(a)$,
\end{enumerate}
proving $\B$ is indeed a $\TSM$-interpretation.
\end{proof}

\begin{lemma}
$\TSM$ is not strongly finitely witnessable.
\end{lemma}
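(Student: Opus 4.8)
The plan is to mirror the argument used for $\TsM$ in the proof of \Cref{TsM is not SFW}, substituting the decidability result \Cref{Decidability of TsM} with its analogue \Cref{Decidability of TSM}. First I would assume, for contradiction, that $\TSM$ admits a strong witness $\wit$, and establish the identity
\[\mc(\phi)=\min\{|V/E| : \text{$E$ is an equivalence on $V$ with $\wit(\phi)\wedge\delta_{V}^{E}$ $\TSM$-satisfiable}\},\]
where $V=\vars(\wit(\phi))$. The proof of this identity does not rely on any decidability property and is essentially verbatim the one in \Cref{TsM is not SFW}: the ``$\leq$'' direction comes from taking a minimal-cardinality $\TSM$-model $\A$ of $\phi$, extending it (by reassigning the auxiliary variables) to a model of $\wit(\phi)$, reading off the arrangement $\delta_{V}^{E}$ induced by equality in that model, and invoking the strong witness property to obtain a $\TSM$-interpretation whose domain collapses onto $V$; the ``$\geq$'' direction follows because any $\TSM$-interpretation witnessing some $\wit(\phi)\wedge\delta_{V}^{E}$ also satisfies $\exists\,\overarrow{x}.\:\wit(\phi)$, hence $\phi$, so its number of equivalence classes cannot drop below $\mc(\phi)$.

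Next I would show that the right-hand side of this identity is computable. The only nontrivial ingredient is deciding whether a quantifier-free formula of the shape $\wit(\phi)\wedge\delta_{V}^{E}$ is $\TSM$-satisfiable. Here I would invoke \Cref{Decidability of TSM} together with the observation that its converse holds trivially: every $\TSM$-interpretation satisfies $\psiv$, so for the variables $w_{1},\ldots,w_{n}$ of any quantifier-free formula it satisfies $\bigwedge_{i=1}^{n}[s^{2}(w_{i})=w_{i}\vee s^{2}(w_{i})=s(w_{i})]$. Consequently, a quantifier-free formula is $\TSM$-satisfiable if and only if the corresponding $\overline{\phi}$ is satisfiable, and the latter is ordinary quantifier-free satisfiability, which is decidable. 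Enumerating the finitely many equivalence relations on $V$, testing each associated formula $\wit(\phi)\wedge\delta_{V}^{E}$ for $\TSM$-satisfiability in this way, and taking the least number of classes among the satisfiable ones therefore yields an algorithm computing $\mc$.

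This contradicts the fact, established in the lemma immediately preceding this one, that the $\mc$ function of $\TSM$ is not computable; hence no strong witness for $\TSM$ can exist. I expect the decidability step to be the main obstacle: \Cref{Decidability of TSM} is stated only as the implication that satisfiability of $\overline{\phi}$ yields $\TSM$-satisfiability of $\phi$, so the argument hinges on noticing that the reverse implication is immediate from $\psiv$ being an axiom of $\TSM$, which upgrades the statement to the equivalence needed to reduce $\TSM$-satisfiability to plain satisfiability and thereby secure the required computability.
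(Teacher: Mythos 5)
Your proposal is correct and follows essentially the same route as the paper: reduce to the $\mc$-identity established for $\TsM$, and use the decidability of $\TSM$-satisfiability for quantifier-free formulas (via the $\overline{\phi}$ transformation) to make the right-hand side computable, contradicting the non-computability of $\mc$ for $\TSM$. Your explicit remark that the converse of the decidability theorem is immediate because $\psiv$ is an axiom of $\TSM$ is exactly the (implicit) justification the paper relies on when it asserts the equivalence.
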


\begin{proof}
We proceed as in \Cref{TsM is not SFW}, showing that
\[\mc(\phi)=\min\{|V/E| : \text{$E\in eq$ and $\wit(\phi)\wedge\delta_{V}^{E}$ is $\TSM$-satisfiable}\},\]
what is absurd given the right side is computable: indeed, given $\phi$, finding $V$ and $eq$ can be easily done, as well as $|V/E|$; obtaining 
$\wit(\phi)\wedge\delta_{V}^{E}$ can also be done algorithmically, if $\wit$ is computable; and, thanks to \Cref{Decidability of TSM}, testing whether $\wit(\phi)\wedge\delta_{V}^{E}$ is $\TSM$-satisfiable is equivalent to testing whether the formula 
\[\overline{\wit(\phi)\wedge\delta_{V}^{E}}=\wit(\phi)\wedge\delta_{V}^{E}\wedge\bigwedge_{w\in V}s^{2}(x)=x\vee s^{2}(x)=s(x)\]
is satisfiable, something that can be achieved algorithmically. \end{proof}

\begin{lemma}
$\TSM$ is not convex.
\end{lemma}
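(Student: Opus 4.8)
The plan is to mirror the non-convexity argument from the convexity part of \Cref{Tvee is ...}, exploiting the fact that $\TSM$ contains $\psiv$ among its axioms. Concretely, I would work with the single cube $\phi = (y = s(x)) \wedge (z = s(y))$, where $x$, $y$, $z$ are variables of the sole sort $\s$, and argue that $\dash_{\TSM} \phi \rightarrow (x=y) \vee (x=z) \vee (y=z)$ holds while none of the three implications $\phi \rightarrow (x=y)$, $\phi \rightarrow (x=z)$, $\phi \rightarrow (y=z)$ is $\TSM$-valid. This directly contradicts convexity.

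For the valid implication, I would take any $\TSM$-interpretation $\C$ satisfying $\phi$ and set $a = x^{\C}$, so that $y^{\C} = s^{\C}(a)$ and $z^{\C} = s^{\C}(s^{\C}(a))$. Since $\C$ models $\psiv$, either $s^{\C}(s^{\C}(a)) = a$, giving $x^{\C} = z^{\C}$, or $s^{\C}(s^{\C}(a)) = s^{\C}(a)$, giving $y^{\C} = z^{\C}$; in both cases the disjunction holds. For the three non-implications I would exhibit explicit finite $\TSM$-models. A two-element model $\{p,q\}$ with $s(p)=p$ and $s(q)=p$ refutes $\phi \rightarrow (x=y)$ and $\phi \rightarrow (x=z)$ via the assignment $x \mapsto q$, $y \mapsto p$, $z \mapsto p$; and a four-element model $\{a,b,c,d\}$ with the two-cycle $s(a)=b$, $s(b)=a$ together with fixed points $s(c)=c$, $s(d)=d$ refutes $\phi \rightarrow (y=z)$ via $x \mapsto a$, $y \mapsto b$, $z \mapsto a$.

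The main obstacle, and the only place requiring real care, is verifying that these witnessing interpretations genuinely lie in $\TSM$: they must simultaneously satisfy $\psiv$ and meet the exact fixed-point/non-fixed-point counts that define $\TsM$. For the two-element model I would check that it has precisely $f_{1}(2) = 1$ element fixed by $s$ and $f_{0}(2) = 1$ element moved, matching the values furnished by \Cref{fof is nice} and the base data $f(1)=1$, $f(2)=0$; for the four-element model, that it has $f_{1}(4) = 2$ fixed and $f_{0}(4) = 2$ moved elements. The $\psiv$ check is routine in each case, since $s^{2}$ sends every point either to itself or to its image. A secondary subtlety worth flagging is that no single shape of model can refute all three disjuncts at once: a tail element $a \to s(a) \to s(a)$ forces $y = z$, whereas a two-cycle forces $x = z$, which is exactly why two distinct models are needed; alternatively, one could house both configurations inside a single eight-element $\TSM$-model (four fixed points, a two-cycle, and two tails into a fixed point) and read off all three counterexamples from it.
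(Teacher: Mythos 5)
Your proof is correct and takes essentially the same route as the paper, which simply defers to the non-convexity argument for $\Taddnc$ in \Cref{Tvee is ...}: the same cube $(y=s(x))\wedge(z=s(y))$, the same use of $\psiv$ to derive $\dash_{\TSM}\phi\rightarrow(x=y)\vee(x=z)\vee(y=z)$, and explicit interpretations refuting each individual disjunct. Your added check that the witnessing structures respect the counts $f_{1}(2)=f_{0}(2)=1$ and $f_{1}(4)=f_{0}(4)=2$ is the one point where the verbatim transfer genuinely needs care (a two-element pure two-cycle would not be a $\TSM$-model), and you handle it correctly.
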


\begin{proof}
This proof is the same as the one for the non-convexity of $\Taddnc$ in \Cref{Tvee is ...}.
\end{proof}


\subsection{$\Tneqodd$}\label{CV and FW, but not SI, SM nor SFW}

\begin{lemma}
$\Tneqodd$ is not stably-infinite, and thus not smooth.
\end{lemma}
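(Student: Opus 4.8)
The plan is to exhibit a single quantifier-free formula that is $\Tneqodd$-satisfiable yet forces every model satisfying it to be trivial, and hence finite. The natural choice is the formula $\phi = (s(x) = x)$, which directly contradicts the defining feature of the nontrivial models of $\Tneqodd$. Recall that, by the axiomatization $\{\psi_{=1}\vee[\neg\psi_{=2k}\wedge\forall\, x.\:\neg(s(x)=x)] : k\in\mathbb{N}\}$, any $\Tneqodd$-interpretation $\A$ with $|\s^{\A}| > 1$ must satisfy $\forall\, x.\:\neg(s(x)=x)$, i.e.\ $s^{\A}(a) \neq a$ for every $a \in \s^{\A}$.

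First I would check that $\phi$ is $\Tneqodd$-satisfiable. The trivial model $\A$ with a one-element domain $\{a\}$ satisfies $\psi_{=1}$, and therefore every axiom of $\Tneqodd$; moreover, since $a$ is the only element, $s^{\A}(a) = a$, so setting $x^{\A} = a$ gives $\A \vDash \phi$. Thus $\phi$ is indeed $\Tneqodd$-satisfiable.

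Next I would argue that every $\Tneqodd$-interpretation satisfying $\phi$ is finite. Suppose $\A$ is a $\Tneqodd$-interpretation with $\A \vDash \phi$, so $s^{\A}(x^{\A}) = x^{\A}$. If $\A$ were nontrivial, the remark above would give $s^{\A}(x^{\A}) \neq x^{\A}$, a contradiction. Hence $\A$ is trivial, so $|\s^{\A}| = 1$ is finite. Consequently, there is no $\Tneqodd$-interpretation with an infinite domain satisfying the $\Tneqodd$-satisfiable formula $\phi$, which is exactly the failure of stable infiniteness (w.r.t.\ the only sort). Since smoothness implies stable infiniteness by \Cref{SMimpliesSI}, $\Tneqodd$ is also not smooth.

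There is no real obstacle here beyond reading the axiomatization correctly: the only point requiring care is confirming both halves of the argument at once, namely that the trivial model genuinely satisfies $\phi$ (providing $\Tneqodd$-satisfiability) while simultaneously being the \emph{only} shape of model that can, so that infinite satisfying models are excluded. Everything else is immediate from the definition of $\Tneqodd$ and of stable infiniteness.
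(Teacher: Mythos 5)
Your proof is correct and follows essentially the same route as the paper: the paper's argument is precisely that $s(x)=x$ is satisfied by the trivial $\Tneqodd$-interpretation but cannot be satisfied by any infinite model, since those must satisfy $\forall\, x.\:\neg(s(x)=x)$. You have simply spelled out the two halves (satisfiability in the one-element model, and the forced failure in every nontrivial model) in more detail.
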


\begin{proof}
While $s(x)=x$ is satisfied by the trivial $\Tneqodd$-interpretation, any infinite model of $\Tneqodd$ must satisfy $\forall\, x.\:\neg[s(x)=x]$ instead.
\end{proof}

\begin{lemma}
$\Tneqodd$ is finitely witnessable.
\end{lemma}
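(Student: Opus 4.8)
The plan is to exhibit an explicit witness and verify the two defining conditions directly. Given a quantifier-free $\Sigma_s$-formula $\phi$ with $\vars(\phi)=\{z_1,\dots,z_n\}$ (we may assume $n\geq 1$, otherwise adding a dummy variable), I would let $M_i$ be the largest $j$ such that $s^j(z_i)$ occurs in $\phi$, introduce fresh variables $y_{i,j}$ for $0\leq j\leq M_i+1$ together with one padding variable $x$, and set
\[\wit(\phi)=\phi\wedge\bigwedge_{i=1}^{n}\bigwedge_{j=0}^{M_i+1}\big(y_{i,j}=s^{j}(z_i)\big)\wedge(x=x).\]
Condition $(i)$ is immediate: with $\overarrow{w}=\vars(\wit(\phi))\setminus\vars(\phi)$, the extra conjuncts use only the fresh variables and are always satisfiable (interpret $y_{i,j}$ as $(s^{j}(z_i))$ and $x$ arbitrarily), so $\phi$ and $\exists\,\overarrow{w}.\:\wit(\phi)$ are $\Tneqodd$-equivalent.

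For condition $(ii)$, suppose $\wit(\phi)$ is $\Tneqodd$-satisfiable, say by a $\Tneqodd$-interpretation $\A$. If $\A$ is trivial, its single element is the value of every variable, so $\s^{\A}=\vars_{\s}(\wit(\phi))^{\A}$ and $\A$ itself is the required witness. Otherwise $\A$ is non-trivial, hence $s^{\A}$ is fixed-point-free and $|\s^{\A}|$ is odd or infinite. Writing $a_{i,j}=(s^{j}(z_i))^{\A}$ and $D_0=\{a_{i,j}:1\leq i\leq n,\ 0\leq j\leq M_i+1\}$, I would build a finite model $\B$ essentially on $D_0$ as follows: set $s^{\B}(d)=a_{l,j+1}$ whenever $d=a_{l,j}$ with $j\leq M_l$ (this is well-defined because $s^{\A}$ is a function, and it creates no fixed point since $a_{l,j}\neq a_{l,j+1}$), and route every \emph{pure endpoint} $d$ (one equal to some $a_{i,M_i+1}$ but to no $a_{l,j}$ with $j\leq M_l$) to $a_{1,0}$. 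As a pure endpoint satisfies $d\neq a_{1,0}$ and $s^{\B}(a_{1,0})=a_{1,1}\neq a_{1,0}$, no fixed point is introduced and $D_0$ is closed under $s^{\B}$. Interpreting each named variable by its $\A$-value reproduces the truth value of every atomic subformula of $\phi$ — these involve only the $a_{i,j}$ with $j\leq M_i$, on which $s^{\B}$ agrees with $s^{\A}$ — so $\B\vDash\wit(\phi)$ with $\s^{\B}=\vars_{\s}(\wit(\phi))^{\B}$.

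The remaining point is the parity constraint built into $\Tneqodd$: a finite non-trivial model must have \emph{odd} cardinality. Since $a_{1,0}\neq a_{1,1}$, we have $|D_0|\geq 2$; if $|D_0|$ is already odd I map $x$ to $a_{1,0}$, leaving the domain unchanged, while if $|D_0|$ is even I map $x$ to a single fresh element $e$ with $s^{\B}(e)=a_{1,0}$, enlarging the domain by exactly one. Either way $|\s^{\B}|$ is odd and at least $3$, $s^{\B}$ is total and fixed-point-free, and every element is named, so $\B$ is a genuine $\Tneqodd$-interpretation witnessing $\wit(\phi)$. The main obstacle is precisely this last step: reconciling the odd-cardinality requirement with closure under $s$ and the absence of fixed points \emph{while} keeping the domain equal to the set of variable values; routing the dangling $s$-images to a chain-start $a_{1,0}$ (guaranteed distinct from every pure endpoint) and using the single padding variable to flip parity are the delicate ingredients, whereas the verification that $\phi$'s atomic formulas keep their truth values is routine.
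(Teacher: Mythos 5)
Your proposal is correct and follows essentially the same route as the paper's proof: the same witness (conjoining $y_{i,j}=s^{j}(z_i)$ up to depth $M_i+1$ plus one padding variable), the same case split on triviality and on the parity of the set of term values, and the same construction of a finite model on that set, with dangling $s$-images routed back into it and the padding variable used to fix parity. The only cosmetic difference is that you route pure endpoints specifically to $a_{1,0}$ where the paper sends them to an arbitrary non-fixed element; both work.
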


\begin{proof}
Given a quantifier-free formula $\phi$ with $\vars(\phi)=\{w_{1}, \ldots, w_{n}\}$ (notice $n\geq 1$, since the signature over which 
$\Tneqodd$ is defined has neither function nor predicate constants, being thus impossible to define a 
formula with no variables) and, for every $1\leq i\leq n$, $M_{i}=\max\{j : \text{$s^{j}(w_{i})$ is in $\phi$}\}$, we define its witness as
\[\wit(\phi)=\phi\wedge(y=y)\bigwedge_{i=1}^{n}\bigwedge_{j=0}^{M_{i}+1}y_{i,j}=s^{j}(w_{i}),\]
where $y$ and $y_{i,j}$ are all fresh variables. For $\overarrow{x}=\vars(\wit(\phi))\setminus\vars(\phi)$, 
we prove that $\phi$ and $\exists\, \overarrow{x}.\:\wit(\phi)$ are $\Tneqodd$-equivalent: 
the right-to-left is trivial. For the converse,
if 
$\A$ is a $\Tneqodd$-interpretation that satisfies $\phi$, by changing the value given by $\A$ to the 
variables $y_{i,j}$ so that $y_{i,j}^{\A^{\prime}}=(s^{\A})^{j}(w_{i}^{\A})$, we obtain a second $\Tneqodd$-interpretation $\A^{\prime}$ that satisfies 
$\wit(\phi)$; of course, this means that $\A$ itself satisfies $\exists\,\overarrow{x}.\:\wit(\phi)$.

Now, assume that the $\Tneqodd$-interpretation $\A$ satisfies $\wit(\phi)$. Let $V=\vars(\phi)\cup\{y_{i,j} : 1\leq i\leq n, 0\leq j\leq M_{i}+1\}$. 
We have three cases to consider.
\begin{enumerate}
\item If $|V^{\A}|=1$, it is clear that $\A$ is the $\Tneqodd$-interpretation with only one element (since $M_{i}+1\geq 1$, and so $(s^{\A})^{1}(w_{i}^{\A})=(s^{\A})^{0}(w_{i}^{\A})$), and therefore $\A$ is already a $\Tneqodd$-interpretation that satisfies $\wit(\phi)$ with $\vars_{\s}(\wit(\phi))^{\A}=\s^{\A}$, so there is nothing we need to do.

\item If $|V^{\A}|$ is an odd number greater than $1$, we make a second $\Tneqodd$-interpretation $\B$ by proceeding as follows. Regarding the domain of $\B$, $\s^{\B}=V^{\A}$. If $a=(s^{\A})^{j}(w_{i}^{\A})$ for some
$1\leq i\leq n$ and $0\leq j\leq M_{i}$, $s^{\B}(a)=s^{\A}(a)$ (and, this way, $s^{\B}(a)\neq a$, since $s^{\A}(a)\neq a$); and if $a=(s^{\A})^{M_{i}+1}(w_{i}^{\A})$, but $a$ does not equal $(s^{\A})^{j}(w_{k}^{\A})$ for any $1\leq k\leq n$ and $0\leq j\leq M_{k}$, we simply make $s^{\B}(a)$ equal any element from 
$\s^{\B}$ different from $a$ (and there is one, since $|V^{\A}|>1$). Finally, $x^{\B}=x^{\A}$ for all variables $x\in V$, and arbitrarily otherwise (what includes $y$ and all $y_{i,j}$). It is then easy to see 
that not only $\B$ is a $\Tneqodd$-interpretation that satisfies $\wit(\phi)$, but also $\s^{\B}=\vars(\wit(\phi))^{\B}$.

\item Finally, suppose that $|V^{\A}|$ is an even number. We then take an element $b\notin V^{\A}$ and 
define a new $\Tneqodd$-interpretation $\B$ as follows. For the domain, we use $\s^{\B}=V^{\A}\cup\{b\}$. If $a=(s^{\A})^{j}(w_{i}^{\A})$, for some $1\leq i\leq n$
and $0\leq j\leq M_{i}$, again we make $s^{\B}(a)=s^{\A}(a)$; and if $a=b$ or $a=(s^{\A})^{M_{i}+1}(w_{i}^{\A})$ but $a$ does not equal $(s^{\A})^{j}(w_{k}^{\A})$, for any $1\leq k\leq n$ and $0\leq j\leq M_{k}$, $s^{\B}(a)$ may be an arbitrary element 
from $\s^{\B}\setminus\{a\}$. And $x^{\B}=x^{\A}$ for all variables $x$ in $V$, $y^{\B}=b$, and arbitrarily 
otherwise. Again one easily obtains that $\B$ is a $\Tneqodd$-interpretation that satisfies both $\wit(\phi)$ and $\s^{\B}=\vars(\wit(\phi))^{\B}$.\qedhere

\end{enumerate}
\end{proof}

\begin{lemma}
$\Tneqodd$ is not strongly finitely witnessable. 
\end{lemma}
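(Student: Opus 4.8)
The plan is to adapt the parity-gap argument used for $\Toddtwo$ to the one-sorted setting with a function symbol. The crucial structural fact is that every model of $\Tneqodd$ has cardinality $1$, odd, or infinite, and in particular there is no model of even cardinality at least $2$ (a finite model with $2k$ elements would satisfy $\psi_{=2k}$, violating the axiom $\neg\psi_{=2k}$, while failing $\psi_{=1}$). I would assume toward a contradiction that $\Tneqodd$ has a strong witness $\wit$ and manufacture a satisfiable formula-plus-arrangement whose only admissible ``domain-equals-variables'' model is forced to have even cardinality.

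Concretely, I would start from $\phi=\neg(x_{1}=x_{2})$ and use property $(ii)$ of $\wit$ (which follows from $(ii')$ by taking the empty arrangement) to obtain a $\Tneqodd$-interpretation $\B$ satisfying $\wit(\phi)$ with $\s^{\B}=\vars(\wit(\phi))^{\B}$; since $\phi$ and $\exists\,\overarrow{w}.\:\wit(\phi)$ are $\Tneqodd$-equivalent, $\B$ satisfies $\neg(x_{1}=x_{2})$ and is therefore nontrivial, so $|\s^{\B}|$ is finite, odd, at least $3$, and $s^{\B}$ is fixed-point free. Writing $V=\vars(\wit(\phi))$ and letting $E$ be the equivalence $xEy \iff x^{\B}=y^{\B}$ with arrangement $\delta_{V}$, I would pick a fresh variable $y_{0}$, set $W=V\cup\{y_{0}\}$, and let $\delta_{W}$ be the arrangement extending $\delta_{V}$ that puts $y_{0}$ in a class of its own; then the number of classes of $W$ equals $|\s^{\B}|+1$, an even number. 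To see that $\wit(\phi)\wedge\delta_{W}$ is $\Tneqodd$-satisfiable, I would build $\C$ from $\B$ by adjoining two new elements $a,b$, keeping $s^{\C}=s^{\B}$ on $\s^{\B}$ and defining $s^{\C}$ fixed-point-freely on $a,b$ (say $s^{\C}(a)=b$ and $s^{\C}(b)=a$), setting $x^{\C}=x^{\B}$ for $x\in V$ and $y_{0}^{\C}=a$. Then $|\s^{\C}|=|\s^{\B}|+2$ is odd, so $\C$ is a genuine $\Tneqodd$-interpretation; since $s^{\C}$ agrees with $s^{\B}$ on $\s^{\B}=V^{\B}$, every term $s^{j}(w)$ with $w\in V$ evaluates identically in $\B$ and $\C$, whence $\C\models\wit(\phi)$, and $y_{0}^{\C}=a\notin\s^{\B}=V^{\C}$ gives $\C\models\delta_{W}$.

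Finally, applying $(ii')$ to $\wit(\phi)\wedge\delta_{W}$ would yield a $\Tneqodd$-interpretation $\D$ with $\s^{\D}=\vars(\wit(\phi)\wedge\delta_{W})^{\D}=W^{\D}$; since $\D\models\delta_{W}$, this forces $|\s^{\D}|$ to equal the number of classes of $W$, an even number at least $4$, contradicting the absence of even models of size $\geq 2$. I expect the main obstacle to be the bookkeeping around the function symbol $s$: unlike the empty-signature case, an arrangement does not determine the truth value of $\wit(\phi)$, so the augmentation must be engineered to leave all $s$-terms over the original variables untouched (this is why $s^{\C}$ is frozen on $\s^{\B}$), and the parity must be corrected by adjoining \emph{two} elements rather than one, so that $\C$ stays odd while $\delta_{W}$ still names an even number of distinct classes. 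Ensuring nontriviality at the outset (via $\neg(x_1=x_2)$) is what makes the fixed-point-free extension compatible throughout.
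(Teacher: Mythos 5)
Your proof is correct, and it rests on the same obstruction as the paper's: property $(ii')$ applied to $\wit(\phi)$ conjoined with an arrangement having an even number ($\geq 2$) of classes would force a finite $\Tneqodd$-model of even cardinality at least $2$, which cannot exist. The implementations differ in two respects. The paper starts from the tautology $w=w$, obtains an \emph{infinite} interpretation satisfying $\wit(w=w)$, and must split into cases according to the parity of the number of classes of the induced arrangement; when that number is odd, the fresh variable is simply reassigned to an element of the infinite domain lying outside the finite set of values of the old variables, so the model itself never has to be enlarged. You instead start from $\neg(x_{1}=x_{2})$ and first apply property $(ii)$ (correctly derived from $(ii')$ via the empty arrangement) to land in a finite interpretation $\B$ whose domain coincides with $\vars(\wit(\phi))^{\B}$; this pins the number of classes down to $|\s^{\B}|$, odd and at least $3$, so no case split is needed, at the price of having to extend $\B$ by two fresh elements forming an $s$-two-cycle so that the enlarged model stays odd and fixed-point-free while housing the fresh variable. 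Your handling of the non-empty signature is the right one: freezing $s$ on the old domain guarantees that every term of $\wit(\phi)$ keeps its value, which is exactly the point where a naive adaptation of the empty-signature argument would break. Both routes are sound; yours trades the paper's case analysis for this extra model-extension bookkeeping.
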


\begin{proof}
Suppose that $\wit$ is a strong witness. We begin by noticing that there are infinite $\Tneqodd$-interpretations $\A^{\prime}$, such as the one with domain $\mathbb{N}$ and $s^{\A^{\prime}}(n)=n+1$ for all $n\in\mathbb{N}$. Since $\phi=(w=w)$ is satisfied by all $\Tneqodd$-interpretations, including the infinite ones, there must exist an infinite $\Tneqodd$-interpretation $\A$ that satisfies $\wit(w=w)$ (since 
$\phi$ and $\overarrow{x}\wit(\phi)$ are $\Tneqodd$-equivalent, for $\overarrow{x}=\vars(\wit(\phi))\setminus\vars(\phi)$, and $\A^{\prime}$ satisfies $\phi$, there must exist a $\Tneqodd$-interpretation $\A$, differing from $\A^{\prime}$ at most on $\overarrow{x}$, that satisfies $\wit(\phi)$).

Consider now the set $W=\vars(\wit(w=w))$ and the equivalence relation $F$ on $W$ such that $xFy$ iff $x^{\A}=y^{\A}$, with corresponding arrangement 
$\delta_{W}$: of course $\A$ satisfies $\wit(\phi)\wedge\delta_{W}$, and we have now two cases to consider.

\begin{enumerate}
\item If $W/F$ has an even number of equivalence classes, we know there must exist a $\Tneqodd$-interpretation $\B$ that satisfies 
$\wit(w=w)\wedge\delta_{W}$ with $\s^{\B}=W^{\B}$, what 
is absurd: if $\B$ satisfies $\delta_{W}$, $W^{\B}$ will have as many elements as $W/F$, and therefore have an 
even number of them, contradicting the fact that $\B$ is a $\Tneqodd$-interpretation.

\item So, assume that $W/F$ has an odd number of equivalence classes, take some $z\notin W$, and define the equivalence relation $E$ on $V=W\cup\{z\}$ such that 
$xEy$ iff $xFy$ or $x=y$, with corresponding arrangement $\delta_{V}$. 

To see that $\wit(w=w)\wedge\delta_{V}$ is still $\Tneqodd$-satisfiable, remember that $\A$ not only is a 
$\Tneqodd$-interpretation that satisfies $w=w$, but is also infinite: since $W$, and thus $W^{\A}$, must be finite, there exists an element $a\in\s^{\A}\setminus W^{\A}$; we then define 
$\A^{\prime\prime}$ to be the same interpretation as $\A$, except that $z^{\A^{\prime\prime}}=a$ (and, for all other variables $x$, $x^{\A^{\prime\prime}}=x^{\A}$). Given $\A$ and $\A^{\prime\prime}$ agree on the variables of the quantifier-free 
formula $\wit(\phi)\wedge\delta_{W}$, and $\A$ satisfies $\wit(\phi)\wedge\delta_{W}$, it follows that $\A^{\prime\prime}$ also satisfies that formula and, additionally, that 
$z^{\A^{\prime\prime}}\neq x^{\A^{\prime\prime}}$ for all $x\in W$; this, of course, means $\A^{\prime\prime}$ is a $\Tneqodd$-interpretation that satisfies $\wit(\phi)\wedge\delta_{V}$.

So there must exist a $\Tneqodd$-interpretation $\B$ that satisfies 
$\wit(w=w)\wedge\delta_{V}$ with $\s^{\B}=V^{\B}$. Of course, this is absurd: if $W/F$ has an odd number of equivalence classes, $V/E$ has an even number of 
equivalence classes, forcing $\B$ to have an even number of elements in its domain since it validates $\delta_{V}$.\qedhere
\end{enumerate}
\end{proof}

In the following proof, we need to use the fact that the theory 
$\T^{\prime}$, axiomatized by the set of formulas
\[\{\neg\psi_{=1}\}\cup\{\forall\, x.\:\neg(s(x)=x)\}\cup\{\neg\psi_{2k} : k\in\mathbb{N}\},\]
is stably-infinite. This is actually easy to see: take a quantifier-free formula $\phi$ and a $\T^{\prime}$-interpretation $\A$ that satisfies $\phi$. Consider then a set $A=\{a_{n}, b_{n} : n\in\mathbb{N}\}$ disjoint from $\s^{\A}$, and define a $\T^{\prime}$-interpretation $\B$ with: $\s^{\B}=\s^{\A}\cup A$; $s^{\B}(a)=s^{\A}(a)$ if $a\in\s^{\A}$, $s^{\B}(a_{n})=b_{n}$ and $s^{\B}(b_{n})=a_{n}$; and $x^{\B}=x^{\A}$ for all variables $x$. Then $\B$ is infinite, meaning it satisfies $\{\neg\psi_{=1}\}\cup\{\neg\psi_{2k} : k\in\mathbb{N}\}$, and in addition satisfies $\forall\, x.\:\neg(s(x)=x)$; furthermore, it satisfies $\phi$, implying it is an infinite $\T^{\prime}$-interpretation that satisfies this formula, what makes of the theory stably-infinite.

\begin{lemma}\label{Tneqodd is convex}
$\Tneqodd$ is convex.
\end{lemma}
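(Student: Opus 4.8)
The plan is to prove convexity of $\Tneqodd$ by reducing it to the convexity of a simpler, stably-infinite auxiliary theory, exactly mirroring the technique already used for $\Toddtwo$ and $\Tonetwo$. The signature of $\Tneqodd$ is $\Sigma_{s}$ (one sort, one unary function symbol $s$), so I cannot directly invoke \Cref{SI empty theories are convex}; instead I will route the argument through \Cref{uninterpretedfunctionsis convex}, which guarantees that a one-sorted theory over a single unary function symbol axiomatized by the empty set is convex, combined with the stably-infinite auxiliary theory $\T^{\prime}$ introduced in the remark immediately preceding this lemma.

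First I would set up the standard convexity scenario: suppose $\phi$ is a conjunction of literals and $\dash_{\Tneqodd}\phi\rightarrow\bigvee_{i=1}^{n}x_{i}=y_{i}$, and assume for contradiction that $\not\dash_{\Tneqodd}\phi\rightarrow x_{i}=y_{i}$ for every $1\leq i\leq n$. As a preliminary reduction I would dispose of the trivial model: if $\phi$ forces the domain to be a singleton (or is outright unsatisfiable), then $\dash_{\Tneqodd}\phi\rightarrow x_{i}=y_{i}$ holds trivially, so we may assume $\phi$ is satisfied only by nontrivial $\Tneqodd$-interpretations, all of which validate $\forall\, x.\:\neg(s(x)=x)$ and have an odd or infinite number of elements. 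The key step is then to show $\dash_{\T^{\prime}}\phi\rightarrow\bigvee_{i=1}^{n}x_{i}=y_{i}$, where $\T^{\prime}$ is the theory axiomatized by $\{\neg\psi_{=1}\}\cup\{\forall\, x.\:\neg(s(x)=x)\}\cup\{\neg\psi_{=2k} : k\in\mathbb{N}\}$; this follows because every $\T^{\prime}$-interpretation that satisfies $\phi$ is (being infinite or of odd cardinality greater than one, with $s$ fixed-point-free) also a $\Tneqodd$-interpretation satisfying $\phi$, hence satisfies the disjunction.

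Next, since $\T^{\prime}$ is stably infinite (as argued in the paragraph preceding the lemma), I would like to conclude convexity of $\T^{\prime}$ and thus obtain $\dash_{\T^{\prime}}\phi\rightarrow x_{i}=y_{i}$ for some $i$. The obstacle here is that \Cref{SI empty theories are convex} applies only to empty signatures, whereas $\T^{\prime}$ lives over $\Sigma_{s}$. I therefore expect the main work to be arguing that $s$ behaves harmlessly: the subformulas of $\phi$ constrain $s$ only through equalities of terms $s^{j}(x)$, and in any nontrivial fixed-point-free infinite model we have enough freedom to realize arrangements just as in the empty-signature case. Concretely, I would transfer the problem to the theory axiomatized by the empty set over $\Sigma_{s}$, invoke \Cref{uninterpretedfunctionsis convex} to get convexity there, and then lift back: if $\dash_{\T^{\prime}}\phi\rightarrow x_{i}=y_{i}$ for some $i$, then since $\Tneqodd$-interpretations satisfying $\phi$ form a subclass of $\T^{\prime}$-interpretations satisfying $\phi$, we obtain $\dash_{\Tneqodd}\phi\rightarrow x_{i}=y_{i}$, contradicting our assumption. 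The delicate point is ensuring the class inclusion runs the correct direction at each arrow and that passing between $\T^{\prime}$, the empty-set theory over $\Sigma_{s}$, and $\Tneqodd$ preserves both the validity of the implication and the freedom to separate a disequality; I expect to spend most of the proof verifying that every $\T^{\prime}$-model satisfying $\phi$ can be matched by a $\Tneqodd$-model witnessing the same (dis)equalities.
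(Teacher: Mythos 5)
Your high-level frame matches the paper's: both proofs pass to the auxiliary theory $\T^{\prime}$ axiomatized by $\{\neg\psi_{=1}\}\cup\{\forall\, x.\:\neg(s(x)=x)\}\cup\{\neg\psi_{=2k} : k\in\mathbb{N}\}$, use its stable infiniteness, and observe that validity transfers correctly between $\Tneqodd$ and $\T^{\prime}$ (every $\T^{\prime}$-model is a nontrivial $\Tneqodd$-model, and the trivial model satisfies every equality). The gap is in how you then establish convexity of $\T^{\prime}$. You propose to reduce to the empty-axiomatized theory $\T_{0}$ over $\Sigma_{s}$ and invoke \Cref{uninterpretedfunctionsis convex}, but to apply convexity of $\T_{0}$ you would need $\dash_{\T_{0}}\phi\rightarrow\bigvee_{i=1}^{n}x_{i}=y_{i}$, and this does \emph{not} follow from $\dash_{\T^{\prime}}\phi\rightarrow\bigvee_{i=1}^{n}x_{i}=y_{i}$: validity over the subclass of fixed-point-free, odd-or-infinite structures does not lift to the strictly larger class of all $\Sigma_{s}$-structures. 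Bridging this would require showing that every $\T_{0}$-model of the cube $\phi\wedge\bigwedge_{i}\neg(x_{i}=y_{i})$ can be matched by a $\T^{\prime}$-model of the same cube; but the existence of such a $\T^{\prime}$-model is exactly equivalent to $\not\dash_{\T^{\prime}}\phi\rightarrow x_{i}=y_{i}$ holding \emph{jointly} for all $i$, which is the convexity statement itself. So the reduction either fails (the analogue of \Cref{Decidability of TsM} is false for $\T^{\prime}$, since a cube containing, say, the literal $s(u)=u$ is $\T_{0}$-satisfiable but not $\T^{\prime}$-satisfiable) or silently defers the entire difficulty to an unproven amalgamation step.

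The paper instead does that amalgamation explicitly: from the assumed countermodels $\A_{k}$ it uses stable infiniteness of $\T^{\prime}$ and \Cref{LowenheimSkolemDownwards} to obtain countably infinite $\Tneqodd$-countermodels $\B_{k}$, encodes the terms $s^{j}(z_{i})$ of $\phi$ as fresh variables $x_{i,j}$, forms the congruence-closed equivalence relations $E_{k}$ induced by each $\B_{k}$, intersects them into a single relation $E$, and builds one fixed-point-free infinite model on $(V/E)\cup\mathbb{N}$ that satisfies $\phi$ while refuting every $x_{k}=y_{k}$ simultaneously. That construction — in particular the congruence condition ensuring $s^{\B}$ is well defined on equivalence classes and the verification that $s^{\B}(a)\neq a$ throughout — is the substance of the proof, and it is precisely the part your proposal leaves open.
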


\begin{proof}
Suppose we have 
\[\dash_{\Tneqodd}\phi\rightarrow\bigvee_{k=1}^{n}x_{k}=y_{k}\quad\text{but}\quad \not\dash_{\Tneqodd}\phi\rightarrow x_{k}=y_{k},\quad\text{for every $1\leq k\leq n$},\]
where $\phi$ is a conjunction of literals. There must then exist $\Tneqodd$-interpretations $\A_{k}$ that 
satisfy $\phi$ but not $x_{k}=y_{k}$, for every $1\leq k\leq n$; notice that, since $\A_{k}$ does not satisfy $x_{k}=y_{k}$, it cannot be the $\Tneqodd$-interpretation with only one element. Notice as well
that, if we remove the model with domain of cardinality $1$ from the class of models of $\Tneqodd$, we obtain the theory $\T^{\prime}$ axiomatized by the set of 
formulas
\[\{\neg\psi_{=1}\}\cup\{\forall\, x.\:\neg(s(x)=x)\}\cup\{\neg\psi_{2k} : k\in\mathbb{N}\},\]
which, unlike $\Tneqodd$, is stably-infinite. Because, for each $1\leq k\leq n$, the $\A_{k}$ are $\Tneqodd$-interpretations, and so $\T^{\prime}$-interpretations, that satisfy $\phi$ but not $x_{k}=y_{k}$, using the fact that $\T^{\prime}$ is stably-infinite (and $\phi\wedge\neg(x_{k}=y_{k})$ is quantifier-free) we obtain $\T^{\prime}$-interpretations, that are necessarily $\Tneqodd$-interpretations as well, $\A^{\prime}_{k}$ that satisfy $\phi$ but not $x_{k}=y_{k}$, and that are infinite. 

Appealing to \Cref{LowenheimSkolemDownwards} once again, there must 
exist $\Tneqodd$-interpretations $\B_{k}$ with countably infinite domains that satisfy $\phi$ but not 
$x_{k}=y_{k}$. Let $\{z_{1}, \ldots, z_{M}\}=\vars(\phi)\cup\{x_{k}, y_{k} : 1\leq k\leq n\}$ and define $M_{i}$, for each $1\leq i\leq M$, as either the maximum of $j$ such that $s^{j}(z_{i})$ appears in $\phi$ or, if no $s^{j}(z_{i})$ is a subterm of $\phi$, as equal to $0$. We then take a fresh set of 
variables $V=\{x_{i,j} : 1\leq i\leq M, 0\leq j\leq M_{i}\}$ and for each $k$ define $E_{k}$ as the smallest equivalence relation on $V$ such that:
\begin{enumerate}
\item if $x_{i,j}E_{k}x_{p,q}$, $0\leq j<M_{i}$ and $0\leq q<M_{p}$, then $x_{i,j+1}E_{k}x_{p, q+1}$;
\item if $(s^{\B_{k}})^{j}(z_{i}^{\B_{k}})=(s^{\B_{k}})^{q}(z_{p}^{\B_{k}})$, then $x_{i,j}E_{k}x_{p,q}$.
\end{enumerate}
Notice that none of these equivalence relations is ill-defined, since they are precisely the equivalences induced by $\B_{k}$ on the set $V$ once we identify 
$(s^{\B_{k}})^{j}(z_{i}^{\B_{k}})$ with $x_{i,j}$ (observe that the first defining property of $E_{k}$ comes from the fact that, if $a=b$, then $s^{\B_{k}}(a)=s^{\B_{k}}(b)$): since $s^{\B_{k}}(a)\neq a$ for all $a\in \s^{\B_{k}}$, we also easily derive that, if $0\leq j<M_{i}$, $x_{i,j}\overline{E_{k}}x_{i,j+1}$, where $\overline{E_{k}}$ is the complement of $E_{k}$. We then finally define the equivalence $E$ on $V$ by setting $x_{i,j}Ex_{p,q}$ iff 
$x_{i,j}E_{k}x_{p,q}$ for all $1\leq k\leq n$.

We will denote by $[x_{i,j}]$ the equivalence class with representative $x_{i,j}$, and proceed now to define a $\Tneqodd$-interpretation $\B$ as follows.
\begin{enumerate}
\item $\s^{\B}=(V/E)\cup \mathbb{N}$ (which is infinite).

\item If there is an $x_{i,j}$ in $[x_{p,q}]$ such that $j<M_{i}$, we make $s^{\B}([x_{p,q}])=[x_{i,j+1}]$, otherwise $s^{\B}([x_{p,q}])=0$; $s^{\B}(a)=a+1$ for 
every $a\in\mathbb{N}$ (notice that we always have $s^{\B}(a)\neq a$, being obviously true if $a\in\mathbb{N}$ or if $s^{\B}(a)=0$; if $a\in V/E$, $a=[x_{i,j}]$ and $s^{\B}([x_{i,j}])=[x_{i,j+1}]$, the fact 
that $s^{\B}(a)=a$, and therefore $[x_{i,j}]=[x_{i,j+1}]$, would imply that $x_{i,j}E_{k}x_{i,j+1}$, what is not possible).

So, to prove that $s^{\B}$ is well-defined, suppose that $[x_{i,j}]=[x_{p,q}]$, $j<M_{i}$ and $q<M_{p}$: because of the first defining property of $E_{k}$, we have that
$x_{i,j}E_{k}x_{p,q}$ implies $x_{i,j+1}E_{k}x_{p,q+1}$, and of course this will extend to $E$, meaning $s^{\B}([x_{i,j}])=s^{\B}([x_{p,q}])$ as we needed to show.

\item For every variable $z_{i}$ in $V$, $z_{i}^{\B}=[x_{i,0}]$ (and we may define $x^{\B}$ arbitrarily for variables $x$ not in $V$).
\end{enumerate}

Now, we state that $\B$ validates $\phi$: in fact, let $s^{j}(z_{i})=s^{q}(z_{p})$ be a literal of $\phi$ not preceded by negation; in this case, $x_{i,j}E_{k}x_{p,q}$ for every $1\leq k\leq n$ (by the third defining
property of $E_{k}$), and so $[x_{i,j}]=[x_{p,q}]$, meaning $(s^{\B})^{j}(z_{i}^{\B})=(s^{\B})^{q}(z_{i}^{\B})$. If, however, $\neg[s^{j}(z_{i})=s^{q}(z_{p})]$ is the literal in $\phi$, since $E_{k}$ is the 
smallest equivalence with its properties, $x_{i,j}\overline{E_{k}}x_{p,q}$, and so $(s^{\B})^{j}(z_{i}^{\B})\neq(s^{\B})^{q}(z_{i}^{\B})$.

Finally, we see that $\B$ does not satisfy any $x_{k}=y_{k}$, leading to a contradiction: if $x_{k}=z_{i}$ and $y_{k}=z_{p}$,since 
$x_{i,0}\overline{E_{k}}x_{p,0}$, by construction of $E_{k}$, this means that $x_{i,0}\overline{E}x_{p,0}$, and so $\B$ does not satisfy $x_{k}=y_{k}$, for each 
$1\leq k\leq n$; of course, this would imply that $\B$ does not satisfy $\bigvee_{k=1}^{n}x_{k}=y_{k}$, contradicting the fact that is satisfies $\phi$ and we have 
$\dash_{\Tneqodd}\phi\rightarrow\bigvee_{k=1}^{n}x_{k}=y_{k}$.
\end{proof}


 \subsection{$\Tneqoneinfty$}

\begin{lemma}
$\Tneqoneinfty$ is not stably-infinite, and thus not smooth.
\end{lemma}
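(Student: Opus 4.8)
The plan is to exhibit a single quantifier-free $\Sigma_s$-formula that is $\Tneqoneinfty$-satisfiable but is satisfied only by finite (indeed trivial) $\Tneqoneinfty$-interpretations, thereby contradicting stable infiniteness. Recall that, by its axiomatization, every model of $\Tneqoneinfty$ is either trivial or infinite with $s$ fixed-point-free, i.e., satisfying $\forall\, x.\:\neg(s(x)=x)$. The natural choice of formula is therefore $\phi=(s(x)=x)$, exactly as in the argument already used for $\Tneqodd$ in \Cref{CV and FW, but not SI, SM nor SFW}.

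First I would observe that $\phi$ is $\Tneqoneinfty$-satisfiable: the trivial interpretation $\A$ with $|\s^{\A}|=1$ is a $\Tneqoneinfty$-interpretation, and since its unique element $a$ must map to itself under $s^{\A}$, it satisfies $s(x)=x$ with $x^{\A}=a$. Next, I would note that no infinite $\Tneqoneinfty$-interpretation can satisfy $\phi$: any such interpretation falls into case (ii) of the axiomatization and hence validates $\forall\, x.\:\neg(s(x)=x)$, so $s^{\A}(a)\neq a$ for every $a$, and in particular $\phi$ fails under any assignment. Consequently every $\Tneqoneinfty$-interpretation satisfying $\phi$ has a finite (singleton) domain, which is precisely the negation of stable infiniteness with respect to the only sort.

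Finally, the second clause ("and thus not smooth") follows immediately from \Cref{SMimpliesSI}, since a smooth theory is stably infinite; having shown $\Tneqoneinfty$ is not stably infinite, it cannot be smooth. There is no real technical obstacle here: the entire argument rests on the observation that the equation $s(x)=x$ separates the trivial models from the infinite ones, so the only point requiring minor care is to confirm that $\phi$ is genuinely quantifier-free (it is a single atomic literal) and that the trivial structure indeed qualifies as a $\Tneqoneinfty$-model under case (i) of the axioms.
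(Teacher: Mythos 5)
Your proof is correct and follows essentially the same route as the paper's: both exhibit the quantifier-free formula $s(x)=x$, which is satisfied by the trivial $\Tneqoneinfty$-interpretation but by no infinite one (since infinite models must satisfy $\forall\, x.\:\neg(s(x)=x)$), and then invoke \Cref{SMimpliesSI} for the smoothness clause. Your version simply spells out the case analysis on the axiomatization in more detail than the paper does.
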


\begin{proof}
Notice that the quantifier-free formula $s(x)=x$ is satisfied by the trivial $\Tneqoneinfty$-interpretation, but by no infinite interpretations of this theory.
\end{proof}

\begin{lemma}
$\Tneqoneinfty$ is not finitely witnessable, and thus not strongly finitely witnessable.
\end{lemma}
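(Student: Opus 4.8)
The plan is to exhibit a single quantifier-free formula on which any candidate witness must violate clause $(ii)$ of the definition of finite witnessability. Since $\Tneqoneinfty$ has exactly one model that is not infinite---namely the trivial model with a single element---the formula $x=x$ that worked for $\Tinfty$ in \Cref{Tinfty is not FW} is too weak here: the trivial model already satisfies it with its entire (one-element) domain realized by a variable. Instead I would use $\phi=\neg(x=y)$ for two distinct variables $x$ and $y$, which excludes the trivial model while remaining $\Tneqoneinfty$-satisfiable in any infinite model (e.g.\ the interpretation on $\mathbb{Z}$ with $s(a)=a+1$, sending $x$ and $y$ to $0$ and $1$).

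Concretely, suppose toward a contradiction that $\wit$ is a witness for $\Tneqoneinfty$. First I would note that $\phi=\neg(x=y)$ is $\Tneqoneinfty$-satisfiable, so by clause $(i)$ the formula $\exists\,\overarrow{w}.\:\wit(\phi)$ (with $\overarrow{w}=\vars(\wit(\phi))\setminus\vars(\phi)$) is $\Tneqoneinfty$-satisfiable, and hence so is $\wit(\phi)$ itself. Clause $(ii)$ then supplies a $\Tneqoneinfty$-interpretation $\A$ satisfying $\wit(\phi)$ with $\s^{\A}=\vars_{\s}(\wit(\phi))^{\A}$.

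The contradiction arises from two incompatible size constraints on $\A$. On one hand, any model of $\wit(\phi)$ is a model of $\exists\,\overarrow{w}.\:\wit(\phi)$, hence by clause $(i)$ a model of $\phi$; thus $x^{\A}\neq y^{\A}$, so $|\s^{\A}|\geq 2$, and since the only $\Tneqoneinfty$-model with fewer than infinitely many elements is the trivial one, $\A$ must be infinite. On the other hand, $\vars_{\s}(\wit(\phi))$ is a finite set of variables, so $\vars_{\s}(\wit(\phi))^{\A}$ is finite, and the witness equation $\s^{\A}=\vars_{\s}(\wit(\phi))^{\A}$ forces $\s^{\A}$ to be finite. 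These cannot both hold, so no witness exists and $\Tneqoneinfty$ is not finitely witnessable; being not finitely witnessable, it is \emph{a fortiori} not strongly finitely witnessable, by the contrapositive of \Cref{SFWimpliesFW}.

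I expect no genuine obstacle: the argument is a mild variant of the one for $\Tinfty$, and mirrors the reasoning used for $\Tonetwo$ and $\Ttwotwo$, where a gap between the finite and infinite models is exploited. The only point requiring care is the choice of test formula---$x=x$ is absorbed by the trivial model, so one must select a formula such as $\neg(x=y)$ that rules out the trivial model while staying satisfiable, thereby trapping any witness between ``at least two elements'' (hence infinite) and ``finitely many variable-values'' (hence finite).
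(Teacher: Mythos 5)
Your proof is correct and follows essentially the same route as the paper: both use the test formula $\neg(x=y)$ and derive a contradiction from the fact that the witness interpretation would have to be finite (its domain equals the values of finitely many variables) yet have at least two elements, whereas every non-infinite model of $\Tneqoneinfty$ is trivial. Your write-up just spells out the intermediate satisfiability steps that the paper leaves implicit.
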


\begin{proof}
If there existed a witness $\wit$, there would also exist a finite $\Tneqoneinfty$-interpretation $\A$ satisfying both $\wit(\phi)$ and $\vars(\wit(\phi))^{\A}=\s^{\A}$, for $\phi=\neg(x=y)$. This is absurd, since $\A$ would be a finite $\Tneqoneinfty$-interpretation that satisfies $\phi$, and therefore has at least $2$ elements.
\end{proof}

\begin{lemma}
$\Tneqoneinfty$ is convex.
\end{lemma}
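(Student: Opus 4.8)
The plan is to mirror the argument for $\Tneqodd$ in \Cref{Tneqodd is convex}, exploiting the fact that the non-trivial models of $\Tneqoneinfty$ are precisely the infinite structures $\A$ with $s^{\A}(a)\neq a$ for every $a\in\s^{\A}$. First I would assume, for contradiction, that convexity fails: there is a conjunction of literals $\phi$ and pairs of variables such that $\dash_{\Tneqoneinfty}\phi\rightarrow\bigvee_{k=1}^{n}x_{k}=y_{k}$ while $\not\dash_{\Tneqoneinfty}\phi\rightarrow x_{k}=y_{k}$ for each $1\leq k\leq n$. This yields $\Tneqoneinfty$-interpretations $\A_{k}$ that satisfy $\phi$ but not $x_{k}=y_{k}$; since each $\A_{k}$ falsifies an equality, it cannot be the trivial one-element model and is therefore infinite with $s^{\A_{k}}(a)\neq a$ everywhere. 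Unlike the $\Tneqodd$ case, no appeal to stable infiniteness or to parity is needed here, since every non-trivial model is already infinite. Applying \Cref{LowenheimSkolemDownwards} I would replace each $\A_{k}$ by a countably infinite $\Tneqoneinfty$-interpretation $\B_{k}$ with the same properties.

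Next I would reconstruct the generic model exactly as in \Cref{Tneqodd is convex}. Writing $\{z_{1},\ldots,z_{M}\}$ for $\vars(\phi)\cup\{x_{k},y_{k}:1\leq k\leq n\}$ and letting $M_{i}$ bound the depth of applications of $s$ to $z_{i}$ occurring in $\phi$, I would form the fresh variables $V=\{x_{i,j}:1\leq i\leq M,\,0\leq j\leq M_{i}\}$ and define, for each $k$, the equivalence $E_{k}$ induced on $V$ by $\B_{k}$ (identifying $x_{i,j}$ with $(s^{\B_{k}})^{j}(z_{i}^{\B_{k}})$), then set $E=\bigcap_{k}E_{k}$. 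The interpretation $\B$ would have domain $(V/E)\cup\mathbb{N}$, with $s^{\B}$ shifting each class $[x_{i,j}]$ to $[x_{i,j+1}]$ (and to $0\in\mathbb{N}$ when no successor class is available) and acting as the successor on $\mathbb{N}$. Because every $\B_{k}$ satisfies $s(a)\neq a$, the classes $[x_{i,j}]$ and $[x_{i,j+1}]$ are never $E_{k}$-related, hence never $E$-related, so $s^{\B}(a)\neq a$ holds on all of $V/E$, and it holds on $\mathbb{N}$ by construction; thus $\B$ is an infinite structure with no fixed points of $s$, i.e.\ a genuine non-trivial $\Tneqoneinfty$-model.

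Finally I would verify that $\B\models\phi$ — positive literals $s^{j}(z_{i})=s^{q}(z_{p})$ hold because the corresponding indices are $E_{k}$-related for all $k$, hence $E$-related, while negated ones hold because each $E_{k}$ is the smallest equivalence with its closure properties — and that $\B\not\models x_{k}=y_{k}$ for every $k$, since $\B_{k}$ separates $x_{k}$ from $y_{k}$ and so the relevant classes are distinct under $E_{k}$ and a fortiori under $E$. This contradicts $\dash_{\Tneqoneinfty}\phi\rightarrow\bigvee_{k=1}^{n}x_{k}=y_{k}$, establishing convexity. The main obstacle, exactly as in \Cref{Tneqodd is convex}, is ensuring that the single model $\B$ simultaneously respects all the disequalities while keeping $s$ fixed-point-free; the whole point of passing through the $E_{k}$ and intersecting them is to make $s^{\B}$ a well-defined function (the first closure clause of $E_{k}$ guarantees that $a=b$ implies $s(a)=s(b)$) that still avoids fixed points, and checking this well-definedness together with the fixed-point-freeness is the delicate bookkeeping step.
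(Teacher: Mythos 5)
Your proof is correct and takes essentially the same route as the paper, which simply states that the argument follows that of \Cref{Tneqodd is convex}; you reproduce that construction faithfully and correctly observe the one simplification (no detour through an auxiliary stably-infinite theory is needed, since every non-trivial model of $\Tneqoneinfty$ is already infinite and fixed-point-free).
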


\begin{proof}
This proof follows that of \Cref{Tneqodd is convex}.
\end{proof}


\subsection{$\Tneqtwoinfty$}

\begin{lemma}\label{Tneqoneinfty is not SI}
$\Tneqtwoinfty$ is not stably-infinite, and thus not smooth.
\end{lemma}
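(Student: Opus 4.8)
The plan is to mimic the arguments already used in the excerpt for $\Tneqodd$ and $\Tneqoneinfty$: exhibit a single quantifier-free formula that is $\Tneqtwoinfty$-satisfiable yet cannot be satisfied in any infinite model of the theory. The natural candidate is the fixed-point formula $\phi = (s(x) = x)$ on the single variable $x$.

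First I would recall the model-theoretic shape of $\Tneqtwoinfty$: by construction, every model is either (i) a two-element structure on which $s$ is interpreted as the identity, or (ii) an infinite structure on which $s$ is fixed-point-free, i.e.\ $s^{\A}(a) \neq a$ for every $a \in \s^{\A}$. The crucial observation is that these two families are cleanly separated by the behavior of $s$ on the diagonal.

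Next I would verify the two halves of the witness claim. For satisfiability, the two-element model of type (i) interprets $s$ as the identity, so $s^{\A}(a) = a$ holds for every element $a$; assigning $x$ to any such element yields $\A \vDash \phi$, showing $\phi$ is $\Tneqtwoinfty$-satisfiable. For the failure of any infinite witness, every infinite $\Tneqtwoinfty$-interpretation is of type (ii), so $s^{\A}(a) \neq a$ for all $a$, and hence no assignment of $x$ can satisfy $s(x) = x$. Consequently there is no infinite $\Tneqtwoinfty$-interpretation satisfying $\phi$, which is exactly the negation of stable infiniteness. Stable infiniteness failing, \Cref{SMimpliesSI} gives, by contraposition, that the theory is not smooth either.

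I do not expect any genuine obstacle here; the argument reduces to a one-line observation once the model description is in hand. The only point requiring a moment's care — and the sole difference from the $\Tneqodd$ and $\Tneqoneinfty$ cases — is that $\Tneqtwoinfty$ has no trivial single-element model, so the satisfying finite interpretation must be the two-element one rather than a one-point structure; the fixed-point formula $s(x)=x$ nevertheless does the job precisely because the identity interpretation of $s$ on that two-element model validates it.
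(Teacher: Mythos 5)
Your proof is correct and follows exactly the paper's argument: both use the formula $s(x)=x$, satisfied by the two-element model where $s$ is the identity but by no infinite model where $s$ is fixed-point-free. No differences worth noting.
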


\begin{proof}
Since the quantifier-free formula $s(x)=x$ is satisfied by the $\Tneqtwoinfty$-interpretation with two elements, but by no infinite such interpretation, the theory cannot be stably-infinite.
\end{proof}

\begin{lemma}
$\Tneqtwoinfty$ is not finitely witnessable, and thus not strongly finitely witnessable.
\end{lemma}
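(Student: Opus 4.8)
The plan is to mimic the non-finite-witnessability arguments already used for $\Tneqoneinfty$ and $\Tinfty$: exhibit a single quantifier-free formula that is $\Tneqtwoinfty$-satisfiable yet cannot be satisfied by any \emph{finite} model of the theory, and then derive a contradiction from property $(ii)$ of a witness, which forces the existence of exactly such a finite model. The crucial observation about $\Tneqtwoinfty$ is that its finite models are \emph{precisely} the two-element structures interpreting $s$ as the identity, whereas all of its infinite models interpret $s$ with no fixed points. Thus the formula $\phi=\neg(s(x)=x)$ cleanly separates the two regimes: it holds in every infinite model (where $s^{\A}(a)\neq a$ for all $a$) but fails in every finite model (where $s$ is the identity).

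First I would assume, for contradiction, that $\Tneqtwoinfty$ has a witness $\wit$, and set $\phi=\neg(s(x)=x)$. Since $\phi$ is satisfied by any infinite $\Tneqtwoinfty$-interpretation, it is $\Tneqtwoinfty$-satisfiable; because $\phi$ and $\exists\,\overarrow{w}.\:\wit(\phi)$ are $\Tneqtwoinfty$-equivalent by property $(i)$, the formula $\wit(\phi)$ is $\Tneqtwoinfty$-satisfiable as well. Applying property $(ii)$, there exists a $\Tneqtwoinfty$-interpretation $\A$ satisfying $\wit(\phi)$ with $\s^{\A}=\vars_{\s}(\wit(\phi))^{\A}$. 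As $\vars_{\s}(\wit(\phi))$ is finite, $\s^{\A}$ is finite, so $\A$ must be one of the two-element models with $s^{\A}$ the identity.

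Finally I would note that $\A$ satisfies $\wit(\phi)$, hence $\exists\,\overarrow{w}.\:\wit(\phi)$, hence $\phi=\neg(s(x)=x)$; this is impossible, since in a model where $s$ is the identity we have $s^{\A}(a)=a$ for every $a$. This contradiction shows that no witness exists, so $\Tneqtwoinfty$ is not finitely witnessable, and by \Cref{SFWimpliesFW} it is not strongly finitely witnessable either.

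The argument is essentially routine once the separating formula is chosen; the only real content — and the step I would take most care over — is verifying that every finite model of $\Tneqtwoinfty$ really is a two-element identity model, so that finiteness of $\s^{\A}$ is incompatible with $\phi$. This follows directly from the axiomatization, since the infinite-model disjunct forces $\psi_{\geq k}$ for all $k$ whenever $s$ is not the identity, leaving the two-element identity structures as the sole finite option.
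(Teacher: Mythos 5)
Your proof is correct and follows essentially the same strategy as the paper's: exhibit a quantifier-free formula that is $\Tneqtwoinfty$-satisfiable but unsatisfiable in any finite model, then contradict condition $(ii)$ of the witness, which forces a finite model. The only difference is the choice of separating formula — the paper uses $\neg(x=y)\wedge\neg(x=z)\wedge\neg(y=z)$ (unsatisfiable in a two-element domain), while you use $\neg(s(x)=x)$ (unsatisfiable when $s$ is the identity) — and both choices work equally well.
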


\begin{proof}
If $\Tneqtwoinfty$ were finitely witnessable with witness $\wit$, we would have that there is a finite $\Tneqtwoinfty$-interpretation $\A$ satisfying $\wit(\phi)$ and $\vars(\wit(\phi))^{\A}=\s^{\A}$, for
\[\phi=\neg(x=y)\wedge\neg(x=z)\wedge\neg(y=z).\]
This is absurd, since $\A$ would be a finite $\Tneqtwoinfty$-interpretation that satisfies $\phi$, and therefore has at least $3$ elements.
\end{proof}

\begin{lemma}
$\Tneqtwoinfty$ is not convex.
\end{lemma}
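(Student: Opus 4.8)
The plan is to derive non-convexity from \Cref{Barrett's theorem on convexity} used contrapositively, exactly as was done for $\Ttwotwo$. First I would observe that every model of $\Tneqtwoinfty$ has at least two elements in its single domain: by the axiomatization, such a model either has exactly two elements (with $s$ interpreted as the identity) or is infinite, so in particular $\tdash \psi_{\geq 2}$. This is the only property of $\Tneqtwoinfty$ I need to extract from its definition.

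With this observation in hand, suppose for contradiction that $\Tneqtwoinfty$ were convex w.r.t. its only sort. Since $\tdash \psi_{\geq 2}$, the hypotheses of \Cref{Barrett's theorem on convexity} are met, and the theorem yields that $\Tneqtwoinfty$ is stably infinite w.r.t. its only sort. This directly contradicts \Cref{Tneqoneinfty is not SI}, where it was shown that $\Tneqtwoinfty$ is not stably infinite (because $s(x)=x$ is satisfied by the two-element model but by no infinite model of the theory). Hence $\Tneqtwoinfty$ cannot be convex.

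The argument contains no genuinely hard step: the entire content lies in verifying the single hypothesis $\tdash \psi_{\geq 2}$ of Barrett's theorem, which is immediate from the shape of the axioms, and then invoking the already-established failure of stable infiniteness. The only subtlety worth flagging is that \Cref{Barrett's theorem on convexity} crucially requires the absence of singleton domains; this is precisely what distinguishes $\Tneqtwoinfty$, whose finite models have two elements, from a theory such as $\Tneqoneinfty$, whose trivial one-element model blocks this route and indeed turns out to be convex.
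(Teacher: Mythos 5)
Your proposal is correct and matches the paper's own proof: both argue contrapositively via \Cref{Barrett's theorem on convexity}, using that every model of $\Tneqtwoinfty$ has at least two elements together with the previously established failure of stable infiniteness (\Cref{Tneqoneinfty is not SI}). Your closing remark contrasting this with $\Tneqoneinfty$, whose trivial model blocks the argument, is also consistent with the paper's treatment.
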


\begin{proof}
Given \Cref{Barrett's theorem on convexity}, and the facts that $\Tneqtwoinfty$ has no trivial models and is not stably-infinite (see \Cref{Tneqoneinfty is not SI}), the theory cannot be convex.
\end{proof}


\section{Proof of \Cref{newcombinationtheorem}}

For the proof of \Cref{newcombinationtheorem}, we rely on the
following theorem, which may be found, with a proof, in \cite{JB10-TR} as Theorem $2.5$.

\begin{theorem}\label{LemmafromPolitetheoriesrevisited}
Let $\Sigma_{1}$ and $\Sigma_{2}$ be disjoint signatures, $\T_{1}$ be a $\Sigma_{1}$-theory and $\T_{2}$ be a $\Sigma_{2}$-theory. Consider $\Sigma=\Sigma_{1}\cup\Sigma_{2}$, $\T=\T_{1}\oplus \T_{2}$ and $S$ the set of sorts shared by $\Sigma_{1}$ and $\Sigma_{2}$. Let $\phi_{1}$ and $\phi_{2}$ be quantifier-free, respectively, $\Sigma_{1}$ and $\Sigma_{2}$-formulas, and let $U_{\s}=\vars_{\s}(\phi_{1})\cap\vars_{\s}(\phi_{2})$. 

If there exists a $\T_{1}$-interpretation $\A$ and a $\T_{2}$-interpretation $\B$, and an arrangement $\delta_{U}$ on $U$ such that $\A$ satisfies $\phi_{1}\wedge\delta_{U}$, $\B$ satisfies $\phi_{2}\wedge\delta_{U}$ and, for all sorts $\s\in S$, $|\s^{\A}|=|\s^{\B}|$, then there exists a $\T$-interpretation $\C$ such that $\C$ satisfies $\phi_{1}\wedge\phi_{2}\wedge\delta_{U}$, $\s^{\C}=\s^{\A}$ for all $\s\in\S_{\Sigma_{1}}$, and $\s^{\C}=\s^{\B}$ for all $\s\in\S_{\Sigma_{2}}\setminus S$.
\end{theorem}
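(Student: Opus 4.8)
The plan is to prove the lemma by \emph{gluing}: since $\A$ and $\B$ already agree on the cardinalities of the shared sorts, I will transport the $\Sigma_{2}$-structure of $\B$ onto domains that literally coincide with those of $\A$ on the sorts of $S$, obtaining a single interpretation $\C$ whose $\Sigma_{1}$-reduct is $\A$ and whose $\Sigma_{2}$-reduct is isomorphic to $\B$. Concretely, for each shared sort $\s\in S$ I would fix a bijection $h_{\s}\colon \s^{\B}\to\s^{\A}$, and package these into a family $g=(g_{\s})_{\s\in\S_{\Sigma_{2}}}$ by setting $g_{\s}=h_{\s}$ for $\s\in S$ and $g_{\s}=\mathrm{id}$ for $\s\in\S_{\Sigma_{2}}\setminus S$, where the target domain for each non-shared $\Sigma_{2}$-sort will be $\s^{\B}$ itself.

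The interpretation $\C$ is then defined as follows. Its domains are $\s^{\C}=\s^{\A}$ for every $\s\in\S_{\Sigma_{1}}$ and $\s^{\C}=\s^{\B}$ for every $\s\in\S_{\Sigma_{2}}\setminus S$, exactly as demanded by the conclusion; this is coherent because the signatures are disjoint, so a sort is shared precisely when it lies in $S$. I would interpret every symbol of $\Sigma_{1}$ exactly as in $\A$, and interpret every symbol of $\Sigma_{2}$ as the unique choice that turns $g$ into a $\Sigma_{2}$-isomorphism from the structure underlying $\B$ onto the $\Sigma_{2}$-reduct of $\C$ (i.e.\ $f^{\C}=g\circ f^{\B}\circ g^{-1}$ on the appropriate sorts); disjointness again guarantees that no symbol is assigned twice. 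For the variables I set $x^{\C}=x^{\A}$ whenever $x$ has a sort in $\S_{\Sigma_{1}}$ and occurs in $\phi_{1}$, and $x^{\C}=g_{\s}(x^{\B})$ whenever $x$ has sort $\s\in\S_{\Sigma_{2}}$ and occurs in $\phi_{2}$, with all remaining variables assigned arbitrarily.

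The crux, and the step I would carry out with most care, is to choose each $h_{\s}$ so that $h_{\s}(x^{\B})=x^{\A}$ holds for every shared variable $x\in U_{\s}$; this is what makes the two prescriptions for the variable assignment agree on $U_{\s}$, and it is the only place where the hypotheses are genuinely used. The point is that, because both $\A$ and $\B$ satisfy the same arrangement $\delta_{U}$, the equality pattern that $\A$ induces on $\{x^{\A}:x\in U_{\s}\}$ coincides with the one $\B$ induces on $\{x^{\B}:x\in U_{\s}\}$ (both equal the equivalence relation prescribed by $\delta_{U}$ on sort $\s$). Hence these two sets have the same number of distinct elements, so $x^{\B}\mapsto x^{\A}$ is a well-defined bijection between them, and I can extend it to a bijection $h_{\s}$ of the full domains precisely because $|\s^{\B}|=|\s^{\A}|$: the leftover sets have equal cardinality by cardinal arithmetic, which also disposes of the infinite case.

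It then remains to verify the three assertions of the conclusion, all of which are routine once the construction is in place. The $\Sigma_{1}$-reduct of $\C$ is exactly $\A$, so $\C$ is a model of $\T_{1}$, and since its $\phi_{1}$-variables carry their $\A$-values, $\C\vDash\phi_{1}$. By construction $g$ is a $\Sigma_{2}$-isomorphism onto the $\Sigma_{2}$-reduct of $\C$ sending each $\phi_{2}$-variable value $x^{\B}$ to $x^{\C}$, so that reduct is a model of $\T_{2}$ and, as isomorphisms preserve satisfaction of quantifier-free formulas, $\C\vDash\phi_{2}$; together these give $\C\vDash\ax(\T_{1})\cup\ax(\T_{2})$, i.e.\ $\C$ is a $\T$-interpretation. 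Finally $\delta_{U}$ mentions only the variables of $U$, on which $x^{\C}=x^{\A}$, so $\C\vDash\delta_{U}$ follows from $\A\vDash\delta_{U}$. Thus $\C$ satisfies $\phi_{1}\wedge\phi_{2}\wedge\delta_{U}$ and has the prescribed domains, completing the argument.
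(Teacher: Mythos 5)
Your construction is correct, and it is essentially the proof of this statement: the paper itself gives no proof but imports the result from \cite{JB10-TR} (Theorem 2.5), where the argument is exactly this fusion construction --- use the shared arrangement $\delta_{U}$ to see that $x^{\B}\mapsto x^{\A}$ is well defined and injective on the values of the shared variables, extend it to bijections $h_{\s}\colon\s^{\B}\to\s^{\A}$ using $|\s^{\A}|=|\s^{\B}|$, and transport the $\Sigma_{2}$-structure so that the two reducts glue into a single $\T$-interpretation. All the essential points (well-definedness via the common arrangement, the cardinality bookkeeping in the finite and infinite cases, and isomorphism-invariance of $\T_{2}$-membership and of formula satisfaction under matching assignments) are present in your write-up.
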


The key ingredient of our proof is the following lemma, which relaxes the need for smoothness in polite theory combination. 
\begin{restatable}{lemma}{sisplussfweqcsms}
\label{SI+SFW=CS}
Let $\Sigma$ be a signature with $S\subseteq\S_{\Sigma}$, and $\T$ a theory over $\Sigma$. If $\T$ is a stably-infinite and strongly finitely witnessable theory, both w.r.t. the set of sorts $S$, then: for every quantifier-free $\Sigma$-formula $\phi$; $\T$-interpretation $\A$ that satisfies $\phi$; and function $\kappa$ from $S_{\omega}^{\A}=\{\s\in S : |\s^{\A}|\leq \omega\}$ to the class of cardinals such that $|\sigma^{\A}|\leq \kappa(\sigma)\leq \omega$ for every $\s\in S_{\omega}^{\A}$, there exists a $\T$-interpretation $\B$ that satisfies $\phi$ with $|\s^{\B}|=\kappa(\s)$ for every $\s\in S_{\omega}^{\A}$, and $|\s^{\B}|=\omega$ for every $\s\in S\setminus S_{\omega}^{\A}$.
\end{restatable}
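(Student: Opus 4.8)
The plan is to derive this ``bounded smoothness'' statement from stable infiniteness and strong finite witnessability by an argument that first produces arbitrarily large \emph{finite} models of the right shape and then passes to the limit. Write $\wit$ for the strong witness of $\T$, and split the sorts according to the target: let $S_f=\{\s\in S_\omega^\A : \kappa(\s)<\omega\}$ be the sorts with a finite target (for these $|\s^\A|\leq\kappa(\s)<\omega$), and let $S_\infty=(S\setminus S_\omega^\A)\cup\{\s\in S_\omega^\A : \kappa(\s)=\omega\}$ be the sorts whose target is $\omega$ (these include the sorts that were uncountable in $\A$ and must be brought down). Since $\A\vDash\phi$ and $\phi$ is $\T$-equivalent to $\exists\,\overarrow{w}.\:\wit(\phi)$, there is a $\T$-interpretation $\A'$ on the same structure as $\A$ with $\A'\vDash\wit(\phi)$; let $W=\vars(\wit(\phi))$ and let $\delta_W$ be the arrangement it induces on $W$.

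The heart of the argument is the following claim: for every $K\in\mathbb{N}$, the requirement of $\wit(\phi)$ together with ``$\s$ has exactly $\kappa(\s)$ elements'' for $\s\in S_f$ and ``$\s$ has at least $K$ elements'' for finitely many $\s\in S_\infty$ is satisfiable by a \emph{finite} $\T$-interpretation. To prove it I would build an arrangement $\delta_V$ extending $\delta_W$ over a set $V\supseteq W$ of fresh variables chosen so that, sort by sort, the number of equivalence classes of sort $\s$ equals $\kappa(\s)$ for $\s\in S_f$ and is at least $K$ for the relevant $\s\in S_\infty$ (possible since $|W_\s^{\A'}|\leq|\s^\A|\leq\kappa(\s)$ for $\s\in S_f$). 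To see that $\wit(\phi)\wedge\delta_V$ is $\T$-satisfiable, apply stable infiniteness to the quantifier-free formula $\wit(\phi)\wedge\delta_W$ (satisfied by $\A'$) to obtain a $\T$-interpretation all of whose sorts in $S$ are infinite; since the added variables are fresh and do not occur in $\wit(\phi)$, they can be reassigned to pairwise-distinct new elements, yielding a model of $\wit(\phi)\wedge\delta_V$. Now property $(ii')$ of the strong witness produces a $\T$-interpretation $\C$ with $\s^\C=\vars_\s(\wit(\phi)\wedge\delta_V)^\C$ for all $\s\in S$; by construction this $\C$ has exactly $\kappa(\s)$ elements in each $\s\in S_f$, at least $K$ in each relevant $\s\in S_\infty$, and satisfies $\phi$.

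With the claim in hand I would set $\Gamma=\ax(\T)\cup\{\wit(\phi)\}\cup\{\psi^\s_{=\kappa(\s)} : \s\in S_f\}\cup\{\psi^\s_{\geq k} : \s\in S_\infty,\ k\in\mathbb{N}\}$. Every finite subset of $\Gamma$ mentions only finitely many cardinality constraints and is therefore satisfied by one of the finite interpretations $\C$ above (taking $K$ larger than every index $k$ occurring in the subset), so by \Cref{Compactness} the whole of $\Gamma$ is satisfiable. A model $\D$ of $\Gamma$ is a $\T$-interpretation satisfying $\phi$ in which each $\s\in S_f$ has exactly $\kappa(\s)$ elements and each $\s\in S_\infty$ is infinite. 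Finally, applying \Cref{LowenheimSkolemDownwards} to $\D$ yields a $\T$-interpretation $\B$ in which every infinite sort is countable: the finite sorts $S_f$ are untouched and remain at $\kappa(\s)$, while the sorts in $S_\infty$, being infinite and now countable, have cardinality exactly $\omega$, and $\B\vDash\phi$. This is the desired $\B$.

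The main obstacle is the fundamental tension between the two hypotheses: the strong witness collapses \emph{every} sort of $S$ to the finite set of its named variables, whereas stable infiniteness inflates \emph{all} sorts of $S$ simultaneously, so neither tool alone can produce a model that is finite on $S_f$ and infinite on $S_\infty$ at once. The device that breaks the deadlock is to never ask the witness for an infinite sort: one uses it only to realize \emph{arbitrarily large finite} cardinalities on the $S_\infty$ sorts (padding the arrangement with fresh variables) while pinning the $S_f$ sorts exactly, then manufactures the genuine infinities through compactness and recovers the precise value $\omega$ through downward L\"owenheim--Skolem. A secondary technical point to verify is that stable infiniteness indeed supplies enough room to reassign the fresh padding variables, and that restricting $\Gamma$ to finite subsets keeps the argument valid even when $S$ is infinite.
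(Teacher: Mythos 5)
Your proposal is correct and follows essentially the same route as the paper's proof: the same split of $S$ into finite-target and $\omega$-target sorts, the same use of stable infiniteness to pad the arrangement with fresh distinct variables, the same application of the strong witness to realize exact finite cardinalities on the finite-target sorts and arbitrarily large finite cardinalities on the rest, and the same finish via compactness followed by downward L\"owenheim--Skolem. The only (welcome) refinement is your explicit care in keeping each arrangement finite when $S$ may be infinite, which the paper handles implicitly.
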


\begin{proof}
Suppose that $\T$ is stably-infinite and strongly finitely witnessable w.r.t. a set of sorts $S$; let $\phi$ be a quantifier-free formula, $\A$ a $\T$-interpretation that satisfies $\phi$, and take a function $\kappa$ from $S_{\omega}^{\A}$ to the class of cardinals such that $|\sigma^{\A}|\leq \kappa(\sigma)\leq \omega$ for every $\s\in S_{\omega}^{\A}$. For simplicity, we also define $S^{-}_{\omega}$ to be the set of sorts $\sigma\in S_{\omega}^{\A}$ such that $\kappa(\sigma)<\omega$, while $S^{+}_{\omega}=S\setminus S^{-}_{\omega}$ will be its complement in $S$, that is, those sorts in $S_{\omega}^{\A}$ with $\kappa(\sigma)=\omega$, and those sorts in $S$ with $\s^{\A}>\omega$.

Suppose $\wit$ is our strong witness: since $\A$ satisfies $\phi$, it must also satisfy $\exists\, \overarrow{x}.\: \wit(\phi)$, for $\overarrow{x}=\vars(\wit(\phi))\setminus\vars(\phi)$; by changing $\A$ at most on these variables, we obtain a second $\T$-interpretation $\A^{\prime}$ that satisfies $\wit(\phi)$.
For each $\sigma\in S$, let
 $W_{\sigma}=\vars_{\sigma}(\wit(\phi))$, equipped with the equivalence relations $F_{\sigma}$ such that $xF_{\sigma}y$ iff $x^{\A^{\prime}}=y^{\A^{\prime}}$; the corresponding arrangement will be 
 \[\delta_{W}=\bigwedge_{\s\in S}\big[\bigwedge_{xF_{\s}y}x=y\wedge\bigwedge_{x\overline{F_{\s}}y}\neg(x=y)\big],\]
 where $\overline{F_{\s}}$ is the complement of $F_{\s}$. 

Now, take a positive integer $M$ and consider, for every $\sigma\in S$, a set of fresh variables $U_{\sigma}$ of sort $\sigma$ with:
\[|U_{\sigma}|=\begin{cases*}
\kappa(\sigma)-|W_{\sigma}/F_{\sigma}| & if $\sigma\in S^{-}_{\omega}$;\\
M & if $\sigma\in S^{+}_{\omega}$
\end{cases*}\]
(notice that $|W_{\s}/F_{\s}|\leq |\s^{\A^{\prime}}|$, by definition of $F_{\s}$, $|\s^{\A^{\prime}}|=|\s^{\A}|$ and $|\s^{\A}|\leq \kappa(\s)$, all for each $\s\in S$, meaning $\kappa(\sigma)-|W_{\sigma}/F_{\sigma}|$ is always non-negative). We also define the relation $E_{\sigma}$ on $V_{\sigma}=U_{\sigma}\cup W_{\sigma}$, such that $xE_{\sigma}y$ iff $xF_{\sigma}y$ or if $x=y$, and we will denote the corresponding arrangement by $\delta_{V}$.

Now, because $\T$ is stably-infinite w.r.t. $S$, and $\wit(\phi)\wedge\delta_{W}$ is quantifier-free (and satisfied by the $\T$-interpretation 
$\A^{\prime}$), there must exist a $\T$-interpretation $\B$, with $\sigma^{\B}$ infinite for every sort $\sigma\in S$, that satisfies $\wit(\phi)\wedge\delta_{W}$; since the variables 
in $U_{\sigma}$ are fresh and thus not in $\wit(\phi)$, we can change the value of $\B$ on $U_{\sigma}$ in a way that different variables of 
this set are mapped to different elements without affecting the satisfaction of 
$\wit(\phi)\wedge\delta_{W}$, thus obtaining a 
$\T$-interpretation $\B^{\prime}$ 
such that $\B^{\prime}$ satisfies $\wit(\phi)\wedge\delta_{V}$,
$x^{\B^{\prime}}\neq y^{\B^{\prime}}$ whenever $x\neq y$
for every $x,y\in U_{\sigma}$ and for every $\sigma$
(and, of course, $\sigma^{\B^{\prime}}$ is infinite for every sort $\s\in S$). 

Since we have now established that $\wit(\phi)\wedge\delta_{V}$ is $\T$-satisfiable, and this theory is strongly finitely witnessable w.r.t. $S$,  there must exist a $\T$-interpretation $\C_{M}$ that satisfies $\wit(\phi)\wedge\delta_{V}$ (and, since $\C_{M}$ satisfies $\wit(\phi)$, it must satisfy $\exists\,\overarrow{x}.\:\wit(\phi)$ and thus $\phi$ as well) and $\sigma^{\C_{M}}=V_{\sigma}^{\C_{M}}$ for every $\s\in S$; but, since the definition of $\delta_{V}$ forces $V_{\sigma}/E_{\sigma}$ to have $\kappa(\sigma)$ equivalence classes for $\sigma\in S^{-}_{\omega}$, and $M+|W_{\s}/F_{\s}|$ equivalence classes for $\sigma\in S^{+}_{\omega}$, we have that $\sigma^{\C_{M}}$ has $\kappa(\sigma)$, respectively $M+|W_{\s}/F_{\s}|$, elements for $\sigma\in S^{-}_{\omega}$, respectively $S^{+}_{\omega}$.

 From the $\T$-interpretaions $\C_{M}$ we constructed it is clear that 
\[\Gamma_{M}=\{\phi\}\cup\{\psi^{\sigma}_{=\kappa(\sigma)} : \sigma\in S^{-}_{\omega}\}\cup\{\psi^{\sigma}_{\geq M} : \sigma\in S^{+}_{\omega}\}\]
is $\T$-satisfiable for all $M$. We now state that, through \Cref{Compactness}, this implies that $\Gamma=\bigcup_{M\in\mathbb{N}}\Gamma_{M}$ is also $\T$-satisfiable, what will finish proving our theorem. Indeed, if $\C$ is a $\T$-interpretation that satisfies $\Gamma$, each $\sigma^{\C}$, for $\sigma\in S^{-}_{\omega}$, has cardinality $\kappa(\sigma)$; and each $\sigma^{\C}$, for $\sigma\in S^{+}_{\omega}$, will be infinite, given $\Gamma$ contains the formulas $\psi^{\sigma}_{\geq M}$ for all $M\in\mathbb{N}$. Using \Cref{LowenheimSkolemDownwards} with the union of $\Gamma$ and $\ax(\T)$ (which is satisfied by $\C$, given that that is a $\T$-interpretation), we obtain a $\T$-interpretation $\D$ that satisfies $\Gamma$, where $\sigma^{\D}$ has cardinality $\omega$ whenever it is infinite, that is, whenever $\sigma\in S^{+}_{\omega}$. Since $|\sigma^{\D}|=\kappa(\sigma)$ for $\sigma\in S^{-}_{\omega}$, given that $\D$ satisfies $\Gamma$ and therefore $\psi^{\sigma}_{=\kappa(\sigma)}$, and $|\sigma^{\D}|=\omega=\kappa(\sigma)$ for $\sigma\in S^{+}_{\omega}$, we see that $\D$ is the interpretation we wished to build.

Now, to see that $\Gamma$ is $\T$-satisfiable, suppose that it is not: by \Cref{Compactness}, there must exist finite sets $\ax_{0}\subseteq \ax(\T)$, $S^{-}_{0}\subseteq S^{-}_{\omega}$ and $S^{+}_{0}\subseteq S^{+}_{\omega}\times\mathbb{N}$ such that 
\[\ax_{0}\cup\{\phi\}\cup\{\psi^{\sigma}_{=\kappa(\sigma)} : \sigma\in S^{-}_{0}\}\cup\{\psi^{\sigma}_{\geq k} : (\sigma,k)\in S^{+}_{0}\}\]
is unsatisfiable. But, by taking $M=\max\{k : (\sigma,k)\in S^{+}_{0}\}$, we see that $\C_{M}$ is a $\T$-interpretation that satisfies $\Gamma_{M}$, the latter set including $\phi$, $\{\psi^{\sigma}_{=\kappa(\sigma)} : \sigma\in S^{-}_{\omega}\}$ and $\{\psi^{\sigma}_{\geq M} : \sigma\in S^{+}_{\omega}\}$. Of course, this last set implies $\{\psi^{\sigma}_{\geq k} : \sigma\in S^{+}_{\omega}, k\leq M\}$, contradicting the fact that the aforementioned set of formulas is supposed to be itself contradictory.
\end{proof}

We are now able to prove \Cref{newcombinationtheorem}:

\newcombteo*

\begin{proof}
Start by assuming that $\phi_{1}\wedge\phi_{2}$ is $\T$-satisfiable and take $\overarrow{x}=\vars(\psi)\setminus\vars(\phi_{2})$. Since $\phi_{2}$ and $\exists\, \overarrow{x}.\: \psi$ are $\T_{2}$-equivalent, we have that $\phi_{1}\wedge\psi$ is also $\T$-satisfiable; let $\A$ be a $\T$-interpretation which satisfies that formula.
For each $\s\in S$, take the equivalence relation $E_{\s}$ on $V_{\s}$ such that $xE_{\s}y$ iff $x^{\A}=y^{\A}$, and 
set $\delta_{V}$ to be the corresponding arrangement. 
Then $\A$, when restricted to the signature $\Sigma_{1}$, is a $\T_{1}$-interpretation that satisfies $\phi_{1}\wedge\delta_{V}$, 
and when restricted to the signature $\Sigma_{2}$, is a $\T_{2}$-interpretation that satisfies $\psi\wedge\delta_{V}$.

Now, for the reciprocal: let $\A$ be a $\T_{1}$-interpretation satisfying $\phi_{1}\wedge \delta_{V}$, and let $\B$ be a $\T_{2}$-interpretation satisfying $\psi\wedge\delta_{V}$. Using 
\Cref{LowenheimSkolemDownwards}, we can build a second $\T_{1}$-interpretation $\A^{\prime}$ that satisfies $\phi_{1}\wedge\delta_{V}$ and where $\sigma^{\A^{\prime}}$ is at most countable, for every sort $\s$; and, since 
$\T_{2}$ is strongly finitely witnessable and $\psi=\wit(\phi_{2})$, we can obtain a $\T_{2}$-interpretation $\B^{\prime}$ which satisfies $\psi\wedge\delta_{V}$ and $\sigma^{\B^{\prime}}=\vars_{\s}(\psi\wedge\delta_{V})^{\B^{\prime}}=V_{\s}^{\B^{\prime}}$ for every $\s\in S$, where $\vars_{\s}(\psi\wedge\delta_{V})=V_{\s}$ because 
$\delta_{V}$ is a formula where only variables of $V$ occur. Therefore, 
\[|\s^{\B^{\prime}}|=|V_{\s}^{\B^{\prime}}|=|V_{\s}^{\A^{\prime}}|\leq |\s^{\A^{\prime}}|\leq \omega,\]
where the second equality comes from the fact that both $\A^{\prime}$ and $\B^{\prime}$ satisfy $\delta_{V}$. Using \Cref{SI+SFW=CS} for the theory $\T_{2}$ and the facts that: $\psi\wedge\delta_{V}$ is a quantifier-free formula; $\B^{\prime}$ is a $\T_{2}$-interpretation that satisfies $\psi\wedge\delta_{V}$; and setting $\kappa(\s)=|\s^{\A^{\prime}}|$ 
for every $\s\in S$
as a function from $\{\s\in S : |\s^{\B^{\prime}}|\leq \omega\}=S$ to the class of cardinals such that $|\s^{\B^{\prime}}|\leq \kappa(\s)\leq \omega$, we can obtain a $\T_{2}$-interpretation $\C$ that satisfies $\psi\wedge\delta_{V}$ and where $|\s^{\C}|=|\s^{\A^{\prime}}|$ for every $\s\in S$. 

In other words, $\A^{\prime}$ is a $\T_{1}$-interpretation that satisfies $(\phi_{1}\wedge \delta_{V})\wedge\delta_{V}$, $\C$ is a $\T_{2}$-interpretation satisfying $(\psi\wedge\delta_{V})\wedge\delta_{V}$, and for all sorts $\s\in S$ we have $|\s^{\A^{\prime}}|=|\s^{\C}|$. We  duplicate $\delta_{V}$ when writing the formulas $(\phi_{1}\wedge \delta_{V})\wedge\delta_{V}$ and $(\psi\wedge\delta_{V})\wedge\delta_{V}$ (equivalent, more simply, to $\phi_{1}\wedge \delta_{V}$ and $\psi\wedge\delta_{V}$, respectively) as we wish to meet the conditions of \Cref{LemmafromPolitetheoriesrevisited}: if we try to apply the theorem to $\phi_{1}$ and $\psi$, we would need an arrangement over the set of variables $U$, where $U_{\s}=\vars_{\s}(\phi_{1})\cap\vars_{\s}(\psi)$, but instead we have one over $V_{\s}=\vars_{\s}(\psi)$, which may contain other variables than those of sort $\s$ shared by $\phi_{1}$ and $\psi$; however, $V_{\s}=\vars_{\s}(\phi_{1}\wedge\delta_{V})\cap\vars_{\s}(\psi\wedge\delta_{V})$, and we therefore can apply the theorem verbatim to the formulas $\phi_{1}\wedge\delta_{V}$ and $\psi\wedge\delta_{V}$ instead.

So, after applying \Cref{LemmafromPolitetheoriesrevisited}, we obtain a $\T$-interpretation $\D$ which satisfies $\phi_{1}\wedge\psi\wedge\delta_{V}$. Since $\exists\, \overarrow{x}.\:\psi$ and $\phi_{2}$ are $\T_{2}$-equivalent, it follows that $\D$ satisfies $\phi_{1}\wedge\phi_{2}$, as we wished to show.
\end{proof}

\end{document}